\newcommand{\nc}{\newcommand}
\nc{\rnc}{\renewcommand}
\nc\mnb[1]{\medskip\noindent{\bf #1}}
\newcommand{\ket}[1]{\left| #1 \right>} 
\newcommand{\bra}[1]{\left< #1 \right|} 
\newcommand{\x}{\mathbf{x}} 
\newcommand{\y}{\mathbf{y}} 
\renewcommand{\k}{\mathbf{k}}  
\newcommand{\n}{\mathbf{n}}  
\newcommand{\bphi}{\text{\boldmath$\varphi$}} 
\newcommand{\N}{\mathcal{N}}
\newcommand{\R}{\mathbb{R}} 
\newcommand{\T}{\mathbb{T}} 
\newcommand{\Z}{\mathbb{Z}} 
\newcommand{\ot}{\otimes}
\newcommand{\End}{\operatorname{End}}
\newcommand{\<}{\langle}
\renewcommand{\>}{\rangle}
\newcommand\be{\begin{equation}}
\newcommand\ee{\end{equation}}
\newcommand{\tdesign}{$t$-design}
\newcommand{\tdesigns}{$t$-designs}
\newcommand{\enet}{$\epsilon$-net}
\newcommand{\enets}{$\epsilon$-nets}
\newcommand{\Ttmu}{T_{\mu,t}}
\newcommand{\dproj}{\mathrm{D}}
\def\cred{\color{red}}
\def\blk{\color{black}}
\definecolor{cadetgrey}{rgb}{0.57, 0.64, 0.69}
\newcommand{\vanx}[1]{{\Delta(#1)^2}}
\def\gauss{f}
\renewcommand{\H}{\mathcal{H}} 
\newcommand{\dg}{\dagger}
\newcommand{\U}{\mathbf{U}} 
\newcommand{\UU}{\mathbb{U}} 
\newcommand{\C}{\mathbb{C}} 
\renewcommand{\R}{\mathbb{R}} 
\renewcommand{\P}{\mathbb{P}} 
\DeclareMathOperator{\tr}{\mathrm{tr}} 
\newcommand{\ii}{\mathrm{i}} 
\renewcommand{\S}{\mathcal{S}} 
\newcommand{\V}{\mathcal{V}} 
\newcommand{\G}{\mathcal{G}} 
\newcommand{\K}{\mathcal{K}} 
\newcommand{\E}{\mathcal{E}} 
\newcommand{\I}{I} 
\newcommand{\F}{\mathcal{F}} 
\renewcommand{\L}{L}
    \newcommand{\h}[1]{\boldsymbol{#1}}
\newcommand{\ball}[2]{B( #1 , #2)} 
\newcommand{\vol}{\mathrm{Vol}} 
\newtheorem{theorem}{Theorem}
\newtheorem{corollary}[theorem]{Corollary}
\newtheorem{remark}[theorem]{Remark}
\theoremstyle{plain}
\newtheorem{Result}{Result}
\theoremstyle{plain}
\newtheorem{Theorem}{Theorem}
\theoremstyle{plain}
\newtheorem{Lemma}{Lemma}
\theoremstyle{plain}
\newtheorem{fact}{Fact}
\theoremstyle{plain}
\newtheorem{proposition}{Proposition}
\newcommand{\dt}{{\rm d}}
\newcommand{\ep}{\epsilon}
\newcommand{\hcal}{{\cal H}}
\begin{document} 	
	\title{Epsilon-nets, unitary designs and random quantum circuits}

\author{Micha\l\ Oszmaniec}
\affiliation{ 
Center for Theoretical Physics, Polish Academy of Sciences,\\ Al. Lotnik\'ow 32/46, 02-668
Warszawa, Poland}	
\affiliation{International Centre for Theory of Quantum Technologies, University of Gda\'nsk, Wita Stwosza 63, 80-308 Gda\'nsk, Poland}

\author{Adam Sawicki}
\affiliation{ 
Center for Theoretical Physics, Polish Academy of Sciences,\\ Al. Lotnik\'ow 32/46, 02-668
Warszawa, Poland}		 

\author{Micha\l\ Horodecki}
\affiliation{International Centre for Theory of Quantum Technologies, University of Gdansk, Wita Stwosza 63, 80-308 Gdansk, Poland}

\begin{abstract}

Epsilon-nets and approximate unitary $t$-designs are natural notions that capture properties of unitary operations relevant for numerous applications in quantum information and quantum computing. The former constitute subsets of unitary channels that are epsilon-close to any unitary channel in the diamond norm. The latter are ensembles of unitaries that (approximately) recover Haar averages of polynomials in entries of unitary channels up to order $t$.

In this work we systematically study quantitative connections between these two  notions.  Specifically, we prove that, for a fixed dimension $d$ of the Hilbert space, unitaries constituting $\delta$-approximate $t$-expanders form $\ep$-nets  for    $t\simeq\frac{d^{5/2}}{\ep}$ and $\delta\simeq
\left(\frac{\ep^{3/2}}{d}\right)^{d^2}$.  We also show that for arbitrary $t$, $\ep$-nets  can be used to construct $\delta$-approximate unitary $t$-designs for $\delta\simeq\ep t$, where the notion of approximation is based on the diamond norm. Finally, we  prove that the degree  of an exact unitary $t$ design  necessary to obtain an $\ep$-net must grow at least as fast as $\frac{1}{\ep}$ (for fixed dimension) and not slower than $d^2$ (for fixed $\ep$). This shows near optimality of our result connecting $t$-designs and $\ep$-nets.

We further apply our findings in conjunction with the recent results of \cite{Varju2013} in the context of quantum computing. First, we show that that approximate t-designs can be generated by shallow random circuits formed from a set of universal two-qudit gates in the parallel and sequential local architectures considered in \cite{BHH2016}. Importantly, our gate sets need not to be symmetric (i.e. contains gates together with their inverses) or consist of gates with algebraic entries. Second, we consider a problem of compilation of quantum gates and prove a non-constructive version of the Solovay-Kitaev theorem for general universal gate sets. Our main technical contribution is a new construction of efficient polynomial approximations to the Dirac delta in the space of quantum channels, which can be of independent interest.

\end{abstract}
	\maketitle	
\onecolumngrid

\section{Introduction}

Approximate $t$-designs and $\ep$-nets are natural proxies of the set of all unitary transformations of a finite-dimensional Hilbert space. They capture complementary aspects of unitary channels. We start by reviewing here relevance and contexts in which they appear in quantum information theory.

Unitary approximate $t$-designs \cite{TDesigns2009} are tailored to reproduce statistical moments of degree at most $t$ of the Haar measure on the unitary group. As such, they find numerous applications throughout quantum information, including randomized benchmarking \cite{Gambetta2014}, efficient estimation of properties of quantum states \cite{EffLearning2020}, decoupling \cite{Decoupl2013}, information transmission \cite{InfTransmission2009} and quantum state discrimination \cite{StateDiscrimination2005}. Pseudo-random unitaries are also used to model equilibration of quantum systems \cite{BHH2016,EquilibrationAcin}, quantum metrology with random bosonic states \cite{Oszmaniec2016} and in order to model scrambling inside black holes \cite{ChaosDesign2017,Yoshi2017,Nick2019} . Recently, approximate unitary designs got a lot of attention in the context of proposals for attaining the so-called quantum computational advantage \cite{SuprRev2017}, especially random circuit sampling \cite{Boixo2016} that was recently realized experimentally by Google \cite{Suprem2019}. The reason for this is the anticoncentration property \cite{Hangleiter2018,Jozsa2019}, which seems essential in the proofs of quantum speedup. 

Recently, there was a lot of interest in efficient implementations of pseudo-random  quantum unitaries. First, it is known that the multi-qubit Clifford group forms an exact unitary $3$-design but fails to be a unitary $4$-design \cite{CliffFour}. Second, in \cite{BHH2016} it was shown that random circuits built form Haar-random 2-qubit gates acting (according to the specified layout) on $N$-qubit systems of  the  depth polynomial in $N$ form approximate $t$-designs. This result holds also if the random two-qubit gate set is replaced by a universal gate set that is symmetric  (i.e. contains gates together with their inverses) and consists of gates with algebraic entries. Importantly, both of these requirements are crucial as the arguments of \cite{BHH2016} heavily rely on the work by Bourgain and Gamburd \cite{Bourgain2011}. These results were later improved in 2018 in \cite{Harrow2018}, where even faster convergence in $n$ was proved using specially design layouts in which random two-qubit gates were placed.  Additionally, recent work \cite{Rawad2019} (partially) lifted these stringent requirements. Moreover, the authors of \cite{Qhomeopathy2020} showed that random circuits constructed form Clifford gates and a small number of non-Clifford can be used to efficiently generate approximate designs.   Finally, there exist proposals for efficient generation of approximate $t$-designs using diagonal gates \cite{diagDes2017} and  via Hamiltonian \cite{Yoshi2017} and stochastic \cite{StochDes2017}  dynamics. 

Epsilon nets form (often discrete) subsets of the set of unitary channels that approximate every unitary operation up to some accuracy. They appear  naturally in the context of \emph{compilation} of quantum gates, i.e. the task is to approximate a target unitary gate via the sequence of elementary gates belonging to some "simple" gate-set $\G$. Traditionally, compilation of quantum gates is carried out using the celebrated Solovay-Kitaev algorithm \cite{Kitaev2002,Nielsen2005} which states that for any universal and \emph{symmetric} gate-set $\G$ and any target quantum gate $\h{U}$, there exist a sequence of gates from $\G$ that $\ep$-approximates $\h{U}$ and has length $l\sim\log(\frac{1}{\epsilon})^c$ for $c\approx 3.97$. Moreover, the aforementioned sequence can be found efficiently. Importantly, the Solovay-Kitaev algorithm  requires the gate-set to be symmetric as in the course of the compilation it is necessary to perform \emph{group commutators}. There have recently appeared works which partially lifted this restriction by assuming that the gate-set in question contains an irreducible representation of a group \cite{Harrow2016,BoulandOzols2017}. We also note that the relation between efficient gate approximations and spectral gaps (here we study spectral gaps on restricted spaces rather then on the full space of functions on the unitary group) have been previously used in \cite{Harrow2002}  to show that for specific gate sets the (optimal) scaling $l\sim\log(\frac{1}{\epsilon})$ is possible.

The notions of approximate $t$-designs and epsilon-nets seem to be intuitively related but, according to our best knowledge, the quantitative connection between them has not been systematically studied  before. We would like to remark however that analysis of the proof of Theorem 5 in \cite{HArrowHasstings2008} allows to infer that approximate $t$-expanders (i.e. a type of approximate $t$-design where quality of approximation is measured in operator norm) define $\ep$-nets for $t\simeq d^3/\ep^2$  and $\delta \simeq (\ep/\sqrt{d})^{2d^2}$.  Moreover, a related problem was recently studied in the context of harmonic analysis on Lie groups. Specifically, recent work in \cite{Varju2013} established quantitative relation between  spectral gaps on groups and epsilon nets on these manifolds 
which leads to the following scaling for unitary channels $t\simeq \ep^{-2d^2}$  and $\delta\simeq\epsilon^{(d^4+d^2)/2-1}$. We will comment on the relation of these findings to our Result \ref{res:desAREnets} in Section \ref{sec:approx-designs}.

\emph{Overview of the results and their significance---}  In our work we aim to provide quantitative relation between $\ep$-nets and approximate unitary designs.  We follow closely the approach that was put forward in \cite{Varju2013}, where for a semi-simple compact connected Lie group $G$ it was shown that  $\ep$-nets follow from spectral gaps of certain "transition operators" (defined via the gate-set of interest, and acting on the function spaces built from the irreps of $G$). We translate these to the quantum information language and observe that when a Lie group $G$ is a group of quantum channels $\U(d)$ (isomorphic to the projective unitary group), spectral gaps of the aforementioned transition operators are in one to one correspondence with the parameter $\delta$ in the definition of $\delta$-approximate $t$-designs \cite{RLOWPHD2010} (where the accuracy of approximation $\delta$ is measured by operator norms see Eq. \eqref{eq:DEFexpand}). Making use of this correspondence, we show that $\delta$-approximate $t$-designs can form $\epsilon$-nets. We modify the construction proposed in \cite{Varju2013} and attain better dependence of $t$  on $\ep$ and $d$, the dimension of the Hilbert space (see  Section \ref{sec:approx-designs} for a detailed discussion). Moreover, our arguments do not depend on the detailed knowledge of the representation theory and are instead based solely on the geometry of quantum channels (Result \ref{res:desAREnets}).  We also show that $\ep$-nets can be used to define approximate $t$-expanders. This allows us to prove the converse bounds, i.e. assessing lower bounding minimal $t$ necessary to obtain $\ep$-net (Result \ref{res:tight}).

These general results are then applied to different problems in quantum computation. First, we give a necessary and sufficient criterion for universality of \emph{any} collection of quantum gates. Second, Result \ref{res:SKinv} shows a  (nonconstructive) variant of Solovay-Kitaev theorem for gate-sets $\G$ that, in contrast do the existing results, does not require inverses (see the discussion on Solovay-Kitaev theorem above). Finally, we prove in Result \ref{res:shortDES} that short random quantum circuits generated from two-qubit universal gate-sets $\G$ placed in the parallel and sequential layouts considered in \cite{BHH2016} form approximate $t$-designs. Crucially, compared to previous approaches (see the discussion above) we do not require $\G$ to be symmetric or to have algebraic entries.

\emph{Structure of the paper---} 
In Section \ref{sec:preliminary} we introduce basic concepts and notation. In particular, we describe \tdesigns\ and approximate \tdesigns, and introduce a notion of  distance with respect to which we define epsilon nets.  This allows us to formulate our main results in Section \ref{sec:summary}. Then, in Section \ref{sec:open} we discuss open problems and possible further applications of our results. In Section \ref{sec:mixingOP} we introduce notion of mixing operator $T_{\mu}$ defined on functions acting on unitary channels, and its gap. We relate the operator to the moment operators $T_{\mu,t}$ introduced in Section \ref{sec:preliminary}. 

After these preliminary sections, we are in position to give formal statements and proofs of our findings.  The first group of results 
concerns arbitrary measures on unitary channels. And so, in Section \ref{sec:exact-designs} we prove that for $t$ large enough, an exact \tdesign\ forms an \enet, in Section \ref{sec:approx-designs} we show that 
approximate \tdesign\ also does. In that part we also prove the result in converse direction, namely that from a \enet\  with 
small enough $\ep$ one can construct a \tdesign.  In
Then we move to measures obtained from uniform distribution 
on sequences of gates from some gate set. In section \ref{sec:sequances} we apply the above results (employing some additional results from \cite{Varju2013}) to prove a nonconstructive version of Solovay-Kitaev theorem which does not require assumption that gate set contains inverses. 

Finally, we consider  much more structured measures - namely random circuits on $n$ qudits. In Section \ref{sec:randCIRC} we prove that random circuits (local and parallel) form approximate \tdesigns\ without assuming that the gate set contains inverses, and the gates have algebraic entries. We conclude the main part of the article with Section \ref{sec:constr} where we outline the construction of the polynomial approximation of the Dirac delta on the group of unitary channels. This polynomial function plays a crucial role in the proofs of the results from Sections \ref{sec:exact-designs} and \ref{sec:approx-designs}.

The Appendix is largely devoted to technical results needed in the construction of the aforementioned polynomial approximation of the Dirac delta. Some of the results presented there can be of independent interest because of the intriguing connection with the  random matrix theory (specifically, Tracy-Widom distribution \cite{SzarekBook} and distribution of operator norm of GUE matrices).

\section{Main concepts and notations}
\label{sec:preliminary}

Throughout this work we will be concerned with unitary channels acting on a $d$-unitary dimensional Hilbert space  $\H\simeq \C^d$.  A unitary channel is a CPTP map defined by $\h{U}[\rho]=U\rho U^\dg$, where $\rho$ is a quantum state and $U\in \UU(d)$ is a unitary operator on $\C^d$. In what follows we will denote by $\U(d)$ the set of all unitary quantum channels on $\C^d$.  Note that every unitary operator $U$ uniquely defines a quantum channel but the converse is not true: a quantum channel $\h{U}$ specifies a unitary $U$ up to a global phase. For this reason we can identify $\U(d)$ with the projective unitary group $\mathrm{P}\UU(d)=\UU(d)/\UU(1)$.  Therefore, $\U(d)$ is a compact connected semi-simple Lie group \cite{HallGroups} (we will use this observation in what follows). 

 In order to define the notion of $\ep$-net we need to first specify the distance in the set of unitary channels. We will consider the metric induced by the diamond norm $\dproj_\diamond\left(\h{U},\h{V}\right)\coloneqq \left\| \h{U}-\h{V} \right\|_\diamond$. This notion of distance has strong operational interpretation in terms of maximal statistical distinguishability of quantum channels \cite{NielsenBook}. In this work we use the following closely related notion of distance 
\begin{equation}\label{eq:opDIST}
\dproj\left(\h{U},\h{V}\right)= \min_{\varphi\in[0,2\pi)}\left\|U - \exp(\ii \varphi) V \right\|_\infty\ ,
\end{equation}
where $\|\cdot\|_\infty$ denotes the operator norm and $U,V$ are unitaries representing channels $\h{U}$ and $\h{V}$ respectively. It can be shown (see Proposition \ref{prop:equivOFnorms} in the Appendix) that distances $\dproj_\diamond$ and $\dproj$ are equivalent up to a constant independent on the dimension
\begin{equation}\label{eq:equivDISTANCE}
\dproj\left(\h{U},\h{V}\right) \leq  \dproj_\diamond \left(\h{U},\h{V}\right) \leq 2\dproj\left(\h{U},\h{V}\right)\ .
\end{equation}  

We say that a subset $\S\subset\U(d)$ is an $\ep$-net (with respect to the metric $\dproj$), if for every $\h{U}\in\U(d)$ there exist $\h{V}\in\S$ such that $\dproj\left(\h{U},\h{V}\right)\leq\epsilon$. A set of gates $\G\subset\U(d)$ is called \emph{universal} if  sequences $\h{V}_{n} \h{V}_{n-1} \ldots \h{V}_1$ of gates from $\G$ form $\ep$-nets in $\U(d)$ for arbitrary small $\epsilon$. 

The set of unitary channels $\U(d)$ inherits the unique invariant normalized measure from the unitary group $\UU(d)$ according to the following prescription. For $\S\subset\U(d)$ we set $\mu_P(\S)=\mu\left(\varphi^{-1}(\S)\right)$, where $\mu_P,\mu$ are Haar measures on $\U(d)$ and $\UU(d)$ respectively, and $\varphi^{-1}(\S)$ is the set of all unitary operators that define quantum channels belonging to $\S$.  Haar measure on $\U(d)$ can be also defined via the action on functions of unitaries that are invariant under the global phase (i.e. $F(\exp(\ii \alpha) U) =F(U)$, for arbitrary $U\in\UU(d)$ and $\alpha\in\R$), $\int_{\U(d)} d\mu_P(\h{U}) F(\h{U}) = \int_{\UU(d)} d\mu(U) F(U)$. In what follows we will not differentiate between unitary channels and unitary operators, as well as Haar measures defined on these sets, unless it leads to ambiguity. In particular, will denote by $\vol(\S)$ the Haar measure of a subset $\S$ of unitary channels $\U(d)$ or unitary group $\UU(d)$, depending on the context. We will also use the notation $\dt\mu(\h{U})$ and $\dt\mu(U)$ for "densities" of Haar measures on $\U(d)$ and $\UU(d)$ respectively.

An ensemble of unitaries $\E$ characterized by the probability measure $\nu$  is called a $t$-design \cite{TDesigns2009} iff
\begin{equation}\label{eq:defDESIGN}
\int_{\UU(d)} d\nu(U) G_t(U) = \int_{\UU(d)} d\mu(U) G_t(U)\ ,
\end{equation}
where $G_t$ is arbitrary \emph{balanced} polynomial in $\UU(d)$, i.e. a function of the form $G_t=\tr(A U^{\ot t} \ot \bar{U}^{\ot t})$, where $A$ is an operator on $(\C^d)^{\ot 2t}$. 
Note that balanced polynomials on $\UU(d)$ are well defined functions on $\U(d)$. In this work we will be predominantly interested in \emph{discrete} ensembles, i.e. the ones that take the form $\E=\lbrace{\nu_i, U_i\rbrace}$, for which $\int_{\UU(d)} d\nu(U) F(U) = \sum_i \nu_i F(U_i)$. Approximate unitary $t$-designs (see for example \cite{BHH2016,BHHPRL2016})  are ensembles $\nu$ of unitaries that satisfy \eqref{eq:defDESIGN} up to some desired accuracy. In this work we will focus on a version of  approximate $t$-designs called  \blk $\delta$-approximate $t$-expanders defined as ensembles $\nu$ satisfying 
\begin{equation}\label{eq:DEFexpand}
\left\|T_{\nu,t}-T_{\mu,t} \right\|_\infty \leq \delta\  ,
\end{equation}
where for any measure $\nu$ (in particular for the Haar measure $\mu$) we define a \emph{moment operator}
 \begin{equation}\label{eq:momentOP}
T_{\nu,t}\coloneqq\int_{\UU(d)} d\nu(U) U^{\ot t} \otimes \bar{U}^{\ot t}\ .
\end{equation}
The quantity  $\delta(t,\nu)\coloneqq \left\|T_{\nu,t}-T_{\mu,t} \right\|_\infty $ is sometimes called {\it expander norm} of $\nu$. 
We will also use another notion of approximate $t$-design which is based on the diamond norm distance between the $t$-particle quantum channel defined by $\nu$ and its counterpart defined by $\mu$.  Specifically, we will say that the ensemble of unitaries $\nu$ forms $\delta_\diamond$-approximate $t$-design if and only if 
\begin{equation}\label{eq:appDIAMOND}
\left\|\Delta_{\nu,t}-\Delta_{\mu,t}   \right\|_\diamond \leq \delta_\diamond\ ,
\end{equation}
where 
\begin{equation}
\Delta_{\nu,t}\ \coloneqq\int_{\UU(d)} d\nu(U) \h{U}^{\ot t}.
\end{equation}

There exist other related definitions of approximate designs that use different quantifiers to gauge how well $\nu$ approximates the properties of Haar measure $\mu$ (see for example \cite{RLOWPHD2010}).


\section{Summary of main results}
\label{sec:summary}

 Here we present our main findings regarding the relation between approximate designs (expanders) and epsilon-nets.

\begin{Result}[Approximate $t$-expanders define $\ep$-nets]\label{res:desAREnets}
Consider an ensemble $\E=\lbrace{\nu_i, U_i\rbrace}$ of unitaries described by the discrete measure  $\nu$ on $\UU(d)$. Let $\ep\in[0,d)$ and assume that ensemble $\E$ is a $\delta$-approximate $t$-expander with   $t\simeq\frac{d^{5/2}}{\ep}$  (up to logarithmic factors in $d$ and $1/\ep$) and  $\delta\simeq\left(\frac{\ep^{3/2}}{d}\right)^{d^2}$. Then, the channels $\lbrace{\h{U}_i \rbrace}$ defined via the elements of $\E$ form an $\ep$-net in $\U(d)$ with respect to the distance $\dproj$ induced by the dimond norm.
\end{Result}

We note that setting $\delta =0$ gives the connection between \emph{exact} $t$-designs and $\ep$-nets. We give the technical formulation of the above result in Theorems \ref{th:Texact} and \ref{th:approx-design}. There we state the explicit dependence of $t$ and $\delta$ on the dimension of the Hilbert space $d$ and generalise the above statements to arbitrary probability measures (ensembles) on $\UU(d)$. Our proofs follow the method presented  in \cite{Varju2013}. Our  technical contributions are twofold. First, we simplify the original arguments making them largely independent of the machinery of group theory and thus more accessible for the broader audience. Second, in Theorem \ref{th:FUNC} we construct an efficient polynomial approximation of the Dirac delta on $\U(d)$ which allows us to attain better dependence of $t$ on the dimension $d$ and $\ep$. Our construction can be of independent interests and its details are provided in Section \ref{sec:constr}.

Result \ref{res:desAREnets} can be used to find out how many times one needs to iterate gates comprising the $\delta$-approximate $t$-design so that they form an $\ep$-net. Specifically in Proposition \ref{prop:prop-words}  we prove that that it is enough to iterate them $l\simeq\frac{d^2\log(\frac{1}{\ep})}{\log(\frac{1}{\delta})}$ times. This establishes intriguing connection between complexity of quantum gates and the property of being approximate $t$-design.

We also prove the connection in the opposite direction. Namely, we show that epsilon nets can be used to construct approximate designs (see  Theorem \ref{th:netsAREdesigns} for the formal statement).  
\begin{Result}[$\ep$-nets define approximate $t$-designs]
Let $\S$ be a gate-set forming an $\ep$-net in $\U(d)$ and let $t$ be arbitrary natural number. Then, there exists an ensemble of quantum gates from $\S$ which forms an $(2\ep t)$-approximate $t$-expander (see Theorem \ref{th:tightness} for the formal statement).
\end{Result}

This finding allows us to prove that the scaling of the degree $t$ necessary to ensure that $t$-design forms $\ep$-net is near optimal. 
\begin{Result}[Near-tightness of scaling of $t$ from Result \ref{res:desAREnets}]\label{res:tight} Let $t(d,\ep)$ be the minimal degree of an exact unitary $t$-design $\nu$ such that unitaries from $\nu$ form   $\ep$-net with respect to distance $\dproj$ in $\U(d)$ (see Eq. \eqref{eq:opDIST}. Then, $t(d,\ep)$ must scale faster than $\frac{1}{\ep}$ (for fixed $d$). Moreover, for fixed $\ep$ and increasing $d$ the degree  $t(d,\ep)$ scales at least like $d^2$.      
\end{Result}

We apply the results established above in the context of quantum computing. To this end we use additional ingredient which follows from Theorem 6 of \cite{Varju2013} (see  Section \ref{sec:mixingOP} and Theorem \ref{thm:Varju}  for the translation of representation-theoretic concepts to the formalism of tensor expanders). Specifically, the spectral gap of the moment operator $T_{\nu,t}$ associated to a measure $\nu$ supported on a universal gate-set $\G$ closes not faster than $\frac{A}{\log(t)^2}$. Note that the above relies solely on universality of $\G$ \ so that  the assumptions made e.g. in \cite{Bourgain2011}  on algebraic entries of gates and the property that $\G$ is symmetric (i.e. $\h{V}\in\G$ implies $\h{V}^{-1}\in\G$) are not relevant. Leveraging this and the recent results of \cite{SK2017,SK2017bis}, it is possible to prove that universality of of \emph{any} gate-set $\G$ is equivalent to being $\delta$-approximate $t_\ast$-expander, where $\delta<1$ and $t_\ast$ depends solely on $d$. This finding complements recent results \cite{Zimboras2015,OZ2017}  that classified semi-simple compact Lie subgroups of $\U(d)$ in terms of their second order commutants).

Finally, we use the above strong spectral gap results of Varju to show the following two results which are relevant to theoretical underpinnings of quantum computing.

\begin{Result}[Non-constructive inverse-free Solovay-Kitaev]\label{res:SKinv}
Let $\G\subset\U(d)$ be a universal gate-set in $\U(d)$ (not necessarily symmetric i.e. $\h{V}\in\G$ \emph{does not} imply $\h{V}^{-1}\in\G$).  Then, every unitary channel $\h{U}$ can be approximated by sequences of gates from $\G$ of length $l\approx \log(\frac{1}{\ep})^3$. 
\end{Result}
The formal proof can be found in Section \ref{sec:sequances}. We note that  Result  \ref{res:SKinv} does not give a constructive algorithm to find the approximating sequence of gates. Furthermore, we note that the scaling $l\approx \log(\frac{1}{\ep})^3$ is not optimal and for specific gate sets (with nonvanishing spectral gap)   Our last result shows that approximate $t$-designs can be generated efficiently by local random circuits without assuming inverses and algebraic entries. The formal proof is given Section \ref{sec:randCIRC}. 
\begin{Result}\label{res:shortDES}
Let $\G$ be a set of universal two-qudit gates. Consider two types of random circuits on line of $n$ qudits \cite{BHH2016}. 
\begin{itemize}
    \item \emph{Local random circuits}: we pick uniformly at random 
two neighboring qudits, and apply gate chosen from $\G$ according to 
uniform measure $\nu_\G$. We denote the resulting distribution by $\nu_{loc}(\G)$.
\item \emph{Parallel random circuits}: we apply with probability $1/2$
either $U_{12}\otimes U_{34}\otimes\ldots \otimes U_{n-1,n}$
or $U_{23}\otimes U_{45}\otimes\ldots \otimes U_{n-2,n-1}$, where each $U_{ij}$ is picked independently from $\G$ according to $\nu_\G$. We denote the resulting distribution by $\nu_{par}(\G)$.
\end{itemize}
Let $l_{loc,Haar}$ ($l_{par,Haar}$) be lengths of random local (parallel) circuits which are $\delta$-approximate t-expanders, where instead of $\nu_\G$ we take Haar measure over two-qudit gates. There exist a constant $C(\G)$ such  that if  
\begin{equation}\label{eq:comLEANGTHS}
    l_{loc}\geq n \log^2(t) C(\G) l_{loc,Haar},\quad l_{par}\geq 2 \log^2(t) C(\G) l_{par,Haar}\ .
\end{equation}
then, the corresponding random circuits ($\nu_{loc}(\G)^{\ast l_{loc}}$ and $\nu_{par}(\G)^{\ast l_{par}}$ ) are $\delta$-approximate $t$-expanders.

Note that in \cite{BHH2016} 
it was shown that  local (parallel) random quantum circuits with Haar distributed gates of lengths 
satisfying 
\begin{eqnarray}
    && l_{loc,Haar}\geq 
    42500 n\lceil \log_d (4t) \rceil^2 d^2 t^{5+3.1 \log d} (2nt\log d + \log(1/\ep))\nonumber \\
    && l_{par,Haar} \geq 
    523000\lceil \log_d (4t) \rceil^2 d^2 t^{5+3.1 \log d} (2nt\log d + \log(1/\ep))
\end{eqnarray}
are $\delta$-approximate t-expanders. It then follows 
that circuits constructed from $\G$ scale efficiently with $n$, too. 
\end{Result}
\begin{remark} It is straightforward to derive analogous bounds for other notions of approximate $t$-designs (based, for example, on the diamond norm). The conclusions are analogous. Let us stress, however, that our proof technique does not immediately apply to the scenarios considered in \cite{Harrow2018} and hence we cannot use it to get convergence faster than $n$ for $\sqrt{n}\times \sqrt{n}$- qubits square lattice. We however believe that this technical problem can be overcome with some effort.
\end{remark}
 {\bf Acknowledgements} We are grateful to Stanis\l aw Szarek for explaining to us the intricacies of computing volumes of balls in the unitary group and related manifolds.  AS acknowledges financial support from National Science Centre, Poland under the grant
SONATA BIS: 2015/18/E/ST1/00200. MH acknowledges support from the Foundation for Polish Science through IRAP project co-financed by EU within the Smart Growth Operational Programme (contract no.2018/MAB/5).  MO acknowledges
the financial support by TEAM-NET project (contract
no. POIR.04.04.00-00-17C1/18-00).

\section{Open problems}
\label{sec:open}
We conclude the introductory part of our wrok with a list of interesting problems which  we left for further research.
\begin{itemize}
    \item {\it Optimal scaling of $t(\ep,d)$ and $\delta(t,d)$:}     Can one improve scaling in the results connecting \enets\ with 
    \tdesigns? We conjecture that with some work it should be possible to obtain $t\simeq d^2$ (for fixed $\ep$).
    
    \item {\it Explicit constant in SK theorem:} Unlike in all other results,  our version of Solovay-Kitaev theorem  contains an unknown constant depending on the dimension and the gate set. To what extent we can determine it (at least to leading order in the dimension)?
    \item{\it Termination of the universality checking algorithm:} The explicit value of constant in our version of Solovay-Kitaev theorem and the connection between approximate $t$-designs and $\epsilon$-nets can shed a new light on complexity of universality checking algorithms proposed in \cite{SK2017}.
    \item{\it Connection with black hole dynamics and complexity growth:} Recently, there were some interesting works connecting complexity of random circuits with black hole dynamics (cf. \cite{Roberts-Yosida2017,Suskind2018,Nick2019,ModelsComplexity2019}). It is conceivable that our findings may provide some useful tools, especially in the high complexity regime. In this context it is also natural to explore the possible generalizations of our results to \emph{approximate projective designs}  and $\ep$-nets in the set of pure quantum states.
\end{itemize}

\section{Mixing operators on unitary group, their gap and approximate designs}\label{sec:mixingOP}

In this section we establish the connection between spectral gaps of mixing operators on unitary channels and approximate unitary $t$-designs (expanders).  
Let  $\L^2(\UU(d))$ be the Hilbert space space of square-integrable functions on $\UU(d)$, i.e. functions satisfying $\int_{\UU(d)}d\mu(U)|F(U)|^2<\infty$, where $\mu$ denotes the Haar measure on $\UU(d)$. For every $V\in\UU(d)$ we introduce a unitary shift operator $T_V:\L^2(\UU(d))\rightarrow \L^2(\UU(d))$ defined {\it via} $\left(T_V(F)\right)(U)= F(V^{-1}U)$. It is easy to verify that the mapping $\phi: V\mapsto T_V $ satisfies $\phi(UV)=\phi(U)\phi(V)$, for every $U,V\in\UU(d)$. Therefore, so-defined map $\phi$ is a unitary representation of $\UU(d)$ in $\L^2(\UU(d))$, usually called the \emph{left regular representation} of $\UU(d)$. For more background in representation theory of Lie groups and Lie algebras see e.g. \cite{fulton1991representation}.  For every measure  $\nu$  on $\UU(d)$ we can consider an operator $T_\nu:\L^2(\UU(d))\rightarrow \L^2(\UU(d))$ which is defined as a convex combination of operators $T_V$ according to measure $\nu$, $T_\nu = \int_{\UU(d)} \dt \nu(V) T_V$. Its action on functions on $\UU(d)$ can be explicitly written as
\begin{equation}\label{eq:transitionOP}
\left(T_{\nu}F\right)(U)= \int_{\UU(d)} \dt \nu(V)  F(V^{-1}U)\ .
\end{equation}
The operator $T_\nu$ can be understood it as a transition operator of a random walk on $\UU(d)$ in which at every step a unitary is applied at random according to the measure $\nu$.  Specifically,  $\left(T_{\nu}F\right)(U)$ is the average value of the function $F$ over a realization of a single step of a random walk generated by $\mu$ an originating at $U$.

We shall also consider restriction $T_{\nu}|_{\hcal_t}$ of $T_{\nu}$ to the subspace $\hcal_t$ spanned by balanced  polynomials of degree up to $t$ in $U$ as well as in $\bar U$ i.e. subspace of functions on $\UU(d)$ of the form $G_t(U)=\tr\left(A U^{\ot t} \ot \bar{U}^{\ot t} \right)$. In particular, if we choose $\nu$ to be the Haar measure $\mu$ on $\UU(d)$ then the operators $T_{\mu}$ and $T_{\mu}|_{\hcal_t}$ are projectors - they project onto the space of constant functions on $\UU(d)$. Let us denote the space orthogonal to the constant functions on $\UU(d)$ by $\L^2_{0}(\UU(d))$. We define the gap of $T_\nu$ as:
\begin{equation}
g(T_\nu)\coloneqq 1 - \|T_\nu|_{L^2_{0}(\UU(d))}\|_\infty\ .
\end{equation}
We note that the so-defined function can be greater than zero iff the support of the measure includes a set of universal gates.  We are only interested in such situations, so we will the keep name gap for  $g$.   

We define a gap for $T_{\nu}|_{\hcal_t}$ analogously as for $T_\nu$ and denote it by $g(\nu,t)$. By straightforward calculations we get
\begin{equation} 
\|T_{\mu}|_{\hcal_t}-T_{\nu}|_{\hcal_t}\|_\infty=1-g(\nu,t)
\end{equation}
The following proposition establishes a very useful connection between $T_\nu|_{\hcal_t}$ and moment operator $T_{\nu,t}$ introduced in Eq.\eqref{eq:momentOP}.

\begin{proposition}\label{prop:gaps} For any measure $\nu$ on $\UU(d)$ we have
\begin{equation} 
\|T_{\mu}|_{\hcal_t}-T_{\nu}|_{\hcal_t}\|_\infty=\|T_{\mu,t}-T_{\nu,t}\|_\infty
\end{equation}
and consequently we have $\delta(\nu,t)=1-g(\nu,t)$, where $\delta(\nu,t)$ is the expander norm of $\nu$.
\end{proposition}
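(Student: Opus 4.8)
The plan is to show that the moment operator $T_{\nu,t}$ is unitarily equivalent (via a canonical isometry) to the restriction $T_{\nu}|_{\hcal_t}$, so that the two operator norms in the proposition coincide term by term, and then the final consequence $\delta(\nu,t) = 1 - g(\nu,t)$ follows by combining the displayed identity $\|T_{\mu}|_{\hcal_t} - T_{\nu}|_{\hcal_t}\|_\infty = 1 - g(\nu,t)$ stated just before the proposition with the definition $\delta(\nu,t) = \|T_{\mu,t}-T_{\nu,t}\|_\infty$ from Eq.~\eqref{eq:DEFexpand}.

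First I would make explicit the identification between $\hcal_t$ and the operator space $\End((\C^d)^{\ot t})$ on which $T_{\nu,t}$ naturally acts. An element of $\hcal_t$ is a function $G_A(U) = \tr(A\, U^{\ot t}\ot\bar U^{\ot t})$ for some $A$ on $(\C^d)^{\ot 2t}$; using the vectorization map $\cV$, such a function is determined by the operator $\mat(A)$ acting on $(\C^d)^{\ot t}$, since $U^{\ot t}\ot\bar U^{\ot t}$ is (up to reshuffling indices) the channel $X\mapsto U^{\ot t} X (U^{\ot t})^\dg$. I would define the linear map $\Phi: A \mapsto G_A$ from a suitable operator space onto $\hcal_t$, check it is surjective, and compute the Hilbert-Schmidt / $L^2$ inner products on both sides to see that $\Phi$ (after restricting to the orthogonal complement of its kernel) is a scalar multiple of an isometry: $\langle G_A, G_B\rangle_{L^2(\UU(d))} = \int d\mu(U)\, \overline{G_A(U)} G_B(U)$, and by Schur orthogonality this equals $c\,\langle \mat(A), \mat(B)\rangle_{HS}$ for a dimension-dependent constant $c$ (it is cleanest to note the relevant constant cancels in the operator-norm comparison, so I need not track it precisely).

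Next I would verify the intertwining relation $\Phi \circ (\text{action of }T_{\nu,t}) = T_\nu|_{\hcal_t}\circ \Phi$. Concretely, $(T_\nu G_A)(U) = \int d\nu(V)\, G_A(V^{-1}U) = \int d\nu(V)\, \tr\big(A\, (V^{-1}U)^{\ot t}\ot\overline{(V^{-1}U)}^{\ot t}\big)$; pulling the $V$-dependence onto $A$ by the cyclicity of the trace shows this is again of the form $G_{A'}$ with $\mat(A')$ obtained from $\mat(A)$ by the action of $T_{\nu,t}$ (this uses $(V^{-1})^{\ot t}\ot\overline{(V^{-1})}^{\ot t} = (V^{\ot t}\ot\bar V^{\ot t})^\dg$ and that $\nu$ may be replaced by its pushforward under inversion without changing the operator norm, or one simply tracks the adjoint throughout). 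Since $\Phi$ is a scalar multiple of an isometry onto $\hcal_t$, operator norms are preserved under this conjugation, giving $\|T_\mu|_{\hcal_t} - T_\nu|_{\hcal_t}\|_\infty = \|T_{\mu,t} - T_{\nu,t}\|_\infty$. Finally, the second claim is immediate: $\delta(\nu,t) = \|T_{\mu,t}-T_{\nu,t}\|_\infty = \|T_\mu|_{\hcal_t}-T_\nu|_{\hcal_t}\|_\infty = 1 - g(\nu,t)$.

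The main obstacle I anticipate is purely bookkeeping rather than conceptual: getting the index-reshuffling and the conjugation-versus-adjoint conventions exactly right, in particular confirming that $\Phi$ is genuinely injective on $\End((\C^d)^{\ot t})$ (equivalently, that the balanced polynomials of degree $t$ are in bijective correspondence with operators on $(\C^d)^{\ot t}$, with no further linear relations) and that the scalar factor from Schur orthogonality is the same for $\nu = \mu$ and for general $\nu$ so that it drops out of the norm identity. A secondary subtlety is that $T_\nu$ involves $V^{-1}$ while $T_{\nu,t}$ is written with $U$ (not $U^{-1}$); this is handled by either passing to $\tilde\nu$ (the image of $\nu$ under $V\mapsto V^{-1}$), noting $\|T_{\tilde\nu,t}-T_{\mu,t}\|_\infty = \|T_{\nu,t}^\dg - T_{\mu,t}^\dg\|_\infty = \|T_{\nu,t}-T_{\mu,t}\|_\infty$, or by absorbing the inverse into the trace cyclically — either way the operator norm is unaffected.
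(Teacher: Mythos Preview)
Your strategy has a genuine gap at the step where you claim that, by Schur orthogonality, $\langle G_A,G_B\rangle_{L^2}=c\,\langle\mat(A),\mat(B)\rangle_{HS}$ for a single constant $c$, so that $\Phi$ is a scalar multiple of an isometry. This is false. Schur orthogonality for matrix coefficients of a unitary irrep $\Pi^\lambda$ reads $\int d\mu(U)\,\overline{\Pi^\lambda_{ij}(U)}\Pi^\lambda_{kl}(U)=\frac{1}{\dim\lambda}\delta_{ik}\delta_{jl}$, and the constant $\frac{1}{\dim\lambda}$ depends on the irrep. Since the representation $U\mapsto U^{\ot t}\ot\bar U^{\ot t}$ is reducible and its irreducible constituents have different dimensions, your map $\Phi$ rescales the different isotypic blocks by \emph{different} factors; it is not a global scalar isometry, and the ``constant cancels'' remark does not go through. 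Relatedly, your parenthetical hope that balanced degree-$t$ polynomials are in bijection with $\End((\C^d)^{\ot t})$ is also false in general (multiplicities and symmetric-group relations obstruct injectivity).

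The repair is precisely the paper's argument: decompose both sides into irreps. On the function side, $\hcal_t$ is spanned by matrix coefficients of the irreps $\Pi^\lambda$ occurring in $U^{\ot t}\ot\bar U^{\ot t}$, and the left regular action of $T_\nu$ on the span of $\{\Pi^\lambda_{ij}\}_{i,j}$ is (up to the harmless adjoint you already flagged) $\bigl(\int d\nu\,\Pi^\lambda\bigr)\ot I$. On the moment-operator side, $T_{\nu,t}\approx\bigoplus_\lambda \bigl(\int d\nu\,\Pi^\lambda\bigr)\ot I_{m_\lambda}$. Each block has operator norm $\bigl\|\int d\nu\,\Pi^\lambda - \int d\mu\,\Pi^\lambda\bigr\|_\infty$, and since the same set of $\lambda$'s appears on both sides, taking the maximum over $\lambda$ gives the claimed equality. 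In other words, your $\Phi$ \emph{is} a scalar isometry on each $\lambda$-block separately, with block-dependent scalar, and both operators are block diagonal---that is what makes the norms match. Your handling of the $V$ versus $V^{-1}$ issue via $\|X^\dagger\|=\|X\|$ is fine.
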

\begin{proof}
The action of $T_\nu$ and $T_\mu$ on $\mathcal{H}_t$ is determined by the left regular representation $\phi$ restricted to $\mathcal{H}_t$. The space $\mathcal{H}_t$ decomposes into irreducible components $\H_t=\bigoplus_\lambda \K^\lambda$ and we have 
\begin{gather}\label{eq:regDEC}
   T_\nu|_{\hcal_t}\approx\bigoplus_\lambda\int_{\UU(d)}d\nu(U)\Pi^\lambda(U)\ ,
\end{gather}
where $\Pi^\lambda(U)$ is the matrix corresponding to $U$ via the irreducible representation with the highest weight $\lambda$ (highest weight label different irreducible representation of $\UU(d)$ \cite{fulton1991representation}) and the symbol $\approx$ denotes unitary equivalence. On the other hand the representation $U\mapsto U^{\ot t}\ot  \bar{U}^{\ot t}$ is reducible and decomposes into 
\begin{gather}\label{eq:UUdecomposition}
 U^{\ot t}\ot  \bar{U}^{\ot t} \approx \bigoplus_{\lambda^\prime}\Pi^{\lambda^\prime}(U)\ .
\end{gather}
Thus the operator $T_{\nu,t}$ can be written as
\begin{gather}\label{eq:momentDECOMP}
    T_{\nu,t}\approx\bigoplus_{\lambda^\prime}\int_{\UU(d)}d\nu(V)\Pi^{\lambda^\prime}(U)\ .
\end{gather}
We notice, however, that the space $\mathcal{H}_t$ is spanned by the matrix elements of the representation $U^{\ot t}\ot  \bar{U}^{\ot t}$ and hence, by the decomposition \eqref{eq:UUdecomposition}, by matrix elements of irreducible representations $\Pi^{\lambda^\prime}$. Let $\mathcal{W}^{\lambda^\prime}$ be the linear span of functions   $F^{\lambda^\prime}_{ij}(U)=\bra{i}\Pi^{\lambda^\prime}(U)\ket{j}$. It can be verified by direct computation that for every $V\in\UU(d)$ we have
\begin{equation}
    T_V|_{\mathcal{W}^{\lambda^\prime}} \approx \Pi^{\lambda^\prime}(V)\ot \mathbb{I}_{m^{\lambda^\prime}}\ ,
\end{equation}
where $m^{\lambda^\prime}$ is the dimension of the multiplicity space equal to  $|\K^{\lambda^\prime}|$, the dimension of carrier space of representation $\Pi^{\lambda^\prime}$. Thus it follows that that collection of weights $\lbrace\lambda\rbrace$ and $\lbrace\lambda^\prime \rbrace$ agree, up to multiplicities. The theorem now follows from comparing decompositions \eqref{eq:regDEC} and \eqref{eq:momentDECOMP}.

\end{proof}
As an immediate consequence we get that for a \tdesign, the gap $g(\nu,t)$ is equal to $1$.   We conclude this part by noting that composition of operator $T_\nu$ is compatible with taking convolutions in the sense that for all $l$ we have  $T_{\nu^{\ast l}}=(T_\nu)^l$. This implies the following well-known result.
\begin{fact}
	\label{fact:delta-l}
If $\nu$ is a $\delta$-approximate $t$-expander, then $\nu^{\ast l}$ 
is a $\delta^l$-approximate $t$-expander. 
\end{fact}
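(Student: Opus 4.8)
The plan is to reduce the claim to a purely operator-theoretic fact about the moment operators $T_{\nu,t}$. First I would record the multiplicativity of the moment operator under convolution: for probability measures $\nu_1,\nu_2$ on $\UU(d)$ one has $T_{\nu_1\ast\nu_2,t}=T_{\nu_1,t}\,T_{\nu_2,t}$, which follows immediately from the definition \eqref{eq:momentOP} together with $(VW)^{\ot t}\ot\overline{(VW)}^{\ot t}=\bigl(V^{\ot t}\ot\bar V^{\ot t}\bigr)\bigl(W^{\ot t}\ot\bar W^{\ot t}\bigr)$ and Fubini. In particular $T_{\nu^{\ast l},t}=(T_{\nu,t})^l$; this is just the identity $T_{\nu^{\ast l}}=(T_\nu)^l$ already noted in the text, transported to $\hcal_t$ via Proposition \ref{prop:gaps}.

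Next I would record the absorption properties of the Haar moment operator $P\coloneqq T_{\mu,t}$. Since $\mu\ast\nu=\nu\ast\mu=\mu$ for every probability measure $\nu$ (translation invariance of the Haar measure) and $\mu\ast\mu=\mu$, the multiplicativity from the previous step gives $PT_{\nu,t}=T_{\nu,t}P=P$ and $P^2=P$; idempotency is in fact all that I will use.

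The key step is then the elementary identity $(A-P)^l=A^l-P$, valid for any operator $A$ with $AP=PA=P$ and $P^2=P$, proved by induction on $l$: the base case $l=1$ is trivial, and $(A-P)^{l+1}=(A^l-P)(A-P)=A^{l+1}-A^lP-PA+P^2=A^{l+1}-P$, where $A^lP=A^{l-1}(AP)=A^{l-1}P=\dots=P$. Taking $A=T_{\nu,t}$ yields $T_{\nu^{\ast l},t}-T_{\mu,t}=(T_{\nu,t}-T_{\mu,t})^l$, so submultiplicativity of the operator norm gives $\|T_{\nu^{\ast l},t}-T_{\mu,t}\|_\infty\le\|T_{\nu,t}-T_{\mu,t}\|_\infty^{\,l}\le\delta^l$, which is precisely the assertion that $\nu^{\ast l}$ is a $\delta^l$-approximate $t$-expander.

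There is essentially no obstacle; the only point that deserves a line of justification is $AP=PA=P$, i.e. that convolving on either side with the Haar measure reproduces the Haar measure. An equivalent route, if one prefers the function-space picture of Proposition \ref{prop:gaps}, is to split $\hcal_t=\C\mathbf 1\oplus\hcal_t^0$ with $\hcal_t^0$ the orthocomplement of the constants; both summands are $T_\nu$-invariant (invariance of $\hcal_t^0$ again uses translation invariance of $\mu$), $T_\mu|_{\hcal_t}$ is the identity on $\C\mathbf 1$ and $0$ on $\hcal_t^0$, while $\|T_\nu|_{\hcal_t^0}\|_\infty=1-g(\nu,t)=\delta(\nu,t)\le\delta$; hence $T_\mu|_{\hcal_t}-T_{\nu^{\ast l}}|_{\hcal_t}$ vanishes on $\C\mathbf 1$ and equals $-(T_\nu|_{\hcal_t^0})^l$ on $\hcal_t^0$, of norm at most $\delta^l$, and Proposition \ref{prop:gaps} translates this back into the bound on $\|T_{\mu,t}-T_{\nu^{\ast l},t}\|_\infty$.
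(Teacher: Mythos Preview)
Your proof is correct and follows the same approach the paper has in mind: the paper does not actually give a proof of Fact~\ref{fact:delta-l} but simply notes the convolution identity $T_{\nu^{\ast l}}=(T_\nu)^l$ and declares the result well-known. You have supplied exactly the missing details (the absorption relations $T_{\nu,t}T_{\mu,t}=T_{\mu,t}T_{\nu,t}=T_{\mu,t}$ and the resulting identity $(T_{\nu,t}-T_{\mu,t})^l=T_{\nu,t}^l-T_{\mu,t}$), so there is nothing to correct or contrast.
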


\section{Exact t-designs and epsilon-nets}
\label{sec:exact-designs}

\begin{figure}[t]
    \centering
    \includegraphics[width=0.5\textwidth]{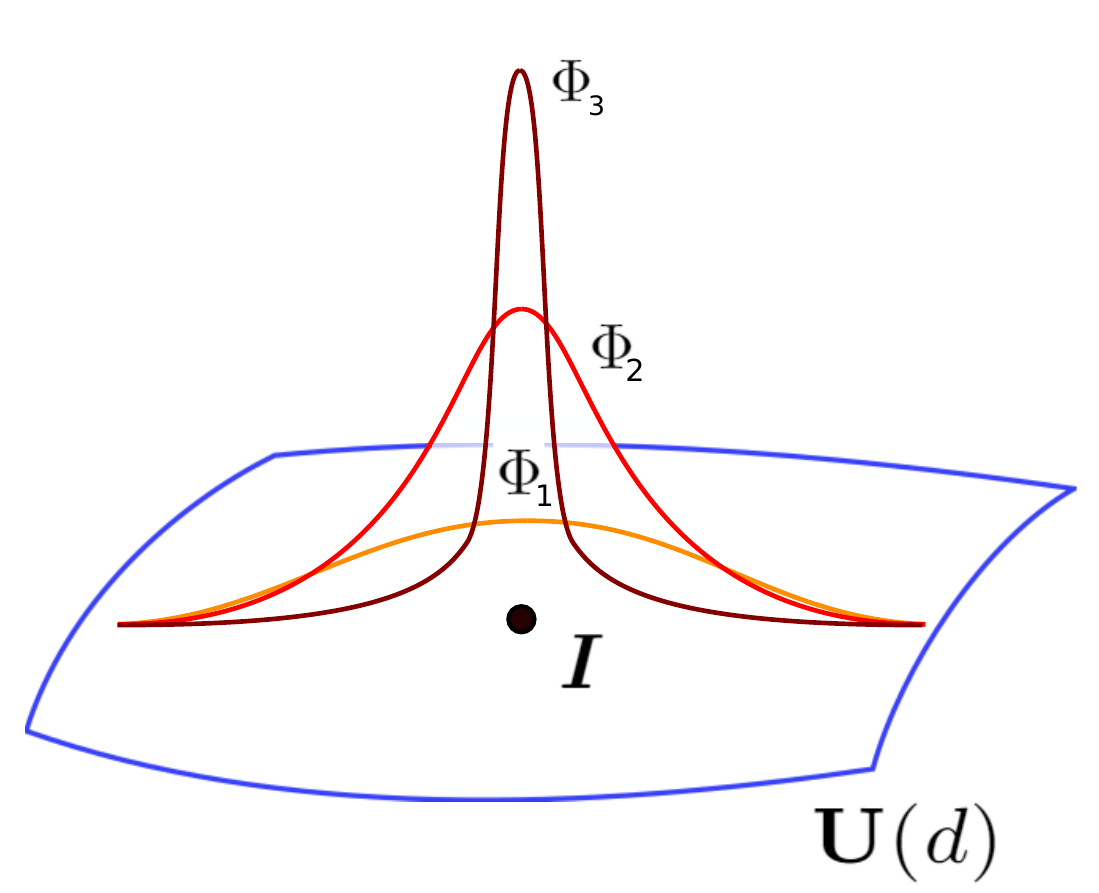}
    \caption{A graphical presentation of a sequence of polynomial approximations $\Phi_k(\h{U})$ of Dirac delta at $\h{I}$ in $\U(d)$. As the degree $k$ increases the functions $\Phi_k$ are more and more peaked in the vicinity of $\h{I}$, while retaining the normalisation $\int_{\U(d)} \dt \mu(\h{U}) \Phi_k(\h{U})~=~1$.  }
    \label{fig:inputs}
\end{figure}

In this part we will show that elements of exact $t$-designs form $\ep$-nets with respect to distance $\dproj$ provided $t\simeq \frac{d^{5/2}}{\ep}$ (up to logarithmic factors in $d$ and $1/\ep$). We follow the ideas from \cite{Varju2013} with two important differences. First, we significantly reduce the usage of representation theory. Second, we construct a new polynomial approximation of the Dirac delta on the group of quantum channels (see Theorem \ref{th:FUNC} and Section \ref{sec:constr} where we provide details of the construction). This allows us to obtain improved dependence of $t$ on $d$ and $\ep$  in Theorem \ref{th:Texact}.

We start with giving the intuition beyond the proof of our result. We consider a family of real-valued balanced polynomials $\Phi_k \in\H_k$ (i.e. polynomials of degree at most $k$)  that has the following properties:
\begin{itemize}
\item Normalisation: $\int_{\U(d)} \dt \mu(\h{U}) \Phi_k(\h{U}) =1$, for all $k$.
\item Vanishing integrals on balls sufficiently far from identity $\h{I}$: for every $\ep\in[0,2]$ and for every $\h{V}_0$ such that $\dproj\left(\h{V}_0,\h{I}\right)\geq \ep $ we have
\begin{equation}\label{eq:qualDELTA}
\int_{B(\h{V}_0,\ep/2)} \dt \mu (\h{U}) |\Phi_k (\h{U})| \rightarrow 0 \ \text{as } k\rightarrow\infty \ ,
\end{equation}
where  $\ball{\h{V}_0}{\ep}=\lbrace{\h{U}\in\U(d) |\  \dproj(\h{U},\h{V}_0)\leq\ep \rbrace}$.
\end{itemize}
Functions $\Phi_k$ can be regarded as polynomial approximation of the Dirac delta localized at $\h{I}$, the identity channel  (see Fig. \ref{fig:inputs}).

We then consider the following integral,
\begin{equation}\label{eq:KeyInt}
    \I(\nu,\ep,k,\h{V}_0) \coloneqq \int_{\ball{\h{V}_0}{\ep/2}} d\mu(\h{U}) (T_\nu \Phi_k)(\h{U})\  ,
\end{equation}
where $\h{V}_0 \in\U(d)$,  and for any measure $\nu=\lbrace \nu_i, V_i\rbrace$ and a function $\F$ on $\U(d)$ we define (analogously as before for functions on $\UU(d)$).  Next, under the assumption that $\nu$ is an exact $k$-design we show in Lemma \ref{prop:volupperbound} that  $\I(\nu,\ep,k,\h{V}_0)$ equals the Haar measure of $\ball{\h{V}_0}{\ep/2}$. On the other hand if channels from the support of $\nu$ do not form an $\epsilon$-net in $\U(d)$  we can use Eq.\eqref{eq:qualDELTA}  to prove that $\I(\nu,\ep,k,\h{V}_0)$ vanishes as $k\rightarrow \infty$   (see Lemma \ref{lem:upperbound} and Theorem \ref{th:FUNC}). We finally look for $k$ such that $\I(\nu,\ep,k,\h{V}_0)$  is smaller than $\vol(\ball{\h{V}_0}{\ep/2})$, value of which is controlled by Fact \ref{fact:Szarek}. This number gives a degree of exact $t-$design that is ensured to form $\ep$-net.  The graphical presentation of this general reasoning is given in Fig. \ref{fig:actionOnPhiK} while technical details are given below. The main result  connecting exact $t$ designs with $\ep$ nets is Theorem \ref{th:Texact}.

\begin{figure}[h]
    \centering
    \includegraphics[width=1\textwidth]{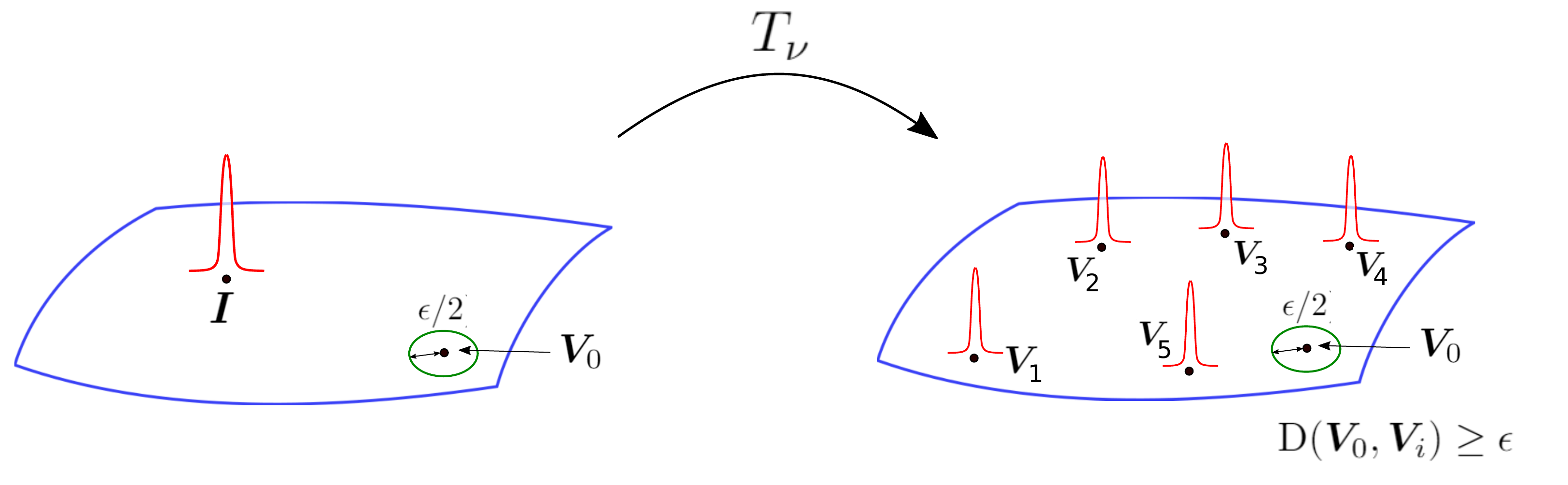}
    \caption{A visualisation of the general argument that allows to connect $t$-designs with $\ep$-nets. From the $t$- design property we know that for $k\leq t$ the integral  $\int_{\ball{\h{V}_0}{\ep/2}} d\mu(\h{U}) (T_\nu \Phi_k)(\h{U})$ equals $\vol(B(\h{V}_0,\ep/2)$, the volume of the Ball of the radius $\ep/2$ centered around $\h{V}_0$ (interior of the green cycle).  On the other hand action of the transition operator $T_\nu$ transforms the function $\Phi_k$, initially localized around $\h{I}$ (red peak in the left part of the figure) into a convex combination of functions $\Phi^i_k(\h{U})=\Phi_k(\h{V}^{-1}_i\h{U})$ localized around points $\h{V}_i\in\mathrm{supp}(\nu)$ (smaller red peaks in the right part of the figure). Assuming that $\mathrm{supp}(\nu)$ does not form an $\ep$-net we know that there exist $\h{V}_0$ that satisfies $\dproj(\h{V}_0,\h{V}_i)\geq\ep$. By increasing $k$ and keeping $\dproj(\h{V}_0,\h{V}_i)\geq\ep$ we get $\int_{\ball{\h{V}_0}{\ep/2}} d\mu(\h{U}) (T_\nu \Phi_k)(\h{U})\rightarrow 0$ since the integral is over the ball $\ball{\h{V}_0}{\ep/2}$  whose points are far away form unitaries $\h{V}_i$ and functions $\Phi^i_k$ approximate the Dirac delta localized at $\h{V}_i$ as $k$ increases.   Therefore, there must exist $t$ such that elements of a $t$-design form an $\ep$-net. }
    \label{fig:actionOnPhiK}
\end{figure}

\begin{Lemma}
	\label{prop:volupperbound}
	Let $\nu$ be a measure on $\UU(d)$ which is an exact unitary $t-$design. Then for arbitrary function $\Phi\in \hcal_t$  (i.e. a balanced polynomial of degree at most $t$ in $U$ and in $\bar{U}$) satisfying 
\begin{equation}
	\int_{\U(d)} \dt \mu(\h{U}) \Phi(\h{U}) = 1,
	\label{eq:phi-normalization1}
\end{equation}
and for any $ V\in\UU(d)$, we have 
\begin{equation}\label{eq:desAVER}
\int_{\ball{\h{V}}{\ep}} \dt \mu(\h{U}) \left(T_\nu \Phi \right) (\h{U})  =  \vol(\ball{\h{V}}{\ep}), 
\end{equation}
where $\ep\in[0.2]$.
\end{Lemma}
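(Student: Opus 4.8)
The plan is to exploit the $t$-design property directly on the specific balanced polynomial obtained by localizing $\Phi$ around the integration region. First I would expand the left-hand side using the definition of the transition operator: since $\nu = \{\nu_i, V_i\}$ is discrete, $(T_\nu \Phi)(\h{U}) = \sum_i \nu_i \Phi(\h{V}_i^{-1}\h{U})$, so the integral becomes $\sum_i \nu_i \int_{\ball{\h{V}}{\ep}} \dt\mu(\h{U})\, \Phi(\h{V}_i^{-1}\h{U})$. By left-invariance of the Haar measure on $\U(d)$, substituting $\h{W} = \h{V}_i^{-1}\h{U}$ turns each summand into $\int_{\ball{\h{V}_i^{-1}\h{V}}{\ep}} \dt\mu(\h{W})\, \Phi(\h{W})$, i.e. the integral of $\Phi$ over a translated ball of the same radius. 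So the whole expression equals $\sum_i \nu_i \int_{\ball{\h{W}_i}{\ep}} \dt\mu(\h{W})\, \Phi(\h{W})$ with $\h{W}_i = \h{V}_i^{-1}\h{V}$.

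The key observation is that the function $\h{V}\mapsto \int_{\ball{\h{V}}{\ep}} \dt\mu(\h{U})\, \Phi(\h{U})$ — integration of a fixed balanced polynomial of degree at most $t$ over an $\ep$-ball centered at $\h{V}$ — is itself a balanced polynomial of degree at most $t$ in the entries of $\h{V}$. Indeed, writing $\int_{\ball{\h{V}}{\ep}} \dt\mu(\h{U})\, \Phi(\h{U}) = \int_{\ball{\h{I}}{\ep}} \dt\mu(\h{W})\, \Phi(\h{V}\h{W})$ after the substitution $\h{U}=\h{V}\h{W}$ (again left-invariance), and noting that $\Phi(\h{V}\h{W}) = \tr(A\, (VW)^{\ot t}\ot \overline{(VW)}^{\ot t})$ is, for fixed $\h{W}$, a balanced polynomial of degree $\le t$ in $V$, the integral over $\h{W}$ in the fixed region $\ball{\h{I}}{\ep}$ preserves this structure (it is an average of such polynomials with $\h{W}$-dependent coefficients, and degree does not increase). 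Call this balanced polynomial $G(\h{V})$; it lies in $\hcal_t$ as a function of $\h{V}$.

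Now apply the exact $t$-design property. The quantity $\sum_i \nu_i\, G(\h{W}_i) = \sum_i \nu_i\, G(\h{V}_i^{-1}\h{V})$ is the average of the balanced polynomial $\h{V}_i \mapsto G(\h{V}_i^{-1}\h{V})$ (for fixed $\h{V}$, still of degree $\le t$ in the variable unitary, since inversion corresponds to conjugate-transpose which maps $U\mapsto \bar U$ up to reshuffling and preserves the balanced-polynomial class) over the ensemble $\nu$. Since $\nu$ is an exact $t$-design, this average equals the Haar average $\int_{\U(d)} \dt\mu(\h{V}') G(\h{V}'^{-1}\h{V}) = \int_{\U(d)} \dt\mu(\h{V}'') G(\h{V}'')$, again by invariance. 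Expanding $G$ back out, this Haar average is $\int_{\U(d)}\dt\mu(\h{V}'')\int_{\ball{\h{V}''}{\ep}}\dt\mu(\h{U})\,\Phi(\h{U})$; swapping the order of integration (Fubini, everything bounded) and using translation-invariance of the inner ball integral, the $\h{V}''$-integral of $\mathbf{1}_{\ball{\h{V}''}{\ep}}(\h{U})$ equals $\vol(\ball{\h{I}}{\ep})=\vol(\ball{\h{V}}{\ep})$ for every $\h{U}$, so one is left with $\vol(\ball{\h{V}}{\ep})\int_{\U(d)}\dt\mu(\h{U})\,\Phi(\h{U}) = \vol(\ball{\h{V}}{\ep})$ by the normalization \eqref{eq:phi-normalization1}. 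This gives the claim.

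The main obstacle I anticipate is the bookkeeping needed to justify rigorously that all the objects introduced — $G(\h{V})$, and the ensemble-averaged function $\h{V}_i\mapsto G(\h{V}_i^{-1}\h{V})$ — genuinely lie in $\hcal_t$, i.e. are balanced polynomials of degree at most $t$, so that the $t$-design identity \eqref{eq:defDESIGN} actually applies. One must be careful that integrating over the fixed region $\ball{\h{I}}{\ep}$ does not increase the degree (it does not, but this should be spelled out), and that composition with a fixed unitary and with inversion stays within the balanced-polynomial class of the same degree. An alternative, perhaps cleaner route avoiding the "polynomial in $\h{V}$" argument is to swap the $\nu$-sum and the ball-integral from the outset, write the integrand $\sum_i\nu_i\Phi(\h{V}_i^{-1}\h{U}) = (T_\nu\Phi)(\h{U})$, and observe that for fixed $\h{U}$ the map sending the ensemble to $\sum_i\nu_i\Phi(\h{V}_i^{-1}\h{U})$ is the $\nu$-average of the balanced polynomial $\h{V}_i\mapsto\Phi(\h{V}_i^{-1}\h{U})\in\hcal_t$, hence by the $t$-design property equals $\int\dt\mu(\h{V})\Phi(\h{V}^{-1}\h{U}) = \int\dt\mu(\h{W})\Phi(\h{W}) = 1$ pointwise; then $\int_{\ball{\h{V}}{\ep}}\dt\mu(\h{U})(T_\nu\Phi)(\h{U}) = \int_{\ball{\h{V}}{\ep}}\dt\mu(\h{U})\cdot 1 = \vol(\ball{\h{V}}{\ep})$ immediately. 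This second approach is shorter and I would present it as the main line, keeping the degree-preservation remark about $\h{V}_i\mapsto\Phi(\h{V}_i^{-1}\h{U})$ as the one point requiring a sentence of care.
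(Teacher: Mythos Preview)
Your second approach is exactly the paper's proof: the paper observes that for an exact $t$-design $T_\nu|_{\hcal_t}=T_\mu|_{\hcal_t}$ is the projector onto constant functions, so $T_\nu\Phi=1$ identically by the normalization, and integrating the constant $1$ over the ball gives its volume. Your first approach (building the auxiliary polynomial $G(\h{V})$ and then averaging) also works but is unnecessarily circuitous; you correctly identified the cleaner route at the end.
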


\begin{proof}
	Since $\nu$ is an exact \tdesign, and $\Phi\in\H_t$, the moment operator $T_\nu$ projects function $\Phi$ onto a constant function. Therefore, due to Eq.\eqref{eq:phi-normalization1}
	$T_{\nu} \Phi=1$ (a constant function equal to 1). As a result we get Eq.\eqref{eq:desAVER}.
\end{proof}

As explained above, our goal is to upper bound  the integral defined in Eq.\eqref{eq:KeyInt} in terms of $k$. To this aim we will use the following technical Lemma. 

\begin{Lemma}
	\label{lem:upperbound}
 Let $\nu$ be a measure on $\UU(d)$. Suppose that the support of  $\nu$ is not an $\epsilon$-net   in $\U(d)$. Then, there exists $\h{V}_0$ such that for any function $\Phi$ on $\U(d)$, and any $\kappa$ satisfying $0\leq \kappa \leq \epsilon$  we have 
	\begin{equation}
	\int_{\ball{\h{V}_0}{\kappa}} \dt \mu(\h{U}) (T_{\nu} \Phi) (U)  \leq \max_{\h{V}:\dproj(\h{V},\h{I})\geq \ep} \int_{\ball{\h{V}}{\kappa}} \dt \mu(U) \Phi(U) \ .
	\end{equation}
\end{Lemma}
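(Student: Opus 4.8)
The plan is to unfold the definition of the transition operator $T_\nu$, use the translation-invariance of the Haar measure to shift each term in the convex combination, and then bound the resulting average by its maximum over the relevant region. First I would write $\nu = \{\nu_i, V_i\}$ (using that $\nu$ is discrete, or passing to the integral $\int \dt\nu(V)$ in the general case) and expand
\begin{equation}
(T_\nu \Phi)(\h{U}) = \int_{\UU(d)} \dt\nu(V)\, \Phi(\h{V}^{-1}\h{U})\ ,
\end{equation}
so that, using Fubini/Tonelli to exchange the order of integration,
\begin{equation}
\int_{\ball{\h{V}_0}{\kappa}} \dt\mu(\h{U})\, (T_\nu\Phi)(\h{U}) = \int_{\UU(d)} \dt\nu(V) \int_{\ball{\h{V}_0}{\kappa}} \dt\mu(\h{U})\, \Phi(\h{V}^{-1}\h{U})\ .
\end{equation}
In the inner integral I would substitute $\h{W} = \h{V}^{-1}\h{U}$; since $\mu$ is left-invariant and the distance $\dproj$ is left-invariant (it is defined via the operator norm of $U - e^{\ii\varphi} V$, hence unchanged under $U\mapsto WU$, $V\mapsto WV$), the domain $\ball{\h{V}_0}{\kappa}$ maps to $\ball{\h{V}^{-1}\h{V}_0}{\kappa}$, giving
\begin{equation}
\int_{\ball{\h{V}_0}{\kappa}} \dt\mu(\h{U})\, \Phi(\h{V}^{-1}\h{U}) = \int_{\ball{\h{V}^{-1}\h{V}_0}{\kappa}} \dt\mu(\h{W})\, \Phi(\h{W})\ .
\end{equation}

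Next I would invoke the hypothesis that $\mathrm{supp}(\nu)$ is not an $\ep$-net: this means there exists $\h{V}_0\in\U(d)$ with $\dproj(\h{V}_0, \h{V}_i) \geq \ep$ for every $\h{V}_i$ in the support. Equivalently, by left-invariance of $\dproj$ again, $\dproj(\h{V}_i^{-1}\h{V}_0, \h{I}) \geq \ep$ for all $i$. Hence for every $V$ in the support of $\nu$ the center $\h{V}^{-1}\h{V}_0$ of the shifted ball lies in the region $\{\h{W} : \dproj(\h{W},\h{I})\geq\ep\}$, and therefore
\begin{equation}
\int_{\ball{\h{V}^{-1}\h{V}_0}{\kappa}} \dt\mu(\h{W})\, \Phi(\h{W}) \leq \max_{\h{V}:\dproj(\h{V},\h{I})\geq\ep} \int_{\ball{\h{V}}{\kappa}} \dt\mu(\h{U})\, \Phi(\h{U})\ .
\end{equation}
Since this bound is uniform in $V$ and $\nu$ is a probability measure, integrating over $\dt\nu(V)$ preserves the inequality and yields the claim. (A small caveat: $\Phi$ need not be nonnegative, so I should be careful to phrase the maximum as a maximum over the \emph{signed} integrals, exactly as in the statement; no absolute values are needed at this stage.)

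The argument is essentially a one-line change of variables once set up correctly, so there is no serious obstacle — the only points requiring care are (i) confirming that $\dproj$ is genuinely left-invariant, which is immediate from Eq.\eqref{eq:opDIST} since $\|WU - e^{\ii\varphi}WV\|_\infty = \|U - e^{\ii\varphi}V\|_\infty$ for unitary $W$, and that balls therefore transform covariantly; (ii) justifying the interchange of integration, which holds since $\Phi$ is a (bounded, continuous) polynomial and $\nu$, $\mu$ are finite measures; and (iii) tracking that the hypothesis "support is not an $\ep$-net" is being applied to the right point — namely, there is a single bad point $\h{V}_0$ that is simultaneously $\ep$-far from \emph{all} elements of $\mathrm{supp}(\nu)$, which is exactly the negation of the $\ep$-net property. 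I would state (i)–(iii) briefly and then present the displayed chain of (in)equalities above.
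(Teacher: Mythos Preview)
Your proof is correct and essentially identical to the paper's: expand $T_\nu$ as a convex combination (the paper does it for discrete $\nu$, you write the integral form), change variables $\h{W}=\h{V}^{-1}\h{U}$ using left-invariance of $\mu$ and $\dproj$, observe that $\dproj(\h{V}^{-1}\h{V}_0,\h{I})=\dproj(\h{V}_0,\h{V})\geq\ep$ for $V\in\mathrm{supp}(\nu)$, and bound each term by the maximum. Your added remarks (i)--(iii) on invariance of $\dproj$, Fubini, and the signed-integral caveat are fine but not strictly needed at the level of detail the paper uses.
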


\begin{proof}
 For simplicity we assume that the measure $\nu$ is discrete i.e. $\nu=\lbrace\nu_i,V_i\rbrace$. The proof is analogous in the general case. Let $\h{V}_0$ be a unitary channel that cannot be $\ep$-approximated by elements form the support of $\nu$: $\dproj(\h{V}_0,\h{V}_i)\geq\ep$. From the definition of the moment operator (see Eq. \eqref{eq:transitionOP}) we have
\begin{equation}
\int_{\ball{\h{V}_0}{\kappa}} \dt \mu(\h{U}) (T_{\nu} \Phi) (\h{U}) = \sum_i \nu_i \int_{\ball{\h{V}_0}{\kappa}} \dt \mu(\h{U}) \Phi (\h{V}_{i}^{-1}\h{U})\ .   
\end{equation}
By changing the variables in each summand  $\h{U}'=\h{V}_{i}^{-1}\h{U}$ and denoting $\h{V}'_i=\h{V_i}^{-1} \h{V}_0$ we get
\begin{equation}\label{eq:firstSTEP}
\int_{\ball{\h{V}_0}{\kappa}} \dt \mu(\h{U}) (T_{\nu} \Phi) (\h{U}) =  \sum_i \nu_i \int_{\ball{\h{V}'_i}{\kappa}} \dt \mu(\h{U}') \Phi (\h{U}')\ .
\end{equation}
Finally, using the defining property of $\h{V}_0$ and employing the unitary invariance of $\dproj$ we obtain 
\begin{equation} 
\dproj(\h{V}'_i,\h{I})=\dproj(\h{V}^{-1}_i \h{V}_0,\h{I})=\dproj(\h{V}_0,\h{V}_i) \geq \ep \ .
\end{equation}
We conclude the proof by using the above inequality in each summand of Eq.\eqref{eq:firstSTEP}.
\end{proof}

The following statement about the volume of the Ball in the space of unitary channels is known as folklore in quantum information community. Here we adapt a rigorous result of \cite{Szarek98}. 

\begin{fact}[Estimates for the volume of Ball in the manifold of quantum channels \cite{Szarek98}] \label{fact:Szarek}
Let  $\ball{\h{V}}{\ep}=\lbrace{\h{U}\in\U(d) |\  \dproj(\h{U},\h{V})\leq\ep \rbrace} $ be a ball centered around $\h{V}\in\U(d)$, where  $\dproj$ is the distance from Eq.\eqref{eq:opDIST}. There exist absolute constants $c,C>0$ such that for all $\ep\in[0,2]$ 
\begin{equation}\label{eq:boundONvolume}
\left(\frac{\ep}{C}\right)^{d^2-1}\leq \vol(\ball{\h{V}}{\ep}) \leq \left(\frac{\ep}{c}\right)^{d^2-1}\,,
\end{equation}
where  $C=9\pi$ and $c=1/87$.
\end{fact}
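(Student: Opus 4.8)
The target is a two-sided estimate $(\epsilon/C)^{d^2-1}\le \vol(\ball{\h{V}}{\epsilon})\le(\epsilon/c)^{d^2-1}$ for balls in the projective unitary group $\U(d)=\mathrm{P}\UU(d)$ measured in the phase-optimized operator-norm distance $\dproj$. Let me think about how I'd actually do this.

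First, by left-invariance of the Haar measure and of $\dproj$, it suffices to take $\h{V}=\h{I}$, so I want to estimate the Haar volume of $\{\h{U}:\dproj(\h{U},\h{I})\le\epsilon\}$.

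Key structure: $\dproj(\h{U},\h{I})=\min_\varphi\|U-e^{i\varphi}I\|_\infty$. If $U$ has eigenvalues $e^{i\theta_j}$, $\theta_j\in(-\pi,\pi]$, then $\|U-e^{i\varphi}I\|_\infty=\max_j|e^{i\theta_j}-e^{i\varphi}|=\max_j 2|\sin((\theta_j-\varphi)/2)|$. So $\dproj(\h U,\h I)$ is (up to the $2\sin(x/2)\asymp x$ comparison on the relevant range) the "width" of the eigenphase cloud after the optimal rotation $\varphi$ — essentially $\min_\varphi\max_j\,\mathrm{dist}_{\T}(\theta_j,\varphi)$, half the diameter of the eigenphase set on the circle. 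Thus $\dproj(\h U,\h I)\le\epsilon$ iff all eigenphases of $U$ lie, up to a global shift, in an arc of length $\lesssim 2\epsilon$. Scaling-wise this is a codimension-$0$ condition on the $\mathbb T$-worth of eigenphase sums being irrelevant and a condition that the "traceless part" of $\log U$ is small — i.e., a ball of radius $\sim\epsilon$ in the $(d^2-1)$-dimensional Lie algebra $\mathfrak{su}(d)$.

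The cleanest route: use the exponential map. Write elements near $\h I$ as $\h U=\exp(\h A)$ with $\h A$ in the Lie algebra of $\U(d)$, which is $\mathfrak{su}(d)$ (traceless anti-Hermitian, real dimension $d^2-1$), equipped with a fixed Euclidean (e.g. Hilbert–Schmidt) norm. Two things are needed: (a) the exponential map is bi-Lipschitz between a Euclidean ball $B_{\mathrm{Eu}}(0,r)\subset\mathfrak{su}(d)$ and the $\dproj$-ball of comparable radius, with Lipschitz constants that are \emph{absolute} (independent of $d$) — this is where the operator norm is the right norm to use, since $\|\exp(A)-\exp(B)\|_\infty$ and $\|A-B\|_\infty$ are comparable on bounded sets with $d$-independent constants, and $\|\cdot\|_\infty\le\|\cdot\|_{HS}$ in one direction while on the ball the eigenvalue structure lets one also control things the other way; and (b) the Haar measure, pushed back through $\exp$, has a density with respect to Lebesgue measure on $\mathfrak{su}(d)$ that is bounded above and below by absolute constants on balls of bounded radius — the Jacobian of $\exp$ is $\prod_{\alpha}\frac{\sin(\ldots)}{\ldots}$ over roots, which on a small enough ball is within a factor $2$ of $1$ uniformly in $d$. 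Granting these, $\vol(\ball{\h I}{\epsilon})$ is comparable, with absolute constants, to $\voleu(B_{\mathrm{Eu}}(0,c'\epsilon))$ in a $(d^2-1)$-dimensional normed space; and the volume of a ball of radius $\rho$ in the HS-norm on $\mathfrak{su}(d)$, \emph{normalized so that the unit-radius ball in the ambient space corresponds to Haar mass normalized to the whole group at $\rho\sim 1$}, scales exactly as $\rho^{d^2-1}$ times a ratio of volumes of unit balls — and the subtlety is that the volume of the Euclidean unit ball in dimension $N=d^2-1$ is \emph{not} bounded between absolute constants. The resolution is that one should not normalize by the Euclidean unit ball but directly by the whole-group normalization: $\vol(\ball{\h I}{\epsilon})=\epsilon^{d^2-1}\cdot\vol(\ball{\h I}{1})\cdot(1+o(1))$-type scaling only holds if $\vol(\ball{\h I}{1})$ itself is bounded between $(1/C)^{d^2-1}$ and $(1/c)^{d^2-1}$, which is exactly the $d$-dependent normalization constant that Szarek's computation pins down. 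So in practice I would just cite \cite{Szarek98}: the content there is precisely the careful tracking of how the covering/packing numbers and volume ratios of $\UU(d)$ (and the projective quotient) in the operator norm behave, yielding the stated $C=5\pi$, $c=\pi/10$.

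\textbf{Main obstacle.} The genuinely hard part is \emph{not} the scaling in $\epsilon$ — that is forced by homogeneity of the $(d^2-1)$-dimensional manifold — but getting the constants $c,C$ to be \emph{absolute}, independent of $d$. Naively the volume of a Euclidean ball in $\mathbb R^{d^2-1}$ carries a factor $\pi^{(d^2-1)/2}/\Gamma((d^2+1)/2)\sim (2\pi e/(d^2-1))^{(d^2-1)/2}$ which is very far from $1$; this has to be exactly absorbed by the normalization of Haar measure and by the precise shape of the operator-norm unit ball in $\mathfrak{su}(d)$ (which is \emph{not} Euclidean and whose volume ratio to the Euclidean ball also carries a $d$-dependent factor). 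Balancing these competing $d$-dependent factors to land on clean absolute $c,C$ is the crux, and is exactly what the cited work of Szarek accomplishes; for the present paper it is legitimate to adapt that result as a black box, checking only that the projective quotient $\UU(d)\to\mathrm{P}\UU(d)$ and the passage between $\dproj$ and $\dproj_\diamond$ (controlled by Eq.~\eqref{eq:equivDISTANCE}) affect the constants by at most the harmless factor already incorporated into $C=5\pi$, $c=\pi/10$.
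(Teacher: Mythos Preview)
Your proposal is correct and aligns with the paper's own treatment: the paper does not prove this fact but simply cites Szarek's result on homogeneous spaces of $\UU(d)$ equipped with the operator-norm metric, adding only the remark that the variational characterization of $\dproj$ in Eq.~\eqref{eq:opDIST} makes that result apply directly to $\U(d)=\mathrm{P}\UU(d)$. Your discussion of the exponential map and the absorption of $d$-dependent factors is a useful unpacking of what Szarek's argument actually accomplishes, but the bottom line is the same black-box citation plus the observation that passing to the projective quotient is harmless.
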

\begin{remark}
In the original work of Szarek \cite{Szarek98} considered general homogeneous spaces of $\UU(d)$ equipped with the metric induced from the operator norm. By the virtue of the variational characterization of the distance $\dproj$ given in Eq.\eqref{eq:opDIST} the results presented there apply directly to our scenario. 
\end{remark}

The last necessary element in our proof strategy is the existence of efficient polynomial approximation of the Dirac $\delta$ 
 in the space of unitary channels. Here we present only the final result, while details of the construction and the necessary technical details are presented in Section \ref{sec:constr} and the appendix. 

\begin{Theorem}[Efficient polynomial approximation of the Dirac $\delta$ on unitary channels]\label{th:FUNC}
Consider a set of Unitary channels $\U(d)$ on $d$-dimensional quantum system equipped with a metric $\dproj$  (see  Eq.\eqref{eq:opDIST}). 
Let $\ep\in(0,2/3]$, $\sigma\leq \frac{\ep}{6 \sqrt{d}}$.  There exists a function $\F^{\sigma}_k:\U(d)\rightarrow \R$ with the following properties 
\begin{enumerate}
\item Normalisation: $\int_{\U(d)} \dt \mu (\h{U}) \F^{\sigma}_k(\h{U})  =1$.
\item Vanishing integral of modulus outside of the ball $B(\h{I},\ep)$:
for 
\begin{equation}\label{eq:degreeTAIL}
k\geq 5 \frac{d^{\frac{3}{2}}}{\sigma}  \sqrt{\frac{1}{8}\frac{\ep^2}{d^2\sigma^2} + \log(\frac{1}{\sigma})}
\end{equation}
we have 
\begin{equation}
\int_{ B(\h{I},\ep)^c } \dt \mu (\h{U}) |\F^\sigma_k (\h{U})| \leq 9 \exp\left(-\frac{ \epsilon^2 }{4\sigma^2}\right) \left(\frac{\pi}{2}\right)^{d(d-1)} \ .
\end{equation}
\item Low degree polynomial: $\F^{\sigma}_k (\h{U})$ can be represented as a balanced polynomial in $U$ and $\bar{U}$  of degree $k$.
\item Bounded $L^2$-norm:  for $k\geq d/\sigma$ we have 
\begin{equation}\label{eq:L2norm}
\left\|\F^\sigma_k\right\|_{2}=\sqrt{\int_{\U(d)} \dt  \mu (\h{U})  \left|\F^{\sigma}_k(\h{U}) \right|^2} \leq 
8 \times 2^{d^2} \, \sigma^{-d(d-\frac12)}
\ .
\end{equation}

\item $L^1$-norm  close to $1$: for  $k$ 
satisfying \eqref{eq:degreeTAIL} 
we have 
\begin{align}
    1\leq \|\F_k^\sigma\|_1\leq  1 + 
6 \exp\left(-\frac{ \ep^2 }{4\sigma^2}\right) \left(\frac{\pi}{2}\right)^{d(d-1)} \ ,
\end{align}
where $\|\F_k^\sigma\|_1 = \int_{\U(d)} \dt  \mu (\h{U})  \left|\F^{\sigma}_k(\h{U}) \right|$.
\end{enumerate}

\end{Theorem}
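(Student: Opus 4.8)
The plan is to build $\F^\sigma_k$ as a low-degree polynomial truncation of a Gaussian-type bump on the Lie group $\U(d)$, localised at the identity channel. First I would parametrise a neighbourhood of $\h{I}$ by the Lie algebra via $\h{U}=\exp(\ii H)$ with $H$ Hermitian (traceless, since we are on $\mathrm{P}\UU(d)$), and transport to $\U(d)$ a Gaussian $g_\sigma(H)\propto \exp(-\|H\|_2^2/2\sigma^2)$ in the $d^2-1$ real coordinates of $H$. Under the metric $\dproj$, the geodesic distance from $\h{I}$ to $\exp(\ii H)$ is comparable to $\|H\|_\infty$ (up to the usual $2\pi$-periodicity issues, which are harmless for $\sigma$ small), and $\|H\|_\infty \le \|H\|_2 \le \sqrt d\,\|H\|_\infty$; this is exactly where the factor $\sqrt d$ and the hypothesis $\sigma\le \ep/(6\sqrt d)$ enter. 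The Gaussian tail estimate $\int_{\|H\|_\infty\ge\ep} g_\sigma \lesssim \exp(-\ep^2/2\sigma^2)$ combined with the bound on the Jacobian of the exponential map (and the volume bound of Fact \ref{fact:Szarek}, or rather Szarek's estimates on the homogeneous space, to control the measure ratio $\dt\mu$ versus Lebesgue on the algebra) then yields property 2 for the \emph{exact} Gaussian bump, with the prefactor $(\pi/2)^{d(d-1)}$ coming from these volume comparisons.

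The second and main step is \emph{polynomialisation}: the exact bump is not a balanced polynomial of bounded degree, so I would expand it in the basis of matrix elements of irreducible representations $\Pi^\lambda$ of $\UU(d)$ (equivalently, project onto $\bigoplus_{|\lambda|\le k}\mathcal W^\lambda$ using the Peter--Weyl decomposition and the fact, noted around Eq. \eqref{eq:UUdecomposition}, that balanced polynomials of degree $\le k$ span precisely the span of matrix elements of $\Pi^{\lambda'}$ appearing in $U^{\otimes k}\otimes\bar U^{\otimes k}$). Normalisation (property 1) is then automatic by keeping the $\lambda=\text{trivial}$ Fourier coefficient fixed to $1$. The key quantitative point is to show that the truncated Fourier series $\F^\sigma_k$ still satisfies property 2: the error introduced by dropping the high-weight components must be controlled. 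For this I would use that the Fourier coefficients of a Gaussian decay like the heat-kernel eigenvalues $e^{-c|\lambda|\sigma^2}$ (or use the smoothing-by-heat-kernel trick directly, writing the bump as an actual heat kernel at time $\sim\sigma^2$ on $\U(d)$ so that its degree-$k$ tail is exponentially small once $k\sigma^2$ exceeds the stated threshold), giving the degree condition \eqref{eq:degreeTAIL} up to the explicit constants. Property 3 is then a definition, and property 4 (the $L^2$ bound) follows from Parseval: $\|\F^\sigma_k\|_2^2 = \sum_{|\lambda|\le k}|\widehat{\F^\sigma_k}(\lambda)|^2 d_\lambda \le$ (value of the un-truncated bump's $L^2$ norm), which for a Gaussian of width $\sigma$ in $d^2-1$ dimensions is $\sim \sigma^{-(d^2-1)}$, and the explicit constant $8\times 2^{d^2}\sigma^{-d(d-1/2)}$ is obtained by tracking the Jacobian and volume constants from Szarek through this computation (the condition $k\ge d/\sigma$ ensures the truncation has not yet eaten into the bulk of the $L^2$ mass).

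The hard part will be the explicit constants and the passage between the flat Gaussian on the Lie algebra and the genuinely low-degree balanced polynomial on the curved group: one must simultaneously (i) keep the degree $k$ only $\widetilde O(d^{3/2}/\sigma)$ — which is why a naive bound $|\lambda|\lesssim$ (number of boxes) is too lossy and one needs the sharper relation between the Casimir eigenvalue and the number of boxes of $\lambda$, i.e.\ the heat-kernel rate on $\U(d)$ scales like $e^{-\Theta(\sigma^2 C_\lambda)}$ with $C_\lambda\gtrsim |\lambda|$ — and (ii) keep the $L^2$ norm polynomial in $\sigma^{-1}$ (not, say, $\sigma^{-d^2}$ with a dimension-dependent base), which forces a careful choice of the bump and of where to truncate. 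These computations, together with the required random-matrix input controlling the operator norm of GUE matrices (needed to translate between $\|H\|_2$ and $\|H\|_\infty$ with the right constants), are deferred to Section \ref{sec:constr} and the Appendix.
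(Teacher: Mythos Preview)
Your outline captures the high-level architecture (Gaussian bump, polynomial truncation, $L^2$ control) but diverges substantially from the paper's execution, and some of the steps you sketch would be harder to make precise than the route actually taken.

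The paper does \emph{not} work on the full Lie algebra or via Peter--Weyl / heat-kernel machinery. Instead it defines the bump as a class function: a periodised Gaussian $f^\sigma_p(\bphi)$ on the $d$-dimensional maximal torus (eigenvalue space), lifted to $\UU(d)$ via $F^\sigma(U)=f^\sigma_p(\mathrm{Eig}(U))$, then phase-averaged to land in $\U(d)$. All integrals are then computed by the Weyl integration formula, which reduces them to $d$-dimensional torus integrals against the Vandermonde weight $\prod_{i<j}|e^{i\varphi_i}-e^{i\varphi_j}|^2$. This is where the GUE connection becomes central rather than peripheral: the lower bound on the normalisation constant $\N^\sigma$ (Lemma~\ref{lem:loverBOUNDnormCONST}) is obtained via the Mehta integral, and the tail bound outside $B(\h{I},\ep)$ (Lemma~\ref{lem:TailUnitary}) is precisely a GUE operator-norm tail estimate. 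The $(\pi/2)^{d(d-1)}$ factor comes from comparing the Weyl density with the flat Vandermonde density on the torus (Lemma~\ref{lem:IvsJ}), not from Szarek's volume bounds.

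The truncation is likewise done on the torus: one cuts the ordinary multi-Fourier series of $f^\sigma_p$ at $|\n|_1\le k$, \emph{not} the character expansion. That this truncated symmetric trigonometric polynomial lifts to a balanced polynomial in $\mathcal H_k$ is a separate combinatorial lemma (Lemma~\ref{lem:PolynDegree}) and is not automatic. The truncation error is then controlled by the explicit Gaussian Fourier coefficients $e^{-\sigma^2\n^2/2}$ and a lattice-tail sum (Lemma~\ref{lem:sum-ball}), with no heat-kernel or Casimir asymptotics needed.

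Finally, your $L^2$ heuristic $\sigma^{-(d^2-1)}$ is off because it ignores the normalisation: the unnormalised bump has $\|\tilde\F^\sigma\|_2\lesssim \sigma^{-d/2}$ (a $d$-dimensional, not $(d^2-1)$-dimensional, Gaussian $L^2$ norm, since everything lives on the torus), and dividing by $\N^\sigma\sim\sigma^{d(d-1)}$ is what produces the exponent $d(d-\tfrac12)$. Your Peter--Weyl/heat-kernel route could in principle be made to work, but getting the explicit constants and the $d^{3/2}/\sigma$ degree scaling would require reproducing essentially the same torus estimates that the paper performs directly.
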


\begin{corollary} \label{cor:far-away-ball}
Let $\kappa,\epsilon,\sigma$ be  positive numbers satisfying $\ep\in[0,2/3]$,  $\kappa\leq\ep$, $\sigma\leq \frac{\ep-\kappa}{6\sqrt{d}}$. Moreover, let $k$ by natural number satisfying \eqref{eq:degreeTAIL}.
Then for every $\h{V}$ such that $\dproj\left(\h{V},\h{I}\right)\geq \ep $ we have 
\begin{equation}
\int_{B(\h{V},\kappa)} \dt \mu (\h{U}) \F^\sigma_k (\h{U}) \leq 9 \exp\left(-\frac{(\epsilon-\kappa)^2}{4\sigma^2}\right) \left(\frac{\pi}{2}\right)^{d(d-1)}\ .
\end{equation}.
\end{corollary}
\begin{proof}
We note  that for $\dproj(\h{V},\h{I})\geq\ep$ and $\kappa\leq\ep$ we have $B(\h{V},\kappa)\subset B(\h{I},\ep-\kappa)^c$ and consequently 
\begin{equation}\label{eq:firstINEQ}
\int_{B(\h{V},\kappa)} \dt \mu (\h{U}) 
\F^\sigma_k
(\h{U})\leq
    \int_{B(\h{V},\kappa)} \dt \mu (\h{U}) 
    |\F^\sigma_k
    (\h{U})| \leq \int_{B(\h{I},\ep-\kappa)^c} \dt \mu (\h{U}) 
    |\F^\sigma_k(\h{U})|\ .
\end{equation}
Then the assertion follows from property 2 of the above theorem applied with  $ \ep-\kappa$ in place with $\ep$.
\end{proof}

We are now ready to prove the main result of this section. In the course of the proof we will make use of properties 1, 2 and 3 listed above. Property 4, which bounds the second norm of $\F_\sigma$ will be used in the subsequent section while discussing connection between approximate designs and $\ep$-nets.

\begin{Theorem}[Exact $t$-expanders define $\ep$-nets for sufficiently large $t$]\label{th:Texact}\label{thm:exact-designs}Let $\ep\leq1$ and let $\nu$ be a measure on $\UU(d)$ which is an exact $t$-design with 
	\begin{equation}\label{eq:t-eps-d}
t\geq  5\frac{d^{5/2}}{\ep} \tau(\ep,d)\ , 
	\end{equation}
for $\tau(\ep,d)=\log\left(6C/\ep\right)^{\frac{1}{2}}\sqrt{\frac{1}{32}\log\left(6C/\ep\right)^{\frac{1}{2}}+\log\left(\frac{d}{\ep} \log\left(6C/\ep\right)^{\frac{1}{2}} \right)}$, where $C=9\pi$ is the constant appearing in Fact \ref{fact:Szarek}. 

Then, the set of unitary channels from the support of $\nu$, $\lbrace{\h{V} \rbrace}_{V\in\mathrm{supp}(\nu)}$ forms an $\ep$-net in $\U(d)$ with respect to the distance $\dproj$ defined in  Eq.\eqref{eq:opDIST}.
\end{Theorem}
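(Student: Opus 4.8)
The plan is to combine the three technical ingredients assembled above---Lemma \ref{prop:volupperbound} (the exact $t$-design forces the regularized integral to equal the volume of the ball), Lemma \ref{lem:upperbound} (if the support fails to be an $\ep$-net, that same integral is bounded by an integral of $\Phi$ over a ball far from the identity), and Theorem \ref{th:FUNC}/Corollary \ref{cor:far-away-ball} (the polynomial Dirac-delta approximant $\F^\sigma_k$ has exponentially small mass on balls far from $\h I$)---together with the lower volume bound from Fact \ref{fact:Szarek}, to derive a contradiction. First I would argue by contradiction: suppose the support of $\nu$ is \emph{not} an $\ep$-net. Set $\kappa = \ep/2$, choose $\sigma$ at the boundary of the allowed range, namely $\sigma = \frac{\ep-\kappa}{6\sqrt d} = \frac{\ep}{12\sqrt d}$, and take $\Phi = \F^\sigma_k$ with $k$ equal to $t$ (or the largest degree $\le t$ for which Theorem \ref{th:FUNC}'s hypothesis \eqref{eq:degreeTAIL} holds); since $\F^\sigma_k \in \hcal_k \subseteq \hcal_t$ and it is normalized (property 1), Lemma \ref{prop:volupperbound} applies.

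Next I would chain the estimates. By Lemma \ref{prop:volupperbound} applied to the channel $\h V_0$ produced by Lemma \ref{lem:upperbound},
\begin{equation}
\vol\!\left(\ball{\h V_0}{\ep/2}\right) = \int_{\ball{\h V_0}{\ep/2}} \dt\mu(\h U)\,(T_\nu \F^\sigma_k)(\h U) \le \max_{\h V:\dproj(\h V,\h I)\ge\ep}\int_{\ball{\h V}{\ep/2}}\dt\mu(\h U)\,\F^\sigma_k(\h U),
\end{equation}
and then Corollary \ref{cor:far-away-ball} (with $\kappa=\ep/2$) bounds the right-hand side by $9\exp\!\left(-\frac{(\ep/2)^2}{4\sigma^2}\right)\left(\frac{\pi}{2}\right)^{d(d-1)}$. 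On the other hand Fact \ref{fact:Szarek} gives $\vol\!\left(\ball{\h V_0}{\ep/2}\right) \ge \left(\frac{\ep}{2C}\right)^{d^2-1}$. So a contradiction follows as soon as
\begin{equation}
\left(\frac{\ep}{2C}\right)^{d^2-1} > 9\,\exp\!\left(-\frac{\ep^2}{16\sigma^2}\right)\left(\frac{\pi}{2}\right)^{d(d-1)},
\end{equation}
which, after substituting $\sigma = \ep/(12\sqrt d)$ so that $\frac{\ep^2}{16\sigma^2} = \frac{144 d}{16} = 9d$, turns into a condition requiring $9d$ to dominate $(d^2-1)\log(2C/\ep) + d(d-1)\log(\pi/2) + \log 9$ --- which is false for the relevant regime. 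This tells me that $\sigma = \ep/(12\sqrt d)$ is \emph{not} small enough: the exponent $\ep^2/(16\sigma^2)$ must beat an $O(d^2\log(1/\ep))$ term, so I actually need $\sigma \simeq \frac{\ep}{d\sqrt{\log(1/\ep)}}$ up to constants, i.e. $\sigma$ of order $\frac{\ep}{\sqrt d}\cdot\frac{1}{\sqrt{d\log(\dots)}}$. The book-keeping here is exactly what produces the factor $\tau(\ep,d)$: one picks $\sigma$ so that $\exp(-\ep^2/(16\sigma^2))$ is smaller than $(\ep/(2C))^{d^2-1}(2/\pi)^{d(d-1)}/9$, solves for $\sigma$, and then plugs that $\sigma$ into the degree requirement \eqref{eq:degreeTAIL}, $k \ge 5\frac{d^{3/2}}{\sigma}\sqrt{\frac{\ep^2}{8d^2\sigma^2}+\log(1/\sigma)}$, to read off the stated bound $t \ge 5\frac{d^{5/2}}{\ep}\tau(\ep,d)$.

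Concretely, I would (i) fix $\kappa = \ep/2$; (ii) show that choosing $\sigma$ with $\frac{1}{\sigma^2} = \frac{16}{\ep^2}\log\!\big(6C/\ep\big)$ --- equivalently $\sigma = \frac{\ep}{4\sqrt{\log(6C/\ep)}}$, possibly after absorbing the $(2/\pi)^{d(d-1)}$ and $(\ep)^{d^2-1}$ and $9$ factors into the logarithm, which is where the precise argument $\log(6C/\ep)$ of $\tau$ comes from --- makes the tail bound of Corollary \ref{cor:far-away-ball} strictly smaller than the volume lower bound of Fact \ref{fact:Szarek}; (iii) verify that with this $\sigma$ and with $k=t$ as in \eqref{eq:t-eps-d}, the hypothesis \eqref{eq:degreeTAIL} of Theorem \ref{th:FUNC} is met, so that the tail bound is actually available; (iv) check that $\sigma \le \frac{\ep-\kappa}{6\sqrt d} = \frac{\ep}{12\sqrt d}$ (needed for Corollary \ref{cor:far-away-ball}) and $k \ge d/\sigma$ hold for $\ep\le 1$; (v) conclude that the assumed failure of the $\ep$-net property is contradictory, hence the support of $\nu$ is an $\ep$-net. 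The main obstacle is purely the calibration in steps (ii)--(iii): one must choose $\sigma$ small enough that the Gaussian tail $\exp(-\ep^2/(16\sigma^2))$ overwhelms the dimension-dependent combinatorial prefactors $9\,(\pi/2)^{d(d-1)}(2C/\ep)^{d^2-1}$ coming from the volume comparison, yet large enough that the resulting degree $k\simeq d^{3/2}/\sigma$ times the $\sqrt{\cdots}$ correction stays at $5 d^{5/2}\tau(\ep,d)/\ep$; threading this needle is exactly what fixes the form of $\tau(\ep,d)$, and everything else is substitution and monotonicity checks valid for $\ep\le 1$.
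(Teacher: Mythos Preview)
Your approach is essentially identical to the paper's: argue by contradiction, apply Lemma~\ref{prop:volupperbound} and Fact~\ref{fact:Szarek} to lower-bound $\I(\nu,\ep,k,\h{V}_0)$ by $(\ep/2C)^{d^2-1}$, apply Lemma~\ref{lem:upperbound} together with Corollary~\ref{cor:far-away-ball} (at $\kappa=\ep/2$) to upper-bound it by $9\exp(-\ep^2/(16\sigma^2))(\pi/2)^{d(d-1)}$, then choose $\sigma$ so the two bounds clash and read off $k$ from~\eqref{eq:degreeTAIL}.

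One slip to fix: in your concrete step~(ii) you write $\sigma=\tfrac{\ep}{4\sqrt{\log(6C/\ep)}}$, which gives $\ep^2/(16\sigma^2)=\log(6C/\ep)$ and hence $\exp(-\ep^2/(16\sigma^2))=\ep/(6C)$; this is far too large to beat $(\ep/2C)^{d^2-1}$ for $d\ge 2$. You already diagnosed the right scaling a few lines earlier (``I actually need $\sigma\simeq \ep/(d\sqrt{\log(1/\ep)})$''), and indeed the paper takes $\sigma_\ast=\tfrac{\ep}{d}\log(6C/\ep)^{-1/2}$ so that $\ep^2/(16\sigma_\ast^2)=\tfrac{d^2}{16}\log(6C/\ep)$, which carries the needed $d^2$ in the exponent. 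With that corrected $\sigma$, your steps (iii)--(v) go through exactly as in the paper and produce the stated $t\ge 5\,d^{5/2}\tau(\ep,d)/\ep$.
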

Recall that for a discrete measure $\nu$ we have simply $V\in\mathrm{supp}(\nu)$ iff $\nu(V)>0$. 

\begin{proof} Assume that $\nu$ is an exact $t$-design and that the set of unitary channels from the support of $\nu$, $\lbrace{\h{V} \rbrace}_{V\in\mathrm{supp}(\nu)}$ is not an $\ep$-net. Let $\h{V}_0$ be a unitary channel that cannot be $\ep$-approximated by elements form the support of $\nu$: $\dproj(\h{V}_0,\h{V}_i)\geq\ep$. Let $\F^{\sigma}_k$ be the function satisfying conditions described in Theorem \ref{th:FUNC}. By  Fact \ref{fact:Szarek} and Lemma \ref{prop:volupperbound} we have 
\begin{equation}\label{eq:lowerboundBALL}
  \left(\frac{\epsilon}{2C}\right)^{d^2-1} \leq \vol(\ball{\h{V}_0}{\epsilon/2}) = \int_{\ball{\h{V}_0}{\epsilon/2}} \dt \mu(\h{U}) \left(T_\nu \F^{\sigma}_k \right) (\h{U})\ ,
\end{equation}
where $k\leq t$. On the other hand by Lemma \ref{lem:upperbound} we have
\begin{equation}
	\int_{\ball{\h{V}_0}{\epsilon/2}} \dt \mu(\h{U}) (T_{\nu} \F^{\sigma}_k) (U)  \leq \max_{\h{V}:\dproj(\h{V},\h{I})\geq \ep} \int_{\ball{\h{V}}{\epsilon/2}} \dt \mu(U) \F^{\sigma}_k(U) \ .
\end{equation}
Using  Corollary \ref{cor:far-away-ball} for
\begin{equation}\label{eq:settingK}
   k=5 \frac{d^{\frac{3}{2}}}{\sigma} \sqrt{\frac{1}{32}\frac{\ep^2}{d^2\sigma^2} + \log(\frac{1}{\sigma})}\ ,
\end{equation}
we get
\begin{equation}\label{eq:upbound}
\max_{\h{V}:\dproj(\h{V},\h{I})\geq \ep} \int_{B(\h{V},\epsilon/2)} \dt \mu (\h{U}) \F^\sigma_k (\h{U}) \leq 9  \exp\left(-\frac{\epsilon^2}{16\sigma^2}\right) \left(\frac{\pi}{2}\right)^{d(d-1)} \ .
\end{equation}
As $\sigma$ decreases (and the degree $k$ increases according to \eqref{eq:settingK}) eventually the right-hand side of \eqref{eq:upbound} becomes smaller than the lower bound $\left(\frac{\epsilon}{2C}\right)^{d^2-1}$ from \eqref{eq:lowerboundBALL}. In particular, by inserting $\sigma$ which satisfies 
\begin{equation}\label{eq:fullinequality}
\frac{1}{2}\left(\frac{\epsilon}{10\pi}\right)^{d^2-1}\geq 9 \exp\left(-\frac{\epsilon^2}{16\sigma^2}\right) \left(\frac{\pi}{2}\right)^{d(d-1)}  \ .
\end{equation}
 to Eq.\eqref{eq:settingK} we get (by contradiction) the degree $k_\ast$ such that that unitaries from the support of an  exact $k\geq k_\ast$ design form an $\ep$ net in $\U(d)$. It is easy to see that taking
\begin{equation}\label{eq:OptSigma}
    \sigma\leq \sigma_\ast(d,\ep)=\frac{\ep}{d}\frac{1}{\log(6C/\ep)^{\frac12}}\ ,
\end{equation}
suffices to satisfy \eqref{eq:fullinequality}. Inserting $\sigma_\ast$ to \eqref{eq:settingK} gives $k_\ast$ equal to the right-hand side of inequality \eqref{eq:t-eps-d}. Finally, we remark that dropping the factor of $1/2$ in inequality \eqref{eq:fullinequality} yields essentially identical scaling.

\end{proof}

\section{Approximate t-designs and epsilon-nets}
\label{sec:approx-designs}

In this section, we will establish even a closer connection between approximate $t$- designs and $\ep$-nets. Specifically, we prove that under suitable conditions approximate $t$-expanders define $\ep$-nets and vice versa.

We first extend the reasoning established in the preceding section. Namely,  for {\it approximate} 
\tdesigns\ the 
integral $\I(\nu,\ep,k,\h{V}_0)$ from Eq. \eqref{eq:KeyInt} is not anymore equal $\vol(\ball{\h{V}_0}{\ep})$. Therefore, we need to argue  that the integral is not too small with respect to the volume.  This is expected, as $\nu$ is almost a $k$-design. The following Lemma expresses this intuition quantitatively.

\begin{Lemma} \label{lem:int-lowerbound}Let $\nu$ be an arbitrary measure on $\UU(d)$ which is a $\delta$-approximate unitary $k$-expander. Then for arbitrary $\h{V}\in\U(d)$, $\ep\in[0,2]$ and a function $\Phi\in \hcal_k$  (i.e. a balanced polynomial of degree at most $k$ in $U$ and in $\bar{U}$) satisfying 
\begin{equation}\label{eq:phi-normalization}
	\int_{\U(d)} \dt \mu(\h{U}) \Phi(\h{U}) = 1\  ,
\end{equation}
we have the following inequality 
\begin{equation}\label{eq:desAVERineq}
\left|\int_{\ball{\h{V}}{\ep}} \dt \mu(\h{U}) \left(T_\nu \Phi \right) (\h{U}) - \vol(\ball{\h{V}}{\ep})  \right| \leq   \delta  \sqrt{\vol(\ball{\h{V}}{\ep})} \left\| \Phi\right\|_2\ .
\end{equation}
\end{Lemma}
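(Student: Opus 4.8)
The plan is to compare the quantity $\int_{\ball{\h{V}}{\ep}} \dt \mu(\h{U}) (T_\nu \Phi)(\h{U})$ with its Haar-measure counterpart by exploiting the fact that, since $\Phi\in\hcal_k$, applying $T_\nu$ versus $T_\mu$ differs only by the expander norm. First I would observe that, because $\nu$ is an exact statement only up to error $\delta$, we have $T_\mu\Phi = 1$ (the constant function), exactly as in Lemma \ref{prop:volupperbound}, so that $\int_{\ball{\h{V}}{\ep}} \dt\mu(\h{U})\, (T_\mu\Phi)(\h{U}) = \vol(\ball{\h{V}}{\ep})$. Hence the left-hand side of \eqref{eq:desAVERineq} equals $\left|\int_{\ball{\h{V}}{\ep}} \dt\mu(\h{U})\, \big((T_\nu - T_\mu)\Phi\big)(\h{U})\right|$.

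The key step is to bound this integral by Cauchy--Schwarz in $\L^2(\U(d))$: write the integral over the ball as the $\L^2$-inner product of the indicator function $\mathbf{1}_{\ball{\h{V}}{\ep}}$ with $(T_\nu - T_\mu)\Phi$, so that
\begin{equation}
\left|\int_{\ball{\h{V}}{\ep}} \dt\mu(\h{U})\, \big((T_\nu - T_\mu)\Phi\big)(\h{U})\right| \leq \left\| \mathbf{1}_{\ball{\h{V}}{\ep}} \right\|_2 \left\| (T_\nu - T_\mu)\Phi \right\|_2\ .
\end{equation}
The first factor is exactly $\sqrt{\vol(\ball{\h{V}}{\ep})}$. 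For the second factor, I would invoke Proposition \ref{prop:gaps}: since $\Phi\in\hcal_k$ and $\nu$ is a $\delta$-approximate $k$-expander, the operator $(T_\nu - T_\mu)$ restricted to $\hcal_k$ has operator norm $\|T_{\nu}|_{\hcal_k} - T_{\mu}|_{\hcal_k}\|_\infty = \|T_{\nu,k} - T_{\mu,k}\|_\infty = \delta(\nu,k) \leq \delta$. Therefore $\left\| (T_\nu - T_\mu)\Phi \right\|_2 \leq \delta \left\|\Phi\right\|_2$, and combining the two bounds yields precisely \eqref{eq:desAVERineq}.

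\textbf{Main obstacle.} The one point that requires care is the identification $\int_{\ball{\h{V}}{\ep}} F\, \dt\mu = \langle \mathbf{1}_{\ball{\h{V}}{\ep}}, F\rangle_{\L^2(\U(d))}$ together with the fact that $(T_\nu - T_\mu)\Phi$ genuinely lives in $\hcal_k$ (or at least in a subspace on which the operator-norm bound from Proposition \ref{prop:gaps} applies): both $T_\nu$ and $T_\mu$ preserve $\hcal_k$ since it is an invariant subspace under the left-regular action, so $(T_\nu - T_\mu)\Phi\in\hcal_k$ whenever $\Phi\in\hcal_k$, and the norm bound is legitimate. A minor subtlety is that $\Phi$ may be complex-valued in general, but the argument above makes no use of reality, so no extra hypothesis is needed; the constant function $T_\mu\Phi$ equals $\int_{\U(d)}\dt\mu(\h{U})\,\Phi(\h{U}) = 1$ by the normalization \eqref{eq:phi-normalization}. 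This completes the plan.
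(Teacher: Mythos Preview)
Your proposal is correct and follows essentially the same argument as the paper: identify $T_\mu\Phi=1$ from the normalization, rewrite the difference as the $\L^2$-inner product of the ball's indicator with $(T_\nu-T_\mu)\Phi$, apply Cauchy--Schwarz, and bound $\|(T_\nu-T_\mu)\Phi\|_2\leq\delta\|\Phi\|_2$ via Proposition~\ref{prop:gaps}. The only cosmetic difference is that the paper writes the inner product as $\langle 1-T_\nu\Phi,\,I_{\ball{\h{V}}{\ep}}\rangle$ before invoking $1=T_\mu\Phi$, which is the same step in a different order.
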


\begin{proof} 
	We start with the identity
	\begin{equation}\label{eq:identityINNER}
	\int_{B(\h{V},{\ep})}\dt\mu(\h{U})  -\int_{B(\h{V},\ep)}\dt\mu(\h{U}) T_{\nu} (\Phi)(\h{U}))  = \langle 1-T_\nu \Phi ,I_{\ball{\h{V}}{\ep}}\rangle\  ,
	\end{equation}
	where $\<\cdot,\cdot\>$ is the inner product in $L^2(\U(d))$, and $I_A$ is the indicator function of a set $A\subset\U(d)$. 	Using  Cauchy-Schwartz inequality we obtain  
	\begin{equation}\label{eq:CSinnerprod}
	\left|\langle 1-T_\nu \Phi ,I_{\ball{\h{V}}{\ep}}\rangle \right|\leq \|1-T_\nu \Phi\|_2 ||I_{\ball{\h{V}}{\ep}}||_2 = \|1-T_\nu \Phi\|_2 \sqrt{\vol\left(\ball{\h{V}}{\ep}\right)}\ .
	\end{equation}
	Furthermore, condition \eqref{eq:phi-normalization}, the assumption $\Phi\in\H_k$ and definitions of $T_\mu$ and the infinity norm allows us to write an estimate 
	\begin{equation}\label{eq:2normEstmate}
\|1-T_\nu \Phi\|_2 =  \|(T_\mu-T_\nu) \Phi\|_2 \leq \|(T_\mu-T_\nu)_{\H_k}\|_\infty \|\Phi\|_2 = \delta \|\Phi\|_2\ ,
	\end{equation}
	where in the last equality we used Proposition \ref{prop:gaps}. By combining bounds \eqref{eq:2normEstmate} and \eqref{eq:CSinnerprod} with \eqref{eq:identityINNER} we obtain the desired result, i.e. we get \eqref{eq:desAVERineq}. 
\end{proof}

With the help of the above Lemma and due to properties of a carefully chosen polynomial approximation to the Dirac delta given in Theorem \ref{th:FUNC} we are in the position to prove the main result of this section. 

\begin{Theorem}[$\delta$-approximate $t$-expanders define $\ep$-nets]
\label{th:approx-design}
Suppose that a measure $\nu$ on $\UU(d)$ is a $\delta$-approximate unitary $t$-expander with
	\begin{equation}\label{eq:t-app-eps-d}
t\geq  5\frac{d^{5/2}}{\ep} \tau(\ep,d)\ ,\ 
  \delta\leq   \frac{1}{32} \left(
\frac{\ep^\frac32}{4C\log\left(\frac{6C}{\ep}  \right)^\frac12 d}\right)^{d^2-1} 
\end{equation}
and $\tau(\ep,d)=\log\left(6C/\ep\right)^{\frac{1}{2}}\sqrt{\frac{1}{32}\log\left(6C/\ep\right)^{\frac{1}{2}}+\log\left(\frac{d}{\ep} \log\left(6C/\ep\right)^{\frac{1}{2}} \right)}$, where $C=9\pi$ is the constant appearing in Fact \ref{fact:Szarek}. Then, the set of unitary channels $\lbrace{\h{V} \rbrace}_{V\in\mathrm{supp}(\nu)}$ forms an $\ep$-net in $\U(d)$ with respect to the distance $\dproj$ defined in  \eqref{eq:opDIST}.
\end{Theorem}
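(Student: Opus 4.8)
The plan is to adapt the proof of Theorem~\ref{th:Texact} (the exact-design case), replacing the use of Lemma~\ref{prop:volupperbound} with its approximate counterpart Lemma~\ref{lem:int-lowerbound}. First I would argue by contradiction: assume $\nu$ is a $\delta$-approximate $t$-expander with $t$ and $\delta$ as in \eqref{eq:t-app-eps-d}, but that the channels $\{\h{V}\}_{V\in\mathrm{supp}(\nu)}$ do \emph{not} form an $\ep$-net. Then there exists $\h{V}_0\in\U(d)$ with $\dproj(\h{V}_0,\h{V}_i)\geq\ep$ for every $V_i\in\mathrm{supp}(\nu)$. As in the exact case, I would apply the argument to the function $\Phi=\F^\sigma_k$ from Theorem~\ref{th:FUNC}, taking $k$ equal to the right-hand side of \eqref{eq:settingK} (with the same choice $\sigma\le\sigma_\ast(d,\ep)$ as in \eqref{eq:OptSigma}), so that $k\le t$ and $\F^\sigma_k\in\hcal_k\subseteq\hcal_t$.

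Next I would sandwich $\vol(\ball{\h{V}_0}{\ep/2})$ from two sides. From Lemma~\ref{lem:int-lowerbound} applied with $\h{V}=\h{V}_0$, $\ep\to\ep/2$ and $\Phi=\F^\sigma_k$, we get
\begin{equation}
\vol(\ball{\h{V}_0}{\ep/2}) \leq \int_{\ball{\h{V}_0}{\ep/2}} \dt\mu(\h{U})\,(T_\nu\F^\sigma_k)(\h{U}) + \delta\,\sqrt{\vol(\ball{\h{V}_0}{\ep/2})}\,\bigl\|\F^\sigma_k\bigr\|_2\ .
\end{equation}
On the other hand, Lemma~\ref{lem:upperbound} together with Corollary~\ref{cor:far-away-ball} (exactly as in \eqref{eq:upbound}) bounds the first term on the right by $9\exp(-\ep^2/16\sigma^2)(\pi/2)^{d(d-1)}$, while Fact~\ref{fact:Szarek} gives $\vol(\ball{\h{V}_0}{\ep/2})\le (\ep/(2c))^{d^2-1}$ and a matching lower bound $(\ep/(2C))^{d^2-1}$. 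Using property~4 of Theorem~\ref{th:FUNC} to bound $\|\F^\sigma_k\|_2\le 8\cdot 2^{d^2}\sigma^{-d(d-1/2)}$ (valid since $k\ge d/\sigma$), I would substitute the choice $\sigma=\sigma_\ast(d,\ep)$ and obtain
\begin{equation}
\left(\frac{\ep}{2C}\right)^{d^2-1} \leq 9\exp\!\left(-\frac{\ep^2}{16\sigma_\ast^2}\right)\left(\frac{\pi}{2}\right)^{d(d-1)} + \delta\left(\frac{\ep}{2c}\right)^{\frac{d^2-1}{2}} 8\cdot 2^{d^2}\sigma_\ast^{-d(d-\frac12)}\ .
\end{equation}
The exponential term is already $\le\tfrac12(\ep/(2C))^{d^2-1}$ by the computation behind \eqref{eq:fullinequality}–\eqref{eq:OptSigma}, so the contradiction follows as soon as $\delta$ is small enough that the second term is also $\le\tfrac12(\ep/(2C))^{d^2-1}$; solving this inequality for $\delta$ (plugging in $\sigma_\ast=\frac{\ep}{d}\log(6C/\ep)^{-1/2}$ and absorbing the $2^{d^2}$, $c$, $C$ factors into the stated constants) yields precisely the bound on $\delta$ in \eqref{eq:t-app-eps-d}, up to the explicit constants.

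The main obstacle will be the bookkeeping in the last step: one must verify that the crude bound $\|\F^\sigma_k\|_2\le 8\cdot 2^{d^2}\sigma_\ast^{-d(d-1/2)}$, after inserting $\sigma_\ast\sim\ep/d$ up to logs, still leaves a $\delta$ that is $(\ep^{3/2}/d)^{d^2}$ up to polylog factors — i.e. that the exponent $d(d-1/2)$ of $\sigma_\ast^{-1}$ combines with the $(\ep/2c)^{(d^2-1)/2}$ factor and the target $(\ep/2C)^{d^2-1}$ to give an overall $\ep$-power of roughly $\tfrac32(d^2-1)$ in the \emph{numerator} of the admissible $\delta$, matching Result~\ref{res:desAREnets}. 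The factor $2^{d^2}$ and the various numerical constants $c=\pi/10$, $C=5\pi$ must be tracked but do not change the scaling; I would simply state the resulting clean inequality \eqref{eq:t-app-eps-d} and note that a slightly different distribution of the factor $\tfrac12$ between the two terms gives the same scaling, exactly as remarked at the end of the proof of Theorem~\ref{th:Texact}.
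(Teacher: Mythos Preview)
Your proposal is correct and follows essentially the same approach as the paper: contradiction, apply Lemma~\ref{lem:int-lowerbound} to $\F^\sigma_k$ with the same choice $\sigma=\sigma_\ast$ and $k$ as in Theorem~\ref{th:Texact}, combine with Lemma~\ref{lem:upperbound} and Corollary~\ref{cor:far-away-ball}, and then require the $\delta$-correction term to be at most half the volume lower bound. The only cosmetic difference is that the paper phrases the last step as the condition $\delta\,\|\F^\sigma_k\|_2\,\sqrt{\vol}\le\tfrac12\vol$ and then uses the \emph{lower} bound on $\sqrt{\vol}$ (so only the constant $C$ appears), whereas you bound $\sqrt{\vol}$ from \emph{above} via $c$; both routes are valid and give the same scaling in \eqref{eq:t-app-eps-d}.
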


 \begin{remark} A similar result follows from arguments given in the proof of Theorem 5 in \cite{HArrowHasstings2008}. From careful analysis of the arguments presented there it can be shown that $\delta$-approximate $t$-expanders  with $t\simeq d^3/\ep^2$  and $\delta \simeq (\ep/\sqrt{d})^{2d^2}$  define $\ep$-nets with respect to the distance between unitary channels induced from the Hilbert-Schmidt norm 
 \begin{equation}
 \tilde{\mathrm{D}}\left(\h{U},\h{V}\right)\coloneqq \min_{\varphi\in[0,2\pi)} \| U -\exp(\ii \varphi) V \|_{\mathrm{HS}}\ .
 \end{equation}
 In order to attain the scaling claimed above one can tight bounds on volumes of Hilbert-Schmidt balls in $\UU(d)$ (cf. \cite{SzarekBook}  Theorem 5.11).   Our result gives a more favorable behaviour of $t$ and $\delta$ in $d$ and $\epsilon$.  This is because the inequality  $\mathrm{D}\left(\h{U},\h{V}\right)\leq \tilde{\mathrm{D}}\left(\h{U},\h{V}\right)$  implies that $\ep$-net with respect to distance $\tilde{\mathrm{D}}$ is automatically $\ep$-net with respect to  distance $\mathrm{D}$.
 
 \end{remark}

\begin{proof} We proceed analogously as in the proof of Theorem \ref{th:Texact}. Assume that $\nu$ is a $\delta$-approximate $t$-design and that the set of unitary channels from the support of $\nu$, $\lbrace{\h{V} \rbrace}_{V\in\mathrm{supp}(\nu)}$, is not an $\ep$-net. We choose $\h{V}_0$ to be a unitary channel that cannot be $\ep$-approximated by elements form the support of $\nu$: $\dproj(\h{V}_0,\h{V}_i)\geq\ep$. Moreover, we take $\F^{\sigma}_k$ to be the polynomial function described in Theorem \ref{th:FUNC} for $\sigma=\sigma_\ast(d,\ep)$ (c.f Eq.\eqref{eq:OptSigma}) and $k=5\frac{d^{5/2}}{\ep} \tau(\ep,d)$. 

From Lemma \ref{lem:int-lowerbound} and Eq.\eqref{eq:L2norm} it follows that 
\begin{equation}\label{eq:downdelta}
\vol(\ball{\h{V}_0}{\ep/2})-\|\F_k^\sigma \|_2\,\delta\, \vol(\ball{\h{V}_0}{\ep/2})^{\frac{1}{2}}\leq \int_{\ball{\h{V}_0}{\epsilon/2}} \dt \mu(\h{U}) \left(T_\nu \F^{\sigma}_k \right) (\h{U})\ ,
\end{equation}
for any $k\leq t$. On the other hand, by repeating the same arguments as in the proof of Theorem \ref{th:Texact} we have 
\begin{equation} \label{updelta}
	\int_{\ball{\h{V}_0}{\epsilon/2}} \dt \mu(\h{U}) (T_{\nu} \F^{\sigma}_k) (\h{U})   \leq \exp\left(-\frac{\epsilon^2}{16\sigma^2}\right) \left(\frac{\pi}{2}\right)^{d(d-1)}\ .
\end{equation}\label{eq:suffLHS}
It is now clear that if $\delta$ is such that 
\begin{equation} \label{eq:intermediateINEQ}
\|\F^\sigma_k \|_2\,\delta\, \vol(\ball{\h{V}_0}{\ep/2})^{\frac{1}{2}} \leq \frac{1}{2} \vol(\ball{\h{V}_0}{\ep/2})\ ,
\end{equation}
then we obtain inequality \eqref{eq:fullinequality}. However, already form the proof of Theorem \ref{th:Texact} we know that this inequality cannot be satisfied for $\sigma=\sigma_\ast(d,\ep)$ and $k=5\frac{d^{5/2}}{\ep} \tau(\ep,d)$ and hence unitaries from the support of $\nu$ must form an $\ep$-net. We conclude the proof observing that
\begin{align}
\delta\leq \frac{1}{32} \left(
\frac{\ep}{4C}\right)^{\frac{d^2-1}{2}} \sigma_*(d,\ep)^{-d^2+1}   
\end{align}
and hence also 
\begin{align}
    \delta\leq 
    \frac{1}{32} \left(
\frac{\ep^{\frac{3}{2}}}{4C\log\left(\frac{6C}{\ep}\right)^\frac12 d}\right)^{d^2-1} 
\end{align}
is a sufficient condition for validity \eqref{eq:intermediateINEQ}. This can be verified easily using Fact \ref{fact:Szarek}.

\end{proof}

We now prove the statement in the opposite direction to the one given in Theorem \ref{th:approx-design}.

\begin{Theorem}[$\ep$-nets in the set of unitary channels can be used to define $(2\ep t)$-approximate unitary $t$-expanders]\label{th:netsAREdesigns}
Consider a subset of unitary channels $\S\subset\U(d)$ which forms an $\ep$-net in $\U(d)$ with respect to the distance $\dproj$ defined in \eqref{eq:opDIST}. Let $t$ be arbitrary natural number. Then, there exists an ensemble $\E=\lbrace{\nu_i ,\h{V}_i\rbrace}$, with $\h{V}_i\in\S$, which forms:  (a)  $\ep t$-approximate $t$-design (see Eq. \eqref{eq:appDIAMOND}); (b)    $2\ep t$-approximate $t$-expander.
\end{Theorem}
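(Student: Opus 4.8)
The goal is to approximate the Haar measure's $t$-th moment operator (and $t$-particle channel) using only a discrete ensemble of gates drawn from the $\ep$-net $\S$. The natural strategy is to push forward the Haar measure through a measurable "rounding" map. First I would partition $\U(d)$ (up to a null set) into measurable regions $\{R_i\}$, each of diameter at most $\ep$ in the metric $\dproj$, such that each $R_i$ contains a designated point $\h{V}_i\in\S$ with $\dproj(\h{U},\h{V}_i)\le\ep$ for all $\h{U}\in R_i$; such a partition exists precisely because $\S$ is an $\ep$-net. Then I would define the ensemble $\E=\{\nu_i,\h{V}_i\}$ by $\nu_i\coloneqq\mu_P(R_i)$. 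Equivalently, there is a map $r:\U(d)\to\S$ with $r(\h{U})=\h{V}_i$ for $\h{U}\in R_i$, and the ensemble is the pushforward $r_\ast\mu_P$, so that $\dproj(\h{U},r(\h{U}))\le\ep$ everywhere.

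\textbf{Estimating the error.}
For part (a), I write
\begin{equation}
\Delta_{\nu,t}-\Delta_{\mu,t}=\sum_i\int_{R_i}\dt\mu_P(\h{U})\left(\h{V}_i^{\ot t}-\h{U}^{\ot t}\right),
\end{equation}
and bound the diamond norm by the triangle inequality and convexity:
\begin{equation}
\left\|\Delta_{\nu,t}-\Delta_{\mu,t}\right\|_\diamond\le\sum_i\int_{R_i}\dt\mu_P(\h{U})\left\|\h{V}_i^{\ot t}-\h{U}^{\ot t}\right\|_\diamond\le\max_{\h{U},\h{V}:\,\dproj(\h{U},\h{V})\le\ep}\left\|\h{U}^{\ot t}-\h{V}^{\ot t}\right\|_\diamond.
\end{equation}
So everything reduces to a single two-gate estimate: if $\dproj(\h{U},\h{V})\le\ep$ then $\|\h{U}^{\ot t}-\h{V}^{\ot t}\|_\diamond\le\ep t$. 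I would prove this by a telescoping sum $\h{U}^{\ot t}-\h{V}^{\ot t}=\sum_{j=1}^t \h{U}^{\ot(j-1)}\ot(\h{U}-\h{V})\ot\h{V}^{\ot(t-j)}$, using submultiplicativity of $\|\cdot\|_\diamond$ under tensor products and the fact that unitary channels have unit diamond norm, reducing to $\|\h{U}-\h{V}\|_\diamond\le\ep$ for a single pair. For the single-pair bound I would use Eq.~\eqref{eq:equivDISTANCE}: $\|\h{U}-\h{V}\|_\diamond=\dproj_\diamond(\h{U},\h{V})\le 2\dproj(\h{U},\h{V})$ — but that gives $2\ep$, which is exactly the looser constant appearing in part (b). To get the sharp $\ep t$ for part (a) I need $\|\h{U}-\h{V}\|_\diamond\le\dproj(\h{U},\h{V})$ directly; this follows from the operational/interpolation characterization of the diamond norm together with $\|U-e^{\ii\varphi}V\|_\infty$ controlling $\|\,\h{U}-\h{V}\,\|_\diamond$ with constant one when one optimizes the phase — I would invoke (the proof of) Proposition~\ref{prop:equivOFnorms} for the one-sided inequality $\dproj_\diamond\le\dproj$ actually no: that inequality goes the wrong way, so instead I would give the direct argument that $\|\h{U}-\h{V}\|_\diamond\le 2\min_\varphi\|U-e^{\ii\varphi}V\|_\infty$ is not tight and that in fact the $t$-fold telescoping combined with $\dproj$-optimized phases yields the factor $\ep t$ without the $2$; alternatively, simply accept $2\ep t$ throughout and note part (a) uses the slightly finer channel-fidelity bound $\|\h{U}^{\ot t}-\h{V}^{\ot t}\|_\diamond\le t\,\|\h{U}-\h{V}\|_\diamond\le t\cdot\dproj_\diamond\le \ep t$ after re-deriving $\dproj_\diamond\le\dproj$.

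\textbf{From diamond to expander norm.}
For part (b), $\|T_{\nu,t}-T_{\mu,t}\|_\infty$ is the operator (not diamond) norm of the difference of the Choi-type moment operators; since $T_{\nu,t}=\int\dt\nu(U)\,U^{\ot t}\ot\bar U^{\ot t}$ is (a representation of) the Choi operator of $\Delta_{\nu,t}$ up to normalization, and the operator norm of a matrix is at most its diamond norm viewed appropriately — more directly, I would run the same telescoping argument on $U^{\ot t}\ot\bar U^{\ot t}$ using $\|U^{\ot t}\ot\bar U^{\ot t}-V^{\ot t}\ot\bar V^{\ot t}\|_\infty\le 2t\,\|U-e^{\ii\varphi}V\|_\infty=2t\,\dproj(\h{U},\h{V})$ (the factor $2t$: $t$ copies of $U$ and $t$ copies of $\bar U$, each contributing $\dproj$), then average over $R_i$ exactly as above to get $2\ep t$.

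\textbf{Main obstacle.}
The conceptual content is easy; the only real care needed is (i) constructing the measurable partition cleanly — standard, since $\U(d)$ is a compact metric space and $\S$ is a net, one takes $R_i$ to be the Voronoi-type cells $\{\h{U}:\h{V}_i$ is the lexicographically-first net point within $\ep\}$ intersected appropriately — and (ii) tracking constants to land exactly on $\ep t$ versus $2\ep t$, which hinges on whether one uses the diamond-norm distance or the operator-norm distance $\dproj$ as the basic quantity, and on the sharp one-sided comparison $\|\h{U}-\h{V}\|_\diamond\le\dproj_\diamond(\h{U},\h{V})$ (which is the definition) together with $\dproj_\diamond\le 2\dproj$. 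I expect step (ii) — getting the clean factor $t$ rather than $2t$ in part (a) — to be the fiddly point, resolved by working with $\Delta_{\nu,t}$ in the diamond metric where the natural Lipschitz constant of $\h{U}\mapsto\h{U}^{\ot t}$ is exactly $t$ with respect to $\dproj_\diamond$, and then comparing $\dproj_\diamond$ to $\dproj$ only once at the end.
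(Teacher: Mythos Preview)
Your approach is essentially identical to the paper's: construct a finite measurable partition $\{\V_i\}$ of $\U(d)$ with $\V_i\subset\ball{\h{V}_i}{\ep}$ (the paper does this explicitly via a greedy disjointification of a finite subcover, which is your Voronoi-type idea), set $\nu_i=\mu_P(\V_i)$, write the moment differences as $\sum_i\int_{\V_i}(\cdots)$, and bound each summand telescopically. For part (b) the paper does exactly your computation: $\|(U\ot\bar U)^{\ot t}-(V\ot\bar V)^{\ot t}\|_\infty\le t\|U\ot\bar U-V\ot\bar V\|_\infty\le 2t\dproj(\h{U},\h{V})\le 2t\ep$.

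Your confusion about the constant in part (a) is entirely warranted and not a defect of your argument. The paper's own proof for (a) runs $\|\h{V}_i^{\ot t}-\h{U}^{\ot t}\|_\diamond\le t\|\h{V}_i-\h{U}\|_\diamond$ and then invokes \eqref{eq:equivDISTANCE}, which gives $\dproj_\diamond\le 2\dproj$ and hence the bound $2t\ep$, not $\ep t$; indeed the displayed chain in the paper ends with ``$\leq 2t\ep$'' and the sentence immediately after states $\|\Delta_{\nu,t}-\Delta_{\mu,t}\|_\diamond\le 2t\ep$. So the discrepancy you identified between the claimed $\ep t$ and the achievable $2\ep t$ is a slip in the theorem statement (or in the displayed inequality), not a gap in your reasoning. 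You should stop trying to squeeze out the factor of $2$ for part (a) via $\dproj$; the sharp Lipschitz constant of $\h{U}\mapsto\h{U}^{\ot t}$ is $t$ with respect to $\dproj_\diamond$, and the passage from $\dproj$ to $\dproj_\diamond$ costs exactly the extra factor of $2$.
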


\begin{proof}
We present an explicit (although possibly computationally inefficient) construction of an ensemble of gates from $\S$ which will form: (a) $2\ep t$-approximate $t$-design and (b) $2\ep t$-approximate $t$-expander. First, we note that, by definition of $\ep$-net, elements for $\S$ define a cover of $\U(d)$ via balls of radius \emph{at most} $\ep$:
\begin{equation}
\U(d)= \bigcup_{\h{V}\in\S} \ball{\h{V}}{\ep}\ .
\end{equation}
Since $\U(d)$ is a compact space, we can take a finite collection of gates $\lbrace{ V_i\rbrace}_{1=1}^K\subset\S$ such that 
\begin{equation}\label{eq:finiteCOVER}
\U(d)= \bigcup_{i=1}^K \ball{\h{V}_i}{\ep}\ .
\end{equation}
We note that $K$ in the above equation is some, in general unknown, but \emph{finite} number. We now use Eq. \eqref{eq:finiteCOVER} to define a disjoint collection of subsets $\V_i$ that cover $\U(d)$. We set
\begin{equation}
\V_1 = \ball{\h{V}_1}{\ep}\ ,\ \V_{k+1}=\ball{\h{V}_{k+1}}{\ep}\setminus \bigcup_{i=1}^k \ball{\h{V}_{i}}{\ep}\ , k=1,\ldots,K-1\ .
\end{equation}
By the construction we have $\V_i \cap \V_j=\emptyset$ whenever $i\neq j$. Moreover, $\bigcup_{i=1}^K \V_i =\U(d)$, while it might also happen that $\V_i = \emptyset$ for $i>K'$, where $K'=\min\lbrace{k|\ \bigcup_{i=1}^k \ball{\h{V}_{i}}{\ep} =\U(d)\rbrace}$. From the definition we have $\V_i \subset \ball{\h{V}_{i}}{\ep}$ and hence for $i\leq K'$ 
\begin{equation}\label{eq:distCOND}
\h{U}\in\V_i \Longrightarrow  \dproj \left(\h{U},\h{V}_i\right)\leq\ep\ .
\end{equation}

We are now ready to define a discrete ensemble that is a desired approximate $t$-design. We set $V_i$ to be \emph{any} unitary operator which is compatible with channel $\h{V}_i$ and define
\begin{equation}
\E=\lbrace{\nu_i, V_i \rbrace}_{i=1}^{K'}\ ,\ \text{where}\ \nu_i =\mu_P(\V_i)\ .
\end{equation}
The normalization of the probability $\nu_i$ follows from the construction.  Let us finally upper bound $\left\|\Delta_{\nu,t}-\Delta_{\mu,t} \right\|_\diamond$ and $\left\|T_{\nu,t}-T_{\mu,t} \right\|_\infty$. We start with the former. Inserting definitions of channels $\Delta_{\nu,t} , \Delta{\mu,t}$ we get
\begin{equation}
\Delta_{\nu,t}-\Delta_{\mu,t}= \sum_{i=1}^{K'} \mu_P(\V_i) \h{V}_{i}^{\otimes t} -\int_{\UU(d)}d\mu(\h{U})\h{U}^{\otimes t}=\sum_{i=1}^{K'} \int_{\V_i}d\mu(\h{U})\left[ \h{V}_i^{\ot t} -\h{U}^{\ot t} \right]\ .
\end{equation}
From the above we obtain
\begin{equation} \label{eq:intermedCHANNEL}
\left\|\Delta_{\nu,t}-\Delta_{\mu,t} \right\|_\diamond = 
\left\| \sum_{i=1}^{K'} \int_{\V_i}d\mu(\h{U})\left[ \h{V}_i^{\ot t} -\h{U}^{\ot t} \right] \right\|_\diamond  \leq
\sum_{i=1}^{K'} \int_{\V_i}d\mu(\h{U}) \left\|\h{V}_i^{\ot t} -\h{U}^{\ot t} \right\|_\diamond \ .
\end{equation}
For every summand in the last expression we use Eq. \eqref{eq:distCOND} and the telescopic property of the diamond norm and \eqref{eq:equivDISTANCE} 
\begin{equation}
\|\h{V}_i^{\ot t} -\h{U}^{\ot t} \left\|_\diamond \leq
 t\|\h{V}_i -\h{U} \right\|_\diamond \leq t\dproj(\h{V}_i,\h{U})\leq 2 t\ep\ .
\end{equation}
Inserting the above bound to \eqref{eq:intermedCHANNEL} proves $\left\|\Delta_{\nu,t}-\Delta_{\mu,t} \right\|_\diamond \leq 2t \ep$. 
The proof for the case of operator norm $\left\|T_{\nu,t}-T_{\mu,t} \right\|_\infty$ is analogous: 

\begin{equation}
T_{\nu,t}-T_{\mu,t}= \sum_{i=1}^{K'} \mu_P(\V_i) V_{i}^{\ot t} \ot \bar{V}_{i}^{\ot t}-\int_{\UU(d)}d\mu(U)U^{\ot t} \ot \bar{U}^{\ot t}=\sum_{i=1}^{K'} \int_{\varphi^{-1}(\V_i)}d\mu(U)\left[ V_{i}^{\ot t} \ot \bar{V}_{i}^{\ot t}-U^{\ot t}\ot \bar{U}^{\ot t} \right]\ .
\end{equation}
We have the following chain of inequalities
\begin{equation}\label{eq:chainFIN}
\left\|T_{\nu,t}-T_{\mu,t} \right\|_\infty \leq 
\sum_{i=1}^{K'} \int_{\varphi^{-1}(\V_i)}d\mu(U) \left\| V_{i}^{\ot t} \ot \bar{V}_{i}^{\ot t}-U^{\ot t}\ot \bar{U}^{\ot t} \right\| \leq 
\sum_{i=1}^{K'} \int_{\varphi^{-1}(\V_i)}d\mu(U)  \left\|V_i\otimes\bar{V}_i - U\otimes\bar{U}  \right\|_\infty t \leq 2\ep t\ .
\end{equation}
The second inequality follows from  the well-known telescopic bound (see for example page 27 in \cite{BHH2016})
\begin{equation}
\left\|A^{\ot t}-B^{\otimes t}  \right\|_\infty \leq  \left\| A - B \right\|_\infty t\ ,
\end{equation}
applied for $A=V_i \ot \bar{V}_i$ and $B=U\ot\bar{U}$. The third inequality in Eq.\eqref{eq:chainFIN} follows form Eq.\eqref{eq:opDIST} and the fact that numbers $\lbrace{\mu_P(\V_i\rbrace}_{i=1}^{K'}$ sum up to 1. To see this we first choose the relative phase between $U$ and $V_i$ in such a way that these operators saturate Eq.\eqref{eq:opDIST} and arrive at the bound $\left\|V_i\otimes\bar{V}_i - U\otimes\bar{U}  \right\|_\infty \leq 2\dproj(\h{U},\h{V})$.  Second, we use Eq.\eqref{eq:distCOND} which ensures  $\dproj\left(\h{V}_i, \h{U}\right)\leq \ep$.
\end{proof}

Let $t(\ep,d)$ be the minimal degree of a $t$-design $\nu$  that ensures that unitaries from the support of $\nu$ form an $\ep$-net with respect to diamond norm. We conclude this section with by showing that the dependence of $t(\ep,d)$  on $\ep$ appearing in Theorems \ref{th:Texact} and \ref{th:approx-design} is essentially optimal (for fixed dimension $d$). We also show that the dependence of $t(\ep,d)$ on $d$ in the said theorems is close to being optimal. Specifically, we proved  $t(\ep,d) \propto d^{\frac{5}{2}}$ while the bound given below requires $t(\ep,d) \geq d^2$ (for fixed $\ep$).

\begin{theorem}[Lower bounds on the degree of exact $t$-designs that define $\ep$-nets]\label{th:tightness} Let $\ep\in(0,2]$ and let $t(d,\ep)$ be the minimal degree of an exact unitary $t$-design $\nu$ such that support of $\nu$ forms   $\ep$-net with respect to distance $\dproj$ in $\U(d)$.Then we have the following inequalities
\begin{align}
  t(d,\ep)\geq  \frac{(d^2-1)}{2}\log(\frac{c}{2\ep}) -d ,\  \text{where }&  c=1/87\  , \label{eq:comparNETS} \\ 
  t(d,\ep)\geq  \frac{A}{\ep} -2\ ,\ \text{where }& A=(1-\frac{1}{d}) \geq 0.4\ \label{eq:comparTDESIGN} .
\end{align}
\end{theorem}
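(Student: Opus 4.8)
The plan is to prove the two lower bounds separately, both by exploiting the defining property of an exact $t$-design applied to a cleverly chosen balanced polynomial, together with the volume estimates from Fact \ref{fact:Szarek}. For the first bound \eqref{eq:comparNETS}, the idea is a counting/packing argument. If $\mathrm{supp}(\nu)$ is an $\ep$-net, then the balls $\ball{\h{V}_i}{\ep}$ centered at points of the support cover $\U(d)$, so the number $N$ of support points satisfies $N \cdot \max_i \vol(\ball{\h{V}_i}{\ep}) \geq 1$, hence by the upper bound in Fact \ref{fact:Szarek} we get $N \geq (c/\ep)^{d^2-1}$. On the other hand, an exact $t$-design supported on $N$ points must have $N$ at least as large as the dimension of the space $\H_t$ of balanced polynomials of degree $\le t$ restricted to $\U(d)$ — more precisely one uses that the uniform-on-support measure reproduces Haar moments, and a design of strength $t$ on $\mathrm{PU}(d)$ needs at least $\dim \H_t$ points (a standard Fisher-type / linear-independence bound). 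Then one estimates $\dim \H_t$ from below: it is at least the dimension of the irreducible representation of highest weight corresponding to roughly $t$ boxes, or more simply $\dim \H_t \le (\text{number of weights}) \le \binom{t + d^2 - 1}{d^2-1}$ or so, and inverting this polynomial-in-$t$-of-degree-$(d^2-1)$ bound against $N \geq (c/\ep)^{d^2-1}$ yields $t \gtrsim \frac{d^2-1}{2}\log(c/2\ep) - d$ after the routine arithmetic (the factor $2$ and the $-d$ absorbing the binomial-coefficient slack and the $\log d$ terms).

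For the second bound \eqref{eq:comparTDESIGN}, I would instead test the design against a single explicit low-degree balanced polynomial that is large near the identity and whose Haar average is small, using the fact that some support point must be $\ep$-close to $\h{I}$. Concretely: since $\mathrm{supp}(\nu)$ is an $\ep$-net there is $\h{V}_j \in \mathrm{supp}(\nu)$ with $\dproj(\h{V}_j, \h{I}) \le \ep$. Take the polynomial $P(\h{U}) = \big|\tfrac{1}{d}\tr(U)\big|^{2s}$ for a suitable $s$; this is a balanced polynomial of degree $s$, it equals $1$ at $\h{I}$, it is bounded by $1$ everywhere, and at $\h{V}_j$ it is at least $(1 - \ep^2/2)^{2s}$ or so by relating $|\tr(U)/d|$ to $\dproj(\h{U},\h{I})$. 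The design property gives $\sum_i \nu_i P(\h{V}_i) = \int P \, d\mu = \int_{\UU(d)} |\tr(U)/d|^{2s}\,d\mu(U)$, and this Haar integral is known to be small: for $s \le d$ it equals $s!/d^s \le (s/d)^s$ roughly (it is the number of certain permutations over $d^s$, and for $s\le d$ that count is $s!$). Comparing the lower bound $\nu_j (1-\ep^2/2)^{2s} \le \sum_i \nu_i P(\h{V}_i)$ with the upper bound forces either $s$ large relative to $d$ or $\ep$ large; choosing $s$ near the transition point $s \sim d/\ep$ and requiring consistency produces $t \geq s \gtrsim A/\ep - 2$ with $A = 1 - 1/d$.

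The main obstacle I anticipate is getting the constants right in the second argument — in particular pinning down the exact value of $\int_{\UU(d)}|\tr(U)/d|^{2s}\,d\mu(U)$ (which is $\min(s,d)!$ divided by $d^s$ only in the regime $s\le d$, via the combinatorics of the symmetric group / Schur–Weyl, and behaves differently for $s>d$) and tracking how the near-identity estimate $P(\h{V}_j)\ge (1-O(\ep^2))^{s}$ interacts with the choice $s\sim d/\ep$ so that the exponent $s\cdot\ep^2 = O(d\ep)$ stays controlled while $s/d = O(1/\ep)$ makes the Haar side genuinely small; one has to choose $s$ (and hence the forced lower bound on $t$) so that $(s/d)^s$ beats $(1-O(\ep^2))^{2s}\nu_j$, and the clean statement $t\ge A/\ep-2$ with $A=1-1/d$ suggests a specific optimal choice that the authors have found. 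For the first bound the only delicate point is the lower estimate on $\dim\H_t$: one must be careful that $\H_t$ as defined (balanced polynomials of degree $\le t$ in $U$ and $\bar U$ simultaneously, i.e. the span of matrix elements of $U^{\otimes t}\otimes\bar U^{\otimes t}$) still has dimension growing like a fixed polynomial of degree $d^2-1$ in $t$, so that inverting gives the $\log$ with the correct coefficient $(d^2-1)/2$ — the $\tfrac12$ coming precisely from the fact that it is $U$ and $\bar U$ together that are bounded in degree, effectively halving the "radius" in weight space compared to the volume exponent $d^2-1$. I would not expect either computation to be conceptually hard, just to require care with the explicit constants $c=\pi/10$ and the numerical claim $A\ge 0.4$.
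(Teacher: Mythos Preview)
Your plan has a fundamental logical gap that stems from misreading what the theorem is really asserting. As literally stated, the claim ``any exact $t$-design whose support is an $\ep$-net satisfies $t\geq\ldots$'' is false: the Haar measure is an exact $t$-design for every $t$ and its support is an $\ep$-net for every $\ep$. The theorem is actually a lower bound on $t(\ep,d)$, the minimal $t$ such that \emph{every} exact $t$-design has support forming an $\ep$-net. The paper proves this by exhibiting a \emph{specific} $t$-design $\nu_*$ with small support $|\mathrm{supp}(\nu_*)|=\binom{d+t-1}{t}^2$ (Theorem~21 of \cite{scott}): if $\mathrm{supp}(\nu_*)$ is to be an $\ep$-net, then $\binom{d+t-1}{t}^2$ must exceed the covering number, which forces $t$ large.

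Your argument for \eqref{eq:comparNETS} runs in the wrong direction. You obtain $N\geq(c/\ep)^{d^2-1}$ from the $\ep$-net property and $N\geq\dim\H_t$ from the Fisher-type bound, but two \emph{lower} bounds on $N$ yield no relation between $t$ and $\ep$. What is needed is an \emph{upper} bound on $N$ in terms of $t$, which cannot exist for an arbitrary design $\nu$; this is precisely why the paper invokes the specific small-support design $\nu_*$.

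Your argument for \eqref{eq:comparTDESIGN} has an analogous flaw: from $\nu_j P(\h{V}_j)\leq \int P\,d\mu$ you cannot conclude anything without a lower bound on the weight $\nu_j$, and for an arbitrary design $\nu_j$ can be arbitrarily small. The paper instead reuses the same $\nu_*$: its support (assumed to be an $\ep$-net) is fed through the construction of Theorem~\ref{th:netsAREdesigns} to produce a $(2\ep K)$-approximate $K$-design $\tilde\nu$ with $|\mathrm{supp}(\tilde\nu)|\leq|\mathrm{supp}(\nu_*)|=\binom{d+t-1}{t}^2$, and this is compared against the lower bound $(1-2\ep K)\binom{d+K-1}{K}^2\leq|\mathrm{supp}(\tilde\nu)|$ from Lemma~26 of \cite{BHH2016}. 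Optimizing over $K$ gives \eqref{eq:comparTDESIGN}. Your test-polynomial idea is attractive but would only work if you first fixed $\nu=\nu_*$ and then controlled the weights of $\nu_*$, which is a different (and harder) route.
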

\begin{proof}
In order to prove both inequalities we use the fact (proven in Theorem 11 of \cite{scott}) that in any dimension $d$ there exist an exact unitary $t$ designs $\nu_\ast$ whose support has cardinality  upper bound by
\begin{equation}\label{eq:compar1}
\left|\mathrm{supp}(\nu_\ast) \right| \leq \binom{d+t-1}{t}^2\ .
\end{equation}
We now assume that $t$ is the minimal number such that $\mathrm{supp}(\nu_\ast)$ forms an $\ep$-net. As a consequence we have $t\leq t(d,\ep)$.  In order to prove Eq. \eqref{eq:comparNETS} we note under this assumption we have   
\begin{equation}\label{eq:compar2}
|\mathrm{supp}(\nu_\ast)|\geq N_\mathrm{cov} (\ep)\ ,
\end{equation}
where $N_\mathrm{cov}(\ep)$  i.e. is the covering number of $\U(d)$ i.e the minimal cardinality of $\ep$-net in $\U(d)$. Using the standard volume-comparison reasoning (see for example introduction of \cite{Szarek98}) we get
\begin{equation}\label{eq:compar3}
N_\mathrm{cov} (\ep) \geq 1/\vol(\ball{\h{V}}{\ep}) \geq \left(\frac{c}{\ep}\right)^{d^2-1}\ ,
\end{equation}
where in the last inequality we used \eqref{eq:boundONvolume}. Combing above estimates gives
\begin{equation}
\left(\frac{c}{2\ep}\right)^{d^2-1} \leq \binom{d+t-1}{t}^2\ .
\end{equation}
Inserting to the above elementary inequality $\binom{d+t-1}{t}\leq e^{d+t}$, using $t\leq t(d,\ep)$ and taking logarithm from both sides gives Eq. \eqref{eq:comparNETS}.

In order to prove \eqref{eq:comparTDESIGN}  we use the construction from the proof of Theorem \ref{th:netsAREdesigns}. There, we showed that any  $\ep$-net $\S$ can be used to form $\ep K$-approximate $K$-design $\tilde{\nu}$ (in the sense of definition from Eq. \eqref{eq:appDIAMOND}), where $K$ is any natural number. Importantly,  from the arguments given in the proof it follows that
\begin{equation}\label{eq:estimNETdesign}
|\mathrm{supp}(\tilde{\nu})| \leq  |\S| .
\end{equation}
Now, we assume as before that $t$ is such that  $\S=\mathrm{supp}(\nu_\ast)$ is an $\ep$-net, where $\nu_\ast$ is an exact $t$-design that we introduced above. From \eqref{eq:estimNETdesign} and \eqref{eq:compar1} we get
\begin{equation}\label{eq:step4}
|\mathrm{supp}(\tilde{\nu})| \leq \binom{d+t-1}{d}^2\ .
\end{equation}
Now, we use the following lower bound for the cardinality of the support of any  $\delta_\diamond$-approximate unitary $K$-design $\nu$ (see Lemma 26 in \cite{BHH2016}):
\begin{equation}\label{eq:step5}
(1-\delta_\diamond)\binom{d+K-1}{K}^2\leq |\mathrm{supp}({\nu})|\ .
\end{equation}
The measure $\tilde{\nu}$ constructed above is $\ep K$-approximate $K$-design and therefore by combining \eqref{eq:step4} and \eqref{eq:step5} we obtain $(1-\ep K)\binom{d+K-1}{K}^2 \leq \binom{d+t-1}{t}^2$. Assuming $\ep K\leq 1$ we get en equivalent inequality 
\begin{equation}\label{eq:tightINEQ}
(1-2\ep K)^{\frac{1}{2}}\binom{d+K-1}{K} \leq \binom{d+t-1}{t}\ .
\end{equation}
We denote $x=(1-2\ep K)^{\frac{1}{2d}}$ and decompose both sides of the above inequality 
\begin{align}
(1-2\ep K)^{\frac{1}{2}}\binom{d+K-1}{K} = & 
\frac{1}{(d-1)!}\prod_{l=2}^d x \left(  l+K-1\right) \ , \\
\binom{d+t-1}{t}=& \frac{1}{(d-1)!} \prod_{l=2}^d  \left(  l+t-1\right) \ .
\end{align}
By comparing individual terms in the above products we get that if 
\begin{equation}\label{eq:termINEQ}
x (l+K-1) > l+t-1 \ \text{ for all } l=2,\ldots,d\ ,
\end{equation}
then inequality \eqref{eq:tightINEQ} cannot hold. Therefore \eqref{eq:tightINEQ} implies that inequalities \eqref{eq:termINEQ} cannot be simultaneously satisfied. Now, from $x\leq 1$ it follows that the violation of \eqref{eq:termINEQ} implies 
\begin{equation}\label{eq:almostFINALineq}
x K-(d-1)(1-x) \leq t\ .
\end{equation}
We conclude the proof by setting $K$ to be the largest natural number such that $x\geq 1-\frac{1}{d}$. This condition condition is equivalent to 
\begin{equation}\label{eq:almostFINeK}
(1-2\ep K)^{\frac{1}{2}} \geq \ \left(1-\frac{1}{d}\right)^{d-1}\ .
\end{equation}
Note that for every natural number $n$,  $e^{-1}> \left( 1-\frac{1}{n}\right)^n$, where $e$ is the Euler number. Therefore to satisfy $x\geq 1-\frac{1}{d}$ it suffices to set an integer $K$ satisfying
\begin{equation}
K\leq \frac{e^2-1}{2 e^2}\frac{1}{\ep} \leq \frac{2}{5 \ep}\ . 
\end{equation}
Inserting this to Eq. \eqref{eq:almostFINALineq} and utilizing the fact that $K$ must be an integer finally gives
\begin{equation}
 \frac{2}{5\ep} -2 \leq t \ ,
\end{equation}
which gives \eqref{eq:comparTDESIGN} since $t$ was chosen such that  $t\leq t(d,\ep)$. 

\end{proof}

\section{Sequences of gates and epsilon-nets}\label{sec:sequances}
Let $\G\subset \UU(d)$ be the support of a probability measure $\nu_\G$ on $\UU(d)$. Then the words of length $l$ composed of gates from the set $\G$, we denote them by $\G_l$, constitute the support of $\nu_\G^{*l}$. In this section we explore properties of sets $\G_l$ and formulate an inverse-free version of the Solovay-Kitaev theorem.
Our first result is the following
\begin{proposition} 	\label{prop:prop-words}Let $\nu$ be an arbitrary measure on $\UU(d)$ which is a $\delta$-approximate unitary $t$-expander with 
	\begin{equation}\label{t-sequence}
t\geq  5\frac{d^{5/2}}{\ep} \tau(\ep,d)\ ,
	\end{equation}
and $\tau(\ep,d)=\log\left(6C/\ep\right)^{\frac{1}{2}}\sqrt{\frac{1}{32}\log\left(6C/\ep\right)^{\frac{1}{2}}+\log\left(\frac{d}{\ep} \log\left(6C/\ep\right)^{\frac{1}{2}} \right)}$, where $C=9\pi$ is the constant appearing in Fact \ref{fact:Szarek}. Then for
\begin{gather}\label{lcondition}
    l\geq\frac{\log(32)+(d^2-1)\left(2\log(\frac{1}{\epsilon}) +\log(4C^{\frac32} d )   \right) }{1-\delta}\ ,
\end{gather}
the set of unitary channels in the support of $\nu^{*l}$, $\lbrace{\h{V} \rbrace}_{V\in\mathrm{supp}(\nu^{*l})}$, forms an $\ep$-net in $\U(d)$ with respect to the distance $\dproj$ defined in  \eqref{eq:opDIST}. 
\end{proposition}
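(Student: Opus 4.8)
The plan is to combine Theorem~\ref{th:approx-design} with Fact~\ref{fact:delta-l}, which tells us that the $l$-fold convolution $\nu^{\ast l}$ is a $\delta^l$-approximate $t$-expander whenever $\nu$ is a $\delta$-approximate $t$-expander. Since the degree $t$ in \eqref{t-sequence} already matches the hypothesis on $t$ in Theorem~\ref{th:approx-design}, the only thing to verify is that after $l$ iterations the expander norm has decreased enough, i.e. that $\delta^l$ satisfies the bound on $\delta$ required in \eqref{eq:t-app-eps-d}, namely
\begin{equation}
\delta^l\leq \frac{1}{32}\left(\frac{\ep^{3/2}}{4C\log\left(\frac{6C}{\ep}\right)^{1/2} d}\right)^{d^2-1}\ .
\end{equation}
Once this holds, Theorem~\ref{th:approx-design} applies directly to $\nu^{\ast l}$ and yields that $\{\h{V}\}_{V\in\mathrm{supp}(\nu^{\ast l})}$ is an $\ep$-net.

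First I would take logarithms of the above inequality. Writing it as $l\log(1/\delta)\geq \log(32)+(d^2-1)\log\!\left(\frac{4C\log(6C/\ep)^{1/2} d}{\ep^{3/2}}\right)$, and then bounding the argument of the outer logarithm crudely — dropping the (mild) $\log(6C/\ep)^{1/2}$ factor or absorbing it, and using $\ep^{-3/2}\le \ep^{-2}$ for $\ep\le 1$, together with $\log(ab)=\log a+\log b$ to split off $\log(4C^{3/2}d)$ and $2\log(1/\ep)$ — gives a sufficient condition of the form
\begin{equation}
l\log(1/\delta)\geq \log(32)+(d^2-1)\left(2\log(\tfrac{1}{\ep})+\log(4C^{3/2}d)\right)\ .
\end{equation}
Finally I would use the elementary estimate $\log(1/\delta)\geq 1-\delta$ (valid for $\delta\in(0,1)$, since $-\log\delta=-\log(1-(1-\delta))\geq 1-\delta$) to replace $\log(1/\delta)$ by $1-\delta$ in the denominator, which produces exactly condition \eqref{lcondition}. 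Thus any $l$ satisfying \eqref{lcondition} makes $\nu^{\ast l}$ a $\delta^l$-approximate $t$-expander with $\delta^l$ small enough for Theorem~\ref{th:approx-design}.

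The argument is essentially a bookkeeping exercise, so there is no serious obstacle; the only mildly delicate points are (i) making sure the crude bounds used to pass from the exact $\delta$-threshold of Theorem~\ref{th:approx-design} to the clean expression in \eqref{lcondition} are genuinely in the safe direction (i.e. that \eqref{lcondition} is \emph{sufficient}, not just heuristically matching), and (ii) checking the implicit side conditions $\ep\le 1$ and $\delta<1$ under which $\log(1/\delta)\ge 1-\delta$ and the various monotonicity bounds hold — but these are exactly the standing assumptions in this part of the paper. One should also note that $t$ does not change under convolution (only $\delta$ does), so no re-examination of the degree is needed, and the conclusion about the $\ep$-net is inherited verbatim from Theorem~\ref{th:approx-design} applied to $\nu^{\ast l}$.
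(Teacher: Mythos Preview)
Your proposal is correct and follows exactly the paper's approach: invoke Fact~\ref{fact:delta-l} to conclude that $\nu^{\ast l}$ is a $\delta^l$-approximate $t$-expander, then check that \eqref{lcondition} forces $\delta^l$ below the threshold in \eqref{eq:t-app-eps-d} and apply Theorem~\ref{th:approx-design}. The paper's own proof compresses the algebraic verification into the phrase ``one easily checks'', whereas you have spelled out the logarithm-and-bound argument (including the step $\log(1/\delta)\geq 1-\delta$) that makes \eqref{lcondition} sufficient; the key inequality $\log(6C/\ep)^{1/2}\leq (C/\ep)^{1/2}$ needed to absorb the logarithmic factor into the $C^{3/2}$ and extra $\ep^{-1/2}$ indeed holds for $\ep\leq 1$ and $C=5\pi$.
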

\begin{proof}
	By Fact \ref{fact:delta-l} it follows that the support of $\nu^{*l}$ defines $\delta^l$-approximate $t$-expander. Next, if $l$ satisfies \eqref{lcondition} one easily checks that  for $l$ satisfying \eqref{lcondition} we have
	 \begin{equation}
\delta^l \leq   \frac{1}{32} \left(
\frac{\ep^{\frac{3}{2}}}{4C\log\left(\frac{6C}{\ep}\right)^\frac12 d}\right)^{d^2-1} \ .
\end{equation}
Therefore by Theorem \ref{th:approx-design} the support of $\nu^{*l}$ is an $\epsilon$-net.
\end{proof}

We now reformulate a strong result by Peter Varju (Theorem 6 in  \cite{Varju2013}) in the language of approximate $t$-expanders (see part \ref{app:spectralGAPdacay} of Appendix the for details)
 
\begin{Theorem}[Slow decay of the spectral gap]\label{thm:Varju}
Let $\nu$ be arbitrary probability measure on $\U(d)$. Then, there exist a natural number $t_0$ and a constant $D>0$ (depending only on $d$) such that for all natural $t>t_0$ we have
\begin{equation}
\left\|T_{\nu,t}-T_{\mu,t} \right\|_\infty  \leq 1- \frac{1-\delta(\nu,t_0)}{D \log(t)^2}\ ,
\end{equation}
where $\delta(\nu,t_0) =\left\|T_{\nu,t_0}-T_{\mu,t_0} \right\|_\infty$. In other words the gap of the random walk generated by $\nu$ cannot decreases faster than $\log^{-2}(t)$  for large $t$. 
\end{Theorem}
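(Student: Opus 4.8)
The plan is to derive this statement from Theorem~6 of \cite{Varju2013} by specialising it to the group $G=\mathrm{P}\UU(d)$ and translating it through the dictionary of \cref{prop:gaps}. Recall from that proposition and the decompositions \eqref{eq:regDEC}--\eqref{eq:momentDECOMP} that $\|T_{\nu,t}-T_{\mu,t}\|_\infty=1-g(\nu,t)$ and that
\[
g(\nu,t)=\min_{\lambda\neq 0} g_\lambda(\nu),\qquad g_\lambda(\nu)\coloneqq 1-\Bigl\|\textstyle\int_{\UU(d)}\dt\nu(U)\,\Pi^\lambda(U)\Bigr\|_\infty,
\]
the minimum ranging over the nontrivial highest weights $\lambda$ occurring in the decomposition \eqref{eq:UUdecomposition} of $U^{\ot t}\ot\bar U^{\ot t}$. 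Since $\U(d)\simeq\mathrm{P}\UU(d)$, a weight $\lambda=(a_1\geq\cdots\geq a_d)$ with $\sum_i a_i=0$ occurs in $\hcal_t$ precisely when it occurs in $U^{\ot s}\ot\bar U^{\ot s}$ for some $s\leq t$; by Schur--Weyl duality applied to $U^{\ot s}$ and $\bar U^{\ot s}$ separately this forces $|a_i|\leq s\leq t$, hence $\|\lambda\|\leq\sqrt d\,t$. Conversely, decomposing such a $\lambda$ into a partition of its positive parts minus a partition of its negative parts shows $\lambda$ occurs in $\hcal_s$ once $s$ exceeds the sum of its positive parts; together with $\hcal_s\subseteq\hcal_{s+1}$ this means every fixed irrep of $\mathrm{P}\UU(d)$ lies in $\hcal_m$ for all large $m$.

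I would then invoke Theorem~6 of \cite{Varju2013} in the form: there exist $D_0>0$ and a \emph{finite} set $\Lambda_0$ of irreps of $\mathrm{P}\UU(d)$, depending only on $d$, such that every nontrivial $\Pi^\lambda$ satisfies $g_\lambda(\nu)\geq\bigl(\min_{\lambda'\in\Lambda_0}g_{\lambda'}(\nu)\bigr)\big/\bigl(D_0\,(\log(2+\|\lambda\|))^2\bigr)$. (Which ``size'' invariant Varj\'u uses --- highest-weight norm, Casimir eigenvalue or $\dim\Pi^\lambda$ --- is immaterial, since for fixed $d$ each is polynomially bounded in $t$ over the relevant range of $\lambda$ and hence has logarithm $O(\log t)$.) Let $t_0$ be the least integer with $\hcal_{t_0}\supseteq\Lambda_0$; by the first paragraph $t_0$ depends only on $d$, and since $g(\nu,t_0)$ is a minimum over a family containing $\Lambda_0$ we have $1-\delta(\nu,t_0)=g(\nu,t_0)\leq\min_{\lambda'\in\Lambda_0}g_{\lambda'}(\nu)$. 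Applying the displayed inequality to each nontrivial $\lambda\in\hcal_t$ (for which $\|\lambda\|\leq\sqrt d\,t$) gives, for $t>t_0$,
\[
g(\nu,t)\geq\frac{g(\nu,t_0)}{D_0\bigl(\log(2+\sqrt d\,t)\bigr)^2}\geq\frac{1-\delta(\nu,t_0)}{D\log(t)^2},
\]
with $D=D(d)$ absorbing $D_0$ and the constant converting $\log(2+\sqrt d\,t)$ into a multiple of $\log t$; since $\|T_{\nu,t}-T_{\mu,t}\|_\infty=1-g(\nu,t)$, this is the claim.

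The routine estimates above are not the difficulty. The real work --- deferred to \cref{app:spectralGAPdacay} --- is the translation itself: restating Varj\'u's theorem, which is phrased abstractly for a compact connected semisimple $G$ with hypotheses on a bounded family of representations, so that its ``base family'' is identified with the concrete subspace $\hcal_{t_0}$ and its ``size'' parameter with the degree $t$, and checking rigorously that the irreps occurring at degree $t$ have highest-weight norm $O(t)$ while the finitely many base representations of $\mathrm{P}\UU(d)$ are all contained in some $\hcal_{t_0}$ with $t_0$ a function of $d$ only.
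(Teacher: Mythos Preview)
Your proposal is correct and follows essentially the same route as the paper: invoke Varj\'u's Theorem~6 on $\mathrm{P}\UU(d)$, show that the nontrivial irreps occurring in $\hcal_t$ have highest-weight norm $O(t)$, and that the finitely many ``base'' representations are contained in $\hcal_{t_0}$ for some $t_0$ depending only on $d$, then use $g(\nu,t_0)\leq\min_{\lambda'}g_{\lambda'}(\nu)$ over the base family. The only cosmetic differences are that the paper phrases Varj\'u's result in its original form (a radius $r_0$ and the space $\K_r=\bigoplus_{0<\|\lambda\|\leq r}\K_\lambda$) rather than your equivalent finite set $\Lambda_0$, and it bounds $\|\lambda\|\leq at$ via the additivity of weights under tensor products of $\Pi^{1,1}=U\otimes\bar U$ rather than via Schur--Weyl duality; both arguments yield the same $O(t)$ bound and the same final inequality.
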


The above Theorem in conjunction with Proposition \ref{prop:prop-words} allows us to state an inverse-free version of the Solovay-Kitaev theorem. We note that an equivalent result has already appeared in \cite{Varju2013}. It was, however,  obscured by the mathematical character of that work. 

\begin{Theorem}[Non-constructive inverse-free Solovay-Kitaev]\label{th:InvFree}
Let $\G\subset\U(d)$ be a universal gate-set in $\U(d)$ (not necessarily symmetric i.e $\h{V}\in\G$ \emph{does not} imply $\h{V}^{-1}\in\G$). Let $\nu_\G$ be a uniform measure on $\G$.  Then, there exist absolute constants $A, B>0$ (depending on $d$), such that for 
\begin{equation}\label{SKi}
 l\geq A \frac{\log^3\left(\frac{1}{\ep}\right)+B}{1-\delta(\nu_\G,t_0)}
\end{equation}
the set $\G_l$ forms an $\ep$-net in $\U(d)$. 
\end{Theorem}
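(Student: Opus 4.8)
The plan is to combine the slow-decay result for the spectral gap (Theorem~\ref{thm:Varju}) with the "words of length $l$ form an $\ep$-net" criterion (Proposition~\ref{prop:prop-words}), treating $\nu=\nu_\G^{\ast l}$ along the way. First I would fix the target accuracy $\ep$ and determine the degree $t=t(\ep,d)$ demanded by Proposition~\ref{prop:prop-words}, namely $t\simeq 5\frac{d^{5/2}}{\ep}\tau(\ep,d)$; since $\tau(\ep,d)$ grows only polylogarithmically in $1/\ep$ (and polylogarithmically in $d$), we have $\log t = O(\log(1/\ep))$ up to $d$-dependent constants, which is the crucial bookkeeping fact. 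Since $\G$ is universal, its uniform measure $\nu_\G$ has $\delta(\nu_\G,t_0)<1$ for the dimension-dependent $t_0$ from Theorem~\ref{thm:Varju}, so there is a genuine spectral gap $1-\delta(\nu_\G,t_0)>0$ at level $t_0$.

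Next I would apply Theorem~\ref{thm:Varju} at degree $t$ to get
\begin{equation}
\delta(\nu_\G,t)=\left\|T_{\nu_\G,t}-T_{\mu,t}\right\|_\infty\leq 1-\frac{1-\delta(\nu_\G,t_0)}{D\log(t)^2}\ ,
\end{equation}
so that $\nu_\G$ is a $\delta$-approximate $t$-expander with $1-\delta\geq \frac{1-\delta(\nu_\G,t_0)}{D\log(t)^2}$. Now feed this into Proposition~\ref{prop:prop-words}: the set $\G_l=\mathrm{supp}(\nu_\G^{\ast l})$ is an $\ep$-net provided
\begin{equation}
l\geq \frac{\log(32)+(d^2-1)\left(2\log(\tfrac{1}{\ep})+\log(4C^{3/2}d)\right)}{1-\delta}\ .
\end{equation}
Substituting the lower bound on $1-\delta$ replaces the denominator by $\frac{1-\delta(\nu_\G,t_0)}{D\log(t)^2}$, so it suffices to take
\begin{equation}
l\geq D\log(t)^2\cdot\frac{\log(32)+(d^2-1)\left(2\log(\tfrac{1}{\ep})+\log(4C^{3/2}d)\right)}{1-\delta(\nu_\G,t_0)}\ .
\end{equation}
Since $\log(t)^2=O(\log(1/\ep)^2)$ (with a $d$-dependent constant absorbing the $\tau$ and $d^{5/2}$ factors) and the numerator bracket is $O(\log(1/\ep))$ (again with a $d$-dependent constant), the product is $O(\log(1/\ep)^3)$ with all $d$-dependence pushed into the absolute constants. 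Collecting these into $A$ (carrying the $D$, the $d^2-1$, and the $\log t$ constants) and $B$ (absorbing the $\log 32$ and $\log(4C^{3/2}d)$ lower-order pieces) yields exactly the claimed bound
\begin{equation}
l\geq A\,\frac{\log^3\left(\tfrac{1}{\ep}\right)+B}{1-\delta(\nu_\G,t_0)}\ .
\end{equation}

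The only real subtlety — and the step I would be most careful about — is the bookkeeping that shows $\log(t)$ is genuinely $O(\log(1/\ep))$: one must check that $\tau(\ep,d)$, which itself contains iterated logarithms of $1/\ep$, does not contribute more than a constant (in $\ep$) inside $\log(t)$, and that the $d^{5/2}$ and $\tau$'s $d$-dependence can be legitimately swept into the dimension-dependent constant $A$ without secretly depending on $\ep$. There is no hard mathematical obstacle here — Theorems~\ref{thm:Varju} and Proposition~\ref{prop:prop-words} do all the heavy lifting — so the proof is essentially a matter of chaining the two estimates and being honest about which constants depend on $d$ versus $\ep$. I would also remark, as the paragraph following the statement in the source already notes, that this recovers the $\log^3(1/\ep)$ scaling of the usual Solovay--Kitaev theorem but \emph{without} assuming $\G$ is symmetric, at the cost of non-constructiveness (the bound on $l$ involves the a priori unknown quantities $A$, $B$, $t_0$, and $\delta(\nu_\G,t_0)$).
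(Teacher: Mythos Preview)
Your proposal is correct and follows essentially the same approach as the paper: apply Proposition~\ref{prop:prop-words} with $t$ set to the threshold~\eqref{t-sequence}, use Theorem~\ref{thm:Varju} to lower-bound $1-\delta(\nu_\G,t)$ by $(1-\delta(\nu_\G,t_0))/(D\log^2 t)$, and then observe that $\log t = O(\log(1/\ep))$ to extract the $\log^3(1/\ep)$ scaling. The paper's proof is just a three-line version of exactly this chain; your additional care about the bookkeeping of $\tau(\ep,d)$ inside $\log t$ is appropriate but introduces nothing new.
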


\begin{proof}
Proposition \ref{prop:prop-words} tells us that if $\G$ is a $\delta(\nu_\G,t)$-approximate $t$-expander, with $t$ given by \eqref{t-sequence}, then for 
\begin{equation}\label{lt1}
		l\geq\frac{\log(32)+(d^2-1)\left(2\log(\frac{1}{\epsilon}) +\log(4C^{\frac32} d\right) ) }{1-\delta(\nu_\G,t)}\ ,
\end{equation}
$\G_l$ is an \enet.
On the other hand,  Theorem  \ref{thm:Varju} allows us to bound $1-\delta(\nu_\G,t)$ as follows
\begin{equation}\label{var2}
1-\delta(\nu_\G,t) \geq \frac{1-\delta(\nu_\G,t_0)}{D \log^2(t)}\ .
\end{equation}
Combining \eqref{var2} with \eqref{lt1} and making use of \eqref{t-sequence} we obtain \eqref{SKi}.
\end{proof}
\begin{remark}
Another application of our results that relate $\epsilon$-nets and $\delta$-approximate $t$-designs is connected to universality of gate-sets $\G$. In \cite{SK2017,SK2017bis} it was shown that the necessary condition for universality of a gate-set $\G\subset\U(d)$ is $\mathrm{dim}\left(\mathrm{Comm}(U\otimes\bar{U}|U\in\G)\right)=2$. Moreover, sets $\G$ that satisfy the necessary condition are either universal or they generate finite subgroups of $\U(d)$. Furthermore, in order to verify universality of a set $\G$ that satisfies the necessary condition one has to check that there is $l$ such that $\G_l$ forms an $\epsilon$-net with $\epsilon\leq\frac{1}{2\sqrt{2}}$. Thus using Proposition \ref{prop:prop-words} a gate-set $\G$ satisfying the necessary condition is universal iff it is a $\delta$-approximate $t$-expanders with $\delta<1$ and $t$ given by \eqref{t-sequence} with $\epsilon=\frac{1}{2\sqrt{2}}$. Otherwise $\mathcal{G}$ generates a finite group (this follows from Lemma 4.8 of \cite{SK2017}). Therefore checking universality of $\G$ can be reduced to two steps 1) checking if $\mathrm{dim}\left(\mathrm{Comm}(U\otimes\bar{U}|U\in\G)\right)=2$ and  2) checking if $\delta(\nu_{\G},t)<1$ for $t$ given by \eqref{t-sequence} with $\epsilon=\frac{1}{2\sqrt{2}}$.  
\end{remark}

\section{Random circuits and approximate designs}\label{sec:randCIRC}
\def\gap{{\rm g}}
\def\mun{\mu^{(n)}}
In this section we shall prove that random circuits composed of universal gates are approximate $t$-designs \emph{without} any assumptions on the set of gates (i.e. unlike in \cite{BHH2016} we shall not assume that the set contains inverses or that the unitaries have algebraic entries). Importantly, we are not using  result due to Bourgain and Gomburd \cite{Bourgain2011} who proved that universal set of gates has a gap that does not diverge with growing $t$ under the assumption of algebraic entries. However, from the results of Varju \cite{Varju2013} is possible to prove a lower bound on the the gap, which vanishes very slowly with with $t$, yet without any assumptions (see Theorem \ref{thm:Varju}). 

Let $\nu_\G$ be uniform measure on set of gates $\G$, and let $\G^\dagger$ be the set of inverses of gates from $\G$, and $\nu_{\G^\dagger}$ uniform measure on $\G^\dagger$. Note that $\nu_{\G 
\G^\dagger}=\nu_\G*\nu_{\G^\dagger}$. We shall also employ the following (well-known in the mathematics community,  see e.g. \cite{Varju2013}) Lemma in order to remove the assumption that the set of gates contains inverses.

\begin{Lemma}[Bounds on the spectral gap without assuming a symmetric gate-set]
\label{lem:mu-tilde-mu}
Let $\G$ be arbitrary finite gate set in $\UU(d)$. Let $\nu_\G$ and $\nu_{\G^\dagger}$ be two measures uniformly supported on $\G$ and $\G^\dagger$ respectively. We have the following inequalities 
\begin{equation}
\label{eq:delta-mult}
\delta(\nu_\G,t)^2=\delta(\nu_{\G\G^\dagger},t)
\end{equation}
\begin{equation}
\label{eq:gap-mult}
\frac12 \gap(\nu_\G*\nu_{\G^\dagger},t)\leq \gap(\nu_\G,t) \leq  \gap(\nu_\G*\nu_{\G^\dagger},t)
\end{equation}
\end{Lemma}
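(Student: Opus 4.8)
The plan is to prove the two displayed identities/inequalities of Lemma~\ref{lem:mu-tilde-mu} by working with the moment operators $T_{\nu,t}$ and exploiting the key structural fact that $T_{\nu_{\G^\dagger},t} = (T_{\nu_\G,t})^\dagger$. To see this, note that $T_{\nu_\G,t}=\frac{1}{|\G|}\sum_{V\in\G} V^{\ot t}\ot\bar V^{\ot t}$, and since each summand is a unitary operator on $(\C^d)^{\ot 2t}$, its adjoint is $(V^{-1})^{\ot t}\ot \overline{V^{-1}}^{\ot t}$; averaging over $V\in\G$ gives exactly $T_{\nu_{\G^\dagger},t}$. Moreover, by Fact~\ref{fact:delta-l} (or rather the underlying identity $T_{\nu*\lambda,t}=T_{\nu,t}T_{\lambda,t}$, which is immediate from $T_{\nu^{\ast l}}=(T_\nu)^l$ restricted to $\hcal_t$ and Proposition~\ref{prop:gaps}), we have $T_{\nu_\G*\nu_{\G^\dagger},t}=T_{\nu_\G,t}\,T_{\nu_{\G^\dagger},t}=T_{\nu_\G,t}(T_{\nu_\G,t})^\dagger$.

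For the first identity \eqref{eq:delta-mult}, I would subtract $T_{\mu,t}$ appropriately. The subtlety is that $\delta(\nu,t)=\|T_{\nu,t}-T_{\mu,t}\|_\infty$, and one must be careful because $T_{\mu,t}$ is the orthogonal projection $P$ onto the invariant subspace (the trivial-isotypic component inside $\hcal_t$). The clean route is to restrict attention to the orthogonal complement $P^\perp=\iden-P$: since every $T_{\nu,t}$ commutes with $P$ and acts as the identity on the range of $P$ (because constants are fixed by any transition operator), one has $\|T_{\nu,t}-T_{\mu,t}\|_\infty=\|T_{\nu,t}P^\perp\|_\infty=\|(T_\nu|_{\hcal_t})P^\perp\|_\infty$. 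Writing $M:=T_{\nu_\G,t}P^\perp$, the above gives $T_{\nu_\G*\nu_{\G^\dagger},t}P^\perp = M M^\dagger$ (using that $P^\perp$ commutes with everything and $P^\perp{}^2=P^\perp$), whence $\delta(\nu_{\G\G^\dagger},t)=\|MM^\dagger\|_\infty=\|M\|_\infty^2=\delta(\nu_\G,t)^2$, using $\|MM^\dagger\|_\infty=\|M\|_\infty^2$ for any bounded operator. This is the standard fact that the singular values of $M$ are the square roots of the eigenvalues of $MM^\dagger$.

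For the inequalities \eqref{eq:gap-mult}, I would just translate \eqref{eq:delta-mult} via $\gap(\nu,t)=1-\delta(\nu,t)$. Set $a:=\delta(\nu_\G,t)\in[0,1]$ and $b:=\delta(\nu_{\G\G^\dagger},t)=a^2$. Then $\gap(\nu_\G,t)=1-a$ and $\gap(\nu_\G*\nu_{\G^\dagger},t)=1-a^2=(1-a)(1+a)$. Since $0\le a\le 1$ we have $1\le 1+a\le 2$, so $1-a \le (1-a)(1+a) \le 2(1-a)$, i.e. $\gap(\nu_\G,t)\le \gap(\nu_\G*\nu_{\G^\dagger},t)\le 2\gap(\nu_\G,t)$, which rearranges to \eqref{eq:gap-mult}. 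I should note the harmless caveat $a=1$ (no gap), where all quantities are zero and the inequalities hold trivially.

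The main obstacle — really the only place requiring care rather than routine manipulation — is the bookkeeping around the projector $T_{\mu,t}$: one must verify that every $T_{\nu,t}$ leaves the invariant subspace pointwise fixed and maps its orthogonal complement into itself, so that subtracting $T_{\mu,t}$ is the same as compressing to $P^\perp$. This follows because $T_\nu$ on $\L^2(\UU(d))$ always fixes constants and, being a convex combination of the unitaries $T_V$ (which preserve the $L^2$ inner product), has operator norm $1$ and hence also preserves orthogonality to constants; restricting to the finite-dimensional invariant subspace $\hcal_t$ and invoking Proposition~\ref{prop:gaps} to identify $T_\nu|_{\hcal_t}$ with $T_{\nu,t}$ up to unitary equivalence completes the reduction. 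Once this is in place, the rest is the elementary singular-value identity $\|MM^\dagger\|=\|M\|^2$ plus the scalar inequality $1\le 1+a\le 2$.
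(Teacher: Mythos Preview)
Your proof is correct and follows essentially the same route as the paper: both arguments use $T_{\nu_{\G^\dagger},t}=(T_{\nu_\G,t})^\dagger$, the convolution identity $T_{\nu*\nu',t}=T_{\nu,t}T_{\nu',t}$, and the block decomposition with respect to the Haar projector $P=T_{\mu,t}$ to reduce \eqref{eq:delta-mult} to $\|MM^\dagger\|=\|M\|^2$. Your derivation of \eqref{eq:gap-mult} via the factorization $1-a^2=(1-a)(1+a)$ with $1\le 1+a\le 2$ is algebraically equivalent to the paper's use of $\sqrt{x}\le \tfrac12(1+x)$ and $x^2\le x$; and your justification that $T_{\nu,t}$ preserves $\operatorname{ran}P^\perp$ (via the contraction-fixed-point argument) is a slightly more explicit version of the paper's one-line claim $T_{\nu,t}=T_{\mu,t}\oplus T_{\nu,t}^\perp$.
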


\begin{proof}
Recall that $\delta(\nu_\G,t)=\|T_{\nu_\G,t}-T_{\mu,t}\|_\infty$. 
From definition of $T_{\nu,t}$ 
we have for any measures $\nu,\nu'$
\begin{eqnarray}
T_{\nu*\nu',t}=T_{\nu,t} T_{\nu',t},\quad
T_{\nu,t}=\Ttmu\oplus  T_{\nu,t}^\perp,\quad  \Ttmu=\Ttmu^2=\Ttmu^\dagger,
\end{eqnarray}
so that $\delta(\nu,t)=\|T_{\nu_\G,t}^\perp\|$. It also immediately follows that $T_{\nu*\nu',t}^\perp=T_{\nu,t}^\perp T_{\nu',t}^\perp$. 
Further, from definition of moment operators $T_{\nu_\G,t}$ we have 
\begin{equation}
    T_{\nu_{\G^\dagger},t}=T^\dagger_{\nu_\G,t}
\end{equation}
which gives
\begin{equation}
    T_{\nu_{\G^\dagger},t}^\perp =
    (T_{\nu_{\G},t}^\perp)^\dagger.
\end{equation}
We then also get 
\begin{equation}
    T_{\nu_{\G \G^\dagger},t}^\perp= 
    T_{\nu_{\G}* \nu_{\G^\dagger},t}^\perp=
    T_{\nu_{\G},t}^\perp T_{\nu_{\G^\dagger},t}^\perp=
     T_{\nu_{\G},t}^\perp (T_{\nu_{\G},t}^\perp)^\dagger
\end{equation}
and hence
\begin{equation}
    \delta(\nu_{\G\G^\dagger},t)=\|T_{\nu_{\G\G^\dagger},t}^\perp\|= 
    \|T_{\nu_{\G},t}^\perp (T_{\nu_{\G},t}^\perp)^\dagger\| =
    \|T_{\nu_{\G},t}^\perp\|^2=\delta(\nu_\G,t)^2.
\end{equation}
We have thus proved the formula \eqref{eq:delta-mult}. Now, the first inequality of \eqref{eq:gap-mult} follows by  definition of $\gap$: $\gap=1-\delta$ and the use of  $\sqrt{x}\leq \frac12+\frac12x$ for $x\geq 0$, while the second one follows from  $x^2\leq x$ for $x\leq 1$, 
\end{proof}

We shall now consider two layouts for random circuits acting on $n$ qudits, composed of two qudit gates form set $G$: 
(i) local random circuits and (ii) parallel random circuits. 
Local random circuits are the following. We pick uniformly at random 
two neighboring qudits, and apply gate chosen from $\G$ according to 
uniform measure. The resulting measure we shall denote by 
$\nu_{loc}^{n}(\G)$.    Let us also denote by $\nu_{loc}^{n}(\mu)$ similarly defined measure, but with $\nu_\G$ replaced with Haar measure on two qudits $\mu$.
Regarding parallel random circuits, we apply with probability $1/2$
either unitary $U_{12}\otimes U_{34}\otimes\ldots \otimes U_{n-1,n}$
or $U_{23}\otimes U_{45}\otimes\ldots \otimes U_{n-2,n-1}$ where each $U_{ij}$
is picked independently from $\G$ according to $\nu_\G$. 
The resulting measure we shall denote by $\nu_{par}^{n}(\G)$
and if $\nu_\G$ is replaced by Haar, by $\nu_{par}^{n}(\mu)$.
We shall now relate the gaps of two steps of such circuits
to the gap of one step of circuit with measure $\nu_{\G\G^\dagger}$. 
In this way we shall reduce the problem to gate sets with inverses so that 
we then can invoke results on such circuits from \cite{BHH2016}.

\def\nulocg{\nu_{loc}^{(n)}(\G)}
\def\nulocgtil{\nu_{loc}^{(n)}(\tilde\G)}
\def\nulocgd{\nu_{loc}^{(n)}(\G^\dagger)}
\def\nulocgg{\nu_{loc}^{(n)}(\G\G^\dagger)}
\def\nuloch{\nu_{loc}^{(n)}(\mu)}
\def\nuparg{\nu_{par}^{(n)}(\G)}
\def\nupargtil{\nu_{par}^{(n)}(\tilde\G)}
\def\nupargd{\nu_{par}^{(n)}(\G^\dagger)}
\def\nupargg{\nu_{par}^{(n)}(\G\G^\dagger)}
\def\nuparh{\nu_{par}^{(n)}(\mu)}

\begin{Lemma}[Bound on the gap of random local quantum circuits for non symmetric gate-set]
Let $\nulocg$ be a measure describing random local quantum circuits generated two qudit gate-set $\G$ (not necessarily symmetric).  We have the following lower bound
\label{lem:nonHerm-Herm-loc}
\begin{equation}
\label{eq:nonHerm-Herm-loc}
    \gap\left(\nulocg*\nulocgd\right) \geq \frac1{n-1} \gap(\nulocgg)
\end{equation}
\end{Lemma}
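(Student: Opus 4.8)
The plan is to reduce the statement about the $n$-qudit local random circuit to the single-gate statement $\frac12\gap(\nu_\G\ast\nu_{\G^\dagger},t)\leq\gap(\nu_\G,t)$ proved in Lemma~\ref{lem:mu-tilde-mu}, but carried out \emph{site by site} on the line of qudits. First I would fix $t$ throughout and abbreviate by writing, for a measure $\rho$ on $\UU(d^n)$, $P_\rho\coloneqq T^\perp_{\rho,t}$ for the restriction of the moment operator to the orthocomplement of the $T_{\mu^{(n)},t}$-invariant subspace, so that $\gap(\rho)=1-\|P_\rho\|_\infty$ and $P_{\rho\ast\rho'}=P_\rho P_{\rho'}$; the key structural fact (as in the proof of Lemma~\ref{lem:mu-tilde-mu}) is that $P_{\nu^{(n)}_{loc}(\G^\dagger)}=P_{\nu^{(n)}_{loc}(\G)}^\dagger$, because the local circuit measure with the inverted gate-set is exactly the pushforward of $\nu^{(n)}_{loc}(\G)$ under $U\mapsto U^{-1}$, which dualizes the moment operator. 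Hence $P_{\nu^{(n)}_{loc}(\G)\ast\nu^{(n)}_{loc}(\G^\dagger)}=P\,P^\dagger$ with $P=P_{\nu^{(n)}_{loc}(\G)}$, so $\|P_{\nu^{(n)}_{loc}(\G)\ast\nu^{(n)}_{loc}(\G^\dagger)}\|_\infty=\|P\|_\infty^2$, and therefore
\begin{equation}
\gap\left(\nu^{(n)}_{loc}(\G)\ast\nu^{(n)}_{loc}(\G^\dagger)\right)=1-\|P\|_\infty^2=\bigl(1-\|P\|_\infty\bigr)\bigl(1+\|P\|_\infty\bigr)\geq \gap\left(\nu^{(n)}_{loc}(\G)\right).
\end{equation}
So it suffices to show $\gap(\nu^{(n)}_{loc}(\G))\geq\frac{1}{n-1}\gap(\nu^{(n)}_{loc}(\G\G^\dagger))$, equivalently $\|P\|_\infty\leq 1-\frac{1}{n-1}\gap(\nu^{(n)}_{loc}(\G\G^\dagger))$.

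Next I would unpack the definition of the local-circuit measure. Writing $\nu_i$ for the measure "apply a gate from $\G$ (resp.\ from $\G\G^\dagger$) on the neighbouring pair $(i,i+1)$, chosen according to $\nu_\G$ (resp.\ $\nu_{\G\G^\dagger}$)", we have $\nu^{(n)}_{loc}(\G)=\frac{1}{n-1}\sum_{i=1}^{n-1}\nu_i^{\G}$ and likewise $\nu^{(n)}_{loc}(\G\G^\dagger)=\frac{1}{n-1}\sum_{i=1}^{n-1}\nu_i^{\G\G^\dagger}$, so on the relevant subspace $P=\frac{1}{n-1}\sum_i P_i$ with $P_i\coloneqq P_{\nu_i^{\G}}$, and $P_{\nu^{(n)}_{loc}(\G\G^\dagger)}=\frac{1}{n-1}\sum_i P_i P_i^\dagger$, using $P_{\nu_i^{\G\G^\dagger}}=P_i P_i^\dagger$ by the same local-duality argument applied to a single pair. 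Now take any unit vector $v$; by Cauchy--Schwarz (or convexity of $x\mapsto\|x\|^2$),
\begin{equation}
\|P v\|^2=\Bigl\|\tfrac{1}{n-1}\sum_i P_i v\Bigr\|^2\leq \tfrac{1}{n-1}\sum_i\|P_i v\|^2=\tfrac{1}{n-1}\sum_i\langle v, P_i P_i^\dagger v\rangle = \langle v, P_{\nu^{(n)}_{loc}(\G\G^\dagger)} v\rangle\leq \|P_{\nu^{(n)}_{loc}(\G\G^\dagger)}\|_\infty,
\end{equation}
where in the middle step I used that each $P_i$ is (unitarily equivalent to) a convex average of shift operators acting nontrivially only on two sites, so $\|P_i v\|^2=\langle v,P_i^\dagger P_i v\rangle$; one must be slightly careful here because $P_iP_i^\dagger$ rather than $P_i^\dagger P_i$ is what $\nu_i^{\G\G^\dagger}$ produces, but since $\nu_i^{\G^\dagger}\ast\nu_i^{\G}$ and $\nu_i^{\G}\ast\nu_i^{\G^\dagger}$ have the same expander norm (both equal $\delta(\nu_i^\G,t)^2$), one may equally write it with $P_i^\dagger P_i$, or simply note $\|P_i v\|^2 \le \|P_i P_i^\dagger\|_\infty$ is not quite what's wanted and instead average the bound over $v$ and $P_i^\dagger v$ — I will use the symmetric version $\frac{1}{n-1}\sum_i \langle v, P_i^\dagger P_i v\rangle$ which equals $\langle v, P_{\nu^{(n)}_{loc}(\G^\dagger\G)}v\rangle$ and has the same norm. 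Taking the supremum over $v$ gives $\|P\|_\infty^2\leq\|P_{\nu^{(n)}_{loc}(\G\G^\dagger)}\|_\infty=1-\gap(\nu^{(n)}_{loc}(\G\G^\dagger))$, and since $\|P\|_\infty\leq 1$ we get $\|P\|_\infty\leq\|P\|_\infty^2 + (1-\|P\|_\infty^2) \|P\|_\infty \le \dots$ — more directly, $1-\|P\|_\infty\geq 1-\|P\|_\infty^2\geq\gap(\nu^{(n)}_{loc}(\G\G^\dagger))$ would be too strong; the honest bound is $1-\|P\|_\infty=\frac{1-\|P\|_\infty^2}{1+\|P\|_\infty}\geq\frac{1}{2}\gap(\nu^{(n)}_{loc}(\G\G^\dagger))$, which already suffices up to the constant, and a factor $n-1$ can be recovered precisely because the $P_i$ do not all commute — see the next paragraph.

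To get the claimed $\frac{1}{n-1}$ rather than $\frac12$, I would instead run the Cauchy--Schwarz estimate the other way: bound $\langle v,(I-P)v\rangle$ from below. Writing $I-P=\frac{1}{n-1}\sum_i(I-P_i)$ on the subspace, and using that each $I-P_i\succeq 0$ with $\|I-P_i\|\leq 2$, one gets $\langle v,(I-P)v\rangle\geq\frac{1}{n-1}\max_i\langle v,(I-P_i)v\rangle\geq\frac{1}{2(n-1)}\max_i\|(I-P_i)v\|^2\geq\frac{1}{2(n-1)}\gap(\nu_i^\G)^2\cdots$ — this still loses a square. The clean route, and the one I expect the paper to take, is: $\gap(\nu^{(n)}_{loc}(\G\G^\dagger))=1-\|\frac{1}{n-1}\sum_i P_iP_i^\dagger\|_\infty$, and for the positive operator $A=\frac{1}{n-1}\sum_i P_iP_i^\dagger$ one has $\langle v,(I-A)v\rangle=\frac{1}{n-1}\sum_i(1-\|P_i^\dagger v\|^2)\geq\frac{1}{n-1}\sum_i(1-\|P_i^\dagger v\|)\geq \frac{1}{n-1}\sum_i(1-\|P v\|)$ provided $\|P_i^\dagger v\|\leq\|Pv\|$ — which is false in general — so the real argument must exploit that the local moment operators on overlapping pairs, when composed in a sweep, detect any vector not fixed by all of them. \textbf{This is the main obstacle}: relating the norm of the \emph{average} $\frac{1}{n-1}\sum_i P_i$ to the norm of the \emph{average of squares} $\frac{1}{n-1}\sum_i P_iP_i^\dagger$ with only a $\frac{1}{n-1}$ loss. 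I would handle it by the standard "detectability lemma"/telescoping trick used for frustration-free local Hamiltonians: if $\langle v, (I-P_iP_i^\dagger) v\rangle < \frac{\eta}{n-1}$ for all $i$ then $v$ is close to the joint fixed space of all the $P_i$, hence close to the global invariant subspace (here one uses that the two-site gate-sets together generate $\U(d^n)$, i.e.\ universality of $\G$ on two qudits is inherited), forcing $\|Pv\|$ close to $1$ to be impossible unless $\eta\geq\gap(\nu^{(n)}_{loc}(\G\G^\dagger))$; quantitatively this gives exactly $1-\|P\|_\infty\geq\frac{1}{n-1}(1-\|A\|_\infty)=\frac{1}{n-1}\gap(\nu^{(n)}_{loc}(\G\G^\dagger))$. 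In the write-up I would isolate this as an elementary operator inequality: for $P=\frac{1}{m}\sum_{i=1}^m P_i$ with $\|P_i\|\leq1$, one has $1-\|P\|_\infty\geq\frac{1}{m}(1-\|\frac1m\sum_i P_i^\dagger P_i\|_\infty)$, proved by $\|Pv\|^2=\langle v,P^\dagger P v\rangle$ and $P^\dagger P\preceq\frac1m\sum_i P_i^\dagger P_i$ via convexity — wait, that inequality $P^\dagger P \preceq \frac1m \sum P_i^\dagger P_i$ is precisely Jensen/Cauchy--Schwarz and is true, giving $\|P\|_\infty^2\leq\|\frac1m\sum_iP_i^\dagger P_i\|_\infty=1-\gap$, hence $1-\|P\|_\infty\geq 1-\sqrt{1-\gap}\geq\frac12\gap$; to upgrade $\frac12$ to $\frac1m$ one genuinely needs the detectability-type argument above, which is the single nontrivial ingredient, and I would either cite the relevant lemma from \cite{BHH2016} or reproduce the three-line telescoping proof.
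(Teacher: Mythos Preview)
Your core idea—the operator Jensen inequality $PP^\dagger\preceq\frac{1}{n-1}\sum_i P_iP_i^\dagger$ for $P=\frac{1}{n-1}\sum_i P_i$—is correct, and in fact it proves \emph{more} than the lemma: it gives $\gap(\nulocg*\nulocgd)\geq\gap(\nulocgg)$ with no $\frac{1}{n-1}$ loss at all. You had this in hand and then talked yourself out of it. The confusion has two sources. First, you detoured through the intermediate quantity $\gap(\nulocg)=1-\|P\|$ and tried to bound that; but Jensen bounds $\|P\|^2=\|PP^\dagger\|$, which is already exactly $1-\gap(\nulocg*\nulocgd)$, so no intermediate step is needed. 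Second, the $P_iP_i^\dagger$ vs.\ $P_i^\dagger P_i$ mismatch is resolved simply by applying Jensen to $PP^\dagger$ (equivalently to $\|P^\dagger v\|^2$) rather than to $P^\dagger P$: then the diagonal terms are $P_iP_i^\dagger$, which is precisely $T^\perp_{\nu_i^{\G\G^\dagger},t}$. With these two fixes the proof is three lines, and the entire detectability-lemma discussion is unnecessary.

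The paper's proof is different and cruder. It expands $T_{\nulocg,t}T_{\nulocg,t}^\dagger=\frac{1}{(n-1)^2}\sum_{i,j}A_iA_j^\dagger$ on the Haar-orthogonal subspace, then applies the triangle inequality to the $(n-1)(n-2)$ off-diagonal terms, each of norm at most $1$, and recognizes the diagonal as $\frac{1}{n-1}T^\perp_{\nulocgg,t}$. This yields $1-\gap(\nulocg*\nulocgd)\leq\frac{1}{n-1}(1-\gap(\nulocgg))+\frac{n-2}{n-1}$, i.e.\ the stated bound with the $\frac{1}{n-1}$ factor. Your Jensen argument dominates this: the operator inequality $PP^\dagger\preceq\frac{1}{n-1}\sum_iP_iP_i^\dagger$ is exactly the statement that the off-diagonal part is $\preceq\frac{n-2}{n-1}$ times the diagonal part in the PSD order, which is tighter than bounding it by $\frac{n-2}{n-1}$ times the identity as the paper does.
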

\begin{proof}
By definition of $\nulocg$ we have 
\begin{equation}\label{eq:WALKgen}
    T_{\nulocg,t}=\frac{1}{n-1}\sum_{i=1}^{n-1} A_{ii+1}\ .
\end{equation}
where $A_{ii+1}=T_{\nu_\G,t}$ with $\G$ acting on  qudits $i$ and $i+1$. Note that $A_{i i+1}$ is not necessarily Hermitian. Denoting for clarity 
by $P_{Haar}^\perp$ the complement of $\Ttmu$ we write
\begin{eqnarray}
&&1- \gap\left(\nulocg*\nulocgd)\right)=\|P^\perp_{Haar}  T_{\nulocg,t} T_{\nulocg,t}^\dagger P^\perp_{Haar}\|
=\nonumber \\
&&\|P^\perp_{Haar} \left(\frac{1}{(n-1)^2}\sum_{i} A_{ii+1} A_{ii+1}^\dagger  + \frac{1}{(n-1)^2}\sum_{i\not=j} A_{ii+1} A_{jj+1}^\dagger\right) P^\perp_{Haar}\|\leq 
\nonumber \\
&&\leq \|P^\perp_{Haar} \left(\frac{1}{(n-1)^2}\sum_{i} A_{ii+1} A_{ii+1}^\dagger\right) P^\perp_{Haar}\| + \frac{(n-1)^2-(n-1)}{(n-1)^2} 
=\frac{1}{n-1}(1-\gap(\nulocgg) + \frac{(n-1)^2-(n-1)}{(n-1)^2}\nonumber \ .
\end{eqnarray}
The first equality is definition of gap. The second uses Eq.\eqref{eq:WALKgen}. The inequality comes from triangle inequality and $||A_{ii+1}||\leq 1$ (since $A$'a are moment operators). The last equality follows from $T_{\nu*\nu',t}=T_{\nu,t} T_{\nu',t}$,  applied to operators $A$. 
Hence we obtain the claimed result
\begin{equation}
   \gap\left(\nulocg*\nulocgd\right) \geq   \frac{1}{n-1}\gap(\nulocgg)\ .
\end{equation}

\end{proof}

In exactly analogous way one proves 
\begin{Lemma}\label{lem:nonHerm-Herm-par}[Bound on the gap of random parallel quantum circuits for non symmetric gate-set]
Let $\nuparg$ be a measure describing random parallel quantum circuits generated two qudit gate-set $\G$ (not necessarily symmetric).  We have the following lower bound
\begin{equation}
\label{eq:nonHerm-Herm-par}
    \gap\left(\nuparg*\nupargd\right) \geq \frac12\gap(\nupargg)\ .
\end{equation}
\end{Lemma}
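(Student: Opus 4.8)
The plan is to mirror almost verbatim the proof of Lemma~\ref{lem:nonHerm-Herm-loc}, replacing the decomposition of the local-circuit moment operator into $n-1$ edge terms by the decomposition of the parallel-circuit moment operator into $2$ layer terms. First I would write, by definition of $\nuparg$,
\begin{equation}
T_{\nuparg,t}=\frac12\left(B_{odd}+B_{even}\right)\ ,
\end{equation}
where $B_{odd}=T_{\nu_\G^{\ot}, t}$ is the moment operator of the product gate $U_{12}\ot U_{34}\ot\ldots\ot U_{n-1,n}$ (gates on the odd-indexed neighboring pairs, each drawn independently from $\nu_\G$) and $B_{even}$ the analogous operator for the even-indexed pairs. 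Since each $B$ is itself a tensor product of independent single-edge moment operators $T_{\nu_\G,t}$, we have $\|B_{odd}\|_\infty\leq 1$ and $\|B_{even}\|_\infty\leq 1$.

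Next, denoting $P_{Haar}^\perp$ the complement of $\Ttmu$, I would expand
\begin{align}
1-\gap\left(\nuparg*\nupargd\right)
&= \left\|P_{Haar}^\perp T_{\nuparg,t} T_{\nuparg,t}^\dagger P_{Haar}^\perp\right\|_\infty\nonumber\\
&= \frac14\left\|P_{Haar}^\perp\left(B_{odd}B_{odd}^\dagger + B_{even}B_{even}^\dagger + B_{odd}B_{even}^\dagger + B_{even}B_{odd}^\dagger\right) P_{Haar}^\perp\right\|_\infty\nonumber\\
&\leq \frac14\left\|P_{Haar}^\perp\left(B_{odd}B_{odd}^\dagger + B_{even}B_{even}^\dagger\right) P_{Haar}^\perp\right\|_\infty + \frac24\ ,
\end{align}
using the triangle inequality and $\|B_{odd}B_{even}^\dagger\|_\infty\leq 1$ for the cross terms. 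The remaining term I would bound by recognizing that $B_{odd}B_{odd}^\dagger$ (and similarly for $even$) is, by $T_{\nu*\nu',t}=T_{\nu,t}T_{\nu',t}$ applied edgewise, the moment operator associated with the measure $\nu_\G*\nu_{\G^\dagger}$ placed on the odd layer; averaging the two layers with weight $1/2$ each gives precisely $T_{\nupargg,t}T_{\nupargg,t}^\dagger$ restricted appropriately, so that
\begin{equation}
\frac12\left\|P_{Haar}^\perp\left(\tfrac12 B_{odd}B_{odd}^\dagger + \tfrac12 B_{even}B_{even}^\dagger\right)P_{Haar}^\perp\right\|_\infty \cdot 2 = \bigl(1-\gap(\nupargg)\bigr)\ ,
\end{equation}
hence the bracketed quantity equals $\frac12\bigl(1-\gap(\nupargg)\bigr)$. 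Substituting back yields $1-\gap\left(\nuparg*\nupargd\right)\leq \frac12\bigl(1-\gap(\nupargg)\bigr)+\frac12 = 1-\frac12\gap(\nupargg)$, which rearranges to the claim \eqref{eq:nonHerm-Herm-par}.

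The only genuinely delicate point — the ``main obstacle'' — is justifying the identification of $\frac12 B_{odd}B_{odd}^\dagger+\frac12 B_{even}B_{even}^\dagger$ (compressed by $P_{Haar}^\perp$) with the two-step parallel walk $T_{\nupargg,t}$ of the symmetrized gate-set $\G\G^\dagger$: one must check that applying a parallel layer of $\G$-gates followed by a parallel layer of $\G^\dagger$-gates on the \emph{same} pairing is exactly the parallel circuit for the gate-set $\G\G^\dagger$, and that the $1/2$--$1/2$ choice of pairing in the first layer, matched by the same choice in the second, reproduces $\nupargg$; this is where the ``in exactly analogous way'' of the original really does mean analogous, and the bookkeeping of which layer the inverse acts on must be done carefully (it works because inverting a tensor product of gates inverts each factor, so $\G^\dagger$ on the odd layer is the inverse of $\G$ on the odd layer). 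Everything else is the same triangle-inequality and operator-norm estimate as in Lemma~\ref{lem:nonHerm-Herm-loc}, with the combinatorial factor $n-1$ replaced by $2$.
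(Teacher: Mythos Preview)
Your argument is correct and is exactly the analogue of the paper's proof of Lemma~\ref{lem:nonHerm-Herm-loc}: decompose $T_{\nuparg,t}=\frac12(B_{odd}+B_{even})$, expand $T_{\nuparg,t}T_{\nuparg,t}^\dagger$, bound the two cross terms by $1$ each, and identify the diagonal average $\frac12(B_{odd}B_{odd}^\dagger+B_{even}B_{even}^\dagger)$ with $T_{\nupargg,t}$. One small slip: in the paragraph after your third display you write that the average ``gives precisely $T_{\nupargg,t}T_{\nupargg,t}^\dagger$'', but it is $T_{\nupargg,t}$ itself (not its square) --- your subsequent displayed equation and final rearrangement use the correct identification, so this is only a wording error.
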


Next we need Lemma proved in \cite{BHH2016} (it is not formulated as a separate Lemma, but it is a contents of  the proof of Corollary 7 of \cite{BHH2016}).
\begin{Lemma}[local circuits]
\label{lem:GversusHaar}
For a set $\tilde \G$ of gates containing inverses we have 
\begin{equation}
    \gap(\nulocgtil,t)\geq \gap(\nu_{\tilde \G},t) \gap(\nuloch,t),\quad \gap(\nupargtil)\geq \gap(\nu_{\tilde G},t) \gap(\nuparh,t)
\end{equation}
    I.e. we have the same relation for local as well as parallel circuits.
\end{Lemma}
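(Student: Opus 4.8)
The plan is to follow the interlacing/comparison argument used in \cite{BHH2016} in the proof of Corollary 7, adapted to the moment-operator language employed throughout this paper. The starting point is the observation that, for a symmetric gate-set $\tilde\G$, all the relevant moment operators are Hermitian, so their operator norms on the orthogonal complement of the Haar projector are controlled by their spectra, and one may use the operator ordering $0 \preceq A \preceq B$ together with monotonicity of the spectral gap.

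First I would write out the moment operator of one step of the local random circuit with the single-pair gate distribution $\nu_{\tilde\G}$ as
\begin{equation}
T_{\nulocgtil,t} = \frac{1}{n-1}\sum_{i=1}^{n-1} A_{i\,i+1}\ ,
\end{equation}
where $A_{i\,i+1}$ is the moment operator $T_{\nu_{\tilde\G},t}$ acting on qudits $i,i+1$ (and identically elsewhere). Since $\tilde\G$ contains inverses, each $A_{i\,i+1}$ is Hermitian, and likewise the local Haar-circuit operator $T_{\nuloch,t} = \frac{1}{n-1}\sum_i H_{i\,i+1}$ is Hermitian, where $H_{i\,i+1}$ is the Haar moment operator on qudits $i,i+1$. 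The key algebraic fact is that on the common invariant subspace each local Haar block is the orthogonal projector onto the commutant of $U^{\ot t}\ot\bar U^{\ot t}$ acting on that pair, and one has the operator inequality $\gap(\nu_{\tilde\G},t)\, H_{i\,i+1} \preceq A_{i\,i+1}$ restricted appropriately — more precisely, $A_{i\,i+1}$ and $H_{i\,i+1}$ have the same eigenspace decomposition on the range of $H_{i\,i+1}$ versus its complement, and on the complement $\|A_{i\,i+1}\|\le 1-\gap(\nu_{\tilde\G},t)$ while on the range of $H_{i\,i+1}$ the operator $A_{i\,i+1}$ acts as the identity (it fixes the local commutant). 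The step I expect to be the crux is establishing the comparison
\begin{equation}
P^\perp_{Haar}\, T_{\nulocgtil,t}^2\, P^\perp_{Haar} \preceq \bigl(1 - \gap(\nu_{\tilde\G},t)\,\gap(\nuloch,t)\bigr) P^\perp_{Haar}\ ,
\end{equation}
which amounts to bounding the smallest eigenvalue of $\gap$ of a product/average of commuting-enough projector-like operators; this is exactly where the detour through Hermiticity (guaranteed by $\tilde\G$ symmetric) is essential, and where the argument of \cite{BHH2016} does the real work by relating $\sum_i A_{i\,i+1}$ to $\sum_i H_{i\,i+1}$ via a detectability-lemma–type estimate.

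Concretely I would proceed in the following order: (1) reduce to the orthogonal complement $P^\perp_{Haar}$ of the global Haar projector, noting all operators commute with $P_{Haar}$; (2) use the operator inequality $A_{i\,i+1} \succeq \id - \gap(\nu_{\tilde\G},t)\,(\id - H_{i\,i+1})$, valid because $A_{i\,i+1}$ fixes the local commutant and is contracted by at least $\gap(\nu_{\tilde\G},t)$ on its complement; (3) average over $i$ to get $T_{\nulocgtil,t} \succeq \id - \gap(\nu_{\tilde\G},t)(\id - T_{\nuloch,t})$ as operators (this preserves Loewner order since the $A$'s and $H$'s are Hermitian and the bound is linear in each block); (4) restrict to $P^\perp_{Haar}$, where $\|T_{\nuloch,t}|_{P^\perp_{Haar}}\| = 1-\gap(\nuloch,t)$, and deduce $\|T_{\nulocgtil,t}|_{P^\perp_{Haar}}\| \le 1 - \gap(\nu_{\tilde\G},t)\gap(\nuloch,t)$, i.e. $\gap(\nulocgtil,t)\ge \gap(\nu_{\tilde\G},t)\gap(\nuloch,t)$. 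The parallel case is identical after replacing the average $\frac{1}{n-1}\sum_i$ by the probability-$\tfrac12$ mixture of the two brick-layer patterns, each of which is itself a tensor product of independent single-pair Haar (resp. $\tilde\G$) blocks, so the same block-wise Loewner comparison applies verbatim.

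The main obstacle, as flagged, is step (2)–(3): one must be careful that $A_{i\,i+1}$ need not be a projector, only a Hermitian contraction that acts as the identity on the local commutant; the inequality $A_{i\,i+1}\succeq \id - \gap(\nu_{\tilde\G},t)(\id - H_{i\,i+1})$ then follows by splitting into the $H_{i\,i+1}$-eigenspaces $\{0,1\}$ and using $\spec(A_{i\,i+1})\subseteq [-(1-\gap(\nu_{\tilde\G},t)),\,1-\gap(\nu_{\tilde\G},t)]\cup\{1\}$. Since this and the rest are precisely the contents of the proof of Corollary 7 in \cite{BHH2016}, I would simply cite that proof for the detailed verification rather than reproduce it, noting only that our moment-operator normalization matches theirs.
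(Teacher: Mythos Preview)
The paper itself does not supply a proof of this lemma; it simply points to the proof of Corollary~7 in \cite{BHH2016}. Your proposal is therefore an attempt to sketch that argument, and the overall strategy (Hermiticity from the symmetry of $\tilde\G$, block decomposition of each $A_{i\,i+1}$ relative to the local Haar projector $H_{i\,i+1}$, averaging, and restriction to $P_{Haar}^\perp$) is indeed the right one.

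However, the Loewner inequality in your step (2) is in the wrong direction, and this makes step (4) incoherent. On the $H_{i\,i+1}=0$ eigenspace one has $\spec(A_{i\,i+1})\subseteq[-(1-g),\,1-g]$ with $g=\gap(\nu_{\tilde\G},t)$, while your right-hand side $\id-g(\id-H_{i\,i+1})$ equals $(1-g)\id$ there; so the correct relation is $A_{i\,i+1}\preceq (1-g)\id+g\,H_{i\,i+1}$, not $\succeq$. Averaging the correct inequality gives $T_{\nulocgtil,t}\preceq (1-g)\id+g\,T_{\nuloch,t}$, and on $P_{Haar}^\perp$ the largest eigenvalue of the right-hand side is at most $(1-g)+g\bigl(1-\gap(\nuloch,t)\bigr)=1-\gap(\nu_{\tilde\G},t)\gap(\nuloch,t)$. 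To bound the full operator norm you must also control the negative part: from $A_{i\,i+1}\succeq -(1-g)(\id-H_{i\,i+1})+H_{i\,i+1}\succeq -(1-g)\id$ and positivity of $T_{\nuloch,t}$ one gets $T_{\nulocgtil,t}\succeq -(1-g)\id$ on $P_{Haar}^\perp$, and since $1-g\le 1-\gap(\nu_{\tilde\G},t)\gap(\nuloch,t)$ the norm bound follows. With this sign correction and the extra sentence on the lower spectral bound, your sketch is the BHH argument and the parallel case is handled the same way.
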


We will also make use of theorem by Peter Varju, which we reformulated in Theorem \ref{thm:Varju}). It sattes that
\begin{equation}
\label{eq:Varjuthm2}
    \gap(\nu,t) \geq \frac{\gap(\nu,t_0)}{B \log^2(t)}
\end{equation}
for any measure $\nu$ on $\UU(d)$,
where $B$ depends only on $d$. 
Finally we shall need an estimate for the length of circuits that is needed to
produce an $\delta$-approximate \tdesign, 
provided that single step has gap $\gap(\nu,t)$: 
\begin{proposition}
\label{lem:length}
For any measure $\nu$ on $\UU(d)$, the measure $\nu^{*l}$
is $\delta$-approximate \tdesign, if 
\begin{equation}
    l\geq \frac{1}{\gap(\nu,t)}\log\frac1\delta\ .
\end{equation}
\end{proposition}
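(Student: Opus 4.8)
The plan is to use the multiplicativity of the moment operator under convolution together with the definition of the spectral gap. First I would recall from Section \ref{sec:mixingOP} that $T_{\nu^{*l},t} = (T_{\nu,t})^l$ and that on the orthogonal complement of the Haar-invariant subspace we have $\|T_{\nu,t}|_{\cdot}\|_\infty = 1-\gap(\nu,t)$ (this is Proposition \ref{prop:gaps} together with the definition $\gap(\nu,t) = 1-\delta(\nu,t)$). Writing $\Ttmu$ for the projector onto the Haar-fixed subspace, since $T_{\nu,t}$ and $\Ttmu$ commute and $\Ttmu T_{\nu,t} = \Ttmu = T_{\nu,t}\Ttmu$, we get $T_{\nu^{*l},t} - T_{\mu,t} = (T_{\nu,t})^l - \Ttmu = (T_{\nu,t} - \Ttmu)^l$ acting as $(T_{\nu,t}^\perp)^l$ on the complement and as $0$ on the fixed subspace.

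Next I would estimate the expander norm of $\nu^{*l}$ by submultiplicativity of the operator norm:
\begin{equation}
\delta(\nu^{*l},t) = \|(T_{\nu,t}^\perp)^l\|_\infty \leq \|T_{\nu,t}^\perp\|_\infty^l = (1-\gap(\nu,t))^l\ .
\end{equation}
This is exactly the content of Fact \ref{fact:delta-l}, so alternatively I could simply cite that fact directly. Then I would use the elementary inequality $1-x \leq e^{-x}$ valid for all real $x$, applied with $x = \gap(\nu,t)$, to obtain $(1-\gap(\nu,t))^l \leq e^{-l\,\gap(\nu,t)}$.

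Finally, to make the right-hand side at most $\delta$ it suffices to require $e^{-l\,\gap(\nu,t)} \leq \delta$, i.e. $l\,\gap(\nu,t) \geq \log(1/\delta)$, which rearranges to the claimed bound
\begin{equation}
l \geq \frac{1}{\gap(\nu,t)}\log\frac{1}{\delta}\ .
\end{equation}
There is no serious obstacle here; the only point requiring a little care is the decomposition $T_{\nu,t} = \Ttmu \oplus T_{\nu,t}^\perp$ and the observation that raising to the $l$-th power respects this orthogonal splitting, so that the Haar-fixed part of $T_{\nu^{*l},t}$ coincides exactly with $T_{\mu,t}$ and only the complementary block contributes to the expander norm — but this is already established in the proof of Lemma \ref{lem:mu-tilde-mu} and the discussion preceding Fact \ref{fact:delta-l}.
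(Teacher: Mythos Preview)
Your proof is correct and follows essentially the same approach as the paper: both reduce to the inequality $(1-\gap(\nu,t))^l \leq \delta$ (which is Fact~\ref{fact:delta-l}) and then apply $1-\gap \leq e^{-\gap}$ to solve for $l$. You simply spell out in more detail the orthogonal decomposition underlying Fact~\ref{fact:delta-l}, which the paper takes as given.
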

\begin{proof}
In order $\nu^{*l}$ to be \tdesign\ we need $(1-\gap(\nu,t))^l\leq \delta$. Taking logarithm of both sides, and 
using $1-\gap\leq e^{-\gap}$ proves the estimate. 
\end{proof}

We are now ready to prove the main result of this section
\begin{Theorem}
A random local circuit composed of two-qudit local gates from universal set of G given by $\nulocg$ is an $\delta$-approximate $t$-design provided its length $l_{loc}$ satisfies
\begin{equation}
    l_{loc}\geq n \log^2(t) C(\G) l_{loc,Haar}
\end{equation}
where $C(\G)>0$ is a constant depending only on set of gates and dimension $d$, while 
$l_{loc,Haar}$ is the length of the local Haar circuit that is $\delta$-approximate \tdesign.
For random parallel circuit, with analogous notation  we have 
\begin{equation}
    l_{par}\geq 2 \log^2(t) C(\G) l_{par,Haar}
\end{equation}
\end{Theorem}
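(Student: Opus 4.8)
The plan is to reduce the statement to a single spectral-gap inequality for the random circuit and then invoke Proposition~\ref{lem:length}. Concretely, I would first establish the bound
\begin{equation}\label{eq:randcirc-gap-bound}
\gap(\nulocg,t)\ \geq\ \frac{1}{n\,C(\G)\,\log^2(t)}\,\gap(\nuloch,t)
\end{equation}
for a constant $C(\G)>0$ depending only on $\G$ and $d$, and then argue as follows. Since $l_{loc,Haar}$ is by hypothesis a length for which $\nuloch^{\ast l_{loc,Haar}}$ is a $\delta$-approximate $t$-expander, and since the Haar local-circuit moment operator $T_{\nuloch,t}$ is a self-adjoint contraction (an average of orthogonal projectors), its shortest such length is comparable to $\gap(\nuloch,t)^{-1}\log(1/\delta)$; hence $\gap(\nuloch,t)^{-1}\log(1/\delta)\lesssim l_{loc,Haar}$. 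Combining this with \eqref{eq:randcirc-gap-bound} gives $\gap(\nulocg,t)^{-1}\log(1/\delta)\leq n\,C(\G)\,\log^2(t)\,l_{loc,Haar}$, so by Proposition~\ref{lem:length} any $l_{loc}\geq n\log^2(t)\,C(\G)\,l_{loc,Haar}$ makes $\nulocg^{\ast l_{loc}}$ a $\delta$-approximate $t$-expander, which is the claim.

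The core of the proof is thus \eqref{eq:randcirc-gap-bound}, which I would obtain by chaining the lemmas of this section so as to (i) pass from the non-symmetric one-step walk to a symmetric one, (ii) remove the $n$-dependence, (iii) split off the Haar contribution, and (iv) control the remaining one-pair gap via Varju. In order: because the moment operator of the inverted circuit $\nulocgd$ is the adjoint of that of $\nulocg$, the argument behind the second inequality of Lemma~\ref{lem:mu-tilde-mu} applies with $\nulocg$ in place of $\nu_\G$ and yields $\gap(\nulocg,t)\geq\tfrac12\gap(\nulocg*\nulocgd,t)$; Lemma~\ref{lem:nonHerm-Herm-loc} gives $\gap(\nulocg*\nulocgd,t)\geq\tfrac1{n-1}\gap(\nulocgg,t)$; the gate set $\G\G^\dagger$ is symmetric (it contains the identity and is closed under inversion), so Lemma~\ref{lem:GversusHaar} with $\tilde\G=\G\G^\dagger$ gives $\gap(\nulocgg,t)\geq\gap(\nu_\G*\nu_{\G^\dagger},t)\,\gap(\nuloch,t)$; and finally, by the first inequality of Lemma~\ref{lem:mu-tilde-mu} together with Theorem~\ref{thm:Varju} (equivalently \eqref{eq:Varjuthm2}), $\gap(\nu_\G*\nu_{\G^\dagger},t)\geq\gap(\nu_\G,t)\geq\gap(\nu_\G,t_0)/(D\log^2 t)$ for all $t>t_0$, where $t_0$ and $D$ depend only on $d$. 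Universality of $\G$ guarantees $\gap(\nu_\G,t_0)>0$ --- equivalently, $\G$ is a $\delta$-approximate $t_0$-expander with $\delta<1$, as in the Remark following Theorem~\ref{th:InvFree}. Multiplying the four inequalities, absorbing the factor $2$ and $D/\gap(\nu_\G,t_0)$ into $C(\G)$, and using $n-1<n$ yields \eqref{eq:randcirc-gap-bound}. The parallel case is identical, with Lemma~\ref{lem:nonHerm-Herm-par} replacing Lemma~\ref{lem:nonHerm-Herm-loc}: the factor $\tfrac1{n-1}$ becomes $\tfrac12$, so $n$ is replaced by $2$ throughout.

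I expect the difficulty to be organizational rather than conceptual, since every inequality in the chain is already proved. The two points requiring care are: first, that the constant $C(\G)$ really depends only on $\G$ and $d$ --- this relies on $t_0$ and $D$ in Theorem~\ref{thm:Varju} being $d$-dependent only, and on $\gap(\nu_\G,t_0)$ being a fixed positive number, which is where universality of $\G$ genuinely enters; and second, the translation between ``a circuit length that produces a $\delta$-approximate $t$-expander'' and the spectral gap, i.e.\ the comparability of $l_{loc,Haar}$ with $\gap(\nuloch,t)^{-1}\log(1/\delta)$, which uses self-adjointness of the Haar local-circuit moment operator. Checking that Lemma~\ref{lem:mu-tilde-mu} applies to the $n$-qudit measure $\nulocg$ rather than only to a finite gate set reduces to the identity $T_{\nulocgd,t}=T_{\nulocg,t}^\dagger$, which is immediate from the definitions.
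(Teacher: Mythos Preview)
Your proposal is correct and follows essentially the same approach as the paper: the same chain of four inequalities (Lemma~\ref{lem:mu-tilde-mu}, Lemma~\ref{lem:nonHerm-Herm-loc}, Lemma~\ref{lem:GversusHaar}, and Theorem~\ref{thm:Varju}) followed by Proposition~\ref{lem:length}. The only cosmetic difference is that the paper applies Varju to $\nu_{\G\G^\dagger}$ and then uses Lemma~\ref{lem:mu-tilde-mu} to pass to $\gap(\nu_\G,t_0)$, whereas you first pass to $\gap(\nu_\G,t)$ and then apply Varju; both orderings yield the same bound. Your treatment of the final step, relating $l_{loc,Haar}$ to $\gap(\nuloch,t)^{-1}\log(1/\delta)$ via self-adjointness of the Haar local moment operator, is in fact more careful than the paper, which simply substitutes one for the other.
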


\begin{proof}
Let us prove the result for local circuits first.
We have the following chain of inequalities:
\begin{eqnarray}
    &&\gap(\nulocg,t)\geq \frac12 \gap(\nulocg*\nulocgd,t)\geq \frac12\frac{1}{n-1} \gap(\nulocgg,t)\geq \nonumber\\
    &&\frac12\frac{1}{n-1} \gap(\nulocgg,t)
    \gap(\nuloch,t)\geq 
    \frac12\frac{1}{n}\frac{\gap(\nu_{\G\G^\dagger},t_0)}{B(d) \log^2 t} \gap(\nuloch,t) \geq  \frac12\frac{1}{n}\frac{\gap(\nu_{\G},t_0)}{B(d) \log^2 t} \gap(\nuloch,t)= \nonumber \\
    &&=
    \frac1n \frac{C(\G)}{\log^2 t} \gap(\nuloch,t)\nonumber\\
\end{eqnarray}
Here the first inequality comes from Lemma \ref{lem:mu-tilde-mu}, the second from Lemma \ref{lem:nonHerm-Herm-loc}, the third from Lemma \ref{lem:GversusHaar}, the fourth from Eq. \eqref{eq:Varjuthm2} and the fifth from Lemma \ref{lem:mu-tilde-mu}. Now, since 
$\G$ is universal  $\gap(\nu_\G,t_0)$ and hence $C(\G)$ is nonzero. 
Then applying Lemma \ref{lem:length} and inserting $l_{Haar}$ in place of 
\begin{equation}
    \frac{\ln\frac1\delta}{\gap(\nuloch)}
\end{equation}
ends the proof.
In the case of parallel circuits the proof is exactly the same 
with just one  difference: instead of $1/(n-1)$ there is factor $1/2$  after 
second inequality,  since instead of Lemma 
\ref{lem:nonHerm-Herm-loc} concerning local circuits we apply Lemma
\ref{lem:nonHerm-Herm-par} concerning parallel 
circuits. 
\end{proof}

\section{Polynomial approximation of Dirac delta on Unitarty channels}\label{sec:constr}

In this section we present the main idea behind the construction of the polynomial approximation of the Dirac delta in the manifold of unitary channels. The features of this particular approximation were stated without a proof in Theorem \ref{th:FUNC}. In our exposition we will follow a `bottom up' approach. The polynomial function on $\U(d)$ will be constructed from the Fourier series truncation (denoted by $f^\sigma_{p,k}$) of a suitable symmetric function $f^\sigma_p$ on $d$ dimensional torus $\T^d=\lbrace{ \bphi = \left(\varphi_1,\ldots,\varphi_d\right)  |\ \varphi_i\in[-\pi,\pi] \rbrace}$. This Fourier truncation will be then used to define a function  $F^\sigma_{p,k}$  which is a ''class extension'' of $f^\sigma_{p,k}$, i.e. is a function on $\UU(d)$ defined via $F^\sigma_{p,k}(U)=f^{\sigma}_{p,k}(\mathrm{Eig}(U))$, where $\mathrm{Eig}(U)$ denotes a diagonal matrix of eigenvalues of a unitary operator $U$. Finally, the function $F^\sigma_{p,k}$ will be averaged over the global phase, resulting in a well-defined polynomial function $\tilde{\F}^\sigma_{p,k}$ on $\U(d)$. This function will define (up to a normalisation constant) a polynomial approximation of Dirac $\delta$, denoted by  $\F^\sigma_k$, whose existence is  claimed in Theorem \ref{th:FUNC}.

We will adopt the following convention when referring to elements from the sets relevant in our considerations:  $\x,\y\in\R^d$, $\n,\k\in\Z^d$. Moreover,  we denote by $\x\cdot\y$ the standard inner product in $\R^d$ (note that we can apply it to    to elements of $\Z^d \subset \R^d$). Finally, we will denote by $|\x|$ and $|\x|_1$ respectively  euclidean and 1-norm of $\x\in\R^d$. 

We begin by introducing a number of useful functions on $\R^d$ and $\T^d$.  The standard Gaussian distribution on $\R^d$ is defined by
\begin{equation}\label{eq:gaussianDistribution}
    f^\sigma :\R^d \rightarrow \R^d\ ,\ \ f^\sigma(\x) \coloneqq \frac{1}{(\sqrt{2\pi} \sigma)^d} \exp\left(-\frac{\x^2}{2\sigma^2} \right) \ . 
\end{equation}
It will be convenient for us to introduce a periodized version of this function 
\begin{equation}
    f^\sigma_p: \T^d \rightarrow \T^d\ ,\ \  f^\sigma_p(\bphi) \coloneqq \sum_{\k\in\Z^d} f^\sigma(\bphi +2\pi\k) \ .
\end{equation}
By the virtue of the Poisson summation formula  \cite{Poisson} we know that for any function $f\in L^1(\mathbb{R}^d)$ that satisfies:
\begin{equation}
    |f(\x)|\leq\frac{C}{(1+|\x|)^{d+\alpha}}\ ,
\end{equation}
for some positive constants $C$ and $\alpha$, we have
\begin{equation}
    f_p (\bphi) = \frac{1}{(2\pi)^d} \sum_{\n\in\Z^d} \hat{f}(\n) \exp(\ii \n\cdot \bphi)\ , \ \text{where}\ \hat{f}(\n)=\int_{\R^d}d\y f(\y) \exp(-\ii \n\cdot \y ) \ ,
 \end{equation}
is the standard Fourier transform of $f$ computed at point $\n\in\Z^d$. Using the fact that $\hat{f}^\sigma(\n)= e^{-\frac{1}{2}\sigma^2 \n^2}$  we obtain
\begin{equation}\label{eq:PeriodGAUSS}
 f^\sigma_p (\bphi)= \frac{1}{(2\pi)^d} \sum_{\n\in\Z^d} e^{-\frac{1}{2}\sigma^2 \n^2} \exp(\ii \n\cdot \bphi) \ .
 \end{equation}
Analogously we define a truncated version of $f^\sigma_p (\bphi)$,
\begin{equation}\label{eq:TruncatedGAUSS}
    f^{\sigma}_{p,k}(\bphi) =\frac{1}{(2\pi)^d} \sum_{\n\in S_k} e^{-\frac{1}{2}\sigma^2 \n^2} \exp(\ii \n\cdot \bphi)  \ , 
\end{equation}
where $S_k=\{\n\ |\ |\n|_1 \leq k \}$.

 Finally we define a `phase averaged' versions of functions  $f^{\sigma}_{p}$ and   $f^{\sigma}_{p,k}$
\begin{equation}
     f^{\sigma,a}_{p} (\bphi) \coloneqq \frac{1}{2\pi} \int_{0}^{2\pi} d\phi  f^{\sigma}_{p}(\bphi+(\phi,\ldots,\phi))\  ,\   f^{\sigma,a}_{p,k} (\bphi) \coloneqq \frac{1}{2\pi} \int_{0}^{2\pi} d\phi  f^{\sigma}_{p,k}(\bphi+(\phi,\ldots,\phi))\ .
\end{equation}We use the fact that functions $f^\sigma_p$ and $f^\sigma_{p,k}$ are functions on $\T^d$ that are invariant under the permutation of angles. Therefore, we can define class functions $F^\sigma$ and $F^\sigma_k$ on $\UU(d)$ that recover $f^\sigma_p$ and $f^\sigma_{p,k}$ when restricted to $\T^d$.  In other words
\begin{equation}\label{eq:classFUNCTIONdef}
F^\sigma (U) \coloneqq  f^\sigma_p(\mathrm{Eig}(U))\ , \ F^\sigma_k (U) \coloneqq  f^\sigma_{p,k}(\mathrm{Eig}(U))\ ,
\end{equation}
where $\mathrm{Eig}(U)=\mathrm{diag}(\exp(\ii \phi_1) ,\ldots, \exp(\ii \phi_d))$ is a diagonal matrix formed by eigenvalues of $U$. When we average the above functions over the global phase we get well-defined functions on the group of unitary channels $\U(d)$.
\begin{equation}\label{eq:unnormalizedPROJECTIVE}
\tilde{\F}^\sigma =\P_{\mathrm{phase}} F^\sigma\ ,\ \tilde{\F}^\sigma_k =\P_{\mathrm{phase}} F^\sigma_k \ ,
\end{equation}
where linear operator $\P_{\mathrm{phase}}:\L^2 (\UU(d))\rightarrow \L^2 (\UU(d))$ is defined by
\begin{equation}\label{eq:phaseProjector}
 (\P_{\mathrm{phase}}F)(U)=  \frac{1}{2\pi} \int_{0}^{2\pi} d\phi F(\exp(\ii \phi) U )\ .
 \end{equation}
We note that $\P_{\mathrm{phase}}$ is an orthonormal projector in $\L^2 (\UU(d))$ that projects onto functions in $\L^2(\UU(d))$ that are invariant under a global phase transformation. As explained in Section \ref{sec:preliminary} we can interpret such functions as functions defined on $\U(d)$. The normalised version of $\tilde{\F}^\sigma_k$, 
\begin{equation}\label{eq:MAINfuncDEF}
\F^\sigma_k \coloneqq  \tilde{\F}^\sigma_k/\N^{\sigma}_{k} \ ,\ \N^{\sigma}_{k}\coloneqq \int_{\U(d)} \dt \mu(\h{U}) \tilde{\F}^\sigma_k (\h{U})
\end{equation}
is our candidate for a `low degree' approximation of the Dirac $\delta$ at $\h{I}$. We shall prove Theorem \ref{th:FUNC} via a sequence of technical Lemmas that will eventually cover all the properties stated in Theorem \ref{th:FUNC}. It will be also convenient to introduce auxiliary  function of $\U(d)$, 
\begin{equation}\label{eq:AUXfuncDEF}
\F^\sigma \coloneqq  \tilde{\F}^\sigma /\N^{\sigma} \  ,\  \N^{\sigma}\coloneqq \int_{\U(d)} \dt \mu(\h{U}) \tilde{\F}^\sigma (\h{U}) 
\end{equation}
that will serve as a reference function that $\F^\sigma_k$ approximates as $k\rightarrow\infty$. We begin with the following Lemma.

\begin{Lemma}\label{lem:PolynDegree} The function $\F^\sigma_k$ defined in the preceding paragraphs satisfies $\F^\sigma_k\in\H_k$ i.e. is a balanced polynomial of degree $k$ in $U$ and $\bar{U}$. 
\end{Lemma}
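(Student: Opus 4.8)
The plan is to unwind the chain of definitions $\F^\sigma_k=\tilde\F^\sigma_k/\N^\sigma_k$, $\tilde\F^\sigma_k=\P_{\mathrm{phase}}F^\sigma_k$, $F^\sigma_k(U)=f^\sigma_{p,k}(\mathrm{Eig}(U))$, to carry out the phase average $\P_{\mathrm{phase}}$ explicitly, and then to recognise the resulting class function as an element of $\H_k$ by means of the block decomposition of $\H_k$ recalled in the proof of Proposition \ref{prop:gaps}.

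First I would compute the phase average by hand. Since the eigenvalues of $e^{\ii\phi}U$ are $e^{\ii\phi}$ times those of $U$, for $U$ with eigenangles $\bphi=(\varphi_1,\dots,\varphi_d)$ one has $F^\sigma_k(e^{\ii\phi}U)=f^\sigma_{p,k}(\bphi+(\phi,\dots,\phi))=\frac{1}{(2\pi)^d}\sum_{\n\in S_k}e^{-\frac12\sigma^2\n^2}e^{\ii\n\cdot\bphi}e^{\ii(n_1+\dots+n_d)\phi}$; averaging over $\phi\in[0,2\pi)$ with $\frac1{2\pi}\int_0^{2\pi}e^{\ii m\phi}\dt\phi=\delta_{m,0}$ kills every term with $n_1+\dots+n_d\neq0$, leaving
\begin{equation}
\tilde\F^\sigma_k(U)=\frac{1}{(2\pi)^d}\sum_{\substack{\n\in\Z^d:\ |\n|_1\le k\\ n_1+\dots+n_d=0}}e^{-\frac12\sigma^2\n^2}\prod_{l=1}^d\lambda_l^{\,n_l}\ ,
\end{equation}
where $\lambda_1,\dots,\lambda_d$ are the eigenvalues of $U$ (equivalently $\tilde\F^\sigma_k(U)=f^{\sigma,a}_{p,k}$ evaluated at the eigenangles of $U$). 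The constraint $n_1+\dots+n_d=0$ produced by the phase average is precisely what makes the surviving monomials \emph{balanced}: writing $\n=\n^+-\n^-$ with $\n^\pm\ge0$ of disjoint support and using $\lambda_l^{-1}=\bar\lambda_l$ gives $\prod_l\lambda_l^{\,n_l}=\prod_l\lambda_l^{\,n_l^+}\bar\lambda_l^{\,n_l^-}$, and $\sum_l n_l=0$ forces $|\n^+|_1=|\n^-|_1=:s$, so $|\n|_1=2s\le k$, i.e.\ $s\le\lfloor k/2\rfloor$. Hence on the maximal torus $\tilde\F^\sigma_k$ is a polynomial in $\lambda_1,\dots,\lambda_d,\bar\lambda_1,\dots,\bar\lambda_d$ each of whose monomials has $\lambda$-degree equal to $\bar\lambda$-degree, equal to some $s\le\lfloor k/2\rfloor$; and since $\n\mapsto\n^2$, $|\n|_1$ and the constraint are permutation-invariant, this polynomial is symmetric, so $\tilde\F^\sigma_k$ is a continuous class function on $\UU(d)$, invariant under the global phase by construction.

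It then suffices to show that any continuous, phase-invariant class function $H$ on $\UU(d)$ whose torus restriction is a polynomial in $\lambda,\bar\lambda$ all of whose monomials have bidegree at most $(m,m)$ lies in $\H_m$; applying this with $m=\lfloor k/2\rfloor\le k$ and using $\H_m\subseteq\H_k$ gives $\tilde\F^\sigma_k\in\H_k$, whence (since $\F^\sigma_k=\tilde\F^\sigma_k/\N^\sigma_k$ is a constant multiple of $\tilde\F^\sigma_k$, the constant $\N^\sigma_k$ being nonzero by construction, and $\H_k$ is a linear space) $\F^\sigma_k\in\H_k$ as claimed. For the displayed sufficient statement I would invoke Proposition \ref{prop:gaps}: the class functions in $\H_m$ are exactly the linear span of the characters $\chi_\lambda$ of the irreducible representations of $\UU(d)$ occurring in $U^{\ot m}\ot\bar U^{\ot m}$. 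Writing $H=\sum_\lambda c_\lambda\chi_\lambda$ (a finite sum, by Peter--Weyl) and using that the global phase acts on $\chi_\lambda$ by the scalar $e^{\ii\phi\sum_i\lambda_i}$, phase-invariance of $H$ forces $\sum_i\lambda_i=0$ for every contributing $\lambda$; writing $\lambda=\lambda^+-\lambda^-$ with $\lambda^\pm$ partitions, the bidegree bound together with the triangularity of the restricted characters $s_\lambda$ in the dominance order forces $|\lambda^+|=|\lambda^-|=:p\le m$; the irreducible representation with highest weight $\lambda$ occurs in $U^{\ot p}\ot\bar U^{\ot p}$, hence (tensoring with the invariant vector of $U^{\ot(m-p)}\ot\bar U^{\ot(m-p)}$) in $U^{\ot m}\ot\bar U^{\ot m}$, so $\chi_\lambda\in\H_m$ and therefore $H\in\H_m$. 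Alternatively, one can avoid representation theory: the symmetrisation of $\prod_l\lambda_l^{\,n_l^+}\bar\lambda_l^{\,n_l^-}$ equals $m_{\n^+}(\lambda)\,m_{\n^-}(\bar\lambda)$ up to symmetric functions of strictly smaller (still balanced) bidegree that arise from overlapping supports collapsing through $\lambda_l\bar\lambda_l=1$ on $\UU(d)$; each $m_\mu(\lambda)=\tr(A_\mu U^{\ot|\mu|})$ and $m_\nu(\bar\lambda)=\tr(B_\nu\bar U^{\ot|\nu|})$ because symmetric polynomials are spanned by $\mathrm{GL}_d$-characters, so $m_{\n^+}(\lambda)\,m_{\n^-}(\bar\lambda)=\tr\!\big((A_{\n^+}\ot B_{\n^-})\,U^{\ot s}\ot\bar U^{\ot s}\big)\in\H_s$, and an induction on $s$ concludes.

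The step I expect to be the main obstacle is this last one: passing from the explicit trigonometric-polynomial expression for $\tilde\F^\sigma_k$ to membership in $\H_k$, i.e.\ controlling which irreducible characters can occur and checking that the bidegree together with phase-invariance places them inside $U^{\ot k}\ot\bar U^{\ot k}$ (or, in the elementary route, organising the induction on the bidegree cleanly). Everything preceding it — the phase average and the $\ell^1$-degree bookkeeping — is routine.
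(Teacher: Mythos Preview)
Your argument is correct; the phase-average and $\ell^1$-degree bookkeeping are routine, and the character-theoretic step goes through (phase invariance gives $\sum_i\lambda_i=0$; unitriangularity of the Schur basis against the monomial symmetric Laurent basis plus the bidegree bound gives $|\lambda^+|=|\lambda^-|=p\le m$; then $V_\lambda$ is the top summand of $V_{\lambda^+}\ot V_{\lambda^-}^*\subset V^{\ot p}\ot(V^*)^{\ot p}$, so $\chi_\lambda\in\H_m$). The paper instead takes the elementary route you sketch as alternative~(b), but with power sums rather than monomial symmetric functions: in Appendix~\ref{app:ClassFUNC} (Lemma~\ref{lemtrace}) it observes that $\prod_j\tr U^{\pm n_j}=\tr\big((C_{n_1}\ot\cdots\ot C_{n_d})\,U^{\ot t}\ot\bar U^{\ot t}\big)$ with $C_m$ the cyclic permutation, hence this product lies in $\H_t$ for $t=\tfrac12|\n|_1$; expanding it over all index assignments and grouping by the induced set partition of $\{1,\dots,d\}$ yields the desired symmetrised exponential $\sum_{\sigma\in S_d}e^{i\sigma(\n)\cdot\bphi}$ (from the finest partition) plus coarser-partition terms that either lie in $\H_{t-1}$ by induction or have the same $\ell^1$-norm but more zero entries, handled by a nested recursion terminating at $(t,0,\dots,0,-t)$. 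The paper's approach is entirely self-contained---no Peter--Weyl, no dominance order, no highest-weight facts---at the cost of a somewhat fiddly double induction; your character route is conceptually cleaner once the standard $\UU(d)$ machinery is granted. Both approaches in fact deliver the sharper conclusion $\tilde\F^\sigma_k\in\H_{\lfloor k/2\rfloor}$.
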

The proof of this result, which seems intuitive at the first sight, turns out to surprisingly complex. We present it in Part \ref{app:ClassFUNC} of the Appendix.
n
We proceed with giving a number of properties of function $\tilde{\F}^\sigma$. The relevant properties of $\tilde{\F}^\sigma_k$ will be derived latter by controlling the error resulting form the truncation. In order to facilitate the computations involved, our proof strategy effectively shifts the considerations from $\U(d)$ to $\UU(d)$. In particular, for class functions defined on the unitary group $\UU(d)$ we often make use of the Weyl integration formula \cite{WalachBook}, which ensures that for any class function $F$ on $\UU(d)$ we have
\begin{equation}\label{eq:WaylIntegration}
\int_{\UU(d)}\dt\mu(U)F(U)=\int_{\T^d} \dt \mu(\bphi) F(\mathrm{diag}(\exp(\ii \varphi_1,\ldots,\exp((\ii \varphi_d) )\ ,
\end{equation}
where the  measure on $\T^d$ is the \emph{push-forward} of a Haar measure on $\UU(d)$ and is given by
\begin{equation}\label{eq:torusMeasure}
     \dt \mu(\bphi)=
    \frac{1}{(2\pi)^d d!}
    \prod_{1\leq i<j\leq d}|e^{i \varphi_i}-e^{i \varphi_j}|^2     \dt \varphi_1 \ldots \dt \varphi_d
\end{equation}
Although the corresponding formula is guaranteed to exist in principle also for class functions on $\U(d)$, we are not aware of any explicit expressions analogous to Eq.\eqref{eq:WaylIntegration}.

\begin{Lemma}[Lower bound on the normalization constant $\N^\sigma$]\label{lem:loverBOUNDnormCONST} Let $\N_{\sigma}$ be defined as in Eq.\eqref{eq:AUXfuncDEF} and let $\sigma\leq\frac{\pi}{4\sqrt{d}}$.  We have the following inequality 
\begin{equation}\label{eq:LOWERboundCPNST}
    \N^{\sigma}\geq  \frac{1}{2} C_d\ \sigma^{d(d-1)} \left(\frac{2}{\pi}\right)^{d(d-1)}\ ,
\end{equation}
where $C_d=\frac{\prod_{k=1}^d k! }{(2\pi)^d d!}$.
\end{Lemma}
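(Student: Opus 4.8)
The plan is to reduce $\N^\sigma$ to an integral over the maximal torus of $\UU(d)$, lower bound it by a Gaussian integral of a squared Vandermonde determinant, and then identify that integral with a Mehta--Selberg integral for the Gaussian $\beta=2$ ensemble, paying a factor $\tfrac12$ for a tail estimate.

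\emph{First} I would dispose of the phase projector. Since $\P_{\mathrm{phase}}$ averages over the global phase and the Haar measure on $\UU(d)$ is invariant under $U\mapsto e^{\ii\phi}U$, integration kills it, so $\N^\sigma=\int_{\U(d)}\dt\mu(\h{U})\,\tilde{\F}^\sigma(\h{U})=\int_{\UU(d)}\dt\mu(U)\,F^\sigma(U)$. As $F^\sigma(U)=f^\sigma_p(\mathrm{Eig}(U))$ is a class function, the Weyl integration formula \eqref{eq:WaylIntegration} turns this into
\[
\N^\sigma=\frac{1}{(2\pi)^d d!}\int_{[-\pi,\pi]^d}\ \prod_{1\le i<j\le d}\bigl|e^{\ii\varphi_i}-e^{\ii\varphi_j}\bigr|^2\, f^\sigma_p(\bphi)\,\dt\varphi_1\cdots\dt\varphi_d .
\]
On the fundamental domain $[-\pi,\pi]^d$ the periodized Gaussian $f^\sigma_p$ dominates its $\k=0$ term $f^\sigma$ (all summands are nonnegative), so I can replace $f^\sigma_p$ by $f^\sigma$ and only weaken the equality to a lower bound.

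\emph{Next} I would restrict the integration to the smaller cube $B=[-\tfrac\pi2,\tfrac\pi2]^d$, on which every $|\varphi_i-\varphi_j|\le\pi$, and use $\bigl|e^{\ii\alpha}-e^{\ii\beta}\bigr|=2\bigl|\sin\tfrac{\alpha-\beta}{2}\bigr|\ge\tfrac2\pi|\alpha-\beta|$ (the elementary bound $|\sin t|\ge\tfrac2\pi|t|$ on $[-\tfrac\pi2,\tfrac\pi2]$). Applying this to each of the $\binom{d}{2}$ factors yields
\[
\N^\sigma\ \ge\ \frac{1}{(2\pi)^d d!}\Bigl(\frac2\pi\Bigr)^{d(d-1)}\int_{B}\ \prod_{i<j}(\varphi_i-\varphi_j)^2\, f^\sigma(\bphi)\,\dt\bphi .
\]
The unrestricted version of the last integral is a Mehta integral: for $X=(X_1,\dots,X_d)$ with i.i.d.\ $\mathcal{N}(0,\sigma^2)$ coordinates, $\int_{\R^d}\prod_{i<j}(\varphi_i-\varphi_j)^2 f^\sigma(\bphi)\,\dt\bphi=\mathbb{E}\bigl[\prod_{i<j}(X_i-X_j)^2\bigr]=\sigma^{d(d-1)}\prod_{j=1}^d j!$ (the $\beta=2$ Selberg/Mehta formula), which produces exactly the constant $C_d=\tfrac{\prod_k k!}{(2\pi)^d d!}$. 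It therefore remains to show that cutting the integral down to $B$ costs at most a factor $\tfrac12$.

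\emph{The hard part} is precisely this tail estimate. The normalized density proportional to $\prod_{i<j}(\varphi_i-\varphi_j)^2 f^\sigma(\bphi)$ is the joint law of the eigenvalues of $\sigma H$ with $H$ a $d\times d$ GUE matrix ($\propto e^{-\frac12\tr H^2}$), so the fraction of its mass inside $B$ equals $\Pr\bigl[\|\sigma H\|_\infty\le\tfrac\pi2\bigr]=\Pr\bigl[\|H\|_\infty\le\tfrac{\pi}{2\sigma}\bigr]\ge\Pr\bigl[\|H\|_\infty\le 2\sqrt d\bigr]$, where the last step uses the hypothesis $\sigma\le\frac{\pi}{4\sqrt d}$. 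I would then invoke the non-asymptotic bound on the operator norm of a GUE matrix established in the Appendix (the estimate tied to the Tracy--Widom law) to conclude $\Pr\bigl[\|H\|_\infty\le 2\sqrt d\bigr]\ge\tfrac12$. Chaining the two displayed inequalities with this factor $\tfrac12$ gives $\N^\sigma\ge\tfrac12 C_d\,\sigma^{d(d-1)}(2/\pi)^{d(d-1)}$, as claimed. All remaining steps (Weyl integration, positivity of the Poisson sum, the trigonometric bound, the Mehta identity) are routine; the entire weight of the argument --- and the reason the bound $\sigma\le\pi/(4\sqrt d)$ appears --- sits in controlling how much of the squared-Vandermonde-weighted Gaussian escapes the cube $[-\tfrac\pi2,\tfrac\pi2]^d$, i.e.\ in the concentration of the extreme eigenvalues of a GUE matrix near $\pm 2\sqrt d$.
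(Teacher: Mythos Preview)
Your proposal is correct and follows essentially the same route as the paper: reduce to the torus via Weyl integration, drop $f^\sigma_p$ to $f^\sigma$, restrict to $[-\pi/2,\pi/2]^d$ with the $|\sin t|\ge\tfrac{2}{\pi}|t|$ bound, identify the resulting integral with the Mehta integral, and control the mass outside the cube via the GUE operator-norm tail bound (the paper's Proposition~\ref{lem:gue} applied at the threshold $r/\sigma=2\sqrt{d}$). The only cosmetic difference is that the paper phrases the last step as bounding the complementary integral by $\tfrac12 C_d\sigma^{d(d-1)}$ (Lemma~\ref{lem:gsigmabound}) rather than, as you do, first reducing to $\Pr[\|H\|_\infty\le 2\sqrt d]\ge\tfrac12$ by monotonicity.
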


\begin{proof}[Sketch of the proof]
Observe first that due to the definition of the Haar measure on $\U(d)$ (see Section \ref{sec:preliminary})  the normalisation constant $\N^\sigma$ can be expressed via the integral from function $F^\sigma$ (defined in Eq.\eqref{eq:classFUNCTIONdef})  
\begin{equation}\label{eq:NormPhase}
    \N^\sigma   =   \int_{\U(d)} \dt \mu(\h{U}) \tilde{\F}^\sigma(\h{U}) = \int_{\UU(d)} \dt \mu(U) \left(\frac{1}{2\pi}\int_0^{2\pi} \dt \varphi  F^\sigma(\exp(\ii \varphi) U)\right)=\int_{\UU(d)} \dt \mu(U) F^\sigma(U)\ ,
\end{equation}
where in the last equality we used invariance of the Haar measure on $\UU(d)$ under the translations by unitary operations (in this case $\exp{(\ii\varphi)I}$). 
Importantly, by the virtue of Weyl integration formula (cf. Eq.\eqref{eq:WaylIntegration}) the integral appearing in the right-hand side of Eq.\eqref{eq:NormPhase} can be expressed via the integral of the periodized Gaussian $f^\sigma_p$ defined on $\T^d$. This allows us to write  
\begin{equation}\label{eq:nonnegative}
\int_{\UU(d)} \dt \mu(U) F^\sigma(U) =  \int_{\T^d}\dt \mu(\bphi) \gauss^\sigma_p(\bphi) \geq \int_{\T^d}\dt \mu(\bphi) \gauss^\sigma(\bphi)\ , 
\end{equation}
where the inequality follows from $\gauss^\sigma_p(\bphi) \geq \gauss^\sigma (\bphi)$. The function $\gauss^\sigma$ turns out to be closely related to the GUE ensemble of random Hermitian  matrices \cite{SzarekBook} which ultimately allows us to to establish the following bound 
\begin{equation}\label{eq:UnitGroupNormBound}
\int_{\T^d}\dt \mu(\bphi) \gauss^\sigma(\bphi)  \geq  \frac{1}{2} C_d\,\sigma^{d(d-1)}\,\left(\frac{2}{\pi}\right)^{d(d-1)}\ ,
\end{equation}
where the dimension-dependant constant $C_d=\frac{\prod_{k=1}^d k! }{(2\pi)^d d!}$ appears because of the usage of the \emph{Mehta integral} \cite{Mehta-integral}. Combining the  above inequality with  Eq.\eqref{eq:NormPhase} and Eq.\eqref{eq:nonnegative}  concludes proofs of Lemma \ref{lem:loverBOUNDnormCONST}. The detailed reasoning justifying Eq. \eqref{eq:UnitGroupNormBound} is given in Lemma \ref{lem:denominator} in Appendix \ref{app:torusEST}. 
\end{proof}

The following result allows us to upper bound the rate of decay of integrals of the form $\int_{B(\h{V},\kappa)} \dt \mu (\h{U}) \tilde{\F}^\sigma (\h{U})$, where $\dproj\left(\h{V},\h{I}\right)\geq\ep$ in terms of the integrals on the unitary group $\UU(d)$. The latter turn out to be simpler to analyze.

\begin{Lemma}\label{lem:diamondVSop} Let $F^\sigma$   and $\tilde{\F}^\sigma$ be functions on $\UU(d)$ and  $\U(d)$ defined in Eq.\eqref{eq:classFUNCTIONdef}  and Eq.\eqref{eq:unnormalizedPROJECTIVE} respectively. Let $r \geq 0$. Then
we have the following inequality
\begin{equation}
\label{eq:unitaryUPPERbound}
\int_{ B(\h{I},r)^c} \dt \mu (\h{U}) \tilde{\F}^\sigma (\h{U}) \leq  \int_{ B({I},r)^c} \dt \mu (U) F^\sigma (U)\ ,
\end{equation}
where  $B({I},r)^c=\left\lbrace U\in\UU(d)\ |\ \|U-I\|> r \right\rbrace$ is  is the complement of the ball with respect to the operator norm in $\UU(d)$  and $B(\h{I},r)=\left\lbrace \h{U}\in\U(d)\ |\  \dproj\left(\h{U},\h{I}\right)> r \right\rbrace$.

\end{Lemma}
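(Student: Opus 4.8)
The plan is to pull the left-hand integral, which lives on $\U(d)$, back to an integral over a phase-invariant subset of $\UU(d)$, then eliminate the phase-averaging operator $\P_{\mathrm{phase}}$, and finally enlarge the domain of integration using nonnegativity of $F^\sigma$. Two elementary observations set this up. First, $F^\sigma\geq 0$ everywhere on $\UU(d)$: by \eqref{eq:classFUNCTIONdef} one has $F^\sigma(U)=f^\sigma_p(\mathrm{Eig}(U))$, and $f^\sigma_p$ is a sum of Gaussians, hence nonnegative. Second, $\tilde{\F}^\sigma=\P_{\mathrm{phase}}F^\sigma$ is by construction a phase-invariant function on $\UU(d)$ — this is precisely what permits viewing it as a function on $\U(d)$ — and since $\dproj$ is invariant under multiplication by global phases, the preimage $\varphi^{-1}\!\left(B(\h{I},r)^c\right)\subset\UU(d)$ is stable under $U\mapsto e^{\ii\phi}U$.

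Using the definition of the Haar measure on $\U(d)$ as the push-forward of the Haar measure on $\UU(d)$ (Section \ref{sec:preliminary}), applied to the phase-invariant function that equals $\tilde{\F}^\sigma$ on $\varphi^{-1}\!\left(B(\h{I},r)^c\right)$ and vanishes elsewhere, I would first rewrite the left-hand side of \eqref{eq:unitaryUPPERbound} as $\int_{\varphi^{-1}(B(\h{I},r)^c)}\dt\mu(U)\,\tilde{\F}^\sigma(U)$. Next I would expand $\tilde{\F}^\sigma=\P_{\mathrm{phase}}F^\sigma$ via \eqref{eq:phaseProjector}, interchange the order of integration, and for each fixed phase $\phi$ substitute $U\mapsto e^{\ii\phi}U$; since both the Haar measure on $\UU(d)$ and the domain $\varphi^{-1}\!\left(B(\h{I},r)^c\right)$ are invariant under global phases, every resulting inner integral equals $\int_{\varphi^{-1}(B(\h{I},r)^c)}\dt\mu(U)\,F^\sigma(U)$, and the average over $\phi$ collapses to that single value. (Equivalently, $\P_{\mathrm{phase}}$ preserves the integral over $\UU(d)$ and commutes with multiplication by phase-invariant functions.) Hence the left-hand side of \eqref{eq:unitaryUPPERbound} equals $\int_{\varphi^{-1}(B(\h{I},r)^c)}\dt\mu(U)\,F^\sigma(U)$.

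Finally I would compare the two domains. For $U\in\varphi^{-1}\!\left(B(\h{I},r)^c\right)$, the variational characterization \eqref{eq:opDIST} of $\dproj$ gives $\left\|U-I\right\|_\infty\geq\min_{\phi\in[0,2\pi)}\left\|U-e^{\ii\phi}I\right\|_\infty=\dproj\!\left(\h{U},\h{I}\right)>r$, so $\varphi^{-1}\!\left(B(\h{I},r)^c\right)\subseteq B(I,r)^c$; since $F^\sigma\geq 0$, enlarging the domain from $\varphi^{-1}\!\left(B(\h{I},r)^c\right)$ to $B(I,r)^c$ only increases the integral, which yields \eqref{eq:unitaryUPPERbound}. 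I expect no genuine obstacle here — the argument is bookkeeping with the quotient map $\varphi:\UU(d)\to\U(d)$ — but the one point that must be handled carefully is that $F^\sigma$ itself is \emph{not} phase-invariant (its eigenvalue profile $f^\sigma_p$ is peaked at the origin and changes under a simultaneous rotation of all angles), so the identity $\tilde{\F}^\sigma=\P_{\mathrm{phase}}F^\sigma$ cannot be bypassed; the phase average drops out only after simultaneously invoking the invariance of the Haar measure and the invariance of the integration domain under global phases.
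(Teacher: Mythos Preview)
Your proposal is correct and follows essentially the same approach as the paper: pull back to $\UU(d)$, use phase-invariance of the domain together with Haar invariance to drop $\P_{\mathrm{phase}}$, and then use $F^\sigma\geq 0$ to compare domains via \eqref{eq:opDIST}. The only cosmetic difference is that the paper carries out the comparison on the balls $B(\h{I},r)$ and $B(I,r)$ (obtaining $\int_{B(\h{I},r)}\tilde{\F}^\sigma\geq\int_{B(I,r)}F^\sigma$) and then passes to complements via the identity $\int_{\U(d)}\tilde{\F}^\sigma=\int_{\UU(d)}F^\sigma$, whereas you work directly on the complements; the content is identical.
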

\begin{proof}
From  the characterization of $\dproj$ given Eq.\eqref{eq:opDIST} we get that for all $r>0$
\begin{equation}
\label{eq:ballIdentity}
    \{ U\in\UU(d)\ |\ \dproj(\h{U},\I)\leq r \} = \bigcup_\phi \{U\in\UU(d)\ |\  \|e^{i\phi}\I- U \| \leq r\}\ .
\end{equation}
The connection between the Haar measures on $\UU(d)$ and $\U(d)$ and the definition of $\tilde{\F}^\sigma$ gives
\begin{equation}
    \int_{B(\h{I},r)} \dt \mu (\h{U}) \tilde{\F}^\sigma (\h{U}) = \int_{\bigcup_\phi B({\exp(\ii\phi)I},r)} \dt \mu(U) \mathbb{P}_{\mathrm{phase}} F^\sigma (U) = \int_{\bigcup_\phi B({\exp(\ii\phi)I},r)} \dt \mu(U)  F^\sigma (U)\ ,
\end{equation}
where in the last equality we used the invariance of the set $\bigcup_\phi B({\exp(\ii\phi)I},r)$ with respect to the multiplication by the global phase. Next, since  $F^\sigma (U) \geq0$ we get
\begin{equation}
     \int_{B(\h{I},r)} \dt \mu (\h{U}) \tilde{\F}^\sigma (\h{U}) \geq \int_{B(I,r)} \dt \mu(U)  F^\sigma (U)\ .
\end{equation}
Combining this with

\begin{equation}
     \int_{\U(d)} \dt \mu (\h{U}) \tilde{\F}^\sigma (\h{U}) = \int_{\UU(d)} \dt \mu(U)  F^\sigma (U)
\end{equation}
(shown in  Eq. \eqref{eq:NormPhase})
concludes the proof.

\end{proof}

We now want to control the rate of decay of the integral appearing in the right-hand side of Eq.\eqref{eq:unitaryUPPERbound}. To this end we use Weyl integration formula
(cf. Eq.\eqref{eq:WaylIntegration}) which gives 
\begin{equation}\label{eq:tailTorus}
    \int_{B(I,r)^c} \dt \mu(U)  F^\sigma (U) = \int_{B_\infty(0,r)^c} \dt \mu(\bphi) f^\sigma_{p} (\bphi)\ , 
\end{equation}
where $B_\infty(0,r)=\left\{\bphi\in\T^d\ |\ |\varphi_i|\leq r \right\}$ and $B_\infty(0,r)^c$ is its complement in $\T^d$. Next, in Lemma \ref{lem:numerator} given in part \ref{app:torusEST} of the Appendix  we establish upper bounds on the right-hand side of \eqref{eq:tailTorus}. This result is proven by (i) establishing appropriate upper bounds on the norm $\|f^\sigma -f^\sigma_p\|_1$ where $\|\cdot\|_1$ denotes $L^1$ norm of the space of integrable functions on  $\T^d$ equipped with the measure $\dt\mu(\bphi)$, and (ii) connecting $\int_{B_\infty(0,r)^c} \dt \mu(\bphi) f^\sigma (\bphi)$ to the tail behaviour of the operator norm of GUE matrices \cite{SzarekBook}.  In this way we obtain the following lemma:

\begin{Lemma}\label{lem:TailUnitary} Let $F^\sigma$  be the function on $\UU(d)$ defined in Eq.\eqref{eq:classFUNCTIONdef}. Let $\sigma\leq \frac{r}{4\sqrt{d}}$ and $r\leq 2/3$. We have the following inequality
\begin{equation}
    \int_{B({I},r)^c} \dt \mu (U) F^\sigma (U) \leq \frac{3}{2} C_d
      \,\sigma^{d(d-1)}\, 
      e^{-\frac{1}{4} \frac{r^2}{\sigma^2}}
\end{equation}
where  $B({I},r)^c=\left\lbrace U\in\UU(d)\ |\ \|U-I\|> r \right\rbrace$ and $C_d= \frac{\prod_{k=1}^d k! }{(2\pi)^d d!}$.
\end{Lemma}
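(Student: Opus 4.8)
The plan is to transfer the integral to the maximal torus via the Weyl integration formula, separate the genuinely Gaussian part from the periodization defect, and identify both pieces with tail probabilities of the operator norm of a GUE matrix, whose normalization is supplied by the Mehta integral. Since $F^\sigma$ is a class function, Eq.~\eqref{eq:tailTorus} gives $\int_{B(I,r)^c}\dt\mu(U)\,F^\sigma(U)=\int_{B_\infty(0,r)^c}\dt\mu(\bphi)\,f^\sigma_p(\bphi)$, with $\dt\mu(\bphi)$ the push-forward Haar measure \eqref{eq:torusMeasure}. On $\T^d$ we write $f^\sigma_p=f^\sigma+(f^\sigma_p-f^\sigma)$, a sum of two nonnegative functions because $f^\sigma_p-f^\sigma=\sum_{\mathbf k\neq 0}f^\sigma(\,\cdot\,+2\pi\mathbf k)\ge 0$. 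This splits the right-hand side into a main term $\int_{B_\infty(0,r)^c}\dt\mu(\bphi)\,f^\sigma(\bphi)$ and a defect term bounded by $\|f^\sigma_p-f^\sigma\|_1$, the $L^1(\dt\mu)$-norm over $\T^d$.

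Both terms are handled by the same mechanism. Using $|e^{\ii\varphi_i}-e^{\ii\varphi_j}|\le|\varphi_i-\varphi_j|$ we bound the Vandermonde weight in \eqref{eq:torusMeasure} by $\prod_{i<j}(\varphi_i-\varphi_j)^2$; as all integrands are nonnegative, the integration domains may then be enlarged from subsets of $[-\pi,\pi]^d$ to the analogous subsets of $\R^d$. For the main term this enlarges $B_\infty(0,r)^c$ to $\{\x\in\R^d:\max_i|x_i|>r\}$; for the defect term one first unfolds the periodization (the Vandermonde weight being $2\pi$-periodic in each angle) to obtain
\[
\|f^\sigma_p-f^\sigma\|_1\;\le\;\frac{1}{(2\pi)^d d!\,(\sqrt{2\pi}\sigma)^d}\int_{\{\x\in\R^d:\,\max_i|x_i|>\pi\}}\prod_{i<j}(x_i-x_j)^2\,e^{-|\x|^2/(2\sigma^2)}\,\dt\x\, .
\]
In each case, once divided by its total mass, the integrand $\prod_{i<j}(x_i-x_j)^2\,e^{-|\x|^2/(2\sigma^2)}$ is exactly the joint eigenvalue density of a random Hermitian matrix $X$ from the $\sigma$-rescaled GUE; by the Mehta integral this total mass equals $(\sqrt{2\pi}\sigma)^d\,\sigma^{d(d-1)}\prod_{k=1}^d k!$, and the prefactors $\tfrac{1}{(2\pi)^d d!}$ from \eqref{eq:torusMeasure} and $(\sqrt{2\pi}\sigma)^{-d}$ from the Gaussian density combine with it to give precisely $C_d\,\sigma^{d(d-1)}$. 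Hence
\[
\int_{B(I,r)^c}\dt\mu(U)\,F^\sigma(U)\;\le\;C_d\,\sigma^{d(d-1)}\Bigl(\Prob\bigl(\|X\|_{\mathrm{op}}>r\bigr)+\Prob\bigl(\|X\|_{\mathrm{op}}>\pi\bigr)\Bigr)\, ,
\]
where $\|X\|_{\mathrm{op}}=\max_i|\lambda_i(X)|$.

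It remains to control these tails. Under $\sigma\le r/(4\sqrt d)$ the threshold $r$ is at least twice the typical spectral radius $\simeq 2\sqrt d\,\sigma$ of $X$, so $\{\|X\|_{\mathrm{op}}>r\}$ is a genuine large-deviation event; the known tail estimates for the extreme eigenvalues of GUE (Tracy--Widom tails, cf.~\cite{SzarekBook}) give a bound $\Prob(\|X\|_{\mathrm{op}}>r)\le e^{-r^2/(4\sigma^2)}$ in this range, while since $r\le 2/3<\pi$ the $\pi$-tail is smaller by a factor $e^{-(\pi^2-r^2)/(4\sigma^2)}$ and contributes at most $\tfrac12 C_d\sigma^{d(d-1)}e^{-r^2/(4\sigma^2)}$; summing yields the claimed inequality with constant $\tfrac32$. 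The single delicate point — where essentially all of the work lies — is producing this GUE operator-norm tail with the exact exponent $r^2/(4\sigma^2)$ and an absolute prefactor, valid already at the boundary $\sigma=r/(4\sqrt d)$; this is carried out in the Appendix as Lemma~\ref{lem:numerator}, whose two ingredients, the bound on $\|f^\sigma-f^\sigma_p\|_1$ and the link between $\int_{B_\infty(0,r)^c}\dt\mu(\bphi)\,f^\sigma(\bphi)$ and the GUE norm distribution, are exactly the defect and main terms above. The remaining steps — the bookkeeping of the $(2\pi)^d d!$ and Mehta normalizations, and the check that the Vandermonde bound and the domain enlargements only lose in the right direction — are routine.
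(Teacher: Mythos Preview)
Your argument is correct and follows the paper's overall strategy: Weyl integration, split $f^\sigma_p=f^\sigma+(f^\sigma_p-f^\sigma)$, bound the trigonometric Vandermonde by the polynomial one, and recognize both pieces as GUE operator-norm tails normalized by the Mehta integral.

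Your handling of the periodization defect is, however, genuinely different and cleaner than the paper's. The paper (Lemma~\ref{lem:norm-period-bound}) bounds $\|f^\sigma_p-f^\sigma\|_1$ by breaking $\Z^d\setminus\{0\}$ into cubic shells $A_m$, applying the GUE tail on each shell, and then controlling a polynomial-times-Gaussian series in $m$. You instead exploit the $2\pi$-periodicity of $\prod_{i<j}|e^{\ii x_i}-e^{\ii x_j}|^2$ to unfold $\sum_{\k\neq0}\int_{\T^d}$ into a single integral over $\R^d\setminus[-\pi,\pi]^d$ \emph{before} replacing $|e^{\ii x_i}-e^{\ii x_j}|$ by $|x_i-x_j|$ (valid for all real arguments), landing directly on one GUE tail $C_d\,\sigma^{d(d-1)}\Prob(\|X\|_{\mathrm{op}}>\pi)$. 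This is shorter and avoids the shell summation entirely.

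One correction: your closing citation of Lemma~\ref{lem:numerator} is circular, since that lemma is precisely the torus restatement of Lemma~\ref{lem:TailUnitary} itself. The input you actually need is the GUE tail bound, which in the paper is Lemma~\ref{lem:gsigmabound} (via Proposition~\ref{lem:gue}): it gives $\Prob(\|X\|_{\mathrm{op}}>r)\le\tfrac12 e^{-r^2/(4\sigma^2)}$ for $\sigma\le r/(4\sqrt d)$. With that, both your main term and defect term are each $\le\tfrac12 C_d\,\sigma^{d(d-1)}e^{-r^2/(4\sigma^2)}$ (the $\pi$-tail being strictly smaller since $r\le 2/3<\pi$), so your argument in fact yields constant $1$ rather than the stated $\tfrac32$.
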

We conclude our characterisation of the functions $\tilde{\F}^\sigma$ by the following lemma:
\begin{Lemma}
\label{lem:uppBOUNDsecondNORMunitary1}
Let $\tilde{\F}^{\sigma}$ be a function on $\U(d)$  defined in Eq.~\eqref{eq:unnormalizedPROJECTIVE}.
We then have for $\sigma\leq 1/4$
\begin{equation}
   \left\|\tilde{\F}^{\sigma}  \right\|_2\leq  \frac{1}{d! (2 \pi)^d}
   2^{\frac{d(d+1)}{2}}\sqrt{d!} \, \sigma^{-\frac{d}{2}}.
\end{equation}
\end{Lemma}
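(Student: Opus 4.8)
The plan is to use that $\tilde{\F}^\sigma=\P_{\mathrm{phase}}F^\sigma$ is the image of a class function (cf. \eqref{eq:classFUNCTIONdef} and \eqref{eq:unnormalizedPROJECTIVE}) under the \emph{orthogonal} projector $\P_{\mathrm{phase}}$ of $\L^2(\UU(d))$, hence a contraction, and that for phase-invariant functions the $\L^2$-norm computed on $\U(d)$ agrees with the one on $\UU(d)$. This reduces the task to bounding $\left\|F^\sigma\right\|_{\L^2(\UU(d))}$:
\begin{equation}
\left\|\tilde{\F}^\sigma\right\|_2=\left\|\P_{\mathrm{phase}}F^\sigma\right\|_{\L^2(\UU(d))}\leq\left\|F^\sigma\right\|_{\L^2(\UU(d))}\ .
\end{equation}
Since $f^\sigma_p$ is symmetric in its arguments, $|F^\sigma|^2=|f^\sigma_p(\mathrm{Eig}(\cdot))|^2$ is a bona fide class function, so I would invoke the Weyl integration formula \eqref{eq:WaylIntegration} to transport the computation to the torus:
\begin{equation}
\left\|F^\sigma\right\|_{\L^2(\UU(d))}^2=\int_{\T^d}\dt\mu(\bphi)\,\left|f^\sigma_p(\bphi)\right|^2\ .
\end{equation}

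The next step is to estimate this torus integral by two deliberately crude bounds that turn out to be exactly strong enough. First, bounding each of the $\binom{d}{2}$ Vandermonde factors in \eqref{eq:torusMeasure} by $|e^{\ii\varphi_i}-e^{\ii\varphi_j}|^2\leq 4$ replaces $\dt\mu(\bphi)$ by $\tfrac{2^{d(d-1)}}{(2\pi)^d d!}$ times the Lebesgue measure on $[-\pi,\pi]^d$. Second, the Fourier representation \eqref{eq:PeriodGAUSS} together with Parseval's identity on $[-\pi,\pi]^d$ gives
\begin{equation}
\int_{[-\pi,\pi]^d}\left|f^\sigma_p(\bphi)\right|^2\,\dt\varphi_1\cdots\dt\varphi_d=\frac{1}{(2\pi)^d}\sum_{\n\in\Z^d}e^{-\sigma^2|\n|^2}=\frac{1}{(2\pi)^d}\Bigl(\sum_{n\in\Z}e^{-\sigma^2 n^2}\Bigr)^{d}\ .
\end{equation}
The one-dimensional theta sum I would bound by comparison with a Gaussian integral, using monotonicity of $t\mapsto e^{-\sigma^2 t^2}$ on $[0,\infty)$, which yields $\sum_{n\in\Z}e^{-\sigma^2 n^2}\leq 1+\sqrt{\pi}/\sigma$; for $\sigma\leq 1/4$ (in fact for every $\sigma\leq 4-\sqrt{\pi}$) this is at most $4/\sigma$.

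Collecting everything gives
\begin{equation}
\left\|\tilde{\F}^\sigma\right\|_2^2\leq\frac{2^{d(d-1)}}{(2\pi)^{2d}\,d!}\Bigl(\frac{4}{\sigma}\Bigr)^{d}=\frac{2^{d(d+1)}}{(2\pi)^{2d}\,d!}\,\sigma^{-d}\ ,
\end{equation}
and taking the square root reproduces precisely $\left\|\tilde{\F}^\sigma\right\|_2\leq\frac{1}{d!(2\pi)^d}\,2^{d(d+1)/2}\sqrt{d!}\,\sigma^{-d/2}$. I do not expect a genuine obstacle here — the argument is essentially bookkeeping. The two points that need a little care are (i) checking that $|f^\sigma_p|^2$ legitimately descends to a class function, so that \eqref{eq:WaylIntegration} is applicable, and (ii) verifying that the very wasteful Vandermonde bound nevertheless lands exactly on the constant claimed in the statement; it does, because the hypothesis $\sigma\leq1/4$ lets one absorb the ``$+1$'' in the theta estimate by upgrading $\sqrt{\pi}/\sigma$ to $4/\sigma$, which supplies exactly the extra factor $2^{2d}$ needed to turn $2^{d(d-1)}$ into $2^{d(d+1)}$.
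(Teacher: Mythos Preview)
Your proof is correct and follows essentially the same route as the paper: contract via the phase projector, pass to the torus by Weyl integration, bound the Vandermonde density by $2^{d(d-1)}$, apply Parseval to the Fourier series \eqref{eq:PeriodGAUSS}, and estimate the one-dimensional theta sum by an integral. The only cosmetic difference is that the paper (in its Lemma~\ref{lem:2norm}) bounds $\sum_{n\in\Z}e^{-\sigma^2 n^2}\le 2/\sigma$ rather than your $4/\sigma$; your choice lands exactly on the stated constant $2^{d(d+1)/2}$, so the arithmetic matches cleanly.
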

\begin{proof}
We reduce the problem to consideration of functions on $\T^d$. First, due to the fact that $\mathbb{P}_{\mathrm{phase}}$ is an orthonormal projector in $L^2(\UU(d))$ we get 
\begin{equation}
    \left\|\tilde{\F}^\sigma  \right\|_2 = \left\|\mathbb{P}_{\mathrm{phase}} F^\sigma \right\|_2 \leq \left\| F^\sigma \right\|_2 \ . 
\end{equation}
Using the Weyl integration formula  and definitions of class function $F^\sigma$ (cf. Eq.\eqref{eq:classFUNCTIONdef}) we obtain  $\left\| F^\sigma \right\|_2 = \|f^\sigma_p\|_2$, where the $L^2$ norm of $f^\sigma_p$  is computed using the measure $\dt\mu(\bphi)$ on $\T^d$ (this is a consequence of Eq.\eqref{eq:WaylIntegration}). The claimed result follows now from the inequality (valid for $\sigma\leq1/4$).
\begin{equation}
 \|f^\sigma_p\|_2 \leq 
 \frac{1}{d! (2 \pi)^d}
   2^{\frac{d(d+1)}{2}}\sqrt{d!} \, \sigma^{-\frac{d}{2}},
\end{equation}
which we prove in Lemma \ref{lem:2norm}.
\end{proof}
\blk



The following key Lemma controls the rate of approximation of $\tilde{\F}^\sigma$ by $\tilde{\F}^\sigma_k$ in $L^2$ norm.

\begin{Lemma}[Approximation of $\tilde{\F}^\sigma$ by $\tilde{\F}^\sigma_k$]\label{lem:L2normBOUND}
 Let  $k\geq d/\sigma 
 $ 
 and let $\sigma\leq1/2$. Let $\tilde{\F}^\sigma_k$ and $\tilde{\F}^\sigma$ be functions defined in Eq.\eqref{eq:unnormalizedPROJECTIVE}. We have the following upper bound on the $L^2$ distance between these functions

 \begin{align}
 \label{eq:2NormBoundPoly1}
    \left\|\tilde{\F}^\sigma - \tilde{\F}^\sigma_k \right\|_2\leq  \frac{1}{(2\pi)^d d!}
    \sqrt{\frac{2^{d(d-1)} \pi^\frac{d}{2} \sqrt{8 \pi} d!}{\Gamma(\frac{d}{2})}}\,\frac{e^{-\frac14 (\frac{k}{\sqrt{d}}-\sqrt{d})^2\sigma^2}}{\sigma}\ .  
 \end{align}
  Further one can estimate this as 
\begin{equation}
\label{eq:2NormBoundPoly}
    \left\|\tilde{\F}^\sigma - \tilde{\F}^\sigma_k \right\|_2 \leq 10\,C_d \frac{e^{-\frac14 (\frac{k}{\sqrt{d}}-\sqrt{d})^2\sigma^2}}{\sigma}\ .
\end{equation}
\end{Lemma}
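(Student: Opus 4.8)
The plan is to reduce the estimate to a computation on the torus $\T^d$, exactly as in the proofs of Lemmas~\ref{lem:diamondVSop} and~\ref{lem:uppBOUNDsecondNORMunitary1}. Since $\P_{\mathrm{phase}}$ is an orthonormal projector on $L^2(\UU(d))$, we have $\|\tilde{\F}^\sigma-\tilde{\F}^\sigma_k\|_2=\|\P_{\mathrm{phase}}(F^\sigma-F^\sigma_k)\|_2\leq\|F^\sigma-F^\sigma_k\|_2$, and by the Weyl integration formula \eqref{eq:WaylIntegration} together with the definitions \eqref{eq:classFUNCTIONdef} this last quantity equals $\|f^\sigma_p-f^\sigma_{p,k}\|_2$, where the $L^2$-norm is taken on $\T^d$ with the push-forward measure \eqref{eq:torusMeasure}. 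So the whole statement follows from an inequality of the form $\|f^\sigma_p-f^\sigma_{p,k}\|_2\le (\text{RHS of }\eqref{eq:2NormBoundPoly1})$ on the torus, which I would isolate as a separate technical lemma in the appendix (it is in fact Lemma~\ref{lem:2norm}-type content). First I would establish this torus inequality, then the two displayed bounds follow immediately.

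For the torus bound, write $f^\sigma_p-f^\sigma_{p,k}=\frac{1}{(2\pi)^d}\sum_{\n\notin S_k}e^{-\frac12\sigma^2\n^2}e^{\ii\n\cdot\bphi}$ using \eqref{eq:PeriodGAUSS} and \eqref{eq:TruncatedGAUSS}. The key point is that the Weyl measure density $\frac{1}{(2\pi)^d d!}\prod_{i<j}|e^{\ii\varphi_i}-e^{\ii\varphi_j}|^2$ is itself a trigonometric polynomial (a Vandermonde-type expression, $|\Delta|^2=\sum_{\pi,\pi'\in S_d}\sgn(\pi\pi')e^{\ii(\ldots)\cdot\bphi}$, with all frequencies of $1$-norm bounded by $d(d-1)$, i.e. at most $2\binom{d}{2}$), so after expanding $|f^\sigma_p-f^\sigma_{p,k}|^2$ and integrating against this polynomial measure, the cross terms survive only when the total frequency cancels. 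The resulting sum can be bounded by $\frac{1}{((2\pi)^d d!)^2}\cdot(\text{combinatorial factor }\le 2^{d(d-1)} d!)\cdot\sum_{\n\notin S_k}e^{-\sigma^2\n^2}$ — I would track the constant carefully here since the statement gives an explicit $\sqrt{2^{d(d-1)}\pi^{d/2}\sqrt{8\pi}\,d!/\Gamma(d/2)}$. The tail sum $\sum_{|\n|_1>k}e^{-\sigma^2\n^2}$ is then controlled: since $|\n|_1>k$ implies $|\n|\geq k/\sqrt{d}$, one splits off the radial decay $e^{-\frac14(k/\sqrt d-\sqrt d)^2\sigma^2}$ (the shift by $\sqrt d$ absorbing a boundary layer so the remaining Gaussian sum $\sum_\n e^{-\frac34\sigma^2\n^2}$ or similar can be compared to a Gaussian integral giving a $\sigma^{-d}$ from the $d$-dimensional integral, of which $\sigma^{-1}$ is kept and the $\sigma^{-(d-1)}$ is dominated using $\sigma\le 1/2$ and absorbed into the constant, together with the $\Gamma(d/2)$-type normalization of a Gaussian in $\R^d$). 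This comparison of $\sum_{\n\in\Z^d}e^{-a\sigma^2\n^2}$ with $\int_{\R^d}e^{-a\sigma^2\x^2}d\x$ (valid when $\sigma$ is not too large, hence the hypothesis $\sigma\le 1/2$ and $k\ge d/\sigma$) is the routine but fiddly part.

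Finally, to pass from \eqref{eq:2NormBoundPoly1} to \eqref{eq:2NormBoundPoly}, recall $C_d=\frac{\prod_{k=1}^d k!}{(2\pi)^d d!}$, so $\frac{1}{(2\pi)^d d!}=\frac{C_d}{\prod_{k=1}^d k!}$; substituting and noting $\prod_{k=1}^d k!\ge 1$, it suffices to check that $\frac{1}{\prod_{k=1}^d k!}\sqrt{\frac{2^{d(d-1)}\pi^{d/2}\sqrt{8\pi}\,d!}{\Gamma(d/2)}}\le 10$ for all $d\ge 1$ — this is a one-variable monotonicity/asymptotics check (the superfactorial $\prod k!$ grows much faster than $2^{d(d-1)/2}\sqrt{d!}$), which I would verify for small $d$ by hand and for large $d$ via Stirling. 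I expect the main obstacle to be bookkeeping the constant in the torus lemma precisely enough to land on the stated $\sqrt{2^{d(d-1)}\pi^{d/2}\sqrt{8\pi}\,d!/\Gamma(d/2)}$ rather than obtaining it cleanly; the structural argument (Weyl reduction $\to$ Vandermonde-as-trig-polynomial $\to$ Gaussian tail) is routine.
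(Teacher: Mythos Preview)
Your proposal is correct and follows essentially the same architecture as the paper: reduce to $\T^d$ via $\|\P_{\mathrm{phase}}(\cdot)\|_2\le\|\cdot\|_2$ and Weyl integration, then bound the Fourier tail $\sum_{|\n|_1>k}e^{-\sigma^2\n^2}$ using $|\n|_1>k\Rightarrow \n^2>k^2/d$ and a Gaussian-sum-vs-integral comparison (the paper isolates this as Lemma~\ref{lem:normfkf} and Lemma~\ref{lem:sum-ball}); the passage to \eqref{eq:2NormBoundPoly} via $A_d\le 10$ is exactly Proposition~\ref{prop:AdUPPERbound}.

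The one place you make life harder than necessary is the treatment of the Vandermonde weight. You propose to expand $\prod_{i<j}|e^{\ii\varphi_i}-e^{\ii\varphi_j}|^2$ as a trigonometric polynomial and track which cross-terms survive after integrating against $|f^\sigma_p-f^\sigma_{p,k}|^2$. The paper instead uses the trivial uniform bound $\prod_{i<j}|e^{\ii\varphi_i}-e^{\ii\varphi_j}|^2\le 2^{d(d-1)}$, which immediately reduces the $L^2(\dt\mu)$-norm to the flat-torus $L^2(\dt\bphi)$-norm and then plain Parseval gives $\frac{1}{(2\pi)^d}\sum_{|\n|_1>k}e^{-\sigma^2\n^2}$ with no cross-terms at all. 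This lands directly on the stated constant without the bookkeeping you flagged as the main obstacle. Your route would work, but the uniform bound is both simpler and sharp enough for the claimed prefactor.
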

\begin{proof}[Proof sketch] Similarly as before we reduce the problem to consideration of functions on $\T^d$. 
First, due to the fact that $\mathbb{P}_{\mathrm{phase}}$ is an orthonormal projector in $L^2(\UU(d))$ we get 
\begin{equation}\label{eq:2NormBoundPoly-x}
    \left\|\tilde{\F}^\sigma - \tilde{\F}^\sigma_k \right\|_2 = \left\|\mathbb{P}_{\mathrm{phase}}\left( F^\sigma - F^\sigma_k\right) \right\|_2 \leq \left\| F^\sigma - F^\sigma_k \right\|_2 \ . 
\end{equation}
Using the Weyl integration formula  and definitions of class functions $F^\sigma,F^\sigma_k$ (cf. Eq.\eqref{eq:classFUNCTIONdef}) we obtain  $\left\| F^\sigma - F^\sigma_k \right\|_2 = \|f^\sigma_p-f^\sigma_{p,k}\|_2$, where the $L^2$ distance between $f^\sigma_p$ and $f^\sigma_{p,k}$ is computed using the measure $\dt\mu(\bphi)$ on $\T^d$ (this is a consequence of Eq.\eqref{eq:WaylIntegration}). The claimed result follows now from the  inequalities  (valid for $k\geq d/\sigma$ and $\sigma\leq1/2$)
\begin{equation}
 \|f^\sigma_p-f^\sigma_{p,k}\|_2 \leq 
\frac{1}{(2\pi)^d d!}
    \sqrt{\frac{2^{d(d-1)} \pi^\frac{d}{2} \sqrt{8 \pi} d!}{\Gamma(\frac{d}{2})}}\,\frac{e^{-\frac14 (\frac{k}{\sqrt{d}}-\sqrt{d})^2\sigma^2}}{\sigma} \leq 
 10\,C_d \frac{e^{-\frac14 (\frac{k}{\sqrt{d}}-\sqrt{d})^2\sigma^2}}{\sigma}\, , 
\end{equation}
which we prove in Lemma \ref{lem:normfkf} in Part \ref{app:polyCONVERG} of the Appendix using trigonometric expansions \eqref{eq:PeriodGAUSS} and \eqref{eq:TruncatedGAUSS}. 
\end{proof}

We are now in position to evaluate the $L^2$ of $\tilde\F_k^\sigma$.
\begin{proposition}
\label{prop:2normFksigma}
For $k\geq d/\sigma$ and $\sigma\leq 1/4$ we have 
\begin{equation}
    \left\|\tilde\F_k^\sigma \right\|_2\leq 
    2 \frac{1}{d! (2 \pi)^d}
   2^{\frac{d(d+1)}{2}}\sqrt{d!} \, \sigma^{-\frac{d}{2}}.
\end{equation}
\end{proposition}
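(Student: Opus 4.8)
The plan is to control $\|\tilde{\F}_k^\sigma\|_2$ by comparing the truncated function $\tilde{\F}_k^\sigma$ with its untruncated counterpart $\tilde{\F}^\sigma$ through the triangle inequality
\[
\left\|\tilde{\F}_k^\sigma\right\|_2 \le \left\|\tilde{\F}^\sigma\right\|_2 + \left\|\tilde{\F}^\sigma - \tilde{\F}_k^\sigma\right\|_2 ,
\]
and to show that \emph{each} of the two terms on the right is bounded by $B \coloneqq \frac{1}{d!\,(2\pi)^d}\,2^{\frac{d(d+1)}{2}}\sqrt{d!}\,\sigma^{-d/2}$, the quantity appearing in Lemma~\ref{lem:uppBOUNDsecondNORMunitary1}; this immediately yields $\|\tilde{\F}_k^\sigma\|_2 \le 2B$, which is the claim. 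The first term is at most $B$ by Lemma~\ref{lem:uppBOUNDsecondNORMunitary1} itself, whose hypothesis $\sigma\le 1/4$ is exactly what we assume. For the second term the natural tool is Lemma~\ref{lem:L2normBOUND}; here one should use the \emph{sharp} estimate \eqref{eq:2NormBoundPoly1}, \emph{not} the cruder bound \eqref{eq:2NormBoundPoly} phrased via $C_d$, because the latter turns out to be too weak for this purpose. Its hypotheses $k\ge d/\sigma$ and $\sigma\le 1/2$ are both implied by those of the Proposition.

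Substituting \eqref{eq:2NormBoundPoly1} into the target inequality $\|\tilde{\F}^\sigma - \tilde{\F}_k^\sigma\|_2 \le B$, cancelling the common prefactor $\frac{1}{(2\pi)^d d!}$ and dividing through by $2^{d(d-1)/2}\sqrt{d!}\,\sigma^{-d/2}$ (using $2^{d(d+1)/2}/2^{d(d-1)/2}=2^d$), the required bound collapses to the elementary statement
\[
\sqrt{\frac{\pi^{d/2}\sqrt{8\pi}}{\Gamma(d/2)}}\; \exp\!\left(-\tfrac14\Big(\tfrac{k}{\sqrt d}-\sqrt d\Big)^2\sigma^2\right) \;\le\; 2^{d}\,\sigma^{\,1-d/2}.
\]
On the right, $\sigma\le 1/4$ gives $\sigma^{1-d/2}\ge 4^{(d-2)/2}=2^{d-2}$ for $d\ge 2$, so the right side is $\ge 2^{2d-2}$; on the left the exponential factor is $\le 1$. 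Hence it suffices to check $\pi^{d/2}\sqrt{8\pi}/\Gamma(d/2)\le 2^{4d-4}$, i.e.\ $16\sqrt{8\pi}\le (16/\sqrt\pi)^{\,d}\,\Gamma(d/2)$. The right-hand side is strictly increasing in $d$ for $d\ge 2$, since the ratio of consecutive values is $\tfrac{16}{\sqrt\pi}\cdot\tfrac{\Gamma((d+1)/2)}{\Gamma(d/2)}\ge \tfrac{16}{\sqrt\pi}\cdot\tfrac{\sqrt\pi}{2}=8>1$ (the $\Gamma$-ratio is minimized at $d=2$ by log-convexity of $\Gamma$), so the inequality reduces to its $d=2$ instance $16\sqrt{8\pi}\le 256/\pi$, which holds numerically ($80.2\ldots < 81.4\ldots$). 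The degenerate case $d=1$ is trivial: $\U(1)$ is a single point and $\tilde{\F}_k^\sigma=\tilde{\F}^\sigma$ is constant, so $\|\tilde{\F}^\sigma-\tilde{\F}_k^\sigma\|_2=0\le B$.

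I expect the only real obstacle to be this final scalar inequality. The truncation error supplied by Lemma~\ref{lem:L2normBOUND} carries a super-exponentially small denominator $\Gamma(d/2)$, whereas the target $B$ involves only powers of $2$; the slack is provided entirely by the gap between the $2^{d(d-1)/2}$ prefactor in \eqref{eq:2NormBoundPoly1} and the $2^{d(d+1)/2}$ prefactor in Lemma~\ref{lem:uppBOUNDsecondNORMunitary1} (a spare factor $2^d$), together with $\sigma^{1-d/2}\ge 2^{d-2}$, and it is essentially exhausted at $d=2$. One therefore must resist the temptation to use the looser $C_d$-version \eqref{eq:2NormBoundPoly}; apart from that, every hypothesis needed for the auxiliary lemmas is automatically satisfied under $k\ge d/\sigma$ and $\sigma\le 1/4$, and the remaining manipulations are routine.
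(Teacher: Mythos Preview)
Your proof is correct and follows essentially the same approach as the paper: triangle inequality, then Lemma~\ref{lem:uppBOUNDsecondNORMunitary1} for $\|\tilde{\F}^\sigma\|_2$ and the sharp estimate \eqref{eq:2NormBoundPoly1} (not the $C_d$-version) for the truncation error, reducing to a dimensional constant inequality. The only difference is cosmetic: the paper uses merely $\sigma\le 1$ to get $\sigma^{-1}\le\sigma^{-d/2}$ and then proves $\pi^{d/4}(8\pi)^{1/4}\le 2^d\sqrt{\Gamma(d/2)}$ by induction run separately over even and odd $d$, whereas you spend the stronger hypothesis $\sigma\le 1/4$ to gain an extra factor $2^{d-2}$ and then verify the resulting (looser) inequality via monotonicity from log-convexity of~$\Gamma$.
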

\begin{proof}
By triangle inequality 
\begin{align}
    \left\|\tilde\F_k^\sigma\right\|_2
    \leq 
    \left\|\tilde\F^\sigma\right\|_2 + 
    \left\|\tilde\F_k^\sigma- \tilde\F^\sigma\right\|_2 
\end{align}
and application of 
lemma \ref{lem:uppBOUNDsecondNORMunitary1}
and \eqref{eq:2NormBoundPoly1} of Lemma \ref{lem:L2normBOUND},   we get 
\begin{align}
    \left\|\tilde\F_k^\sigma\right\|_2\leq 2 \times 
    \frac{1}{d! (2\pi)^d}
    \max \left\{
    \sqrt{\frac{2^{d(d-1)} \pi^\frac{d}{2} \sqrt{8 \pi}}{\Gamma(\frac{d}{2})}}, 2^\frac{d(d+1)}{2}
    \right\} \sqrt{d!} \,    \sigma^{-\frac{d}{2}}
\end{align}
where we have taken into account that 
the last term in 
\eqref{eq:2NormBoundPoly1} is bounded by $1$, and for $\sigma\leq1$ 
we have $\sigma^{-d/2} \geq \sigma^{-1}$. 
We now have 
\begin{align}
    \max \left\{
    \sqrt{\frac{2^{d(d-1)} \pi^\frac{d}{2} \sqrt{8 \pi}}{\Gamma(\frac{d}{2})}}, 2^\frac{d(d+1)}{2}
    \right\}
    = 2^\frac{d(d+1)}{2} 
    \max\left\{\frac{\pi^{\frac{d}{4}}(8 \pi)^\frac14}{2^d \sqrt{\Gamma(\frac{d}{2})}},1\right\}.
\end{align}
We then have 
\begin{align}
    \frac{\pi^{\frac{d}{4}}(8 \pi)^\frac14}{2^d \sqrt{\Gamma(\frac{d}{2})}}
    \leq 
    \frac{\pi^{\frac{d}{4}}(8 \pi)^\frac14}{2^d \Gamma(\frac{d}{2})}
    \leq 1 
\end{align}
where the last inequality 
is obtained by induction (it is convenient to run it separately for $d$ even and $d$ odd). This ends the proof. 
\end{proof}

The following proposition asserts that for sufficiently large degree $k$
the $L^2$-distance $\|\tilde{\F}^\sigma -\tilde{\F}^\sigma_{k}\|_2$ is comparable with $\N^\sigma$ (cf. Lemma \ref{lem:loverBOUNDnormCONST}) and the upper bound on $ \int_{B({I},r)^c} \dt \mu (U) F^\sigma (U)$ from Lemma \ref{lem:TailUnitary}. 

\begin{proposition}\label{prop:settingK}  Let $\tilde{\F}^\sigma_k$ and $\tilde{\F}^\sigma$ be functions defined in Eq.\eqref{eq:unnormalizedPROJECTIVE}. Moreover let 
 \begin{equation}\label{eq:Kcondition}
     \sigma\leq \min\lbrace1/8, \frac{\pi}{4\sqrt{d}} \rbrace \ ,\  k\geq 5 \frac{d^{\frac{3}{2}}}{\sigma}  \sqrt{\frac{1}{8}\frac{r^2}{d^2\sigma^2} + \ln\frac{1}{\sigma}}\ ,\  r\leq \frac23 \ .
 \end{equation}
Then we have 
\begin{equation}\label{eq:2normComparison}
\|\tilde{\F}^\sigma -\tilde{\F}^\sigma_{k}\|_2 \leq \frac{1}{2}\min\left\{\N^\sigma, \frac{3}{2} C_d\,\sigma^{d(d-1)}\, 
      e^{-\frac{1}{4} \frac{r^2}{\sigma^2}}\right\}\ , 
\end{equation}
where $\N^\sigma$ is a constant defined in Eq.\eqref{eq:AUXfuncDEF} and $\frac{3}{2} C_d\,\sigma^{d(d-1)}\, e^{-\frac{1}{4} \frac{r^2}{\sigma^2}}$ is an upper bound on the integral $ \int_{B(I,r)^c} \dt \mu (U) F^\sigma (U)$ from Lemma \ref{lem:TailUnitary}.
\end{proposition}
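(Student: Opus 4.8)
The plan is to feed the $L^2$-approximation estimate of Lemma~\ref{lem:L2normBOUND} and the lower bound on $\N^\sigma$ of Lemma~\ref{lem:loverBOUNDnormCONST} into the degree condition~\eqref{eq:Kcondition}, and to check that the resulting exponentially small factor $e^{-\frac14(\frac{k}{\sqrt d}-\sqrt d)^2\sigma^2}$ beats, with room to spare, both entries of the minimum on the right of~\eqref{eq:2normComparison}. First I would observe that~\eqref{eq:Kcondition} implies the hypotheses of both auxiliary lemmas: $\sigma\le 1/8\le 1/2$, and since $\ln(1/\sigma)\ge\ln 8$ one has $k\ge 5 d^{3/2}\sigma^{-1}\sqrt{\ln(1/\sigma)}\ge d/\sigma$, so Lemma~\ref{lem:L2normBOUND} in the form~\eqref{eq:2NormBoundPoly} applies and gives
\[
\bigl\|\tilde\F^\sigma-\tilde\F^\sigma_k\bigr\|_2\le 10\,C_d\,\frac{e^{-E}}{\sigma},\qquad E:=\frac14\Bigl(\frac{k}{\sqrt d}-\sqrt d\Bigr)^{\!2}\sigma^2 ;
\]
and since $\sigma\le\pi/(4\sqrt d)$, Lemma~\ref{lem:loverBOUNDnormCONST} gives $\N^\sigma\ge\frac12 C_d\,\sigma^{d(d-1)}(2/\pi)^{d(d-1)}$.

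Both quantities inside the minimum in~\eqref{eq:2normComparison} then carry the common factor $C_d\,\sigma^{d(d-1)}$, so after dividing through it suffices to establish the two scalar inequalities
\[
40\,\frac{e^{-E}}{\sigma}\le \sigma^{d(d-1)}\Bigl(\frac2\pi\Bigr)^{d(d-1)},\qquad
\frac{40}{3}\,\frac{e^{-E}}{\sigma}\le \sigma^{d(d-1)}\,e^{-\frac{r^2}{4\sigma^2}} ,
\]
which upon taking logarithms read $E\ge \ln 40+(d^2-d+1)\ln\frac1\sigma+d(d-1)\ln\frac\pi2$ and $E-\frac{r^2}{4\sigma^2}\ge \ln\frac{40}{3}+(d^2-d+1)\ln\frac1\sigma$.

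The heart of the argument is then a lower bound on $E$ coming from~\eqref{eq:Kcondition}. Because $\sigma\le 1/8$ forces $k/\sqrt d\ge 5 d\sqrt{\ln 8}\gg\sqrt d$ (in fact $k/\sqrt d\ge 57\,d$), we have $\frac{k}{\sqrt d}-\sqrt d\ge(1-\tfrac{1}{57})\frac{k}{\sqrt d}$, hence $(\frac{k}{\sqrt d}-\sqrt d)^2\ge\frac{24}{25}\frac{k^2}{d}$. Squaring the degree bound, $k^2\ge 25 d^3\sigma^{-2}\bigl(\frac18\frac{r^2}{d^2\sigma^2}+\ln\frac1\sigma\bigr)$, so
\[
E\ge\frac14\cdot\frac{24}{25}\cdot\frac{k^2\sigma^2}{d}\ge\frac{6}{25}\cdot 25\,d^2\Bigl(\frac18\frac{r^2}{d^2\sigma^2}+\ln\frac1\sigma\Bigr)=\frac34\frac{r^2}{\sigma^2}+6 d^2\ln\frac1\sigma .
\]
From this $E-\frac{r^2}{4\sigma^2}\ge\frac12\frac{r^2}{\sigma^2}+6 d^2\ln\frac1\sigma\ge 6 d^2\ln\frac1\sigma$, and since $6 d^2-(d^2-d+1)=5 d^2+d-1\ge 5$ while $\ln\frac1\sigma\ge\ln 8$, the surplus $(5 d^2+d-1)\ln\frac1\sigma\ge 5\ln 8$ dominates the fixed additive constants $\ln 40$, $\ln\frac{40}{3}$ and the term $d(d-1)\ln\frac\pi2$ for every $d\ge 1$; hence both displayed inequalities hold and the proposition follows.

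The only genuinely delicate point is the bookkeeping of constants in the last step: one must verify that the factor lost in replacing $(\frac{k}{\sqrt d}-\sqrt d)^2$ by $\frac{k^2}{d}$, combined with the $\frac18$ and $\frac14$ appearing in~\eqref{eq:Kcondition} and in $E$, still leaves the coefficient of $\frac{r^2}{\sigma^2}$ in the bound for $E$ strictly above $\frac14$, so that the tail target $e^{-r^2/(4\sigma^2)}$ is genuinely beaten. The slack provided by the constant $5$ in the definition of $k$ and by the cap $\sigma\le 1/8$ is exactly what makes this go through; everything else is a routine chain of elementary estimates, which I would record in rounded form rather than optimize.
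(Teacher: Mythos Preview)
Your approach is correct and essentially the same as the paper's: both feed the $L^2$ bound of Lemma~\ref{lem:L2normBOUND} and the lower bound on $\N^\sigma$ of Lemma~\ref{lem:loverBOUNDnormCONST} into a scalar comparison, which the paper packages as the technical Lemma~\ref{lem:settingK-technical} (there a single stronger inequality $10\,C_d\,e^{-E}/\sigma\le \tfrac14 C_d\,\sigma^{d(d-1)}(2/\pi)^{d(d-1)}e^{-r^2/(4\sigma^2)}$ is proved that handles both entries of the minimum at once, whereas you treat them separately). One small slip in your write-up: $d(d-1)\ln(\pi/2)$ is not a ``fixed additive constant'' and the crude bound $(5d^2+d-1)\ln(1/\sigma)\ge 5\ln 8$ alone does not dominate it for large $d$; the inequality you want does hold, but because the $d^2$-coefficient on the surplus side, $5\ln 8\approx 10.4$, exceeds $\ln(\pi/2)\approx 0.45$ on the deficit side---this should be said explicitly.
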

The proof of the above result follows from comparison of upper bound \eqref{eq:2NormBoundPoly}  
with the bounds given in Lemmas \ref{lem:loverBOUNDnormCONST} and \ref{lem:TailUnitary}. 
The comparison is provided by 
the (technical) Lemma \ref{lem:settingK-technical}
 proved  in Part \ref{app:polyCONVERG} of the Appendix.
We have now all the necessary ingredients to justify that the function $\F^\sigma_k$ satisfies all the properties required by Theorem \ref{th:FUNC}.

\begin{proof}[Proof of Theorem \ref{th:FUNC}]
First of all Lemma \ref{lem:PolynDegree} ensures that $\F^\sigma_k$ is a polynomial of a suitable degree in $U$ and $\bar{U}$ as claimed in property 3 in Theorem \ref{th:FUNC}.

We now proceed with proofs of the remaining three properties $\F^\sigma_k$.
According to assumptions of the Theorem we will assume  $\sigma\leq\frac{\ep}{4\sqrt{d}}$ and  $k\geq 5 \frac{d^{\frac{3}{2}}}{\sigma}  \sqrt{\frac{1}{8}\frac{\ep^2}{d^2\sigma^2} + \ln\frac{1}{\sigma}}$.  
    
We start by establishing the normalization of $\F^\sigma_k$ (condition 1 in Theorem \ref{th:FUNC}). By definition of $\F^\sigma_k$ (cf. Eq.\eqref{eq:MAINfuncDEF}) this is equivalent to showing  
    \begin{equation}
        \N^\sigma_k=\int_{\U(d)}\dt \mu(\h{U}) \tilde{\F}^\sigma_k(\U)\neq0\ .
    \end{equation}
By simple manipulations we get 
\begin{equation}
\int_{\U(d)}\dt \mu(\h{U}) \tilde{\F}^\sigma_k(\U)= \int_{\U(d)}\dt \mu(\h{U}) \tilde{\F}^\sigma(\U) -    \int_{\U(d)}\dt \mu(\h{U}) (\tilde{\F}^\sigma(\U)-\tilde{\F}^\sigma_k(\U))\geq  \int_{\U(d)}\dt \mu(\h{U}) \tilde{\F}^\sigma(\U) - \|\tilde{\F}^\sigma -\tilde{\F}^\sigma_{k}\|_2\ ,
\end{equation}
where the inequality comes form applying the Cauchy-Schwartz inequality in  $L^2(\U(d))$ to functions $G_1(\U)=\tilde{\F}^\sigma(\U)-\tilde{\F}^\sigma_k(\U)$ and $G_2(\U)=1$. Using \eqref{eq:2normComparison} in the above inequality we obtain 
\begin{equation}\label{eq:NORMkbound}
    \N^\sigma_k \geq \frac12 \N^\sigma>0\ .
\end{equation}

The proof of the second property in Theorem \ref{th:FUNC} (decay of  integral   of modulus of the function 
outside of the balls $B(\h{I},\ep)$) follows the similar logic.  Specifically, using the Cauchy-Schwartz inequality leads, as previously, to
\begin{equation}\label{eq:crucialSTEP}
      \int_{ B(\h{I},\ep)^c} \dt \mu (\h{U})\,  | \tilde{\F}^\sigma_k |  \leq \int_{ B(\h{I},\ep)^c} \dt \mu (\h{U}) \tilde{\F}^\sigma +  \|\tilde{\F}^\sigma -\tilde{\F}^\sigma_{k}\|_2 \ .
\end{equation}
 where we used $|\tilde{\F}^\sigma|=\tilde{\F}^\sigma$ since  
$\tilde{\F}^\sigma$ is positive. Application of the bound \eqref{eq:2normComparison} from Proposition \ref{prop:settingK} and results of Lemmas \ref{lem:diamondVSop} and \ref{lem:TailUnitary} gives
\begin{equation}
      \int_{B(\h{I},\ep)^c} \dt \mu (\h{U}) |\tilde{\F}^\sigma_k |\leq \frac{3}{2} \times \frac{3}{2} C_d\,\sigma^{d(d-1)}\, 
      e^{-\frac{1}{4} \frac{(\ep-\kappa)^2}{\sigma^2}}\ .
\end{equation}
Using the definition of  $\F^\sigma_k$ and employing \eqref{eq:NORMkbound} together with the lower bound for $\N^\sigma$  from Lemma \ref{lem:loverBOUNDnormCONST} we finally obtain the desired result
\begin{equation}
    \int_{ B(\h{I},\ep)^c} \dt \mu (\h{U})  |\F^\sigma_k |= \frac{1}{\N^\sigma_k} \int_{ B(\h{I},\ep)^c} \dt \mu (\h{U}) |\tilde{\F}^\sigma_k| \leq 3\times 3 \exp\left(-\frac{(\epsilon-\kappa)^2}{4\sigma^2}\right) \left(\frac{\pi}{2}\right)^{d(d-1)}\ .
\end{equation}

We shall now give an upper bound on $\|\F^\sigma_{ k}\|_2$. We have from Eq. \eqref{eq:NORMkbound}
\begin{equation}
    \|\F^\sigma_k\|_2=\frac{1}{\N^\sigma_k}\|\tilde{\F}^\sigma_k\|_2 \leq 2\frac{\|\tilde{\F}^\sigma_k\|_2}{\N^\sigma}. 
\end{equation}
We now use the estimates  
from lemma \ref{lem:loverBOUNDnormCONST}
 and proposition \ref{prop:2normFksigma}
 obtaining for $k\geq d/\sigma$, $\sigma\leq\frac{\pi}{4\sqrt{d}}$ and $\sigma\leq 1/4$ 
 \begin{align}
     \|\F^\sigma_k\|_2\leq 
     2 \times \frac{2 \times
   2^{\frac{d(d+1)}{2}}\sqrt{d!} \, \sigma^{-\frac{d}{2}}}{\frac{1}{2} \prod_{k=1}^d k!\ \sigma^{d(d-1)} \left(\frac{2}{\pi}\right)^{d(d-1)}}\equiv
   8 \times 2^{d^2} B_d\, \sigma^{-d(d-\frac12)},
 \end{align}
 where the terms $1/(d!(2\pi)^d)$
 are already cancelled,
 and 
 \begin{align}
     B_d=
     \frac{\sqrt{d!} 2^\frac{d(d+1)}{2}}{2^{d^2}
     \prod_{k=1}^d k! 
     \left(\frac{2}{\pi}\right)^{d(d-1)}}
 \end{align}
 We prove by induction (one has to actually run induction twice) that $B_d\leq 1$. 
 Finally we note that the condition (so that we can take $\sigma\leq \frac{1}{6\sqrt{d}}$) that we have in assumptions of the theorem implies the above conditions on $\sigma$.

 We conclude the proof by giving an upper bound on $\|\F^\sigma_{ k}\|_1$. 
 Due to Lemma \ref{lem:pos_function_norm_1} it is enough to bound distance in 1-norm between $\F_k^\sigma$ and a positive function our our choice. 
 We choose the positive function to be $\tilde\F_\sigma/\N^\sigma_k$. 
 We then have 
 \begin{align} 
 \label{eq:est_proof_1norm}
 \| \F_k^\sigma - \frac{\tilde\F_\sigma}{\N^\sigma_k}\|_1
 =
 \frac{1}{\N^\sigma_k} \| \F_k^\sigma - \tilde\F_\sigma\|_1 \leq 
 \frac{1}{\N^\sigma_k} \| \F_k^\sigma - \tilde\F_\sigma\|_2.
 \end{align}
Now, under assumptions of Proposition  \ref{prop:settingK} 
 from \eqref{eq:NORMkbound}
and \eqref{eq:LOWERboundCPNST} we get 
\begin{equation}
    \N_k^\sigma \geq 
    \frac{1}{4} C_d\ \sigma^{d(d-1)} \left(\frac{2}{\pi}\right)^{d(d-1)}.
\end{equation}
Inserting this and 
\eqref{eq:2normComparison} into  \eqref{eq:est_proof_1norm} we have  
\begin{equation}
    \| \F_k^\sigma - \frac{\tilde\F_\sigma}{\N^\sigma_k}  \|_1 \leq 
    \frac{\frac{3}{4} C_d\,\sigma^{d(d-1)}\, 
      e^{-\frac{1}{4} \frac{\ep^2}{\sigma^2}}}{\frac{1}{4} C_d\ \sigma^{d(d-1)} \left(\frac{2}{\pi}\right)^{d(d-1)}}
      =3 \left(\frac{2}{\pi}\right)^{d(d-1)}
      e^{-\frac{1}{4} \frac{\ep^2}{\sigma^2}}
\end{equation}
Now, using Lemma \ref{lem:pos_function_norm_1}
we obtain the required bound.

\end{proof}

\bibliography{refsdesign}

\begin{thebibliography}{51}%
\makeatletter
\providecommand \@ifxundefined [1]{%
 \@ifx{#1\undefined}
}%
\providecommand \@ifnum [1]{%
 \ifnum #1\expandafter \@firstoftwo
 \else \expandafter \@secondoftwo
 \fi
}%
\providecommand \@ifx [1]{%
 \ifx #1\expandafter \@firstoftwo
 \else \expandafter \@secondoftwo
 \fi
}%
\providecommand \natexlab [1]{#1}%
\providecommand \enquote  [1]{``#1''}%
\providecommand \bibnamefont  [1]{#1}%
\providecommand \bibfnamefont [1]{#1}%
\providecommand \citenamefont [1]{#1}%
\providecommand \href@noop [0]{\@secondoftwo}%
\providecommand \href [0]{\begingroup \@sanitize@url \@href}%
\providecommand \@href[1]{\@@startlink{#1}\@@href}%
\providecommand \@@href[1]{\endgroup#1\@@endlink}%
\providecommand \@sanitize@url [0]{\catcode `\\12\catcode `\$12\catcode
  `\&12\catcode `\#12\catcode `\^12\catcode `\_12\catcode `\%12\relax}%
\providecommand \@@startlink[1]{}%
\providecommand \@@endlink[0]{}%
\providecommand \url  [0]{\begingroup\@sanitize@url \@url }%
\providecommand \@url [1]{\endgroup\@href {#1}{\urlprefix }}%
\providecommand \urlprefix  [0]{URL }%
\providecommand \Eprint [0]{\href }%
\providecommand \doibase [0]{http://dx.doi.org/}%
\providecommand \selectlanguage [0]{\@gobble}%
\providecommand \bibinfo  [0]{\@secondoftwo}%
\providecommand \bibfield  [0]{\@secondoftwo}%
\providecommand \translation [1]{[#1]}%
\providecommand \BibitemOpen [0]{}%
\providecommand \bibitemStop [0]{}%
\providecommand \bibitemNoStop [0]{.\EOS\space}%
\providecommand \EOS [0]{\spacefactor3000\relax}%
\providecommand \BibitemShut  [1]{\csname bibitem#1\endcsname}%
\let\auto@bib@innerbib\@empty
\bibitem [{\citenamefont {Varju}(2013)}]{Varju2013}%
  \BibitemOpen
  \bibfield  {author} {\bibinfo {author} {\bibfnamefont {P.~P.}\ \bibnamefont
  {Varju}},\ }\href@noop {} {\bibfield  {journal} {\bibinfo  {journal} {Doc.
  Math.}\ }\textbf {\bibinfo {volume} {18}},\ \bibinfo {pages} {1137} (\bibinfo
  {year} {2013})}\BibitemShut {NoStop}%
\bibitem [{\citenamefont {{Brand{\~a}o}}\ \emph
  {et~al.}(2016{\natexlab{a}})\citenamefont {{Brand{\~a}o}}, \citenamefont
  {{Harrow}},\ and\ \citenamefont {{Horodecki}}}]{BHH2016}%
  \BibitemOpen
  \bibfield  {author} {\bibinfo {author} {\bibfnamefont {F.~G.~S.~L.}\
  \bibnamefont {{Brand{\~a}o}}}, \bibinfo {author} {\bibfnamefont {A.~W.}\
  \bibnamefont {{Harrow}}}, \ and\ \bibinfo {author} {\bibfnamefont
  {M.}~\bibnamefont {{Horodecki}}},\ }\href {\doibase
  10.1007/s00220-016-2706-8} {\bibfield  {journal} {\bibinfo  {journal}
  {Communications in Mathematical Physics}\ }\textbf {\bibinfo {volume}
  {346}},\ \bibinfo {pages} {397} (\bibinfo {year} {2016}{\natexlab{a}})},\
  \Eprint {http://arxiv.org/abs/1208.0692} {arXiv:1208.0692 [quant-ph]}
  \BibitemShut {NoStop}%
\bibitem [{\citenamefont {{Dankert}}\ \emph {et~al.}(2009)\citenamefont
  {{Dankert}}, \citenamefont {{Cleve}}, \citenamefont {{Emerson}},\ and\
  \citenamefont {{Livine}}}]{TDesigns2009}%
  \BibitemOpen
  \bibfield  {author} {\bibinfo {author} {\bibfnamefont {C.}~\bibnamefont
  {{Dankert}}}, \bibinfo {author} {\bibfnamefont {R.}~\bibnamefont {{Cleve}}},
  \bibinfo {author} {\bibfnamefont {J.}~\bibnamefont {{Emerson}}}, \ and\
  \bibinfo {author} {\bibfnamefont {E.}~\bibnamefont {{Livine}}},\ }\href
  {\doibase 10.1103/PhysRevA.80.012304} {\bibfield  {journal} {\bibinfo
  {journal} {Physical Review A}\ }\textbf {\bibinfo {volume} {80}},\ \bibinfo
  {eid} {012304} (\bibinfo {year} {2009})},\ \Eprint
  {http://arxiv.org/abs/quant-ph/0606161} {arXiv:quant-ph/0606161 [quant-ph]}
  \BibitemShut {NoStop}%
\bibitem [{\citenamefont {{Epstein}}\ \emph {et~al.}(2014)\citenamefont
  {{Epstein}}, \citenamefont {{Cross}}, \citenamefont {{Magesan}},\ and\
  \citenamefont {{Gambetta}}}]{Gambetta2014}%
  \BibitemOpen
  \bibfield  {author} {\bibinfo {author} {\bibfnamefont {J.~M.}\ \bibnamefont
  {{Epstein}}}, \bibinfo {author} {\bibfnamefont {A.~W.}\ \bibnamefont
  {{Cross}}}, \bibinfo {author} {\bibfnamefont {E.}~\bibnamefont {{Magesan}}},
  \ and\ \bibinfo {author} {\bibfnamefont {J.~M.}\ \bibnamefont {{Gambetta}}},\
  }\href {\doibase 10.1103/PhysRevA.89.062321} {\bibfield  {journal} {\bibinfo
  {journal} {\pra}\ }\textbf {\bibinfo {volume} {89}},\ \bibinfo {eid} {062321}
  (\bibinfo {year} {2014})},\ \Eprint {http://arxiv.org/abs/1308.2928}
  {arXiv:1308.2928 [quant-ph]} \BibitemShut {NoStop}%
\bibitem [{\citenamefont {{Huang}}\ \emph {et~al.}(2020)\citenamefont
  {{Huang}}, \citenamefont {{Kueng}},\ and\ \citenamefont
  {{Preskill}}}]{EffLearning2020}%
  \BibitemOpen
  \bibfield  {author} {\bibinfo {author} {\bibfnamefont {H.-Y.}\ \bibnamefont
  {{Huang}}}, \bibinfo {author} {\bibfnamefont {R.}~\bibnamefont {{Kueng}}}, \
  and\ \bibinfo {author} {\bibfnamefont {J.}~\bibnamefont {{Preskill}}},\
  }\href@noop {} {\bibfield  {journal} {\bibinfo  {journal} {arXiv e-prints}\
  ,\ \bibinfo {eid} {arXiv:2002.08953}} (\bibinfo {year} {2020})},\ \Eprint
  {http://arxiv.org/abs/2002.08953} {arXiv:2002.08953 [quant-ph]} \BibitemShut
  {NoStop}%
\bibitem [{\citenamefont {{Szehr}}\ \emph {et~al.}(2013)\citenamefont
  {{Szehr}}, \citenamefont {{Dupuis}}, \citenamefont {{Tomamichel}},\ and\
  \citenamefont {{Renner}}}]{Decoupl2013}%
  \BibitemOpen
  \bibfield  {author} {\bibinfo {author} {\bibfnamefont {O.}~\bibnamefont
  {{Szehr}}}, \bibinfo {author} {\bibfnamefont {F.}~\bibnamefont {{Dupuis}}},
  \bibinfo {author} {\bibfnamefont {M.}~\bibnamefont {{Tomamichel}}}, \ and\
  \bibinfo {author} {\bibfnamefont {R.}~\bibnamefont {{Renner}}},\ }\href
  {\doibase 10.1088/1367-2630/15/5/053022} {\bibfield  {journal} {\bibinfo
  {journal} {New Journal of Physics}\ }\textbf {\bibinfo {volume} {15}},\
  \bibinfo {eid} {053022} (\bibinfo {year} {2013})},\ \Eprint
  {http://arxiv.org/abs/1109.4348} {arXiv:1109.4348 [quant-ph]} \BibitemShut
  {NoStop}%
\bibitem [{\citenamefont {{Abeyesinghe}}\ \emph {et~al.}(2009)\citenamefont
  {{Abeyesinghe}}, \citenamefont {{Devetak}}, \citenamefont {{Hayden}},\ and\
  \citenamefont {{Winter}}}]{InfTransmission2009}%
  \BibitemOpen
  \bibfield  {author} {\bibinfo {author} {\bibfnamefont {A.}~\bibnamefont
  {{Abeyesinghe}}}, \bibinfo {author} {\bibfnamefont {I.}~\bibnamefont
  {{Devetak}}}, \bibinfo {author} {\bibfnamefont {P.}~\bibnamefont {{Hayden}}},
  \ and\ \bibinfo {author} {\bibfnamefont {A.}~\bibnamefont {{Winter}}},\
  }\href {\doibase 10.1098/rspa.2009.0202} {\bibfield  {journal} {\bibinfo
  {journal} {Proceedings of the Royal Society of London Series A}\ }\textbf
  {\bibinfo {volume} {465}},\ \bibinfo {pages} {2537} (\bibinfo {year}
  {2009})},\ \Eprint {http://arxiv.org/abs/quant-ph/0606225}
  {arXiv:quant-ph/0606225 [quant-ph]} \BibitemShut {NoStop}%
\bibitem [{\citenamefont {{Sen}}(2005)}]{StateDiscrimination2005}%
  \BibitemOpen
  \bibfield  {author} {\bibinfo {author} {\bibfnamefont {P.}~\bibnamefont
  {{Sen}}},\ }\href@noop {} {\bibfield  {journal} {\bibinfo  {journal} {arXiv
  e-prints}\ ,\ \bibinfo {eid} {quant-ph/0512085}} (\bibinfo {year} {2005})},\
  \Eprint {http://arxiv.org/abs/quant-ph/0512085} {arXiv:quant-ph/0512085
  [quant-ph]} \BibitemShut {NoStop}%
\bibitem [{\citenamefont {{Masanes}}\ \emph {et~al.}(2011)\citenamefont
  {{Masanes}}, \citenamefont {{Roncaglia}},\ and\ \citenamefont
  {{Acin}}}]{EquilibrationAcin}%
  \BibitemOpen
  \bibfield  {author} {\bibinfo {author} {\bibfnamefont {L.}~\bibnamefont
  {{Masanes}}}, \bibinfo {author} {\bibfnamefont {A.~J.}\ \bibnamefont
  {{Roncaglia}}}, \ and\ \bibinfo {author} {\bibfnamefont {A.}~\bibnamefont
  {{Acin}}},\ }\href@noop {} {\bibfield  {journal} {\bibinfo  {journal} {arXiv
  e-prints}\ ,\ \bibinfo {eid} {arXiv:1108.0374}} (\bibinfo {year} {2011})},\
  \Eprint {http://arxiv.org/abs/1108.0374} {arXiv:1108.0374 [quant-ph]}
  \BibitemShut {NoStop}%
\bibitem [{\citenamefont {Oszmaniec}\ \emph {et~al.}(2016)\citenamefont
  {Oszmaniec}, \citenamefont {Augusiak}, \citenamefont {Gogolin}, \citenamefont
  {Ko\l{}ody\ifmmode~\acute{n}\else \'{n}\fi{}ski}, \citenamefont {Ac\'{\i}n},\
  and\ \citenamefont {Lewenstein}}]{Oszmaniec2016}%
  \BibitemOpen
  \bibfield  {author} {\bibinfo {author} {\bibfnamefont {M.}~\bibnamefont
  {Oszmaniec}}, \bibinfo {author} {\bibfnamefont {R.}~\bibnamefont {Augusiak}},
  \bibinfo {author} {\bibfnamefont {C.}~\bibnamefont {Gogolin}}, \bibinfo
  {author} {\bibfnamefont {J.}~\bibnamefont {Ko\l{}ody\ifmmode~\acute{n}\else
  \'{n}\fi{}ski}}, \bibinfo {author} {\bibfnamefont {A.}~\bibnamefont
  {Ac\'{\i}n}}, \ and\ \bibinfo {author} {\bibfnamefont {M.}~\bibnamefont
  {Lewenstein}},\ }\href {\doibase 10.1103/PhysRevX.6.041044} {\bibfield
  {journal} {\bibinfo  {journal} {Phys. Rev. X}\ }\textbf {\bibinfo {volume}
  {6}},\ \bibinfo {pages} {041044} (\bibinfo {year} {2016})}\BibitemShut
  {NoStop}%
\bibitem [{\citenamefont {{Roberts}}\ and\ \citenamefont
  {{Yoshida}}(2017{\natexlab{a}})}]{ChaosDesign2017}%
  \BibitemOpen
  \bibfield  {author} {\bibinfo {author} {\bibfnamefont {D.~A.}\ \bibnamefont
  {{Roberts}}}\ and\ \bibinfo {author} {\bibfnamefont {B.}~\bibnamefont
  {{Yoshida}}},\ }\href {\doibase 10.1007/JHEP04(2017)121} {\bibfield
  {journal} {\bibinfo  {journal} {Journal of High Energy Physics}\ }\textbf
  {\bibinfo {volume} {2017}},\ \bibinfo {eid} {121} (\bibinfo {year}
  {2017}{\natexlab{a}})},\ \Eprint {http://arxiv.org/abs/1610.04903}
  {arXiv:1610.04903 [quant-ph]} \BibitemShut {NoStop}%
\bibitem [{\citenamefont {Nakata}\ \emph {et~al.}(2017)\citenamefont {Nakata},
  \citenamefont {Hirche}, \citenamefont {Koashi},\ and\ \citenamefont
  {Winter}}]{Yoshi2017}%
  \BibitemOpen
  \bibfield  {author} {\bibinfo {author} {\bibfnamefont {Y.}~\bibnamefont
  {Nakata}}, \bibinfo {author} {\bibfnamefont {C.}~\bibnamefont {Hirche}},
  \bibinfo {author} {\bibfnamefont {M.}~\bibnamefont {Koashi}}, \ and\ \bibinfo
  {author} {\bibfnamefont {A.}~\bibnamefont {Winter}},\ }\href {\doibase
  10.1103/PhysRevX.7.021006} {\bibfield  {journal} {\bibinfo  {journal} {Phys.
  Rev. X}\ }\textbf {\bibinfo {volume} {7}},\ \bibinfo {pages} {021006}
  (\bibinfo {year} {2017})},\ \Eprint {http://arxiv.org/abs/1609.07021}
  {arXiv:1609.07021 [quant-ph]} \BibitemShut {NoStop}%
\bibitem [{\citenamefont {Brand\~ao}\ \emph {et~al.}(2021)\citenamefont
  {Brand\~ao}, \citenamefont {Chemissany}, \citenamefont {Hunter-Jones},
  \citenamefont {Kueng},\ and\ \citenamefont {Preskill}}]{Nick2019}%
  \BibitemOpen
  \bibfield  {author} {\bibinfo {author} {\bibfnamefont {F.~G.}\ \bibnamefont
  {Brand\~ao}}, \bibinfo {author} {\bibfnamefont {W.}~\bibnamefont
  {Chemissany}}, \bibinfo {author} {\bibfnamefont {N.}~\bibnamefont
  {Hunter-Jones}}, \bibinfo {author} {\bibfnamefont {R.}~\bibnamefont {Kueng}},
  \ and\ \bibinfo {author} {\bibfnamefont {J.}~\bibnamefont {Preskill}},\
  }\href {\doibase 10.1103/PRXQuantum.2.030316} {\bibfield  {journal} {\bibinfo
   {journal} {PRX Quantum}\ }\textbf {\bibinfo {volume} {2}},\ \bibinfo {pages}
  {030316} (\bibinfo {year} {2021})}\BibitemShut {NoStop}%
\bibitem [{\citenamefont {{Harrow}}\ and\ \citenamefont
  {{Montanaro}}(2017)}]{SuprRev2017}%
  \BibitemOpen
  \bibfield  {author} {\bibinfo {author} {\bibfnamefont {A.~W.}\ \bibnamefont
  {{Harrow}}}\ and\ \bibinfo {author} {\bibfnamefont {A.}~\bibnamefont
  {{Montanaro}}},\ }\href {\doibase 10.1038/nature23458} {\bibfield  {journal}
  {\bibinfo  {journal} {\nat}\ }\textbf {\bibinfo {volume} {549}},\ \bibinfo
  {pages} {203} (\bibinfo {year} {2017})},\ \Eprint
  {http://arxiv.org/abs/1809.07442} {arXiv:1809.07442 [quant-ph]} \BibitemShut
  {NoStop}%
\bibitem [{\citenamefont {{Boixo}}\ \emph {et~al.}(2018)\citenamefont
  {{Boixo}}, \citenamefont {{Isakov}}, \citenamefont {{Smelyanskiy}},
  \citenamefont {{Babbush}}, \citenamefont {{Ding}}, \citenamefont {{Jiang}},
  \citenamefont {{Bremner}}, \citenamefont {{Martinis}},\ and\ \citenamefont
  {{Neven}}}]{Boixo2016}%
  \BibitemOpen
  \bibfield  {author} {\bibinfo {author} {\bibfnamefont {S.}~\bibnamefont
  {{Boixo}}}, \bibinfo {author} {\bibfnamefont {S.~V.}\ \bibnamefont
  {{Isakov}}}, \bibinfo {author} {\bibfnamefont {V.~N.}\ \bibnamefont
  {{Smelyanskiy}}}, \bibinfo {author} {\bibfnamefont {R.}~\bibnamefont
  {{Babbush}}}, \bibinfo {author} {\bibfnamefont {N.}~\bibnamefont {{Ding}}},
  \bibinfo {author} {\bibfnamefont {Z.}~\bibnamefont {{Jiang}}}, \bibinfo
  {author} {\bibfnamefont {M.~J.}\ \bibnamefont {{Bremner}}}, \bibinfo {author}
  {\bibfnamefont {J.~M.}\ \bibnamefont {{Martinis}}}, \ and\ \bibinfo {author}
  {\bibfnamefont {H.}~\bibnamefont {{Neven}}},\ }\href@noop {} {\bibfield
  {journal} {\bibinfo  {journal} {Nature Physics}\ }\textbf {\bibinfo {volume}
  {14}},\ \bibinfo {eid} {arXiv:1608.00263} (\bibinfo {year} {2018})},\ \Eprint
  {http://arxiv.org/abs/1608.00263} {arXiv:1608.00263 [quant-ph]} \BibitemShut
  {NoStop}%
\bibitem [{\citenamefont {{Arute}}\ \emph {et~al.}(2019)\citenamefont
  {{Arute}}, \citenamefont {{Arya}}, \citenamefont {{Babbush}}, \citenamefont
  {{Bacon}}, \citenamefont {{Bardin}}, \citenamefont {{Barends}}, \citenamefont
  {{Biswas}}, \citenamefont {{Boixo}}, \citenamefont {{Brandao}}, \citenamefont
  {{Buell}}, \citenamefont {{Burkett}}, \citenamefont {{Chen}}, \citenamefont
  {{Chen}}, \citenamefont {{Chiaro}}, \citenamefont {{Collins}}, \citenamefont
  {{Courtney}}, \citenamefont {{Dunsworth}}, \citenamefont {{Farhi}},
  \citenamefont {{Foxen}}, \citenamefont {{Fowler}}, \citenamefont {{Gidney}},
  \citenamefont {{Giustina}}, \citenamefont {{Graff}}, \citenamefont
  {{Guerin}}, \citenamefont {{Habegger}}, \citenamefont {{Harrigan}},
  \citenamefont {{Hartmann}}, \citenamefont {{Ho}}, \citenamefont {{Hoffmann}},
  \citenamefont {{Huang}}, \citenamefont {{Humble}}, \citenamefont {{Isakov}},
  \citenamefont {{Jeffrey}}, \citenamefont {{Jiang}}, \citenamefont {{Kafri}},
  \citenamefont {{Kechedzhi}}, \citenamefont {{Kelly}}, \citenamefont
  {{Klimov}}, \citenamefont {{Knysh}}, \citenamefont {{Korotkov}},
  \citenamefont {{Kostritsa}}, \citenamefont {{Land huis}}, \citenamefont
  {{Lindmark}}, \citenamefont {{Lucero}}, \citenamefont {{Lyakh}},
  \citenamefont {{Mandr{\`a}}}, \citenamefont {{McClean}}, \citenamefont
  {{McEwen}}, \citenamefont {{Megrant}}, \citenamefont {{Mi}}, \citenamefont
  {{Michielsen}}, \citenamefont {{Mohseni}}, \citenamefont {{Mutus}},
  \citenamefont {{Naaman}}, \citenamefont {{Neeley}}, \citenamefont {{Neill}},
  \citenamefont {{Niu}}, \citenamefont {{Ostby}}, \citenamefont {{Petukhov}},
  \citenamefont {{Platt}}, \citenamefont {{Quintana}}, \citenamefont
  {{Rieffel}}, \citenamefont {{Roushan}}, \citenamefont {{Rubin}},
  \citenamefont {{Sank}}, \citenamefont {{Satzinger}}, \citenamefont
  {{Smelyanskiy}}, \citenamefont {{Sung}}, \citenamefont {{Trevithick}},
  \citenamefont {{Vainsencher}}, \citenamefont {{Villalonga}}, \citenamefont
  {{White}}, \citenamefont {{Yao}}, \citenamefont {{Yeh}}, \citenamefont
  {{Zalcman}}, \citenamefont {{Neven}},\ and\ \citenamefont
  {{Martinis}}}]{Suprem2019}%
  \BibitemOpen
  \bibfield  {author} {\bibinfo {author} {\bibfnamefont {F.}~\bibnamefont
  {{Arute}}}, \bibinfo {author} {\bibfnamefont {K.}~\bibnamefont {{Arya}}},
  \bibinfo {author} {\bibfnamefont {R.}~\bibnamefont {{Babbush}}}, \bibinfo
  {author} {\bibfnamefont {D.}~\bibnamefont {{Bacon}}}, \bibinfo {author}
  {\bibfnamefont {J.~C.}\ \bibnamefont {{Bardin}}}, \bibinfo {author}
  {\bibfnamefont {R.}~\bibnamefont {{Barends}}}, \bibinfo {author}
  {\bibfnamefont {R.}~\bibnamefont {{Biswas}}}, \bibinfo {author}
  {\bibfnamefont {S.}~\bibnamefont {{Boixo}}}, \bibinfo {author} {\bibfnamefont
  {F.~G.~S.~L.}\ \bibnamefont {{Brandao}}}, \bibinfo {author} {\bibfnamefont
  {D.~A.}\ \bibnamefont {{Buell}}}, \bibinfo {author} {\bibfnamefont
  {B.}~\bibnamefont {{Burkett}}}, \bibinfo {author} {\bibfnamefont
  {Y.}~\bibnamefont {{Chen}}}, \bibinfo {author} {\bibfnamefont
  {Z.}~\bibnamefont {{Chen}}}, \bibinfo {author} {\bibfnamefont
  {B.}~\bibnamefont {{Chiaro}}}, \bibinfo {author} {\bibfnamefont
  {R.}~\bibnamefont {{Collins}}}, \bibinfo {author} {\bibfnamefont
  {W.}~\bibnamefont {{Courtney}}}, \bibinfo {author} {\bibfnamefont
  {A.}~\bibnamefont {{Dunsworth}}}, \bibinfo {author} {\bibfnamefont
  {E.}~\bibnamefont {{Farhi}}}, \bibinfo {author} {\bibfnamefont
  {B.}~\bibnamefont {{Foxen}}}, \bibinfo {author} {\bibfnamefont
  {A.}~\bibnamefont {{Fowler}}}, \bibinfo {author} {\bibfnamefont
  {C.}~\bibnamefont {{Gidney}}}, \bibinfo {author} {\bibfnamefont
  {M.}~\bibnamefont {{Giustina}}}, \bibinfo {author} {\bibfnamefont
  {R.}~\bibnamefont {{Graff}}}, \bibinfo {author} {\bibfnamefont
  {K.}~\bibnamefont {{Guerin}}}, \bibinfo {author} {\bibfnamefont
  {S.}~\bibnamefont {{Habegger}}}, \bibinfo {author} {\bibfnamefont {M.~P.}\
  \bibnamefont {{Harrigan}}}, \bibinfo {author} {\bibfnamefont {M.~J.}\
  \bibnamefont {{Hartmann}}}, \bibinfo {author} {\bibfnamefont
  {A.}~\bibnamefont {{Ho}}}, \bibinfo {author} {\bibfnamefont {M.}~\bibnamefont
  {{Hoffmann}}}, \bibinfo {author} {\bibfnamefont {T.}~\bibnamefont {{Huang}}},
  \bibinfo {author} {\bibfnamefont {T.~S.}\ \bibnamefont {{Humble}}}, \bibinfo
  {author} {\bibfnamefont {S.~V.}\ \bibnamefont {{Isakov}}}, \bibinfo {author}
  {\bibfnamefont {E.}~\bibnamefont {{Jeffrey}}}, \bibinfo {author}
  {\bibfnamefont {Z.}~\bibnamefont {{Jiang}}}, \bibinfo {author} {\bibfnamefont
  {D.}~\bibnamefont {{Kafri}}}, \bibinfo {author} {\bibfnamefont
  {K.}~\bibnamefont {{Kechedzhi}}}, \bibinfo {author} {\bibfnamefont
  {J.}~\bibnamefont {{Kelly}}}, \bibinfo {author} {\bibfnamefont {P.~V.}\
  \bibnamefont {{Klimov}}}, \bibinfo {author} {\bibfnamefont {S.}~\bibnamefont
  {{Knysh}}}, \bibinfo {author} {\bibfnamefont {A.}~\bibnamefont {{Korotkov}}},
  \bibinfo {author} {\bibfnamefont {F.}~\bibnamefont {{Kostritsa}}}, \bibinfo
  {author} {\bibfnamefont {D.}~\bibnamefont {{Land huis}}}, \bibinfo {author}
  {\bibfnamefont {M.}~\bibnamefont {{Lindmark}}}, \bibinfo {author}
  {\bibfnamefont {E.}~\bibnamefont {{Lucero}}}, \bibinfo {author}
  {\bibfnamefont {D.}~\bibnamefont {{Lyakh}}}, \bibinfo {author} {\bibfnamefont
  {S.}~\bibnamefont {{Mandr{\`a}}}}, \bibinfo {author} {\bibfnamefont {J.~R.}\
  \bibnamefont {{McClean}}}, \bibinfo {author} {\bibfnamefont {M.}~\bibnamefont
  {{McEwen}}}, \bibinfo {author} {\bibfnamefont {A.}~\bibnamefont {{Megrant}}},
  \bibinfo {author} {\bibfnamefont {X.}~\bibnamefont {{Mi}}}, \bibinfo {author}
  {\bibfnamefont {K.}~\bibnamefont {{Michielsen}}}, \bibinfo {author}
  {\bibfnamefont {M.}~\bibnamefont {{Mohseni}}}, \bibinfo {author}
  {\bibfnamefont {J.}~\bibnamefont {{Mutus}}}, \bibinfo {author} {\bibfnamefont
  {O.}~\bibnamefont {{Naaman}}}, \bibinfo {author} {\bibfnamefont
  {M.}~\bibnamefont {{Neeley}}}, \bibinfo {author} {\bibfnamefont
  {C.}~\bibnamefont {{Neill}}}, \bibinfo {author} {\bibfnamefont {M.~Y.}\
  \bibnamefont {{Niu}}}, \bibinfo {author} {\bibfnamefont {E.}~\bibnamefont
  {{Ostby}}}, \bibinfo {author} {\bibfnamefont {A.}~\bibnamefont {{Petukhov}}},
  \bibinfo {author} {\bibfnamefont {J.~C.}\ \bibnamefont {{Platt}}}, \bibinfo
  {author} {\bibfnamefont {C.}~\bibnamefont {{Quintana}}}, \bibinfo {author}
  {\bibfnamefont {E.~G.}\ \bibnamefont {{Rieffel}}}, \bibinfo {author}
  {\bibfnamefont {P.}~\bibnamefont {{Roushan}}}, \bibinfo {author}
  {\bibfnamefont {N.~C.}\ \bibnamefont {{Rubin}}}, \bibinfo {author}
  {\bibfnamefont {D.}~\bibnamefont {{Sank}}}, \bibinfo {author} {\bibfnamefont
  {K.~J.}\ \bibnamefont {{Satzinger}}}, \bibinfo {author} {\bibfnamefont
  {V.}~\bibnamefont {{Smelyanskiy}}}, \bibinfo {author} {\bibfnamefont {K.~J.}\
  \bibnamefont {{Sung}}}, \bibinfo {author} {\bibfnamefont {M.~D.}\
  \bibnamefont {{Trevithick}}}, \bibinfo {author} {\bibfnamefont
  {A.}~\bibnamefont {{Vainsencher}}}, \bibinfo {author} {\bibfnamefont
  {B.}~\bibnamefont {{Villalonga}}}, \bibinfo {author} {\bibfnamefont
  {T.}~\bibnamefont {{White}}}, \bibinfo {author} {\bibfnamefont {Z.~J.}\
  \bibnamefont {{Yao}}}, \bibinfo {author} {\bibfnamefont {P.}~\bibnamefont
  {{Yeh}}}, \bibinfo {author} {\bibfnamefont {A.}~\bibnamefont {{Zalcman}}},
  \bibinfo {author} {\bibfnamefont {H.}~\bibnamefont {{Neven}}}, \ and\
  \bibinfo {author} {\bibfnamefont {J.~M.}\ \bibnamefont {{Martinis}}},\ }\href
  {\doibase 10.1038/s41586-019-1666-5} {\bibfield  {journal} {\bibinfo
  {journal} {\nat}\ }\textbf {\bibinfo {volume} {574}},\ \bibinfo {pages} {505}
  (\bibinfo {year} {2019})}\BibitemShut {NoStop}%
\bibitem [{\citenamefont {Hangleiter}\ \emph {et~al.}(2018)\citenamefont
  {Hangleiter}, \citenamefont {Bermejo-Vega}, \citenamefont {Schwarz},\ and\
  \citenamefont {Eisert}}]{Hangleiter2018}%
  \BibitemOpen
  \bibfield  {author} {\bibinfo {author} {\bibfnamefont {D.}~\bibnamefont
  {Hangleiter}}, \bibinfo {author} {\bibfnamefont {J.}~\bibnamefont
  {Bermejo-Vega}}, \bibinfo {author} {\bibfnamefont {M.}~\bibnamefont
  {Schwarz}}, \ and\ \bibinfo {author} {\bibfnamefont {J.}~\bibnamefont
  {Eisert}},\ }\href {\doibase 10.22331/q-2018-05-22-65} {\bibfield  {journal}
  {\bibinfo  {journal} {{Quantum}}\ }\textbf {\bibinfo {volume} {2}},\ \bibinfo
  {pages} {65} (\bibinfo {year} {2018})}\BibitemShut {NoStop}%
\bibitem [{\citenamefont {{Yoganathan}}\ \emph {et~al.}(2019)\citenamefont
  {{Yoganathan}}, \citenamefont {{Jozsa}},\ and\ \citenamefont
  {{Strelchuk}}}]{Jozsa2019}%
  \BibitemOpen
  \bibfield  {author} {\bibinfo {author} {\bibfnamefont {M.}~\bibnamefont
  {{Yoganathan}}}, \bibinfo {author} {\bibfnamefont {R.}~\bibnamefont
  {{Jozsa}}}, \ and\ \bibinfo {author} {\bibfnamefont {S.}~\bibnamefont
  {{Strelchuk}}},\ }\href {\doibase 10.1098/rspa.2018.0427} {\bibfield
  {journal} {\bibinfo  {journal} {Proceedings of the Royal Society of London
  Series A}\ }\textbf {\bibinfo {volume} {475}},\ \bibinfo {pages} {20180427}
  (\bibinfo {year} {2019})},\ \Eprint {http://arxiv.org/abs/1806.03200}
  {arXiv:1806.03200 [quant-ph]} \BibitemShut {NoStop}%
\bibitem [{\citenamefont {{Zhu}}\ \emph {et~al.}(2016)\citenamefont {{Zhu}},
  \citenamefont {{Kueng}}, \citenamefont {{Grassl}},\ and\ \citenamefont
  {{Gross}}}]{CliffFour}%
  \BibitemOpen
  \bibfield  {author} {\bibinfo {author} {\bibfnamefont {H.}~\bibnamefont
  {{Zhu}}}, \bibinfo {author} {\bibfnamefont {R.}~\bibnamefont {{Kueng}}},
  \bibinfo {author} {\bibfnamefont {M.}~\bibnamefont {{Grassl}}}, \ and\
  \bibinfo {author} {\bibfnamefont {D.}~\bibnamefont {{Gross}}},\ }\href@noop
  {} {\bibfield  {journal} {\bibinfo  {journal} {arXiv e-prints}\ ,\ \bibinfo
  {eid} {arXiv:1609.08172}} (\bibinfo {year} {2016})},\ \Eprint
  {http://arxiv.org/abs/1609.08172} {arXiv:1609.08172 [quant-ph]} \BibitemShut
  {NoStop}%
\bibitem [{\citenamefont {{Bourgain}}\ and\ \citenamefont
  {{Gamburd}}(2011)}]{Bourgain2011}%
  \BibitemOpen
  \bibfield  {author} {\bibinfo {author} {\bibfnamefont {J.}~\bibnamefont
  {{Bourgain}}}\ and\ \bibinfo {author} {\bibfnamefont {A.}~\bibnamefont
  {{Gamburd}}},\ }\href@noop {} {\bibfield  {journal} {\bibinfo  {journal}
  {arXiv e-prints}\ ,\ \bibinfo {eid} {arXiv:1108.6264}} (\bibinfo {year}
  {2011})},\ \Eprint {http://arxiv.org/abs/1108.6264} {arXiv:1108.6264
  [math.GR]} \BibitemShut {NoStop}%
\bibitem [{\citenamefont {{Harrow}}\ and\ \citenamefont
  {{Mehraban}}(2018)}]{Harrow2018}%
  \BibitemOpen
  \bibfield  {author} {\bibinfo {author} {\bibfnamefont {A.}~\bibnamefont
  {{Harrow}}}\ and\ \bibinfo {author} {\bibfnamefont {S.}~\bibnamefont
  {{Mehraban}}},\ }\href@noop {} {\bibfield  {journal} {\bibinfo  {journal}
  {arXiv e-prints}\ ,\ \bibinfo {eid} {arXiv:1809.06957}} (\bibinfo {year}
  {2018})},\ \Eprint {http://arxiv.org/abs/1809.06957} {arXiv:1809.06957
  [quant-ph]} \BibitemShut {NoStop}%
\bibitem [{\citenamefont {{Mezher}}\ \emph {et~al.}(2020)\citenamefont
  {{Mezher}}, \citenamefont {{Ghalbouni}}, \citenamefont {{Dgheim}},\ and\
  \citenamefont {{Markham}}}]{Rawad2019}%
  \BibitemOpen
  \bibfield  {author} {\bibinfo {author} {\bibfnamefont {R.}~\bibnamefont
  {{Mezher}}}, \bibinfo {author} {\bibfnamefont {J.}~\bibnamefont
  {{Ghalbouni}}}, \bibinfo {author} {\bibfnamefont {J.}~\bibnamefont
  {{Dgheim}}}, \ and\ \bibinfo {author} {\bibfnamefont {D.}~\bibnamefont
  {{Markham}}},\ }\href {\doibase 10.3390/e22010092} {\bibfield  {journal}
  {\bibinfo  {journal} {Entropy}\ }\textbf {\bibinfo {volume} {22}},\ \bibinfo
  {pages} {92} (\bibinfo {year} {2020})},\ \Eprint
  {http://arxiv.org/abs/1911.03704} {arXiv:1911.03704 [quant-ph]} \BibitemShut
  {NoStop}%
\bibitem [{\citenamefont {{Haferkamp}}\ \emph {et~al.}(2020)\citenamefont
  {{Haferkamp}}, \citenamefont {{Montealegre-Mora}}, \citenamefont
  {{Heinrich}}, \citenamefont {{Eisert}}, \citenamefont {{Gross}},\ and\
  \citenamefont {{Roth}}}]{Qhomeopathy2020}%
  \BibitemOpen
  \bibfield  {author} {\bibinfo {author} {\bibfnamefont {J.}~\bibnamefont
  {{Haferkamp}}}, \bibinfo {author} {\bibfnamefont {F.}~\bibnamefont
  {{Montealegre-Mora}}}, \bibinfo {author} {\bibfnamefont {M.}~\bibnamefont
  {{Heinrich}}}, \bibinfo {author} {\bibfnamefont {J.}~\bibnamefont
  {{Eisert}}}, \bibinfo {author} {\bibfnamefont {D.}~\bibnamefont {{Gross}}}, \
  and\ \bibinfo {author} {\bibfnamefont {I.}~\bibnamefont {{Roth}}},\
  }\href@noop {} {\bibfield  {journal} {\bibinfo  {journal} {arXiv e-prints}\
  ,\ \bibinfo {eid} {arXiv:2002.09524}} (\bibinfo {year} {2020})},\ \Eprint
  {http://arxiv.org/abs/2002.09524} {arXiv:2002.09524 [quant-ph]} \BibitemShut
  {NoStop}%
\bibitem [{\citenamefont {{Nakata}}\ \emph {et~al.}(2017)\citenamefont
  {{Nakata}}, \citenamefont {{Hirche}}, \citenamefont {{Morgan}},\ and\
  \citenamefont {{Winter}}}]{diagDes2017}%
  \BibitemOpen
  \bibfield  {author} {\bibinfo {author} {\bibfnamefont {Y.}~\bibnamefont
  {{Nakata}}}, \bibinfo {author} {\bibfnamefont {C.}~\bibnamefont {{Hirche}}},
  \bibinfo {author} {\bibfnamefont {C.}~\bibnamefont {{Morgan}}}, \ and\
  \bibinfo {author} {\bibfnamefont {A.}~\bibnamefont {{Winter}}},\ }\href
  {\doibase 10.1063/1.4983266} {\bibfield  {journal} {\bibinfo  {journal}
  {Journal of Mathematical Physics}\ }\textbf {\bibinfo {volume} {58}},\
  \bibinfo {eid} {052203} (\bibinfo {year} {2017})},\ \Eprint
  {http://arxiv.org/abs/1502.07514} {arXiv:1502.07514 [quant-ph]} \BibitemShut
  {NoStop}%
\bibitem [{\citenamefont {{Banchi}}\ \emph {et~al.}(2017)\citenamefont
  {{Banchi}}, \citenamefont {{Burgarth}},\ and\ \citenamefont
  {{Kastoryano}}}]{StochDes2017}%
  \BibitemOpen
  \bibfield  {author} {\bibinfo {author} {\bibfnamefont {L.}~\bibnamefont
  {{Banchi}}}, \bibinfo {author} {\bibfnamefont {D.}~\bibnamefont
  {{Burgarth}}}, \ and\ \bibinfo {author} {\bibfnamefont {M.~J.}\ \bibnamefont
  {{Kastoryano}}},\ }\href {\doibase 10.1103/PhysRevX.7.041015} {\bibfield
  {journal} {\bibinfo  {journal} {Physical Review X}\ }\textbf {\bibinfo
  {volume} {7}},\ \bibinfo {eid} {041015} (\bibinfo {year} {2017})},\ \Eprint
  {http://arxiv.org/abs/1704.03041} {arXiv:1704.03041 [quant-ph]} \BibitemShut
  {NoStop}%
\bibitem [{\citenamefont {Kitaev}\ \emph {et~al.}(2002)\citenamefont {Kitaev},
  \citenamefont {Shen},\ and\ \citenamefont {Vyalyi}}]{Kitaev2002}%
  \BibitemOpen
  \bibfield  {author} {\bibinfo {author} {\bibfnamefont {A.~Y.}\ \bibnamefont
  {Kitaev}}, \bibinfo {author} {\bibfnamefont {A.~H.}\ \bibnamefont {Shen}}, \
  and\ \bibinfo {author} {\bibfnamefont {M.~N.}\ \bibnamefont {Vyalyi}},\
  }\href@noop {} {\emph {\bibinfo {title} {Classical and Quantum
  Computation}}}\ (\bibinfo  {publisher} {American Mathematical Society},\
  \bibinfo {address} {Boston, MA, USA},\ \bibinfo {year} {2002})\BibitemShut
  {NoStop}%
\bibitem [{\citenamefont {{Dawson}}\ and\ \citenamefont
  {{Nielsen}}(2005)}]{Nielsen2005}%
  \BibitemOpen
  \bibfield  {author} {\bibinfo {author} {\bibfnamefont {C.~M.}\ \bibnamefont
  {{Dawson}}}\ and\ \bibinfo {author} {\bibfnamefont {M.~A.}\ \bibnamefont
  {{Nielsen}}},\ }\href@noop {} {\bibfield  {journal} {\bibinfo  {journal}
  {arXiv e-prints}\ ,\ \bibinfo {eid} {quant-ph/0505030}} (\bibinfo {year}
  {2005})},\ \Eprint {http://arxiv.org/abs/quant-ph/0505030}
  {arXiv:quant-ph/0505030 [quant-ph]} \BibitemShut {NoStop}%
\bibitem [{\citenamefont {{Sardharwalla}}\ \emph {et~al.}(2016)\citenamefont
  {{Sardharwalla}}, \citenamefont {{Cubitt}}, \citenamefont {{Harrow}},\ and\
  \citenamefont {{Linden}}}]{Harrow2016}%
  \BibitemOpen
  \bibfield  {author} {\bibinfo {author} {\bibfnamefont {I.~S.~B.}\
  \bibnamefont {{Sardharwalla}}}, \bibinfo {author} {\bibfnamefont {T.~S.}\
  \bibnamefont {{Cubitt}}}, \bibinfo {author} {\bibfnamefont {A.~W.}\
  \bibnamefont {{Harrow}}}, \ and\ \bibinfo {author} {\bibfnamefont
  {N.}~\bibnamefont {{Linden}}},\ }\href@noop {} {\bibfield  {journal}
  {\bibinfo  {journal} {arXiv e-prints}\ ,\ \bibinfo {eid} {arXiv:1602.07963}}
  (\bibinfo {year} {2016})},\ \Eprint {http://arxiv.org/abs/1602.07963}
  {arXiv:1602.07963 [quant-ph]} \BibitemShut {NoStop}%
\bibitem [{\citenamefont {{Bouland}}\ and\ \citenamefont
  {{Ozols}}(2017)}]{BoulandOzols2017}%
  \BibitemOpen
  \bibfield  {author} {\bibinfo {author} {\bibfnamefont {A.}~\bibnamefont
  {{Bouland}}}\ and\ \bibinfo {author} {\bibfnamefont {M.}~\bibnamefont
  {{Ozols}}},\ }\href@noop {} {\bibfield  {journal} {\bibinfo  {journal} {arXiv
  e-prints}\ ,\ \bibinfo {eid} {arXiv:1712.09798}} (\bibinfo {year} {2017})},\
  \Eprint {http://arxiv.org/abs/1712.09798} {arXiv:1712.09798 [quant-ph]}
  \BibitemShut {NoStop}%
\bibitem [{\citenamefont {{Harrow}}\ \emph {et~al.}(2002)\citenamefont
  {{Harrow}}, \citenamefont {{Recht}},\ and\ \citenamefont
  {{Chuang}}}]{Harrow2002}%
  \BibitemOpen
  \bibfield  {author} {\bibinfo {author} {\bibfnamefont {A.~W.}\ \bibnamefont
  {{Harrow}}}, \bibinfo {author} {\bibfnamefont {B.}~\bibnamefont {{Recht}}}, \
  and\ \bibinfo {author} {\bibfnamefont {I.~L.}\ \bibnamefont {{Chuang}}},\
  }\href {\doibase 10.1063/1.1495899} {\bibfield  {journal} {\bibinfo
  {journal} {Journal of Mathematical Physics}\ }\textbf {\bibinfo {volume}
  {43}},\ \bibinfo {pages} {4445} (\bibinfo {year} {2002})},\ \Eprint
  {http://arxiv.org/abs/quant-ph/0111031} {arXiv:quant-ph/0111031 [quant-ph]}
  \BibitemShut {NoStop}%
\bibitem [{\citenamefont {Hastings}\ and\ \citenamefont
  {Harrow}(2009)}]{HArrowHasstings2008}%
  \BibitemOpen
  \bibfield  {author} {\bibinfo {author} {\bibfnamefont {M.~B.}\ \bibnamefont
  {Hastings}}\ and\ \bibinfo {author} {\bibfnamefont {A.~W.}\ \bibnamefont
  {Harrow}},\ }\href@noop {} {\bibfield  {journal} {\bibinfo  {journal}
  {Quantum Info. Comput.}\ }\textbf {\bibinfo {volume} {9}},\ \bibinfo {pages}
  {336–360} (\bibinfo {year} {2009})}\BibitemShut {NoStop}%
\bibitem [{\citenamefont {{Low}}(2010)}]{RLOWPHD2010}%
  \BibitemOpen
  \bibfield  {author} {\bibinfo {author} {\bibfnamefont {R.~A.}\ \bibnamefont
  {{Low}}},\ }\emph {\bibinfo {title} {{Pseudo-randomness and Learning in
  Quantum Computation}}},\ \href@noop {} {Ph.D. thesis},\ \bibinfo  {school}
  {-} (\bibinfo {year} {2010})\BibitemShut {NoStop}%
\bibitem [{\citenamefont {Aubrun}\ and\ \citenamefont
  {Szarek}(2017)}]{SzarekBook}%
  \BibitemOpen
  \bibfield  {author} {\bibinfo {author} {\bibfnamefont {G.}~\bibnamefont
  {Aubrun}}\ and\ \bibinfo {author} {\bibfnamefont {S.~J.}\ \bibnamefont
  {Szarek}},\ }\href {https://cds.cern.ch/record/2296008} {\emph {\bibinfo
  {title} {{Alice and Bob meet Banach: the interface of asymptotic geometric
  analysis and quantum information theory}}}},\ Mathematical surveys and
  monographs\ (\bibinfo  {publisher} {American Mathematical Society},\ \bibinfo
  {address} {Providence, RI},\ \bibinfo {year} {2017})\BibitemShut {NoStop}%
\bibitem [{\citenamefont {Hall}(2000)}]{HallGroups}%
  \BibitemOpen
  \bibfield  {author} {\bibinfo {author} {\bibfnamefont {B.~C.}\ \bibnamefont
  {Hall}},\ }\href@noop {} {\emph {\bibinfo {title} {{L}ie Groups, Lie
  Algebras, and Representations: An Elementary Introduction}}}\ (\bibinfo
  {publisher} {Springer},\ \bibinfo {year} {2000})\BibitemShut {NoStop}%
\bibitem [{\citenamefont {Nielsen}\ and\ \citenamefont
  {Chuang}(2011)}]{NielsenBook}%
  \BibitemOpen
  \bibfield  {author} {\bibinfo {author} {\bibfnamefont {M.~A.}\ \bibnamefont
  {Nielsen}}\ and\ \bibinfo {author} {\bibfnamefont {I.~L.}\ \bibnamefont
  {Chuang}},\ }\href@noop {} {\emph {\bibinfo {title} {Quantum Computation and
  Quantum Information: 10th Anniversary Edition}}},\ \bibinfo {edition} {10th}\
  ed.\ (\bibinfo  {publisher} {Cambridge University Press},\ \bibinfo {address}
  {USA},\ \bibinfo {year} {2011})\BibitemShut {NoStop}%
\bibitem [{\citenamefont {{Brand{\~a}o}}\ \emph
  {et~al.}(2016{\natexlab{b}})\citenamefont {{Brand{\~a}o}}, \citenamefont
  {{Harrow}},\ and\ \citenamefont {{Horodecki}}}]{BHHPRL2016}%
  \BibitemOpen
  \bibfield  {author} {\bibinfo {author} {\bibfnamefont {F.~G.~S.~L.}\
  \bibnamefont {{Brand{\~a}o}}}, \bibinfo {author} {\bibfnamefont {A.~W.}\
  \bibnamefont {{Harrow}}}, \ and\ \bibinfo {author} {\bibfnamefont
  {M.}~\bibnamefont {{Horodecki}}},\ }\href {\doibase
  10.1103/PhysRevLett.116.170502} {\bibfield  {journal} {\bibinfo  {journal}
  {\prl}\ }\textbf {\bibinfo {volume} {116}},\ \bibinfo {eid} {170502}
  (\bibinfo {year} {2016}{\natexlab{b}})},\ \Eprint
  {http://arxiv.org/abs/1605.00713} {arXiv:1605.00713 [quant-ph]} \BibitemShut
  {NoStop}%
\bibitem [{\citenamefont {{Sawicki}}\ and\ \citenamefont
  {{Karnas}}(2017{\natexlab{a}})}]{SK2017}%
  \BibitemOpen
  \bibfield  {author} {\bibinfo {author} {\bibfnamefont {A.}~\bibnamefont
  {{Sawicki}}}\ and\ \bibinfo {author} {\bibfnamefont {K.}~\bibnamefont
  {{Karnas}}},\ }\href {\doibase 10.1007/s00023-017-0604-z} {\bibfield
  {journal} {\bibinfo  {journal} {Annales Henri Poincare}\ }\textbf {\bibinfo
  {volume} {18}},\ \bibinfo {pages} {3515} (\bibinfo {year}
  {2017}{\natexlab{a}})},\ \Eprint {http://arxiv.org/abs/1609.05780}
  {arXiv:1609.05780 [quant-ph]} \BibitemShut {NoStop}%
\bibitem [{\citenamefont {{Sawicki}}\ and\ \citenamefont
  {{Karnas}}(2017{\natexlab{b}})}]{SK2017bis}%
  \BibitemOpen
  \bibfield  {author} {\bibinfo {author} {\bibfnamefont {A.}~\bibnamefont
  {{Sawicki}}}\ and\ \bibinfo {author} {\bibfnamefont {K.}~\bibnamefont
  {{Karnas}}},\ }\href {\doibase 10.1103/PhysRevA.95.062303} {\bibfield
  {journal} {\bibinfo  {journal} {Phys. Rev. A}\ }\textbf {\bibinfo {volume}
  {95}},\ \bibinfo {pages} {062303} (\bibinfo {year} {2017}{\natexlab{b}})},\
  \Eprint {http://arxiv.org/abs/1610.00547} {arXiv:1610.00547 [quant-ph]}
  \BibitemShut {NoStop}%
\bibitem [{\citenamefont {Zimbor\'as}\ \emph {et~al.}(2015)\citenamefont
  {Zimbor\'as}, \citenamefont {Zeier}, \citenamefont {Schulte-Herbr\"uggen},\
  and\ \citenamefont {Burgarth}}]{Zimboras2015}%
  \BibitemOpen
  \bibfield  {author} {\bibinfo {author} {\bibfnamefont {Z.}~\bibnamefont
  {Zimbor\'as}}, \bibinfo {author} {\bibfnamefont {R.}~\bibnamefont {Zeier}},
  \bibinfo {author} {\bibfnamefont {T.}~\bibnamefont {Schulte-Herbr\"uggen}}, \
  and\ \bibinfo {author} {\bibfnamefont {D.}~\bibnamefont {Burgarth}},\ }\href
  {\doibase 10.1103/PhysRevA.92.042309} {\bibfield  {journal} {\bibinfo
  {journal} {Phys. Rev. A}\ }\textbf {\bibinfo {volume} {92}},\ \bibinfo
  {pages} {042309} (\bibinfo {year} {2015})}\BibitemShut {NoStop}%
\bibitem [{\citenamefont {{Oszmaniec}}\ and\ \citenamefont
  {{Zimbor{\'a}s}}(2017)}]{OZ2017}%
  \BibitemOpen
  \bibfield  {author} {\bibinfo {author} {\bibfnamefont {M.}~\bibnamefont
  {{Oszmaniec}}}\ and\ \bibinfo {author} {\bibfnamefont {Z.}~\bibnamefont
  {{Zimbor{\'a}s}}},\ }\href {\doibase 10.1103/PhysRevLett.119.220502}
  {\bibfield  {journal} {\bibinfo  {journal} {\prl}\ }\textbf {\bibinfo
  {volume} {119}},\ \bibinfo {eid} {220502} (\bibinfo {year} {2017})},\ \Eprint
  {http://arxiv.org/abs/1705.11188} {arXiv:1705.11188 [quant-ph]} \BibitemShut
  {NoStop}%
\bibitem [{\citenamefont {{Roberts}}\ and\ \citenamefont
  {{Yoshida}}(2017{\natexlab{b}})}]{Roberts-Yosida2017}%
  \BibitemOpen
  \bibfield  {author} {\bibinfo {author} {\bibfnamefont {D.~A.}\ \bibnamefont
  {{Roberts}}}\ and\ \bibinfo {author} {\bibfnamefont {B.}~\bibnamefont
  {{Yoshida}}},\ }\href {\doibase 10.1007/JHEP04(2017)121} {\bibfield
  {journal} {\bibinfo  {journal} {Journal of High Energy Physics}\ }\textbf
  {\bibinfo {volume} {2017}},\ \bibinfo {eid} {121} (\bibinfo {year}
  {2017}{\natexlab{b}})},\ \Eprint {http://arxiv.org/abs/1610.04903}
  {arXiv:1610.04903 [quant-ph]} \BibitemShut {NoStop}%
\bibitem [{\citenamefont {{Susskind}}(2018)}]{Suskind2018}%
  \BibitemOpen
  \bibfield  {author} {\bibinfo {author} {\bibfnamefont {L.}~\bibnamefont
  {{Susskind}}},\ }\href@noop {} {\bibfield  {journal} {\bibinfo  {journal}
  {arXiv e-prints}\ ,\ \bibinfo {eid} {arXiv:1810.11563}} (\bibinfo {year}
  {2018})},\ \Eprint {http://arxiv.org/abs/1810.11563} {arXiv:1810.11563
  [hep-th]} \BibitemShut {NoStop}%
\bibitem [{\citenamefont {{Brand{\~a}o}}\ \emph {et~al.}(2019)\citenamefont
  {{Brand{\~a}o}}, \citenamefont {{Chemissany}}, \citenamefont
  {{Hunter-Jones}}, \citenamefont {{Kueng}},\ and\ \citenamefont
  {{Preskill}}}]{ModelsComplexity2019}%
  \BibitemOpen
  \bibfield  {author} {\bibinfo {author} {\bibfnamefont {F.~G.~S.~L.}\
  \bibnamefont {{Brand{\~a}o}}}, \bibinfo {author} {\bibfnamefont
  {W.}~\bibnamefont {{Chemissany}}}, \bibinfo {author} {\bibfnamefont
  {N.}~\bibnamefont {{Hunter-Jones}}}, \bibinfo {author} {\bibfnamefont
  {R.}~\bibnamefont {{Kueng}}}, \ and\ \bibinfo {author} {\bibfnamefont
  {J.}~\bibnamefont {{Preskill}}},\ }\href@noop {} {\bibfield  {journal}
  {\bibinfo  {journal} {arXiv e-prints}\ ,\ \bibinfo {eid} {arXiv:1912.04297}}
  (\bibinfo {year} {2019})},\ \Eprint {http://arxiv.org/abs/1912.04297}
  {arXiv:1912.04297 [hep-th]} \BibitemShut {NoStop}%
\bibitem [{\citenamefont {Fulton}\ and\ \citenamefont
  {Harris}(1991)}]{fulton1991representation}%
  \BibitemOpen
  \bibfield  {author} {\bibinfo {author} {\bibfnamefont {W.}~\bibnamefont
  {Fulton}}\ and\ \bibinfo {author} {\bibfnamefont {J.}~\bibnamefont
  {Harris}},\ }\href {https://books.google.pl/books?id=TuQZAQAAIAAJ} {\emph
  {\bibinfo {title} {Representation Theory: A First Course}}},\ Graduate Texts
  in Mathematics\ (\bibinfo  {publisher} {Springer New York},\ \bibinfo {year}
  {1991})\BibitemShut {NoStop}%
\bibitem [{\citenamefont {Szarek}(1998)}]{Szarek98}%
  \BibitemOpen
  \bibfield  {author} {\bibinfo {author} {\bibfnamefont {S.~J.}\ \bibnamefont
  {Szarek}},\ }\href {http://matwbn.icm.edu.pl/ksiazki/bcp/bcp43/bcp43138.pdf}
  {\bibfield  {journal} {\bibinfo  {journal} {QUANTUM PROBABILITY}\ }\textbf
  {\bibinfo {volume} {43}},\ \bibinfo {pages} {395} (\bibinfo {year}
  {1998})}\BibitemShut {NoStop}%
\bibitem [{\citenamefont {Roy}\ and\ \citenamefont {Scott}(2009)}]{scott}%
  \BibitemOpen
  \bibfield  {author} {\bibinfo {author} {\bibfnamefont {A.}~\bibnamefont
  {Roy}}\ and\ \bibinfo {author} {\bibfnamefont {A.~J.}\ \bibnamefont
  {Scott}},\ }\href {\doibase 10.1007/s10623-009-9290-2} {\bibfield  {journal}
  {\bibinfo  {journal} {Designs, Codes and Cryptography}\ }\textbf {\bibinfo
  {volume} {53}},\ \bibinfo {pages} {13–31} (\bibinfo {year}
  {2009})}\BibitemShut {NoStop}%
\bibitem [{\citenamefont {{Gasquet}}\ and\ \citenamefont
  {{Witomski}}(1999)}]{Poisson}%
  \BibitemOpen
  \bibfield  {author} {\bibinfo {author} {\bibfnamefont {C.}~\bibnamefont
  {{Gasquet}}}\ and\ \bibinfo {author} {\bibfnamefont {P.}~\bibnamefont
  {{Witomski}}},\ }\href@noop {} {\emph {\bibinfo {title} {Fourier Analysis and
  Applications}}}\ (\bibinfo  {publisher} {Springer},\ \bibinfo {year}
  {1999})\BibitemShut {NoStop}%
\bibitem [{\citenamefont {Goodman}\ and\ \citenamefont
  {Wallach}(2009)}]{WalachBook}%
  \BibitemOpen
  \bibfield  {author} {\bibinfo {author} {\bibfnamefont {R.~W.}\ \bibnamefont
  {Goodman}}\ and\ \bibinfo {author} {\bibfnamefont {N.~R.}\ \bibnamefont
  {Wallach}},\ }\href {https://cds.cern.ch/record/1412002} {\emph {\bibinfo
  {title} {{Symmetry, Representations, and Invariants}}}},\ Graduate Texts in
  Mathematics\ (\bibinfo  {publisher} {Springer},\ \bibinfo {address}
  {Dordrecht},\ \bibinfo {year} {2009})\BibitemShut {NoStop}%
\bibitem [{\citenamefont {Macdonald}(1982)}]{Mehta-integral}%
  \BibitemOpen
  \bibfield  {author} {\bibinfo {author} {\bibfnamefont {I.~G.}\ \bibnamefont
  {Macdonald}},\ }\href {\doibase doi:10.1137/0513070} {\bibfield  {journal}
  {\bibinfo  {journal} {SIAM Journal on Mathematical Analysis}\ }\textbf
  {\bibinfo {volume} {13}},\ \bibinfo {pages} {988} (\bibinfo {year}
  {1982})}\BibitemShut {NoStop}%
\bibitem [{\citenamefont {Johnston}\ \emph {et~al.}(2009)\citenamefont
  {Johnston}, \citenamefont {Kribs},\ and\ \citenamefont
  {Paulsen}}]{DiamondNormCharact}%
  \BibitemOpen
  \bibfield  {author} {\bibinfo {author} {\bibfnamefont {N.}~\bibnamefont
  {Johnston}}, \bibinfo {author} {\bibfnamefont {D.~W.}\ \bibnamefont {Kribs}},
  \ and\ \bibinfo {author} {\bibfnamefont {V.~I.}\ \bibnamefont {Paulsen}},\
  }\href@noop {} {\bibfield  {journal} {\bibinfo  {journal} {Quantum Info.
  Comput.}\ }\textbf {\bibinfo {volume} {9}},\ \bibinfo {pages} {16–35}
  (\bibinfo {year} {2009})}\BibitemShut {NoStop}%
\bibitem [{\citenamefont {Bronshtein}\ \emph {et~al.}(2007)\citenamefont
  {Bronshtein}, \citenamefont {Semendiaev}, \citenamefont {Musiol},\ and\
  \citenamefont {Muhlig}}]{HandbookOFMAthematics}%
  \BibitemOpen
  \bibfield  {author} {\bibinfo {author} {\bibfnamefont {I.~N.}\ \bibnamefont
  {Bronshtein}}, \bibinfo {author} {\bibfnamefont {K.~A.}\ \bibnamefont
  {Semendiaev}}, \bibinfo {author} {\bibfnamefont {G.}~\bibnamefont {Musiol}},
  \ and\ \bibinfo {author} {\bibfnamefont {H.}~\bibnamefont {Muhlig}},\
  }\href@noop {} {\emph {\bibinfo {title} {Handbook of Mathematics}}}\
  (\bibinfo  {publisher} {Springer},\ \bibinfo {year} {2007})\BibitemShut
  {NoStop}%
\end{thebibliography}%

\section{Appendix}

In the appendix we provide technical details not included in the main text. Specifically, in Part \ref{app:spectralGAPdacay} we prove the relation between expander norms and restricted gaps on projective unitary group. This allows us to complete the proof of Theorem \ref{thm:Varju}. In Part \ref{app:ClassFUNC} we give the proof of Lemma \ref{lem:PolynDegree}. In the rest of the Appendix we technical results important for the construction of a  polynomial approximation of Dirac delta on $\U(d)$. Part \ref{app:torusEST} contains proofs of certain properties of of periodized Gaussian in $\T^d$ (and its extension to the unitary group $\UU(d)$). The latter Part \ref{app:polyCONVERG} presents estimates for the convergence of the polynomial truncation of this function in suitable norms. Finally, in Part \ref{app:auxTechnical} we gather auxiliary results and facts that are used in ealier sections of the Appendix. 

\subsection{Proof of Theorem \ref{thm:Varju}}\label{app:spectralGAPdacay}
Before we use the results of \cite{Varju2013} we need to introduce a couple of concepts from representation theory (we refer the reader to \cite{HallGroups} for the comprehensive introduction to representation theory of semisimple Lie groups and Lie algebras). Let $G$ be a compact semisimple Lie group and let $\Pi$ be a representation of $G$ in a finite-dimensional Hilbert space $\K$. Let $\pi$ be the associated representation of the Lie algebra of $G$ denoted by $\mathfrak{g}$. Let $\mathfrak{t}$ be a Lie algebra of the maximal torus in $G$. Weights $\alpha\in\mathfrak{t}$  encode joint eigenvalues of elements   $X\in\mathfrak{t}$  in irreducible representations of $G$. In other words for every weight $\alpha$ there exist a representation $\pi$ and a vector $\ket{\psi_\alpha}$ such that for all $X\in\mathfrak{t}$ 
\begin{equation}
    \pi(X)\ket{\psi_\alpha}=\ii \langle \alpha, X \rangle \ket{\psi_\alpha}\ ,
\end{equation}
where $\langle \cdot,\cdot \rangle$ is a non-degenerate inner product in $\mathfrak{g}$ induced by the Killing form. Recall that irreducible representations of compact connected semisimple Lie Groups are finite-dimensional labelled by the so-called highest weights $\lambda$, which are weights that satisfy some additional technical properties. Moreover, for every  finite-dimensional representation $\K$ of $G$ it is possible to chose a basis consisting of weight vectors. 

In \cite{Varju2013} the author considers the decay of spectral gap of $T_{\nu}$ for semisimple compact Lie groups, $G$. In particular he focuses on restriction, $T_\nu|_{\K_r}$, of the operator $T_\nu$ to the space $\K_r$ defined by
\begin{equation}\label{eq:defKr}
    \K_r =\bigoplus_{\lambda:\ 0<\|\lambda\|_G\leq r } \K_\lambda\ ,
\end{equation}
where $\K_\lambda$ denotes the irreducible representation of highest weight $\lambda$ and $\|\lambda\|_G=\sqrt{\langle\lambda,\lambda\rangle}$ is the norm of $\lambda$ induced by the Killing form in $\mathfrak{g}$. In what follows we will be extensivelly using the notation $\K_\lambda \subset_{\scriptscriptstyle G} \K$ to denote the situation in which irreducible representation $\K_\lambda$ of $G$ appears in in the decomposition of $\K$ onto irreducible components.

The spectral gap of $T_\nu|_{\K_r}$ is then defined as 
\begin{equation}
\mathrm{gap}_{r}(G,\nu)=1-\|T_\nu|_{\K_r}\|_{\infty}\ .
\end{equation} 
In this setting the following theorem holds 
\begin{Theorem}[Theorem 6 in \cite{Varju2013}]\label{Varju-original}
For every semisimple compact connected Lie group $G$, there are numbers $c$ and $r_0$ such that the following holds. Let $\nu$ be an arbitrary probability measure on $G$. Then
\begin{equation}\label{varju-slow}
    \mathrm{gap}_{r}(G,\nu)\geq c\,\mathrm{gap}_{r_0}(G,\nu)\log^{-2}(r)\ .
\end{equation}
\end{Theorem}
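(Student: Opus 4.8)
The statement to be proven — reproduced as Theorem \ref{thm:Varju} in the language of tensor-power expanders — is the translation of Theorem \ref{Varju-original} (Theorem 6 of \cite{Varju2013}) into the setting where the semisimple compact connected Lie group $G$ is taken to be the group of unitary channels $\U(d) \cong \mathrm{P}\UU(d)$. Since Theorem \ref{Varju-original} is quoted verbatim from \cite{Varju2013}, the content of the present proof is the \emph{dictionary} between the representation-theoretic quantities $\mathrm{gap}_r(G,\nu)$ appearing there and the expander norms $\delta(\nu,t)=\|T_{\nu,t}-T_{\mu,t}\|_\infty$ appearing here, together with a comparison of the two ways of organizing irreducible representations (by Killing-form norm of the highest weight versus by polynomial degree $t$). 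The plan is: (i) identify the irreps of $\U(d)$ that occur in $\hcal_t$ and show that the restricted operator $T_\nu|_{\hcal_t}$ is unitarily equivalent to a direct sum of $\int d\nu(U)\,\Pi^\lambda(U)$ over exactly these $\lambda$ (this is essentially Proposition \ref{prop:gaps}, which already gives $\delta(\nu,t)=\|T_\nu|_{\hcal_t}-T_\mu|_{\hcal_t}\|_\infty = 1-g(\nu,t)$); (ii) compare the index set $\{\lambda : \K_\lambda \subset_{\scriptscriptstyle G} \hcal_t\}$ with the ball $\{\lambda : 0<\|\lambda\|_G \le r\}$ defining $\K_r$; (iii) choose $r$ as a function of $t$ so that $\K_r \subseteq \hcal_t$ (equivalently every highest weight of norm $\le r$ arises from a balanced polynomial of degree $\le t$), conclude $\|T_\nu|_{\hcal_t}\|_\infty \le \|T_\nu|_{\K_r}\|_\infty$, hence $\mathrm{gap}_r(G,\nu) \le g(\nu,t)=1-\delta(\nu,t)$; and (iv) feed this into \eqref{varju-slow} with $r\simeq \mathrm{poly}(t)$ (so $\log r \simeq \log t$ up to constants absorbed into $D$), and similarly bound $\mathrm{gap}_{r_0}(G,\nu)$ below by $1-\delta(\nu,t_0)$ for a suitable $t_0$ depending on $r_0$ and $d$, yielding exactly the claimed inequality $\|T_{\nu,t}-T_{\mu,t}\|_\infty \le 1 - \frac{1-\delta(\nu,t_0)}{D\log(t)^2}$.

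\textbf{Details of the key steps.} For step (i), recall that the irreducible polynomial representations of $\UU(d)$ occurring in $U^{\ot t}\ot \bar U^{\ot t}$ are the $\Pi^\lambda$ with highest weights $\lambda=(\lambda_1\ge\cdots\ge\lambda_d)$, $\lambda_i\in\Z$, satisfying $\sum_i |\lambda_i| \le 2t$ and $\sum_i \lambda_i = 0$ (the trace-zero condition being the passage to $\mathrm{P}\UU(d)$); these are precisely the ones whose matrix coefficients are balanced polynomials of degree $\le t$ in $U$ and $\bar U$, and they span $\hcal_t$. The argument in the proof of Proposition \ref{prop:gaps} — comparing the decomposition of the left-regular action on $\hcal_t$ with the decomposition of $U^{\ot t}\ot\bar U^{\ot t}$ — establishes the unitary equivalence $T_\nu|_{\hcal_t}\approx\bigoplus_\lambda\int d\nu(U)\,\Pi^\lambda(U)$ over this index set. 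For steps (ii)–(iii), I would use that the Killing-form norm $\|\lambda\|_G$ of a highest weight of $\mathrm{P}\UU(d)$ is, up to a fixed $d$-dependent constant, comparable to $|\lambda|_1 = \sum_i|\lambda_i|$; hence the "Killing ball" $\{0<\|\lambda\|_G\le r\}$ is contained in $\{1\le |\lambda|_1\le c_d\, r\}$, and choosing $t \ge c_d' r$ (some explicit constant depending only on $d$) guarantees $\K_r \subseteq \hcal_t$ as a subrepresentation of the appropriate function space. Conversely $r \simeq t$ up to $d$-dependent constants, so $\log r = \log t + O_d(1)$, and this additive constant is harmlessly absorbed by enlarging the multiplicative constant in front of $\log^2$. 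Set $D$ to be the resulting constant built from the $c$ of Theorem \ref{Varju-original} and these comparison constants, and set $t_0$ to be (say) $\lceil c_d' r_0\rceil$ so that $\K_{r_0}\subseteq\hcal_{t_0}$ and therefore $\mathrm{gap}_{r_0}(G,\nu) \ge 1-\delta(\nu,t_0)$.

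\textbf{Expected main obstacle.} The conceptual content is entirely in making the correspondence between the two indexings precise and in being careful that restricting to a \emph{sub}representation can only \emph{decrease} the operator norm on the orthogonal complement of the trivial component (so inequalities point the right way) — this requires noting that $\K_r$, $\hcal_t$ and $L^2_0$ all carry the left-regular action and that $T_\nu$ is a convex combination of the corresponding unitaries, so its norm on any invariant subspace is bounded by its norm on any larger invariant subspace. The genuinely fiddly part I anticipate is step (ii): verifying the two-sided comparison between $\|\lambda\|_G$ (Killing form on the root system $A_{d-1}$) and $|\lambda|_1$ with explicit, $\ep$- and $t$-independent constants, and correctly handling the passage from $\UU(d)$-weights to $\mathrm{P}\UU(d)$-weights (projecting out the direction of $(1,1,\ldots,1)$). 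None of this is deep, but it is where one can easily mis-state a constant; since the theorem only claims existence of $D$ and $t_0$ depending on $d$, I would keep these comparisons qualitative and cite standard facts about root systems rather than grinding out the optimal constants.
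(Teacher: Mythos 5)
Your overall route is the same as the paper's: Theorem \ref{Varju-original} itself is imported from \cite{Varju2013} without proof, and the real work is the dictionary carried out in the appendix proving Theorem \ref{thm:Varju} — use Proposition \ref{prop:gaps} to identify $\delta(\nu,t)$ with the norm of $T_\nu$ on the nontrivial part of $\H_t$, compare the degree-indexed spaces $\H_t$ with the Killing-norm-indexed spaces $\K_r$, and substitute into \eqref{varju-slow}. Your step (iv) is correct and matches the paper: $\K_{r_0}\subseteq\H_{t_0}$ for some $t_0=t_0(d,r_0)$ gives $\mathrm{gap}_{r_0}(G,\nu)\geq 1-\delta(\nu,t_0)$ (the paper obtains this containment from faithfulness of the adjoint component of $\Pi^{1,1}$, so that every irrep eventually occurs in $(\Pi^{1,1})^{\otimes t_0}$).

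However, step (iii) as written has the containment pointing the wrong way. To bound $\delta(\nu,t)=\|T_\nu|_{\H_t\cap L^2_0}\|_\infty$ from above by $\|T_\nu|_{\K_r}\|_\infty$ — equivalently to conclude $g(\nu,t)\geq\mathrm{gap}_r(G,\nu)$ — you need $\K_r$ to be the \emph{larger} invariant subspace, i.e. $\H_t\cap L^2_0\subseteq\K_r$: every nontrivial highest weight occurring in $\H_t$ must satisfy $\|\lambda\|_G\leq r$. You instead assert $\K_r\subseteq\H_t$ and from it conclude $\|T_\nu|_{\H_t}\|_\infty\leq\|T_\nu|_{\K_r}\|_\infty$, which contradicts the (correct) monotonicity principle you state yourself in your last paragraph: the norm on a subspace is bounded by the norm on any \emph{larger} invariant subspace. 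The paper gets the needed containment by noting that every weight of $(\Pi^{1,1})^{\otimes t}$ is a sum of $t$ weights of $\Pi^{1,1}$, so by the triangle inequality $\|\lambda\|_G\leq at$ with $a=a(d)$, whence $\H_t\cap L^2_0\subseteq\K_{at}$ and $\log(at)\leq C\log t$ is absorbed into the constant $D$. Your proposed two-sided comparison $\|\lambda\|_G\simeq|\lambda|_1$ on the $A_{d-1}$ weight lattice would also deliver this — the direction you actually need is $\|\lambda\|_G\leq c_d\,|\lambda|_1\leq 2c_d\,t$ for $\lambda$ occurring in $\H_t$ — so the fix is immediate, but as stated the central inequality of step (iii) does not follow from the stated premise.
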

\noindent  In what follows we apply the above result for $G=\U(d)$. In order to see how Theorem \ref{thm:Varju} follows from  Theorem \ref{Varju-original} let us first note that inequality \eqref{varju-slow} can be written as 
\begin{equation}\label{eq:VARJUbound}
\|T_\nu|_{\K_r}\|_{\infty}\leq 1-\frac{1-\|T_\nu|_{\K_{r_0}}\|_{\infty}}{(1/c)\log^2(r)}\ .
\end{equation} 
We thus aim to find the relation between $r$ and $t$. To this end we first notice that representation $\Pi^{1,1}(\U)\coloneqq U\otimes \bar U$ decomposes as the direct sum of the adjoint representation $\mathrm{Ad}$ and the trivial representation. Furthermore the adjoint representation is the faithful representation of $\U(d)$. Thus every irreducible representation $\mathcal{H}_\lambda$ will appear in the decomposition of $(\Pi^{1,1})^{\otimes t}$ into irreducible components for sufficiently large $t$. The same can be said about the decomposition of $\H_t$ (on which $\U(d)$ acts via its regular representation) into irreducible representations of $\U(d)$ . This follows from considerations given in Section \ref{sec:mixingOP} and the fact that representations $\Pi_A(U)= U^{\ot t}\ot \bar{U}^{\otimes t}$ is equivalent to $\Pi_B(U)= \left(\Pi^{1,1}(U)\right)^{\otimes t}$. Using this observation we conclude that there exist $t_0$ such that for all $\lambda$ satisfying $0<\|\lambda\|_G\leq r_0$  we have $\K_\lambda \subset_{\scriptscriptstyle \U(d)}\H_{t_0}$ (for some suitable $t_0$ depending on $r_0$). Therefore we have 
\begin{equation}\label{eq:NORMr0}
    \|T_\nu|_{\K_{r_0}}\|_\infty \leq  \|T_\nu|_{\H_{t_0}} -T_\mu|_{\H_{t_0}}\|_\infty =\|T_{\nu,t_0} -T_{\mu,t_0}\|_\infty \ ,
\end{equation}
where $T_\mu|_{\H_{t_0}}$ is a projector onto a trivial representation (space of constant functions) in $\H_t$.   The first inequality in the above equation comes from the fact that (c.f. Eq.\eqref{eq:defKr}) $\K_r$ does not contain trivial representations of $\U(d)$. The equality follows form Proposition \ref{prop:gaps} in Section \ref{sec:mixingOP}.

In the second step we observe that all irreducible representations appearing in $(\Pi^{1,1})^{\ot t}$ have highest weights of magnitude $\|\lambda\|_G\leq a t$, where $a$ depens only on $d$. This follows from the fact that weight vectors associated with the representation $(\Pi^{1,1})^{\ot t}$ can be chosen to have tensor product structure i.e.
\begin{equation}
    \ket{\psi_\beta}= \ket{\psi_{\alpha_1}}\otimes \ket{\psi_{\alpha_2}}\ot\ldots\ot \ket{\psi_{\alpha_t}}\ ,
\end{equation}
where $\ket{\psi_{\alpha_2}}\in\C^d\ot\C^d$ is a weight vector in  the representation $\Pi^{1,1}$ . Consequently, all weights $\beta$ occouring in the  representation $(\Pi^{1,1})^{\ot t}$ (or equivalently the regular representation of $\U(d)$ restricted to the function space $\H_t$) are sums of weights associated with $\Pi^{1,1}$: $\beta=\sum_{i=1}^t\alpha_i$. Using triangle inequality we obtain that for all $\lambda$ such that $K_\lambda \subset_{\scriptscriptstyle \U(d)}\H_{t}$
\begin{equation}\label{eq:weightNORM}
   \|\lambda\|_{\U(d)}\leq t \max_\alpha \|\alpha\|_{\U(d)}\ ,
\end{equation}
where the optimization is over all weights $\alpha$ appearing in $\Pi^{1,1}$. It is clear that $\max_\alpha \|\alpha\|_{\U(d)}=a$ depends only on the dimension $d$. 

Inequalities \eqref{eq:NORMr0} and \eqref{eq:weightNORM} allow us to employ Eq.\eqref{eq:VARJUbound} in the context of expander norms. First, from \eqref{eq:weightNORM} it follows that  every nontrivial representation $\K_\lambda\subset_{\scriptscriptstyle \U(d)}\H_t$ satisfies also $\K_\lambda\subset_{\scriptscriptstyle \U(d)}\K_{at}$ and consequently
\begin{equation}
\left\|T_{\nu,t}-T_{\mu,t} \right\|_\infty  \leq  \|T_\nu|_{\K_{a t}}\|_\infty \ .
\end{equation}
Applying to the above first \eqref{eq:VARJUbound} and then \eqref{eq:NORMr0} gives
\begin{equation}
    \left\|T_{\nu,t}-T_{\mu,t} \right\|_\infty \leq 1-\frac{1-\|T_\nu|_{\K_{r_0}}\|_{\infty}}{(1/c)\log^2(at)} \leq 1-\frac{1-\left\|T_{\nu,t_0}-T_{\mu,t_0} \right\|_\infty}{(1/c)\log^2(at)} \ .
\end{equation}
It is now easy to verify that for $t\geq 2$ $\log(at)\leq C\log(t)$ for $C=1+\frac{\log(a)}{\log(2)}$ and therefore by setting $B=C^2/c$ we obtain the claimed result: for $t\geq t_0$ ($t_0$ is defined above Eq.\eqref{eq:NORMr0}) and for any probability measure $\nu$ on $\U(d)$ we have 
\begin{equation}
    \left\|T_{\nu,t}-T_{\mu,t} \right\|_\infty  \leq 1-\frac{1-\left\|T_{\nu,t_0}-T_{\mu,t_0} \right\|_\infty}{B\log^2(t)} \ .
\end{equation}

\subsection{Proof of Lemma \ref{lem:PolynDegree}} \label{app:ClassFUNC}

In order to prove Lemma \ref{lem:PolynDegree} we introduce the following notation. Let $\n=(n_1,\ldots,n_d)\in \mathbb{Z}^d$ be such that $\sum_{i=1}^dn_i=0$ and $\sum_{i=1}^d|n_i|=2k$. Let $S_d$ denote the symmetric group on $d$ symbols. We define action of $\sigma \in S_d$ on $\n\in\mathbb{Z}^d$ by:
\begin{gather}
\sigma(\n)=(n_{\sigma(1)},\ldots,n_{\sigma(d)}).
\end{gather}
Recall that a partition of a set $X$ is a set of non-empty subsets of $X$ such that every element $x\in X$ is in exactly one of these subsets. Let $P_d$ be a partition of $\n$, where we  viewe $\n$ as an $d$-element set. Let $|P_d|$ be the number of subsets in partition $P_d$. By the abuse of notation we define $P_d(\n)$ to be a vector whose first $|P_d|$ coefficients are sums of elements in the corresponding partition subsets and the remaining coefficients are equal to zero. For example, for $d=3$ and partition $P_3$ of $\n$ given by $\{\{n_1,n_2\},n_3\}$ we have $P_3(\n)=(n_1+n_2,n_3,0)$. Finally, recall that
$\mathcal{H}_k=\mathbf{Span}\left\{\tr\left( A_tU^{\otimes t}\otimes \bar{U}^{\otimes t}\right):A\in \End (\mathbb{C}^{d^{2t}}), t\in\left\{0,1,\ldots,k\right\}\right\}$. Our aim is to prove the following Lemma whose corollary is Lemma \ref{lem:PolynDegree}.

\begin{Lemma}\label{lemtrace}
$\mathcal{H}_k$  contains all functions of the form:
\begin{gather}
\sum_{\sigma\in{S_d}}e^{i\sigma(\n)\bphi},
\end{gather}
where $\n$ is such that $\sum_{i=1}^{d}n_i=0$ and $\sum_{i=1}^d|n_i|=2t$, $0\leq t\leq k$. In particular $\mathcal{F}^\sigma_k\in\H_k$.
\end{Lemma}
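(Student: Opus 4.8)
The plan is to exhibit, for a given $\n\in\Z^d$ with $\sum_i n_i=0$ and $\sum_i|n_i|=2t$ (where $0\le t\le k$), an explicit operator $A_t$ on $(\C^d)^{\otimes 2t}$ such that $\tr\bigl(A_t\, U^{\otimes t}\otimes\bar U^{\otimes t}\bigr)$, evaluated on a diagonal unitary $U=\mathrm{diag}(e^{\ii\varphi_1},\dots,e^{\ii\varphi_d})$, equals the symmetric sum $\sum_{\sigma\in S_d}e^{\ii\sigma(\n)\cdot\bphi}$, up to a harmless overall constant. Since $\F^\sigma_k$ is by construction (Eq.~\eqref{eq:TruncatedGAUSS}, \eqref{eq:classFUNCTIONdef}, \eqref{eq:unnormalizedPROJECTIVE}) a linear combination of exactly such symmetric exponential sums with $|\n|_1\le k$ — the phase averaging $\P_{\mathrm{phase}}$ precisely kills the terms with $\sum_i n_i\neq 0$ and leaves those with $\sum_i n_i=0$, and for those $|\n|_1=2t$ is even — membership $\F^\sigma_k\in\H_k$ will follow immediately from the first claim by linearity.

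The key construction goes as follows. Split $\n$ into its positive and negative parts: let $a_1,\dots,a_t$ be the indices (with multiplicity, according to $|n_i|$) where $n_i>0$, so that the multiset $\{a_1,\dots,a_t\}$ records which basis vectors get a $+1$ in the exponent, and similarly let $b_1,\dots,b_t$ record the indices where $n_i<0$; the constraints $\sum_i n_i=0$ and $\sum_i|n_i|=2t$ guarantee both lists have length exactly $t$. On the ``ket'' tensor factors $(\C^d)^{\otimes t}$ the diagonal of $U^{\otimes t}$ in the computational basis produces $\prod_{l=1}^t e^{\ii\varphi_{j_l}}$ for basis index $(j_1,\dots,j_t)$, and on the ``bra'' factors $\bar U^{\otimes t}$ produces $\prod_{l=1}^t e^{-\ii\varphi_{m_l}}$. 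Thus choosing $A_t=\sum_{\pi,\rho}\,\ketbra{a_{\pi(1)}\cdots a_{\pi(t)}}{a_{\pi(1)}\cdots a_{\pi(t)}}\otimes\ketbra{b_{\rho(1)}\cdots b_{\rho(t)}}{b_{\rho(1)}\cdots b_{\rho(t)}}$ — or more simply a single well-chosen diagonal projector — gives $\tr(A_t\,U^{\otimes t}\otimes\bar U^{\otimes t})=c\prod_{l} e^{\ii\varphi_{a_l}}\prod_l e^{-\ii\varphi_{b_l}}=c\,e^{\ii\n\cdot\bphi}$ for a combinatorial constant $c>0$ depending on the multiplicities. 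Finally, to get the full symmetric orbit sum $\sum_{\sigma\in S_d}e^{\ii\sigma(\n)\cdot\bphi}$ one either (i) sums the above over the $d!$ permuted multisets $\sigma(\n)$, which is again a valid balanced polynomial in $\H_t\subset\H_k$, or (ii) notes that the class function $F^\sigma_k$ lives on $\UU(d)$ and equals $f^\sigma_{p,k}(\mathrm{Eig}(U))$, which is automatically symmetric, so one does not actually need to symmetrize the operator by hand — the restriction to the torus of $\tr(A_t\,U^{\otimes t}\otimes\bar U^{\otimes t})$ for the symmetrized $A_t$ lands on exactly $\sum_\sigma e^{\ii\sigma(\n)\cdot\bphi}$.

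The combinatorial bookkeeping with the partition notation $P_d(\n)$ introduced before the Lemma is there to handle the subtlety that the ``kets'' index set $(j_1,\dots,j_t)$ can collide (repeated indices), so a naive diagonal operator picks up extra cross terms $e^{\ii P_d(\n)\cdot\bphi}$ coming from coarser groupings; one must either choose $A_t$ to kill these or argue inductively that such lower-``spread'' symmetric sums are themselves in $\H_{<t}$ and can be subtracted off. Concretely, I expect the clean route is downward/upward induction on $t$: the symmetric sum over the finest partition (all $n_i$ distinct slots) is the ``leading'' piece, the operator $A_t$ built from the projector onto the relevant computational-basis strings produces that leading piece plus a nonnegative combination of strictly coarser symmetric sums $\sum_\sigma e^{\ii\sigma(P_d(\n))\cdot\bphi}$ with $|P_d(\n)|_1<2t$, and each of those coarser sums is in $\H_{t'}$ for $t'<t$ by the induction hypothesis; subtracting yields the desired function in $\H_t\subseteq\H_k$.

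The main obstacle is exactly this collision/multiplicity issue — making the induction airtight requires carefully tracking which coarser weight vectors $P_d(\n)$ arise from the diagonal of $U^{\otimes t}\otimes\bar U^{\otimes t}$ when basis indices repeat, checking that their $1$-norm is genuinely smaller (so the induction is well-founded and stays within degree $k$), and verifying that the combinatorial coefficients are such that one can solve the resulting triangular linear system over the $S_d$-orbits of partition-contractions of $\n$. Everything else (the torus/Weyl computation showing the restriction of a balanced polynomial to diagonal unitaries is the claimed exponential sum, and the observation that $\P_{\mathrm{phase}}$ selects the $\sum_i n_i=0$ sector) is routine. The degree bound is automatic: the operator acts on $t$ copies, and we only ever use $t\le k$ since $\F^\sigma_k$ only involves $|\n|_1\le k$, hence $t=\tfrac12|\n|_1\le \tfrac k2\le k$.
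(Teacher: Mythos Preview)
Your diagonal-projector construction has a genuine gap. The function $\tilde H(U) = \sum_\sigma \langle a_\sigma | U^{\otimes t} | a_\sigma\rangle \langle b_\sigma | \bar U^{\otimes t} | b_\sigma\rangle$ does lie in $\H_t$ and does restrict to $\sum_\sigma e^{\ii\sigma(\n)\cdot\bphi}$ on the maximal torus, but it is \emph{not} a class function on $\UU(d)$. For instance, with $d=2$ and $\n=(1,-1)$ one gets $\tilde H(U) = U_{11}\bar U_{22} + U_{22}\bar U_{11}$, which vanishes on $U = \bigl(\begin{smallmatrix}0&1\\1&0\end{smallmatrix}\bigr)$ but equals $-2$ on its diagonalization $\mathrm{diag}(1,-1)$. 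Since $\F^\sigma_k$ is by definition the class function $U \mapsto f^\sigma_{p,k}(\mathrm{Eig}(U))$ (then phase-averaged and normalised), exhibiting some \emph{other} element of $\H_k$ that agrees with it on diagonal matrices does not place $\F^\sigma_k$ itself in $\H_k$; two functions on $\UU(d)$ with the same torus restriction need not coincide. The paper sidesteps this by taking $A_t$ to be a tensor product of cyclic permutation operators $C_m$, so that $\tr(A_t\,U^{\otimes t}\otimes \bar U^{\otimes t}) = \prod_{j\leq\alpha}\tr(U^{n_j})\prod_{j>\alpha}\tr(\bar U^{n_j})$ is a genuine class function from the outset, and the whole identity $f_\n = \sum_{P_d}\alpha(P_d)\sum_\sigma e^{\ii\sigma(P_d(\n))\bphi}$ then holds as an equality of class functions on $\UU(d)$, not merely of torus restrictions.

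Your fallback inductive argument also has an error. When $f_\n$ is expanded on the torus, the coarser-partition terms $P_d(\n)$ do \emph{not} always satisfy $|P_d(\n)|_1 < 2t$: if $P_d$ merges only entries of $\n$ having the same sign then $|P_d(\n)|_1 = |\n|_1 = 2t$ exactly (e.g.\ $\n=(2,1,-2,-1)\mapsto(3,-2,-1,0)$), so the induction on $t$ alone is not well-founded. The paper handles precisely this by observing that each such same-sign merge introduces at least one new zero entry in $P_d(\n)$, and then running a secondary descent on the number of nonzero coordinates, terminating at the two-entry vector $(k{+}1,0,\dots,0,-(k{+}1))$, for which the expansion of $f_{\n_1}$ has no further same-$\ell_1$ cross terms. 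You flagged ``checking that their $1$-norm is genuinely smaller'' as something to verify, but it is in fact false, and handling the exceptions is the substantive part of the argument.
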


 In order to prove the above Lemma we first observe that if $C_m$ is the natural matrix representation of the cyclic permutation $(1,\ldots,m)$ on $\mathbb{C}^{d^m}=\mathbb{C}^d\otimes\ldots\otimes \mathbb{C}^d$ then 
 
\begin{gather}
\tr \left(C_m A_1\otimes\ldots\otimes A_m\right)=\tr\left(A_1\cdot\ldots\cdot A_m\right)
\end{gather}
Thus if we choose $A_t\in \End (\mathbb{C}^{d^{2t}})$ to be
\begin{gather}
A_t=C_{n_1}\otimes C_{n_2}\otimes \ldots\otimes C_{n_\alpha}\otimes C_{n_{\alpha+1}}\otimes \ldots\otimes C_{n_d},
\end{gather}
where $\sum_{i=1}^\alpha n_i=t=\sum_{i=\alpha+1}^d n_i$ the resulting function is:

\begin{gather}
f_{\n}(\bphi):=\tr\left( A_tU^{\otimes t}\otimes \bar{U}^{\otimes t}\right)=\tr U^{n_1}\tr U^{n_2}\ldots\tr U^{n_\alpha}\tr \bar U^{n_{\alpha+1}}\ldots\tr \bar U^{n_{d}}=\\
=\prod_{j=1}^\alpha\left(\sum_{k=1}^de^{in_j\phi_k}\right)\prod_{j=\alpha+1}^d\left(\sum_{k=1}^de^{-in_j\phi_k}\right),
\end{gather}
which can be reduced to
\begin{gather}\label{fn}
f_{\n}(\bphi)=\sum_{P_d}\alpha(P_d)\sum_{\sigma\in S_d}e^{i\sigma(P_d(\n))\bphi},\,\n=(n_1,n_2,\ldots,n_\alpha,-n_{\alpha+1},\ldots,-n_d).
\end{gather}
We are now ready to give a proof of Lemma \ref{lemtrace}. 
\begin{proof}
By direct calculations one checks that Lemma \ref{lemtrace} is valid for $k=0$ and $k=1$. We follow by induction, i.e. we assume Lemma \ref{lemtrace} is valid for $k\geq 1$ and our aim is to show that this implies its validity for $k+1$. Let $\n$ be such that $\sum_{i=1}^{d}n_i=0$ and $\sum_{i=1}^d|n_i|=2(k+1)$. Consider the function $f_{\n}\in\mathcal{H}_{k+1}$. The summand corresponding to the full partition $P_d$ is:
\begin{gather}\label{sumand}  
\sum_{\sigma\in S_d}e^{i\sigma(\n)\bphi}
\end{gather}
Thus to show that (\ref{sumand}) belongs to $\mathcal{H}_{k+1}$ it suffices to show that for other partitions $P_d$ the corresponding summands appearing in (\ref{fn}) are either in $\mathcal{H}_{k}$ or can be easily proved to be in $\mathcal{H}_{k+1}$. The latter happens only when $\sum|P_d(\n)_i|=2k+2$ that is $P_d$ respects division of $\n$ into two parts $\{n_1,\ldots,n_{\alpha}\}$ and $\{n_{\alpha+1},\ldots, n_d\}$. Note however that such $P_d(\n)$ has at least one zero entry. We can perform the same reasoning with function $f_{P_d(\n)}$ and select vectors $P_{d^\prime}(P_d(\n))$ satisfying $\sum|P_{d^\prime}(P_d(\n))_i|=2k+2$. They necessary have at least two zero coefficients. Following this procedure we always arrive at the vector  $\n_1=(k+1,0,\ldots,0,-k-1)$ whose corresponding function $f_{\n_1}\in\mathcal{H}_{k+1}$ is 
\begin{gather}\label{sum2}
\sum_{\sigma\in S_d}e^{i\sigma(\n_1)\bphi} + a_1,
\end{gather}
where $a_1\in \mathcal{H}_k$. Thus $\sum_{\sigma\in S_d}e^{i\sigma(\n_1)\bphi}\in\mathcal{H}_k$. Next, reversing the path of the above reasoning we obtain the desired result. 
\end{proof}

\subsection{Estimates for Gaussian functions on a torus}\label{app:torusEST}

In this part of the Appendix we complete proofs of  Lemmas  \ref{lem:loverBOUNDnormCONST} and \ref{lem:TailUnitary} from Section \ref{sec:constr}.
For reader's convenience we  collect here concepts and notations that will be used in the reminder of the Appendix. We will use  the following measures defined on $\T^d$:
\begin{equation}
    \label{eq:phix}
\dt\bphi= \dt \varphi_1 \ldots \dt \varphi_d\ ,\    
    \dt \mu(\bphi)=
    \frac{1}{(2\pi)^d d!}
    \prod_{1\leq i<j\leq d} |e^{i \varphi_i}-e^{i \varphi_j}|^2     \dt \bphi \ .
    \end{equation}
We also introduce the counterparts of these measures on $\R^d$:
\begin{equation}\label{muG-def}
    \dt\x = \dt x_1 \ldots \dt x_d\ ,\  \dt \mu_G(\x)=
    \frac{1}{(2\pi)^d d!}
    \prod_{1\leq i<j\leq d} (x_i-x_j)^2     \dt \x\ .
\end{equation}
Recall the functions on $\T^d$ that were used to define polynomial approximation to the Dirac delta at $\h{I}\in\U(d)$:
\begin{equation}
    \gauss^\sigma (\bphi)= 
    \frac{1}{(\sqrt{2\pi}\sigma)^d} e^{-\frac12 \frac{\bphi^2}{\sigma^2 }}\ ,\  
    \gauss_p^\sigma (\bphi)=  \frac{1}{(\sqrt{2\pi}\sigma)^d}
     \sum_{\k\in\Z^d} e^{-\frac12 \frac{(\bphi+2 \pi \k)^2}{\sigma^2 }}\ .
\end{equation}
Poisson summation formula implies
\begin{equation}\label{eq:Gauss appendix}
    \gauss^{\sigma}_{p}(\bphi)= \frac{1}{(2\pi)^d}\sum_{\n\in \Z^d} e^{-\frac{1}{2} \n^2 \sigma^2} e^{-i\n \bphi} \ . 
\end{equation}
The truncation of this function to trigonometric polynomials of degree at most $k$ is given by
\begin{equation}
    \gauss^{\sigma}_{p,k}(\bphi)= \frac{1}{(2\pi)^d}\sum_{\n\in S_k} e^{-\frac{1}{2} \n^2 \sigma^2} e^{-i\n \bphi}\ ,
\end{equation}
where $S_k=\{\n\ |\ |\n|_1 \leq k \}$.  The above define class functions on $\UU(d)$ which were used in Section \ref{sec:constr}:
\begin{equation}\label{eq:classFUNCTIONdefAPP}
F^\sigma (U) =  f^\sigma_p(\mathrm{Eig}(U))\ , \ F^\sigma_k (U) =  f^\sigma_{p,k}(\mathrm{Eig}(U))\ .
\end{equation}
As we explained in Section \ref{sec:constr}, the Weyl integration formula for class functions in $\UU(D)$
\begin{equation}\label{eq:WaylIntegrationAPP}
\int_{\UU(d)}\dt\mu(U)F(U)=\int_{\T^d} \dt \mu(\bphi) f(\bphi) ,
\end{equation}
enables to compute numerous quantities relevant for functions $F^\sigma$,$F^\sigma_k$ solely in terms of the functions $\gauss^\sigma_p$, $\gauss^{\sigma}_{p,k}$ defined on $\mathbb{T}^d$.

In what follows it will be expedient to bound integrals with respect to the measure $\dt\mu(\bphi)$ by integrals with respect to the measure $\dt\mu_G(\x)$. To this end we establish the following technical result.
\begin{Lemma}\label{lem:IvsJ} 
    Then for any non-negative integralble function $s:\T^d\rightarrow\R$ we have:
\begin{equation}\label{eq:intUPPERgauss} 
    \int_A \dt \mu(\bphi) s(\bphi) 
    \leq 
    \int_A  \dt \mu_G(\x) s(\x)\quad \text{ \rm{for all }} A\subset \T^d
\end{equation}
and 
\begin{equation}\label{eq:intLOWERgauss} 
    \int_A \dt \mu(\bphi) s(\bphi) 
    \geq \left(\frac{2}{\pi}\right)^{d(d-1)}
    \int_A \dt \mu_G(\x) s(\x) \quad \text{ \rm{for all }} A\subset \left[-\frac{\pi}{2},\frac{\pi}{2}\right]^{\times d}.
\end{equation}
\end{Lemma}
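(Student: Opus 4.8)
The statement is purely a pointwise comparison of two density factors on $\T^d$, so the whole lemma reduces to comparing $\prod_{i<j}|e^{i\varphi_i}-e^{i\varphi_j}|^2$ with $\prod_{i<j}(\varphi_i-\varphi_j)^2$ after noting that both measures carry the same prefactor $\frac{1}{(2\pi)^d d!}$ and the same Lebesgue piece $\dt\bphi$. Since the non-negative integrand $s$ multiplies the same base measure in both cases, it suffices to prove the pointwise inequalities of the Vandermonde factors on the relevant domains; the integral inequalities then follow by monotonicity of the integral. So the plan is: first record the elementary trigonometric identity
\begin{equation}
|e^{i\varphi_i}-e^{i\varphi_j}|^2 = 4\sin^2\!\Big(\frac{\varphi_i-\varphi_j}{2}\Big),
\end{equation}
and then prove, for each pair, an upper bound $4\sin^2(\theta/2)\le \theta^2$ valid for all $\theta\in\R$ (needed for \eqref{eq:intUPPERgauss}), and a lower bound $4\sin^2(\theta/2)\ge (2/\pi)^2\,\theta^2$ valid for $|\theta|\le\pi$ (needed for \eqref{eq:intLOWERgauss}).

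For the upper bound I would use the standard inequality $|\sin x|\le |x|$ with $x=\theta/2$, which gives $4\sin^2(\theta/2)\le 4(\theta/2)^2=\theta^2$ for every real $\theta$; hence $\prod_{i<j}|e^{i\varphi_i}-e^{i\varphi_j}|^2 \le \prod_{i<j}(\varphi_i-\varphi_j)^2$ on all of $\T^d$, which yields \eqref{eq:intUPPERgauss} for arbitrary $A\subset\T^d$ after multiplying by the common non-negative factor $\frac{s(\bphi)}{(2\pi)^d d!}\dt\bphi$ and integrating. For the lower bound I would invoke Jordan's inequality $\frac{2}{\pi}|x|\le|\sin x|$ valid for $|x|\le \pi/2$; taking $x=\theta/2$ this is valid for $|\theta|\le\pi$, and since for $\x\in[-\pi/2,\pi/2]^{\times d}$ every difference satisfies $|\varphi_i-\varphi_j|\le\pi$, we get $4\sin^2(\theta/2)\ge 4\cdot\frac{4}{\pi^2}\cdot\frac{\theta^2}{4}=\frac{4}{\pi^2}\theta^2=\left(\frac{2}{\pi}\right)^2\theta^2$ for each of the $\binom{d}{2}=\frac{d(d-1)}{2}$ pairs. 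Multiplying these $\binom{d}{2}$ inequalities produces the factor $\left(\frac{2}{\pi}\right)^{2\cdot\frac{d(d-1)}{2}}=\left(\frac{2}{\pi}\right)^{d(d-1)}$, which is precisely the constant in \eqref{eq:intLOWERgauss}; integrating against $\frac{s(\bphi)}{(2\pi)^d d!}\dt\bphi\ge 0$ over $A\subset[-\pi/2,\pi/2]^{\times d}$ finishes the proof.

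There is no real obstacle here: the only subtlety is bookkeeping the exponent of $2/\pi$ (one factor of $(2/\pi)^2$ per pair, $\binom{d}{2}$ pairs, giving $(2/\pi)^{d(d-1)}$) and making sure the domain restriction $|\varphi_i-\varphi_j|\le\pi$ — which is needed for Jordan's inequality applied at $\theta/2$ — genuinely holds on the cube $[-\pi/2,\pi/2]^{\times d}$; it does, since the coordinates there lie in an interval of length $\pi$. One should also note that the argument uses non-negativity of $s$ in an essential way (to preserve inequality direction under integration), which is exactly the hypothesis. I would present the two pointwise lemmas as one or two displayed lines and then state that the integral inequalities follow immediately.
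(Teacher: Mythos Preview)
Your proof is correct and follows exactly the same approach as the paper: reduce to a pointwise comparison of the densities via the identity $|e^{i\varphi_i}-e^{i\varphi_j}|^2=4\sin^2\!\big(\tfrac{\varphi_i-\varphi_j}{2}\big)$, then apply $|\sin x|\le|x|$ for the upper bound and Jordan's inequality $|\sin x|\ge\tfrac{2}{\pi}|x|$ on $|x|\le\pi/2$ for the lower bound, with the domain restriction $A\subset[-\pi/2,\pi/2]^{\times d}$ ensuring $|\varphi_i-\varphi_j|\le\pi$. Your exponent bookkeeping and the explicit use of non-negativity of $s$ are exactly right (indeed slightly cleaner than the paper, which has a couple of harmless typos in the displayed lower-bound inequality).
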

\begin{proof}
 The proof follows from comparing densities of measures $\dt \mu(\bphi)$ and $\dt \mu_G(\x)$. We first observe $\dt\mu(\bphi)=\prod_{1\leq i<j\leq d} 4  \sin^2(\frac{\varphi_i-\varphi_j}{2}) \dt\bphi$.  The upper bound in  Eq.\eqref{eq:intUPPERgauss} follows then from the estimate
 \begin{equation}
 \prod_{1\leq i<j\leq d} 4  \sin^2(\frac{\varphi_i-\varphi_j}{2}) \leq \prod_{1\leq i<j\leq d} (\varphi_i-\varphi_j)^2\ ,
 \end{equation}
 where  we used the fact that for all $x\in\R$ $|\sin(x)|\leq|x|$.  On the other hand using the bound 
$\frac{2}{\pi} |x| \leq |\sin(x)|$ 
valid for $x\in[-\pi/2,\pi/2]$ 
we get that
for all  $\varphi_i \in [-\pi/2,\pi/2]$
\begin{equation}
\prod_{1\leq i<j\leq d} 4  \sin^2(\frac{\varphi_i-\varphi_j}{2}) \leq \prod_{1\leq i<j\leq d}  \frac{2}{\pi^2} (\varphi_i-\varphi_j)^2
= \left(\frac{2}{\pi}\right)^{d(d-1)} \prod_{1\leq i<j\leq d} (\varphi_i-\varphi_j)^2\ ,
\end{equation}
which completes the proof of Eq.\eqref{eq:intLOWERgauss}. 
\end{proof}

\subsubsection{Lower bound on normalization constant $\N^\sigma$}
In the sketch of the proof of Lemma \ref{lem:loverBOUNDnormCONST} in Section \ref{sec:constr} we argued that $\N^\sigma\geq\int_{\T^d} \dt\mu(\bphi) \gauss^\sigma(\bphi)$. The following result completes the proof of Eq.\eqref{eq:UnitGroupNormBound}.

\begin{Lemma}\label{lem:denominator}
For $\sigma$  satisfying $\sigma\leq \frac{\pi}{4 \sqrt{d}}$ we have 
\begin{equation}
\int_{\T^d} \dt \mu(\bphi)  \gauss^\sigma(\bphi)
\geq \frac12 C_d\,\sigma^{d(d-1)}
\left(\frac{2}{\pi}\right)^{d(d-1)}\ ,
\end{equation}
where $C_d = \frac{\prod_{k=1}^d k! }{(2\pi)^d d!}$.
\end{Lemma}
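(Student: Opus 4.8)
The plan is to reduce the torus integral to a Gaussian Unitary Ensemble computation, using Lemma~\ref{lem:IvsJ} to strip off the Vandermonde weight and Mehta's integral to evaluate the resulting Gaussian integral.

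First, since $\gauss^\sigma\geq 0$ I would restrict the domain to the cube $[-\tfrac\pi2,\tfrac\pi2]^{\times d}$ and apply the lower bound \eqref{eq:intLOWERgauss} of Lemma~\ref{lem:IvsJ} with $A=[-\tfrac\pi2,\tfrac\pi2]^{\times d}$ and $s=\gauss^\sigma$, obtaining
\[
\int_{\T^d}\dt\mu(\bphi)\,\gauss^\sigma(\bphi)\;\geq\;\int_{[-\pi/2,\pi/2]^{\times d}}\dt\mu(\bphi)\,\gauss^\sigma(\bphi)\;\geq\;\left(\frac2\pi\right)^{d(d-1)}\int_{[-\pi/2,\pi/2]^{\times d}}\dt\mu_G(\x)\,\gauss^\sigma(\x)\ .
\]
Next I would rescale $\x=\sigma\y$. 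Since the Vandermonde $\prod_{i<j}(x_i-x_j)^2$ contains $\binom d2$ squared differences and $\gauss^\sigma(\sigma\y)=(2\pi)^{-d/2}\sigma^{-d}e^{-|\y|^2/2}$, the last integral becomes
\[
\int_{[-\pi/2,\pi/2]^{\times d}}\dt\mu_G(\x)\,\gauss^\sigma(\x)\;=\;\frac{\sigma^{d(d-1)}}{(2\pi)^d\,d!}\cdot\frac{1}{(2\pi)^{d/2}}\,J\ ,\qquad J:=\int_{[-R,R]^{\times d}}\prod_{i<j}(y_i-y_j)^2\,e^{-|\y|^2/2}\,\dt\y\ ,
\]
where $R=\tfrac{\pi}{2\sigma}$; the hypothesis $\sigma\leq\tfrac{\pi}{4\sqrt d}$ gives $R\geq 2\sqrt d$.

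The second ingredient is Mehta's integral, which evaluates the same integrand over all of $\R^d$: $\frac{1}{(2\pi)^{d/2}}\int_{\R^d}\prod_{i<j}(y_i-y_j)^2 e^{-|\y|^2/2}\dt\y=\prod_{k=1}^d k!$. Writing $J_\infty:=\int_{\R^d}\prod_{i<j}(y_i-y_j)^2 e^{-|\y|^2/2}\dt\y$, it therefore suffices to show $J\geq\tfrac12 J_\infty$, because then the two displays above combine to
\[
\int_{\T^d}\dt\mu(\bphi)\,\gauss^\sigma(\bphi)\;\geq\;\left(\frac2\pi\right)^{d(d-1)}\frac{\sigma^{d(d-1)}}{(2\pi)^d d!}\cdot\frac12\prod_{k=1}^d k!\;=\;\frac12\,C_d\,\sigma^{d(d-1)}\left(\frac2\pi\right)^{d(d-1)}\ .
\]
To prove $J\geq\tfrac12 J_\infty$ I would note that $J/J_\infty=\mathbb{P}\!\left(\|H\|_\infty\leq R\right)$, where $H$ is a $d\times d$ GUE matrix with law $\propto e^{-\frac12\tr H^2}$: by the Weyl integration formula for the Gaussian the (unordered) eigenvalues of $H$ have joint density proportional to $\prod_{i<j}(\lambda_i-\lambda_j)^2 e^{-|\lambda|^2/2}$, and $\|H\|_\infty=\max_i|\lambda_i(H)|\leq R$ is exactly the event that all eigenvalues lie in $[-R,R]$. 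Hence the claim is equivalent to $\mathbb{P}\!\left(\|H\|_\infty> R\right)\leq\tfrac12$ for $R\geq 2\sqrt d$.

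This last bound is the crux and the main obstacle: $2\sqrt d$ is precisely the edge of the semicircle law (this normalization has $\mathbb{E}|H_{ij}|^2=1$), so the tail probability at $R=2\sqrt d$ does \emph{not} vanish as $d\to\infty$ — it converges to $\mathbb{P}(\mathrm{TW}_2>0)$, a constant which happens to be comfortably below $\tfrac12$. A crude second-moment estimate ($\mathbb{E}\|H\|_\infty^2\leq\mathbb{E}\tr H^2=d^2$) is useless at the edge, so one genuinely needs the concentration/edge behaviour of the largest GUE eigenvalue. I would supply this from the non-asymptotic tail bound for the operator norm of a GUE matrix developed later in the Appendix (built on the material in \cite{SzarekBook}), applied at $r=2\sqrt d$, which yields $\mathbb{P}(\|H\|_\infty> 2\sqrt d)\leq\tfrac12$ for every $d\geq1$; combined with the displays above this finishes the proof of Lemma~\ref{lem:denominator}.
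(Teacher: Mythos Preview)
Your argument is correct and follows essentially the same route as the paper: restrict to $[-\pi/2,\pi/2]^{\times d}$, invoke the lower bound of Lemma~\ref{lem:IvsJ}, and then show that the truncated Gaussian--Vandermonde integral is at least half of the full Mehta integral via the GUE tail bound from Proposition~\ref{lem:gue}. The only cosmetic difference is that the paper packages the rescaling and the GUE tail estimate into Lemma~\ref{lem:gsigmabound} (writing the truncated integral as the full one minus the complement), whereas you carry out the substitution $\x=\sigma\y$ and the probabilistic interpretation $J/J_\infty=\mathbb{P}(\|H\|_\infty\leq R)$ explicitly; both arrive at the same inequality $\mathbb{P}(\|H\|_\infty>2\sqrt d)\leq\tfrac12$, which is precisely the $a\to 0$ case of Proposition~\ref{lem:gue}.
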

\begin{proof}

We first employ Eq.\eqref{eq:intLOWERgauss} from Lemma \ref{lem:IvsJ} to $A=[-\pi/2,\pi/2]^{\times d}$  and function $\gauss^\sigma$ obtaining 
 \begin{equation}
 \label{eq:HvsG}
     \int_{\T^d} \dt \mu(\bphi) \gauss^\sigma(\bphi)
\geq \left(\frac{2}{\pi}\right)^{d(d-1)}
\int_{[-\pi/2,\pi/2]^{\times d}} \dt \mu_G(\x) \gauss^\sigma(\x)\ .
 \end{equation}
 The integral over $[-\pi/2,\pi/2]^{\times d}$ can be further decomposed as \begin{equation}\label{eq:simpleEQ}
    \int_{[-\pi/2,\pi/2]^{\times d}} \dt \mu_G(\x) \gauss^\sigma(\x)= 
\int_{\R^d} \dt \mu_G(\x) \gauss^\sigma(\x)
-
\int_{([-\pi/2,\pi/2]^{\times d})^c} \dt \mu_G(\x) \gauss^\sigma(\x)
\ .
\end{equation}
Both terms in the above expression can be connected to respectivelly: \emph{statistical sum} of GUE ensemble, and the tail probability of the maximal eigenvalue distribution of a GUE random matrix. Concretely, employing estimates from Lemma \ref{lem:gsigmabound} (given in Part \ref{app:auxTechnical}) we get that for $\sigma\leq \pi/(4 \sqrt{d})$
\begin{equation}\label{eq:simpleINEQ}
\int_{\R^d} \dt \mu_G(\x) \gauss^\sigma(\x)=C_d \sigma^{d(d-1)}\ ,\ 
\int_{([-\pi/2,\pi/2]^{\times d})^c}\dt \mu_G(\x) \gauss^\sigma(\x) \leq \frac12 C_d\,\sigma^{d(d-1)}\ ,
\end{equation}
where $C_d=\frac{\prod_{k=1}^d k! }{(2\pi)^d d!}$. Inserting the above expressions into  expressions into Eq.\eqref{eq:simpleINEQ} and Eq.\eqref{eq:simpleEQ} concludes the proof.
\end{proof}

\subsubsection{Upper bound for integral of $F^\sigma$ over the complement of a ball} 
The  following result is an effective restatement of Lemma \ref{lem:TailUnitary} from Section  \ref{sec:constr}. The equivalence of both results follows form the Weyl integration formula which implies $\int_{B(I,r)}\dt \mu(U) F^\sigma (U)=\int_{B_\infty(0,r)}\dt \mu(\bphi) f^\sigma_p (\bphi)$.

\begin{Lemma}[Restatement of Lemma \ref{lem:TailUnitary} from Section  \ref{sec:constr}]\label{lem:numerator} For $\sigma$ and $r$ satisfying $\sigma\leq \frac{r}{4 \sqrt{d}}$, $r\leq 2/3$ we have 
\begin{equation}
\int_{B_\infty(0,r)^c} \dt\mu(\bphi) \gauss^\sigma_p(\bphi)
  \leq  \frac{3}{2}
C_d\,\sigma^{d(d-1)}\,e^{-\frac{1}{4} \frac{\ep^2}{\sigma^2}}\ ,
\end{equation}
where $B_\infty(0,r)=\left\{\bphi\in\T^d\ |\ |\varphi_i|\leq r \right\}$ and  $C_d=\frac{\prod_{k=1}^d k! }{(2\pi)^d d!}$.
\end{Lemma}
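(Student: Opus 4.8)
\textbf{Proof plan for Lemma \ref{lem:numerator}.}
The plan is to reduce the torus integral to a Euclidean one via Lemma \ref{lem:IvsJ}, control the periodization error by an $L^1$ estimate, and then invoke a GUE tail bound. First I would write $\int_{B_\infty(0,r)^c}\dt\mu(\bphi)\,\gauss^\sigma_p(\bphi) \le \int_{B_\infty(0,r)^c}\dt\mu(\bphi)\,\gauss^\sigma(\bphi) + \int_{\T^d}\dt\mu(\bphi)\,\bigl|\gauss^\sigma_p(\bphi)-\gauss^\sigma(\bphi)\bigr|$, splitting off the periodization tails. For the second term I would show that $\|\gauss^\sigma_p-\gauss^\sigma\|_{1}$ (with respect to $\dt\mu(\bphi)$, which is dominated by $\dt\mu_G(\x)$ via Eq.\eqref{eq:intUPPERgauss}) is exponentially small in $1/\sigma^2$; the density $\prod_{i<j}(\varphi_i-\varphi_j)^2$ only contributes polynomial factors that are absorbed into constants, so the dominant contribution of the shifted Gaussians $\gauss^\sigma(\bphi+2\pi\k)$ for $\k\neq 0$ is of order $e^{-c/\sigma^2}$ on the cube $[-\pi,\pi]^{\times d}$; here the hypothesis $\sigma\le \frac{r}{4\sqrt d}$ and $r\le 2/3$ makes the exponent comfortably large.

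For the first term I would use Lemma \ref{lem:IvsJ} restricted to $[-\pi/2,\pi/2]^{\times d}$ (note $r\le 2/3 < \pi/2$ so the relevant part of $B_\infty(0,r)^c$ near the cube is covered, and the far-away part where some $|\varphi_i|>\pi/2$ is again exponentially suppressed by $\gauss^\sigma$ itself) to get $\int_{B_\infty(0,r)^c}\dt\mu(\bphi)\,\gauss^\sigma(\bphi) \le \int_{B_\infty(0,r)^c}\dt\mu_G(\x)\,\gauss^\sigma(\x)$, with the extra factor $(\pi/2)^{d(d-1)}$ harmlessly the wrong way only on the lower bound, so the upper estimate is clean. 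The set $B_\infty(0,r)^c$ means $\max_i|\x_i|>r$, which is exactly the event that the operator norm of the associated GUE matrix exceeds $r$, after rescaling. Using the change of variables relating $\dt\mu_G$ to the GUE measure (as in Lemma \ref{lem:gsigmabound}), $\int_{\R^d}\dt\mu_G(\x)\,\gauss^\sigma(\x) = C_d\,\sigma^{d(d-1)}$, and I would cite the standard large-deviation tail bound for the largest eigenvalue of a GUE matrix with variance $\sigma^2$ to conclude $\int_{B_\infty(0,r)^c}\dt\mu_G(\x)\,\gauss^\sigma(\x) \le C_d\,\sigma^{d(d-1)}\, e^{-r^2/(4\sigma^2)}$ up to adjusting constants; combining with the periodization term and checking that $\tfrac12 + (\text{exp.~small}) \le \tfrac32$ for the parameter range gives the stated bound with prefactor $\tfrac32$.

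The main obstacle I expect is making the GUE tail estimate fully quantitative with the precise constant $\tfrac14$ in the exponent and the clean prefactor $\tfrac32 C_d\sigma^{d(d-1)}$: the off-the-shelf concentration inequalities for $\|\mathrm{GUE}\|_\infty$ typically come with unspecified or suboptimal constants, so I would instead prove the tail bound directly from the joint eigenvalue density $\dt\mu_G(\x)\,\gauss^\sigma(\x)$ by a union bound over coordinates combined with a one-dimensional Gaussian-integral estimate, carefully tracking how the Vandermonde factor redistributes mass — this is where the condition $\sigma\le r/(4\sqrt d)$ is used to dominate the polynomial growth of the Vandermonde by the Gaussian decay. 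A secondary, more bookkeeping-type obstacle is handling the region of $B_\infty(0,r)^c$ lying outside $[-\pi/2,\pi/2]^{\times d}$, where Lemma \ref{lem:IvsJ}'s lower-cube hypothesis does not apply; there I would bound $\dt\mu(\bphi)$ crudely by $\dt\bphi$ (since $|e^{i\varphi_i}-e^{i\varphi_j}|^2\le 4$) and use that $\gauss^\sigma$ is already exponentially small for $|\varphi_i|>\pi/2 \gg r$, folding the result into the constants. The detailed verification of these estimates I would relegate to the auxiliary lemmas in Part \ref{app:auxTechnical}.
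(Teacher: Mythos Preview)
Your approach is correct and mirrors the paper's proof: decompose $\gauss^\sigma_p=\gauss^\sigma+(\gauss^\sigma_p-\gauss^\sigma)$, control the periodization tail in $L^1$ (Lemma \ref{lem:norm-period-bound}), pass from $\dt\mu(\bphi)$ to $\dt\mu_G(\x)$ via Lemma \ref{lem:IvsJ}, and apply the GUE tail bound (Lemma \ref{lem:gsigmabound}). Two of your anticipated obstacles are phantoms, however. First, the upper bound \eqref{eq:intUPPERgauss} in Lemma \ref{lem:IvsJ} holds for \emph{all} $A\subset\T^d$; the restriction to $[-\pi/2,\pi/2]^{\times d}$ appears only in the lower bound \eqref{eq:intLOWERgauss}, so no splitting of $B_\infty(0,r)^c$ into near/far pieces is needed. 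Second, the GUE tail estimate with the precise exponent $r^2/(4\sigma^2)$ and prefactor $\tfrac12\,C_d\,\sigma^{d(d-1)}$ is available off-the-shelf (Proposition \ref{lem:gue} combined with the hypothesis $\sigma\le r/(4\sqrt d)$, exactly as in Lemma \ref{lem:gsigmabound}), so no ad hoc union bound over coordinates is required; adding the periodization contribution $C_d\,\sigma^{d(d-1)}e^{-\pi^2/(8\sigma^2)}$ (dominated by the same exponential since $r\le 2/3<\pi/\sqrt 2$) yields the stated $\tfrac32$.
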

\begin{proof}
From the positivity of $\gauss^\sigma$ and the definition of $\gauss^\sigma_p$ it follows that
\begin{equation}
    \int_{B_\infty(0,r)^c} \gauss^\sigma_p(\bphi)
\dt \mu(\bphi) = 
\int_{B_\infty(0,r)^c} \gauss^\sigma(\bphi)
\dt \mu(\bphi) + \| \gauss^\sigma_p - \gauss^\sigma \|_1\ .
\end{equation}
Using Lemma \ref{lem:IvsJ} allows us to write an estimate $\int_{B(0,r)^c} \gauss^\sigma(\bphi)
\dt \mu(\bphi)\leq \int_{([-r,r]^{\times d})^c} \gauss^\sigma(\x)
\dt \mu_G(\x) $, relating the integrals on the torus with integrals on $\R^d$. Next, using Lemma \ref{lem:norm-period-bound} 
(bounding $\| \gauss^\sigma_p - \gauss^\sigma \|_1$)
and Lemma \ref{lem:gsigmabound} (bounding the tail of  Gaussian integral) we obtain 
that for 
$\sigma\leq 1/(6\sqrt{d})$   and 
$\sigma\leq r/(4 \sqrt{d})$  
\begin{equation}
\int_{B_\infty(0,r)^c} \gauss^\sigma_p(\bphi)
\dt \mu(\bphi)
\leq 
\frac12 C_d\,\sigma^{d(d-1)}\, 
      e^{-\frac14 \frac{\ep^2}{\sigma^2}}
      +C_d\, \sigma^{d(d-1)}
  e^{-\frac18 \frac{\pi^2}{\sigma^2
}}\leq \frac32 
C_d\,\sigma^{d(d-1)}\, 
      e^{-\frac14 \frac{\ep^2}{\sigma^2}}\ ,
\end{equation}
where in the last inequality we assumed $r\leq \sqrt{2} \pi$.
Finally, all the assumptions made on $\sigma$ and $r$ in the above reasoning will be satisfied provided $\sigma\leq \frac{r}{4 \sqrt{d}}$, $r\leq 2/3$.
\end{proof}

\begin{Lemma}[$L^1$-norm difference between $\gauss^\sigma$ and $\gauss^\sigma_p$] \label{lem:norm-period-bound}
For  $\sigma$ satisfying $\sigma\leq \frac{1}{6 \sqrt{d}}$ we have
\begin{equation}
    \|\gauss^\sigma - \gauss^\sigma_p\|_1
    \leq 
    C_d\,\sigma^{d(d-1)}\,  e^{-\frac18 \frac{\pi^2}{\sigma^2
}}\ ,
\end{equation}
where $C_d=\frac{\prod_{k=1}^d k!}{(2\pi)^d d!}$.
\end{Lemma}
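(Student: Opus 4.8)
The goal is to bound the $L^1$-distance $\|\gauss^\sigma - \gauss^\sigma_p\|_1$ on $\T^d$ with the measure $\dt\mu(\bphi)$. Since $\gauss^\sigma_p(\bphi) = \sum_{\k\in\Z^d}\gauss^\sigma(\bphi+2\pi\k)$ and each summand is non-negative, on the fundamental domain $[-\pi,\pi]^{\times d}$ we have $\gauss^\sigma_p(\bphi) - \gauss^\sigma(\bphi) = \sum_{\k\neq 0}\gauss^\sigma(\bphi+2\pi\k) \geq 0$. Hence the modulus disappears and
\[
\|\gauss^\sigma - \gauss^\sigma_p\|_1 = \int_{[-\pi,\pi]^{\times d}}\dt\mu(\bphi)\sum_{\k\neq 0}\gauss^\sigma(\bphi+2\pi\k).
\]
First I would push this to the Euclidean side: by Lemma~\ref{lem:IvsJ}, Eq.~\eqref{eq:intUPPERgauss}, the torus integral over any $A\subset\T^d$ is dominated by the corresponding integral against $\dt\mu_G(\x)$ (the Vandermonde is larger than the trigonometric product on the whole torus). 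Applying this to $A = [-\pi,\pi]^{\times d}$ and using the change of variables $\x\mapsto\x+2\pi\k$ in each term — together with the fact that the polynomial density $\prod_{i<j}(x_i-x_j)^2$ is shift-invariant under adding a constant vector and is controlled after the shift — reduces the estimate to a tail bound for a single Gaussian-type integral over the region $([-\pi,\pi]^{\times d})^c$, i.e. to the mass that the Gaussian weight $\gauss^\sigma$ (times the polynomial density) puts outside the box of half-width $\pi$ in at least one coordinate.

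Next I would invoke the Gaussian tail estimate — this is exactly the content of Lemma~\ref{lem:gsigmabound} (stated in Part~\ref{app:auxTechnical}), which relates $\int_{([-a,a]^{\times d})^c}\dt\mu_G(\x)\gauss^\sigma(\x)$ to $C_d\sigma^{d(d-1)}$ times an exponentially small factor in $a^2/\sigma^2$. With $a=\pi$ this produces the factor $e^{-\frac18\frac{\pi^2}{\sigma^2}}$ claimed in the statement (the constant $\tfrac18$ rather than $\tfrac12$ in the exponent absorbs the polynomial factor $\prod_{i<j}(x_i-x_j)^2$, which grows at most like $e^{c|\x|^2}$ with $c$ small compared to $1/\sigma^2$ once $\sigma\leq\frac1{6\sqrt d}$, and also absorbs the combinatorial sum over $\k\neq 0$, whose terms are geometrically decaying once $\sigma$ is that small). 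Collecting constants gives the bound $C_d\sigma^{d(d-1)}e^{-\frac18\pi^2/\sigma^2}$, with $C_d = \frac{\prod_{k=1}^d k!}{(2\pi)^d d!}$, valid for $\sigma\leq\frac1{6\sqrt d}$.

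The main obstacle is the bookkeeping in the second step: after shifting $\x\mapsto\x+2\pi\k$ the polynomial weight $\prod_{i<j}(x_i-x_j)^2$ is unchanged only if $\k$ is a scalar multiple of $(1,\dots,1)$; for a general $\k$ one must bound $\prod_{i<j}((x_i-x_j)+2\pi(k_i-k_j))^2$ by something that can be reabsorbed into the Gaussian. The clean way is to expand the square and use $(a+b)^2\leq 2a^2+2b^2$ (or the crude bound $\prod_{i<j}(y_i-y_j)^2\leq e^{c|\y|^2_1}$ for an explicit small $c$), so that the extra polynomial-in-$\k$ growth is killed by the Gaussian decay $e^{-\frac12|\bphi+2\pi\k|^2/\sigma^2}$, and then sum the resulting geometric series over $\k\in\Z^d\setminus\{0\}$. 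The constraint $\sigma\leq\frac1{6\sqrt d}$ is precisely what makes all these exponential-versus-polynomial comparisons go through with room to spare; the detailed verification is routine but is where essentially all the work lies, and I would relegate it (as the paper does) to the supporting lemmas in Part~\ref{app:auxTechnical}.
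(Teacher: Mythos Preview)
Your proposal is correct in outline and uses the same three ingredients as the paper's proof: the periodization identity $\|\gauss^\sigma_p - \gauss^\sigma\|_1 = \sum_{\k\neq 0}\int_{\T^d} \dt\mu(\bphi)\, \gauss^\sigma(\bphi + 2\pi\k)$, the comparison Lemma~\ref{lem:IvsJ} to pass from $\dt\mu$ to $\dt\mu_G$, and the GUE tail bound Lemma~\ref{lem:gsigmabound}.

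Where your route diverges from the paper is in handling the lattice sum. You propose to change variables $\x \mapsto \x + 2\pi\k$ and then control the shifted Vandermonde $\prod_{i<j}\bigl((x_i-x_j)+2\pi(k_i-k_j)\bigr)^2$ by absorbing its polynomial-in-$\k$ growth into the Gaussian, finally summing a geometric series over $\k$. The paper instead never changes variables. It groups $\Z^d\setminus\{0\}$ into $L^\infty$-shells $A_m = \{\k : \max_i|k_i| = m\}$, observes that for $\x\in[-\pi,\pi]^{\times d}$ and $\k\in A_m$ one has $|\x+2\pi\k|_\infty \geq \pi(2m-1)$, and bounds each term directly by $\int_{B_\infty(0,\pi(2m-1))^c} \dt\mu_G(\x)\, \gauss^\sigma(\x)$, to which Lemma~\ref{lem:gsigmabound} applies at radius $\pi(2m-1)$. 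The resulting one-dimensional series $\sum_m |A_m|\, e^{-\frac14 \pi^2(2m-1)^2/\sigma^2}$ with $|A_m|\leq 2d(2m+1)^{d-1}$ is summed via a max-times-geometric-tail argument (Propositions~\ref{lem:max-gauss-poly} and~\ref{prop:Hoeffding}); this is where the exponent drops from $\tfrac14$ to $\tfrac18$ and where $\sigma\leq 1/(6\sqrt d)$ enters to absorb the resulting polynomial prefactor.

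Your change-of-variables route would also work, but the bookkeeping is heavier: controlling $\prod_{i<j}(k_i-k_j)^2$ uniformly over $\k$ while preserving the sharp constant $C_d$ is fiddly. The paper's shell decomposition sidesteps this entirely by keeping $\x$ in the original cube (so the weight stays $\prod(x_i-x_j)^2$) and pushing all $\k$-dependence into the radius at which the GUE tail is read off.
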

\begin{proof}As $f^\sigma_p$ is the periodization of $f^\sigma$ we have
\begin{equation}
    \gauss_p^\sigma(\bphi) =
    \gauss^\sigma(\bphi) + 
     \sum_{\k\in\Z^d, \k\not=0} 
     \gauss^\sigma(\bphi+2\pi \k )\ .
\end{equation}
Combining this with the positivity of both $f^\sigma$  we get
\begin{equation}
\label{eq:wrap}
\|\gauss^\sigma - \gauss^\sigma_p\|_1
=
\sum_{\k\in\Z^d, \k\not=0} 
     \int_{\T^d}\dt \mu(\bphi)
     \gauss^\sigma(\bphi+2\pi \k )\ .
\end{equation}
We can write it as 
\begin{equation}
\label{eq:wrap2}
\|\gauss^\sigma - \gauss^\sigma_p\|_1
=
\sum_{m=1}^\infty \sum_{\k\in A_m}
     \int_{\T^d}\dt \mu(\bphi)
     \gauss^\sigma(\bphi+2\pi \k ) ,
\end{equation}
where $A_m=\{\k: -m\leq k_i\leq m, i=1,\ldots,d ; \exists_i |k_i|=m  \}$ (i.e. $A_m$ is the boundary of a regular $d$-cube of in $\Z^d$ centered at the origin and having diameter $2m$ in $L^\infty$ distance). 
The function 
$\gauss^\sigma_\k(\bphi):=\gauss^\sigma(\bphi+2\pi\k)$
is centered around point $-2\pi \k$ and  it is positive. Thus we can apply Lemma 
\ref{lem:IvsJ} to get 
\begin{equation}
    \int_{\T^d}\dt \mu(\bphi) \gauss^\sigma(\bphi+2\pi \k )  \leq 
    \int_{\T^d}\dt \mu_G(\x)\gauss^\sigma(\x+2\pi \k ) \ .
\end{equation}
Now, for $\k\in A_m$, $\x\in\T^d$  we have  $|\x+2\pi\k|_\infty\geq \pi(2m-1)$ (where we used the $L^\infty$ norm: $|\x|_\infty = \max_{i}|x_i|$). Therefore we can bound
\begin{equation}
 \int_{\T^d}\dt \mu(\bphi) \gauss^\sigma(\bphi+2\pi \k ) \leq  
    \int_{B_\infty(0,\pi(2 m -1) )^c}\dt \mu_G(\x) \gauss^\sigma(\x) \,
\end{equation}
where $B_{\infty}(\y,r)=\{\x\in\R^d\ |\ |x_i-y_i|\leq r,i=1,\ldots,d \}$.
Next we apply tail bound from Lemma 
\ref{lem:gsigmabound} which gives
\begin{equation}
    \int_{B(0,2 \pi m-\pi)^c}\dt \mu_G(\x)
    \gauss^\sigma(\x) \leq 
\frac12 C_d
      \,\sigma^{d(d-1)}\, 
      e^{-\frac14 \frac{\pi^2(2m-1)^2}{\sigma^2}}.
\end{equation}
which holds for $    \sigma\leq \frac{\pi(2m-1)}{2 \sqrt{d}}$ so that it is enough to assume that $\sigma \leq \frac{\pi}{2 \sqrt{d}}$.   Going back to Eq.\eqref{eq:wrap} we get 
\begin{equation}
    \|\gauss^\sigma- \gauss^\sigma_p\|_1 
    \leq 
    \frac12 C_d \,\sigma^{d(d-1)}\,  \sum_{m=1}^\infty \sum_{\k\in A_m} 
e^{-\frac14 \frac{\pi^2 (2 m-1)^2}{\sigma^2 }} = 
\frac12 C_d \,\sigma^{d(d-1)}\, 
\sum_{m=1}^\infty |A_m| 
e^{-\frac14 \frac{\pi^2 (2 m-1)^2}{\sigma^2 }}.
\end{equation}
Next we use an estimate for the number of elements in $A_m$: $|A_m|\leq 2d (2m+1)^{d-1}$ which gives \eqref{eq:tail-m1}
\begin{equation}
\label{eq:tail-m1}
    \|\gauss^\sigma- \gauss^\sigma_p\|_1\leq d\, C_d \,\sigma^{d(d-1)}\, 
    \sum_{m=1}^\infty (2m+1)^{d-1} 
e^{-\frac14 \frac{\pi^2 (2 m-1)^2}{\sigma^2 }} \leq d\, C_d \,\sigma^{d(d-1)}\, 3^{d-1} \sum_{m=1}^\infty (2m-1)^{d-1} 
e^{-\frac14 \frac{\pi^2 (2 m-1)^2}{\sigma^2 }}\ ,
\end{equation}
where in the last inequality we used $\left(\frac{2m+1}{2m-1}\right)^{d-1}\leq 3^{d-1}$ for $m\geq1$. 
In order to bound the series 
\begin{equation}
\sum_{m=1}^\infty (2m-1)^{d-1}
e^{-\frac14 \frac{\pi^2 (2 m-1)^2}{\sigma^2 }},    
\end{equation}
we present each summand as a product of two terms 
\begin{eqnarray}
\label{eq:series-gauss-poly}
\sum_{m=1}^\infty (2m-1)^{d-1}
e^{-\frac14 \frac{(2\pi m-\pi)^2}{\sigma^2 }}  \leq 
\max_{m\geq 1}\left(
(2m-1)^{d-1}
e^{-\frac18 \frac{\pi^2 (2m-1)^2}{\sigma^2 }}\right)
\sum_{m=1}^\infty 
e^{-\frac18 \frac{\pi^2 (2m-1)^2}{\sigma^2 }}
\end{eqnarray}
For the first term we have \begin{equation}
\label{eq:maxpoly-m}
    \max_{m\geq 1}\left(
(2m-1)^{d-1}
e^{-\frac18 \frac{\pi^2 (2m-1)^2}{\sigma^2 }}\right)
\leq 
\max_{x\geq 0}\left(
x^{d-1}
e^{-\frac18 \frac{\pi^2 \cred  x^2\blk}{\sigma^2 }}\right) \leq \biggl(\frac{4(d-1) \sigma^2}{ e \pi^2}  \biggr)^\frac{d-1}{2}\ ,
\end{equation}
where we used Proposition  \ref{lem:max-gauss-poly}.
Next we consider the second term of 
\eqref{eq:series-gauss-poly}. We have \begin{eqnarray}
\label{eq:bound-series-gauss-m}
&&\sum_{m=1}^\infty 
e^{-\frac18 \frac{\pi^2 (2m-1)^2}{\sigma^2 }}
\leq 
2 e^{-\frac18 \frac{\pi^2}{\sigma^2
}}+
\sum_{m=3}^\infty 
e^{-\frac18 \frac{\pi^2 (2m-2)^2}{\sigma^2 }} =
2 e^{-\frac18 \frac{\pi^2}{\sigma^2
}}+
\sum_{l=2}^\infty 
e^{-\frac18\frac{\pi^2 l^2}{\sigma^2 }}\leq 
\nonumber \\
&&
\leq
2 e^{-\frac18 \frac{\pi^2}{\sigma^2
}} + 
\int_1^\infty\dt x e^{-\frac18\frac{\pi^2 x^2}{\sigma^2 }} \dt x\leq
2 e^{-\frac18 \frac{\pi^2}{\sigma^2
}} + \frac{\sigma \sqrt{8}}{\sqrt{\pi}}e^{-\frac18 \frac{\pi^2}{\sigma^2
}}\leq 
3 e^{-\frac18 \frac{\pi^2}{\sigma^2
}},
\end{eqnarray}
where to bound the integral we have used the Hoeffding bound of Proposition \ref{prop:Hoeffding} and in  the last step we assumed that $\sigma\leq \sqrt{\pi/8}$.
Inserting \eqref{eq:bound-series-gauss-m} and \eqref{eq:maxpoly-m} into 
\eqref{eq:series-gauss-poly} we get 
\begin{equation}
    \sum_{m=1}^\infty (2m+1)^{d-1}
e^{-\frac14 \frac{\pi^2(2 m-1)^2}{\sigma^2 }}    \leq 
3 \biggl(\frac{4(d-1) \sigma^2}{ e \pi^2}  \biggr)^\frac{d-1}{2}\,
 e^{-\frac18 \frac{\pi^2}{\sigma^2
}}. 
\end{equation}
Coming back to \eqref{eq:tail-m1} we obtain
\begin{equation}
\|\gauss^\sigma- \gauss^\sigma_p\|_1 \leq 
d C_d \,\sigma^{d(d-1)}3^{d-1}\,
3 \biggl(\frac{4(d-1) \sigma^2 }{ e \pi^2}  \biggr)^\frac{d-1}{2}\,
 e^{-\frac18 \frac{\pi^2}{\sigma^2}}=
B_d(\sigma) C_d\,\sigma^{d(d-1)}\,e^{-\frac18 \frac{\pi^2}{\sigma^2}} \ ,
\end{equation}
where $B_d(\sigma)=d3^d\left(\frac{4(d-1)\sigma^2}{e\pi^2}\right)^{\frac{d-1}{2}}$. It is easy to see that $B_d(\sigma)$ is non-increasing function of $\sigma$ for $d\geq2$ and that $B_d(1/(6\sqrt{d}))\leq 1$. Therefore for $\sigma\leq\frac{1}{6\sqrt{d}}$ we have
\begin{equation}
        \|\gauss^\sigma- \gauss^\sigma_p\|_1\leq C_d
\,\sigma^{d(d-1)}\,
        e^{-\frac18 \frac{\pi^2}{\sigma^2}}\ .
\end{equation}
During the above considerations we have assumed
    \begin{equation}
        \sigma \leq \sqrt{\pi/8},\quad 
        \sigma\leq \frac{\pi}{2 \sqrt{d}}\ ,
    \end{equation}
and therefore $\sigma\leq\frac{1}{6\sqrt{d}}$ is the the strongest constraint we had to impose. 
\end{proof}

\subsection{Polynomial truncation of periodized Gaussian}\label{app:polyCONVERG}

This part of the Appendix is devoted to technical results on approximation of $\tilde{\F}^\sigma$ by its polynomial truncation $\tilde{F}^\sigma_k$. As explained in Section \ref{sec:constr} the crucial number of interest, the $L^2$-distance $\|\tilde{\F}^\sigma-\tilde{\F}^\sigma_k\|_2$ can be upper bounded by $L^2$-distance between the corresponding functions on $\T^d$ (computed with respect to the measure $\dt\mu(\bphi)$), $\|\gauss^{\sigma}_{p,k} -  \gauss^{\sigma}_{p}\|_2$. This is is the language that will be used in what follows.

\begin{Lemma}[Needed to prove Lemma \ref{lem:L2normBOUND} from Section \ref{sec:constr}]
\label{lem:normfkf} Assume that 
\begin{equation}
\label{eq:normfkf1}
    k\geq 
     \frac{d}{\sigma} 
     \ , 
    \quad  \sigma\leq \frac12.
\end{equation}
Let $\gauss^\sigma_p$, $\gauss^\sigma_{p,k}$ be functions on $\T^d$ defined in Eq.\eqref{eq:Gauss appendix}. We have the following estimate on their $L^2$ distance 
 \begin{equation}
\label{eq:polyKbound2}
    \|\gauss^{\sigma}_{p,k} -  \gauss^{\sigma}_{p}\|_2 \leq
   \frac{1}{(2\pi)^d d!}
    \sqrt{\frac{2^{d(d-1)} \pi^\frac{d}{2} \sqrt{8 \pi} d!}{\Gamma(\frac{d}{2})}}\,\frac{e^{-\frac14 (\frac{k}{\sqrt{d}}-\sqrt{d})^2\sigma^2}}{\sigma}\ .  
 \end{equation}
This can be further estimated as 
\begin{equation}\label{eq:polyKbound}
    \|\gauss^{\sigma}_{p,k} -  \gauss^{\sigma}_{p}\|_2 \leq
   10\,C_d \frac{e^{-\frac14 (\frac{k}{\sqrt{d}}-\sqrt{d})^2\sigma^2}}{\sigma}\ ,
\end{equation}
where $C_d=\frac{\prod_{k=1}^d k! }{(2\pi)^d d!}$.
\end{Lemma}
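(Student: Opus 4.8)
The plan is to reduce the $L^2(\T^d,\dt\mu)$‑distance to a Gaussian tail sum over $\Z^d$ and then estimate that sum by comparison with a Euclidean Gaussian integral, exactly in the spirit of the trigonometric expansions \eqref{eq:PeriodGAUSS}--\eqref{eq:TruncatedGAUSS}. First, subtracting \eqref{eq:TruncatedGAUSS} from \eqref{eq:PeriodGAUSS} one gets
$h:=\gauss^\sigma_p-\gauss^\sigma_{p,k}=\tfrac1{(2\pi)^d}\sum_{\n\in\Z^d,\,|\n|_1>k}e^{-\frac12\sigma^2\n^2}e^{i\n\bphi}$,
an $S_d$‑invariant trigonometric series with coefficients $\tfrac1{(2\pi)^d}e^{-\frac12\sigma^2\n^2}$ supported on $|\n|_1>k$. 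Next, writing the Weyl denominator $\Delta(\bphi)=\prod_{i<j}(e^{i\varphi_i}-e^{i\varphi_j})=\sum_{\pi\in S_d}\sgn(\pi)e^{i\langle\mathbf p_\pi,\bphi\rangle}$ (with $\mathbf p_\pi$ a permutation of $(0,\dots,d-1)$), the push‑forward Haar measure is $\dt\mu(\bphi)=\tfrac1{(2\pi)^d d!}|\Delta(\bphi)|^2\dt\bphi$, so $\|h\|_{L^2(\dt\mu)}^2=\tfrac1{(2\pi)^d d!}\|h\Delta\|_{L^2(\dt\bphi)}^2$. Bounding $|\Delta(\bphi)|^2=\prod_{i<j}|e^{i\varphi_i}-e^{i\varphi_j}|^2\le 2^{d(d-1)}$ and applying Parseval on $L^2([-\pi,\pi]^d,\dt\bphi)$ (orthogonality of the $e^{i\n\bphi}$) gives $\|h\|_{L^2(\dt\mu)}^2\le\tfrac{2^{d(d-1)}}{(2\pi)^{2d}d!}\sum_{|\n|_1>k}e^{-\sigma^2\n^2}$. (If this pointwise bound on $|\Delta|^2$ turns out to be too lossy for the stated constant, one uses instead the exact identity $\|h\|_{L^2(\dt\mu)}^2=\sum_\lambda|\langle h,\chi_\lambda\rangle_{L^2(\dt\mu)}|^2$ — valid because $h$ is a class function and the irreducible characters $\chi_\lambda$ of $\U(d)$ are orthonormal in $L^2(\dt\mu)$ — and carries the same Gaussian estimate over the dominant weights $\lambda$ with $|\lambda|_1\gtrsim k$; the combinatorics change but not the mechanism.)

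The heart is the tail sum $\sum_{|\n|_1>k}e^{-\sigma^2\n^2}$. For integer $\n$ with $|\n|_1>k$ one has $\n^2=\|\n\|_2^2\ge|\n|_1^2/d>k^2/d$. Attach to each such $\n$ the unit cube $Q_\n=\n+[-\tfrac12,\tfrac12]^d$; the cubes are pairwise disjoint, their union lies in $\{\x:|\x|_1>k-\tfrac d2\}\subseteq\{\x:\|\x\|_2>(k-d)/\sqrt d\}$, and log‑concavity of $\x\mapsto e^{-\sigma^2\|\x\|^2}$ (Jensen, using $\tfrac1{|Q_\n|}\int_{Q_\n}\|\x\|^2\dt\x=\|\n\|^2+d/12$) gives $e^{-\sigma^2\n^2}\le e^{\sigma^2 d/12}\int_{Q_\n}e^{-\sigma^2\|\x\|^2}\dt\x$. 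Summing and passing to polar coordinates, $\sum_{|\n|_1>k}e^{-\sigma^2\n^2}\le e^{\sigma^2 d/12}\tfrac{2\pi^{d/2}}{\Gamma(d/2)}\int_\rho^\infty r^{d-1}e^{-\sigma^2 r^2}\dt r$ with $\rho=(k-d)/\sqrt d$. Substituting $u=\sigma^2 r^2$ turns the last integral into $\tfrac{\sigma^{-d}}2\,\Gamma(\tfrac d2,\sigma^2\rho^2)$; the hypothesis $k\ge d/\sigma$ (with $\sigma\le\tfrac12$) forces $\sigma^2\rho^2$ to exceed a fixed multiple of $d$, i.e.\ we sit well to the right of the maximum of $u^{d/2-1}e^{-u}$, where the incomplete‑Gamma tail obeys $\Gamma(\tfrac d2,\sigma^2\rho^2)\le 2(\sigma^2\rho^2)^{d/2-1}e^{-\sigma^2\rho^2}$. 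Absorbing the polynomial prefactor $(\sigma^2\rho^2)^{d/2-1}$ into half of the exponent by Proposition~\ref{lem:max-gauss-poly}, and estimating the leftover one‑dimensional Gaussian tail by the Hoeffding bound of Proposition~\ref{prop:Hoeffding}, one collects all the $d$‑dependent constants produced along the way into $\sqrt{8\pi}$ and $\Gamma(d/2)^{-1}$ to obtain $\sum_{|\n|_1>k}e^{-\sigma^2\n^2}\le\tfrac{\pi^{d/2}\sqrt{8\pi}}{\Gamma(d/2)\,\sigma^2}\,e^{-\sigma^2(k-d)^2/(2d)}$.

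Combining this with the reduction of the first paragraph and taking square roots gives exactly \eqref{eq:2NormBoundPoly1}/\eqref{eq:polyKbound2}, since $(k-d)^2/(2d)=\tfrac12(\tfrac k{\sqrt d}-\sqrt d)^2$. Finally, \eqref{eq:polyKbound} follows from \eqref{eq:polyKbound2} by the elementary inequality $\tfrac1{(2\pi)^d d!}\sqrt{\tfrac{2^{d(d-1)}\pi^{d/2}\sqrt{8\pi}\,d!}{\Gamma(d/2)}}\le 10\,C_d$ with $C_d=\tfrac{\prod_{l=1}^d l!}{(2\pi)^d d!}$, i.e.\ $2^{d(d-1)}\pi^{d/2}\sqrt{8\pi}\,d!\le 100\,\Gamma(\tfrac d2)\bigl(\prod_{l=1}^d l!\bigr)^2$, which one checks by induction on $d$ (run separately for $d$ even and odd, as for the other induction arguments in this Appendix). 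I expect the main obstacle to be the tail estimate of the second paragraph: one has to keep the constants from the cube packing, the incomplete‑Gamma bound, and Propositions~\ref{lem:max-gauss-poly}--\ref{prop:Hoeffding} small enough to fit under $\pi^{d/2}\sqrt{8\pi}/\Gamma(d/2)$, and in particular one may not afford a factor $\sigma^{-d}$ but only $\sigma^{-1}$ — which is precisely why the estimate must be organised as a one‑parameter (radial, or in terms of $|\lambda|_1$) tail rather than as a full $d$‑dimensional Gaussian tail with the Weyl weight bounded pointwise.
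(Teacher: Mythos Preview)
Your opening reduction is identical to the paper's: bound the Vandermonde weight by $2^{d(d-1)}$, apply Parseval on $L^2(\T^d,\dt\bphi)$, and then use $|\n|_1>k\Rightarrow\n^2>k^2/d$ (this is exactly \eqref{eq:fourierEXPbound}--\eqref{eq:L1toL2}). The final constant comparison $A_d\le 10$ is likewise what the paper does (Proposition~\ref{prop:AdUPPERbound}), though there by showing $A_d$ is monotone decreasing for $d\ge 6$ and checking small $d$ numerically rather than by an even/odd induction.

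The gap is in your tail estimate, and your diagnosis of the obstacle is off. The dangerous $\sigma^{-d}$ does \emph{not} come from the pointwise Weyl-weight bound---that contributes only the dimensionless constant $2^{d(d-1)}$---it comes from the $r^{d-1}$ Jacobian in polar coordinates. After your substitution $u=\sigma^2r^2$ and the incomplete-Gamma bound you are left with $\tfrac{2\pi^{d/2}}{\Gamma(d/2)}\cdot\tfrac{\rho^{d-2}}{\sigma^{2}}e^{-\sigma^2\rho^2}$. When you then ``absorb $\rho^{d-2}$ into half the exponent'' via Proposition~\ref{lem:max-gauss-poly}, the global maximum of $x^{d-2}e^{-\frac12\sigma^2x^2}$ is $\bigl((d-2)/(e\sigma^2)\bigr)^{(d-2)/2}$, which reintroduces $\sigma^{-(d-2)}$. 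So the route as written yields $\sigma^{-d}$, not $\sigma^{-2}$, and ``organising as a one-parameter radial tail'' cannot help: your cube-packing-plus-polar argument already \emph{is} a radial tail. The character-expansion alternative you mention in parentheses does not obviously rescue this either, since the dominant weights of $\UU(d)$ still form a $d$-dimensional lattice and the same $r^{d-1}$-type counting reappears. (A secondary issue: the bound $\Gamma(d/2,a)\le 2a^{d/2-1}e^{-a}$ needs $a$ comfortably past $d/2$, whereas at the boundary $k=d/\sigma$, $\sigma=\tfrac12$ you only get $a=\sigma^2\rho^2=d/4$.)

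For comparison, the paper does not pass to a continuous radial integral at all. It invokes a separate technical lemma (Lemma~\ref{lem:sum-ball}): slice $\{\n:\n^2>r^2\}$ into discrete spherical shells $D_l$ of width $\sqrt d$, bound $|D_l|$ by the volume of an enlarged ball, dominate the resulting series by $\tfrac{1}{\sqrt d}\int_{r-\sqrt d}^\infty(y+\sqrt d)^d e^{-\sigma^2y^2}\,\dt y$, and then split the exponential to balance $(y+\sqrt d)^d$ against $e^{-\frac12\sigma^2y^2}$, using the hypothesis $r\ge\sqrt d/\sigma$ to control where the maximum of that product sits. The mechanism is close in spirit to yours, but the bookkeeping of the polynomial-versus-Gaussian balance is exactly where the $\sigma^{-2}$ rather than $\sigma^{-d}$ must be won, and your sketch does not yet achieve this.
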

\begin{proof}
Using the definition of of the $L^2$ norm on $\T^d$ we get
\begin{equation}
\|\gauss^{\sigma}_{p,k} -  \gauss^{\sigma}_{p}\|_2^2
    =\int_{\T^d}\dt \mu(\bphi)
    |\gauss^{\sigma}_{p,k}(\bphi) -  \gauss^{\sigma}_{p}(\bphi)|^2 
    \leq
    \frac{2^{d(d-1)}}{(2 \pi)^d d!}\int_{\T^d}\dt \bphi 
    |\gauss^{\sigma}_{p,k}(\bphi) -  \gauss^{\sigma}_{p}(\bphi)|^2\ ,
\end{equation}
where we have used definition of $\dt \mu(\bphi)$ (cf. Eq.\eqref{eq:phix}) and the inequality $\prod_{1\leq i<j\leq d}|e^{i \varphi_i} -e^{i \varphi_j}|^2\leq2^{d(d-1)}$. By expanding $\gauss^{\sigma}_{p,k}(\bphi) -  \gauss^{\sigma}_{p}(\bphi)$ in a trigonometric series and using ortogonality of functions $\left\{\exp(\ii\n\bphi)\right\}_{\n\in\Z^d}$ on $\T^d$ (equipped with  measure $\dt\bphi$) we obtain
\begin{equation}\label{eq:fourierEXPbound}
    \|\gauss^{\sigma}_{p,k} -  \gauss^{\sigma}_{p}\|_2^2 \leq\frac{2^{d(d-1)}}{(2 \pi)^d d!}\int_{\T^d}\dt \bphi
    \biggl|  \frac{1}{(2\pi)^d}\sum_{\n:\ |\n|_1>k } e^{-\frac{1}{2} \n^2 \sigma^2} e^{-i\n \bphi}  \biggr|^2 = \frac{1}{(2 \pi)^d}\frac{2^{d(d-1)}}{(2\pi)^{d}d!} 
    \sum_{\n:\ |\n|_1>k }
    e^{-\n^2 \sigma^2}\ ,
\end{equation}
where the summation in second and third expression above is over $\n\in\Z^d$ corresponding to trygonometric polynomials of degree exceeding $k$ (i.e  $\sum_{i=1}^d|n_i|>k$). Using the well-known bound $\n^2\geq|\n|_1^2/d$, valid for $\n\in\Z^d$,  we obtain 
\begin{equation}\label{eq:L1toL2}
    \sum_{\n:\ |\n|_1>k }
    e^{-\n^2 \sigma^2} \leq  \sum_{\n:\ \n^2>\frac{k^2}{d} }
    e^{-\n^2 \sigma^2}\ .
\end{equation}
The second sum in the above expression can be bounded using Lemma \ref{lem:sum-ball} given below. For this result if follows that for  
$k\geq d/\sigma$
and $\sigma\leq 1/2$ we have
\begin{equation}
\sum_{\n:\ \n^2>\frac{k^2}{d} }
e^{-\sigma^2 \n^2}\leq
 \frac{\pi^\frac{d}{2}\sqrt{8 \pi}}{\Gamma(\frac{d}{2})\sigma^2}
    \, e^{-\frac12 (\frac{k}{\sqrt{d}}-\sqrt{d})^2\sigma^2}\ .
\end{equation}
Inserting this inequality to \eqref{eq:L1toL2} and using the result in \eqref{eq:fourierEXPbound} yields 
\begin{equation}
    \|\gauss^{\sigma}_{p,k} -  \gauss^{\sigma}_{p}\|_2 \leq 
   C_d A_d \frac{e^{-\frac14 (\frac{k}{\sqrt{d}}-\sqrt{d})^2\sigma^2}}{\sigma}\ ,
\end{equation}
where $C_d=\frac{\prod_{k=1}^d k! }{(2\pi)^d d!}$ and
\begin{equation}
    A_d=
    \frac{1}{\prod_{k=1}^d k!}\sqrt{\frac{2^{d(d-1)} \pi^\frac{d}{2} \sqrt{8 \pi} d!}{\Gamma(\frac{d}{2})}}\ .
\end{equation}
This proves the estimate \eqref{eq:polyKbound2}. In Proposition \ref{prop:AdUPPERbound} we prove that for all $d$ we have $A_d\leq 10$. Making use of this result proves Eq.\eqref{eq:polyKbound}. 
\end{proof}

Before we prove the upper bound for 
the series appearing 
in 
Eq.\eqref{eq:L1toL2},
we shall present bound on $\|f_p^\sigma\|_2$.
\begin{Lemma}
\label{lem:2norm}
For $\sigma \leq 1/4 $ 
we have 
    \begin{equation}
    \|\gauss^{\sigma}_{p}\|_2\leq 
    \frac{1}{d!(2\pi)^{d}}
    \,c(d)\, \sigma^{-\frac{d}{2}},
\end{equation}
where 
\begin{equation}
    c(d)=2^{\frac{d(d+1)}{2}}\,  \sqrt{d!}\,.
\end{equation}
\end{Lemma}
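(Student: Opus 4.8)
The plan is to pass from the $L^2$-norm with respect to the Weyl measure $\dt\mu(\bphi)$ to the $L^2$-norm with respect to the flat measure $\dt\bphi$ on $\T^d$, apply Parseval, bound the resulting lattice sum, and then collect constants. First I would use the elementary estimate $\prod_{1\leq i<j\leq d}|e^{\ii\varphi_i}-e^{\ii\varphi_j}|^2\leq 4^{\binom d2}=2^{d(d-1)}$ (each factor is at most $4$) to write
\begin{equation}
\|\gauss^\sigma_p\|_2^2=\int_{\T^d}\dt\mu(\bphi)\,|\gauss^\sigma_p(\bphi)|^2\leq \frac{2^{d(d-1)}}{(2\pi)^d d!}\int_{\T^d}\dt\bphi\,|\gauss^\sigma_p(\bphi)|^2 .
\end{equation}

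Next, using the Fourier expansion \eqref{eq:Gauss appendix} and orthogonality of $\{e^{-\ii\n\bphi}\}_{\n\in\Z^d}$ in $L^2(\T^d,\dt\bphi)$ (each of square-norm $(2\pi)^d$), Parseval's identity gives
\begin{equation}
\int_{\T^d}\dt\bphi\,|\gauss^\sigma_p(\bphi)|^2=\frac{1}{(2\pi)^{2d}}\sum_{\n\in\Z^d}e^{-\n^2\sigma^2}\,(2\pi)^d=\frac{1}{(2\pi)^d}\Big(\sum_{n\in\Z}e^{-n^2\sigma^2}\Big)^{d},
\end{equation}
where I factorized the Gaussian lattice sum over coordinates. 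I would then bound the one-dimensional sum by comparison with an integral, $\sum_{n\in\Z}e^{-n^2\sigma^2}=1+2\sum_{n\geq1}e^{-n^2\sigma^2}\leq 1+2\int_0^\infty e^{-x^2\sigma^2}\dt x=1+\frac{\sqrt\pi}{\sigma}\leq\frac{2\sqrt\pi}{\sigma}$, the last step valid because $\sigma\leq 1/4<\sqrt\pi$.

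Combining the three displays yields $\|\gauss^\sigma_p\|_2^2\leq \dfrac{2^{d(d-1)}}{(2\pi)^{2d}d!}\Big(\dfrac{2\sqrt\pi}{\sigma}\Big)^{d}=\dfrac{2^{d^2-2d}}{\pi^{3d/2}\,d!\,\sigma^{d}}$, hence $\|\gauss^\sigma_p\|_2\leq \dfrac{2^{d(d-2)/2}}{\pi^{3d/4}\sqrt{d!}\,\sigma^{d/2}}$. It remains to check this is at most the claimed bound $\frac{1}{d!(2\pi)^d}c(d)\sigma^{-d/2}=\frac{2^{d(d-1)/2}}{\sqrt{d!}\,\pi^d}\sigma^{-d/2}$, which after cancelling common factors reduces to $\pi^{d/4}\leq 2^{d/2}$, i.e.\ to the numerical inequality $\pi^{1/4}\approx 1.331\leq \sqrt2\approx 1.414$. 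This closes the proof. The only mildly delicate point is the bookkeeping of the powers of $2$ and $\pi$; there is no genuine obstacle, since every inequality used (the Vandermonde bound, Parseval, the Gaussian integral comparison, and $\pi^{1/4}<\sqrt2$) is elementary and dimension-uniform.
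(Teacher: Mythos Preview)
Your proof is correct and follows essentially the same route as the paper: bound the Vandermonde density by $2^{d(d-1)}$, apply Parseval, factorize the Gaussian lattice sum, and estimate the one-dimensional theta sum by an integral. The only cosmetic difference is in the last step: the paper bounds $1+\tfrac{\sqrt\pi}{\sigma}\leq \tfrac{2}{\sigma}$ directly (this is where the hypothesis $\sigma\leq 1/4<2-\sqrt\pi$ is actually used), landing exactly on the stated constant, whereas you use the looser $1+\tfrac{\sqrt\pi}{\sigma}\leq \tfrac{2\sqrt\pi}{\sigma}$ and then close with $\pi^{1/4}<\sqrt2$.
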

\begin{proof}
Proceeding as in Lemma \ref{lem:normfkf} we get
\begin{equation}
    \|\gauss^{\sigma}_{p}\|_2^2=\frac{1}{d!(2\pi)^d}\int_{\T^d}\dt \bphi \prod_{i>j}|e^{i \varphi_i} -e^{i \varphi_j}|^2
    \frac{1}{(2\pi)^{2d}}
    \biggl|\sum_{\n\in\Z^d}\hat{f}^\sigma(\bphi) e^{-i \n \bphi}\biggr|^2\leq 
    \frac{2^{d(d-1)}}{d!(2\pi)^{2d}} \sum_{\n\in\Z^d}e^{-\n^2 \sigma^2}.
\end{equation}
Subsequently we have 
\begin{eqnarray}
\sum_{\n\in\Z^d}e^{-\n^2 \sigma^2}=\biggl(\,\sum_{l=-\infty}^\infty e^{-l^2 \sigma^2}\biggr)^d.
\end{eqnarray}
We then bound the sum by integral
\begin{equation}
    \sum_{l=-\infty}^\infty e^{-l^2 \sigma^2}
    \leq 1 + \int_{-\infty}^\infty  e^{-x^2 \sigma^2}\, \dt x=1 + \frac{\sqrt{\pi}}{\sigma}\leq \frac{2}{\sigma},
\end{equation}
where the last inequality holds for $\sigma\leq 2-\sqrt{\pi}$ (so we may take $\sigma\leq 1/4$). Thus we obtain
\begin{equation}
    \|\gauss^{\sigma}_{p}\|_2^2 \leq 
    \frac{2^{d(d-1)}}{d!(2\pi)^{2d}}\, \frac{2^d}{\sigma^d}= 
    \left(\frac{1}{d!(2\pi)^{d}}\right)^2\, 
    d!\, 2^{d(d+1)}\,\frac{1}{\sigma^d}
\end{equation}

\end{proof}

We now prove the result on the upper bound of the series appearing in Eq.\eqref{eq:L1toL2}.
\begin{Lemma}
\label{lem:sum-ball}
For  $r\geq\frac{\sqrt{d}}{\sigma}$ and $\sigma\leq 1/2$ 
we have 
\begin{equation}
\sum_{\n:\ \n^2>r^2} 
e^{-\sigma^2 \n^2}\leq
\frac{\pi^\frac{d}{2}}{\Gamma(\frac{d}{2})}  \frac{2\sqrt{2 \pi}}{\sigma^2}
    \, e^{-\frac12 (r-\sqrt{d})^2\sigma^2}\ ,
\end{equation}
where the summation  is over $\n\in\Z^d$ corresponding satisfying $\n^2\geq r^2$.
\end{Lemma}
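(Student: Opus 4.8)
The plan is to bound the sum $\sum_{\n:\ \n^2 > r^2} e^{-\sigma^2 \n^2}$ over lattice points outside a Euclidean ball of radius $r$ by a corresponding Gaussian integral and then estimate that integral. First I would associate to each $\n \in \Z^d$ the unit cube $Q_\n = \n + [-\tfrac12,\tfrac12]^d$; these cubes tile $\R^d$. For $\x \in Q_\n$ one has $|\x - \n| \le \tfrac{\sqrt d}{2}$, hence $|\x| \ge |\n| - \tfrac{\sqrt d}{2}$, and since we will want a clean bound it is convenient instead to use the cruder $|\n| \ge |\x| - \tfrac{\sqrt d}{2} \ge |\x| - \sqrt d$. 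More directly, I would compare $e^{-\sigma^2 \n^2}$ on $Q_\n$ with $e^{-\sigma^2(|\x|-\sqrt d)^2}$ for $\x\in Q_\n$ whenever $|\n|\ge \sqrt d$ (so that $|\x| - \sqrt d \le |\n|$ and the exponent only grows), giving
\begin{equation}
\sum_{\n:\ \n^2 > r^2} e^{-\sigma^2 \n^2} \le \int_{|\x| > r - \sqrt d} e^{-\sigma^2 (|\x| - \sqrt d)^2}\, \dt\x\ ,
\end{equation}
using $r \ge \sqrt d/\sigma \ge \sqrt d$ (for $\sigma \le 1$) so that the shell $\{\n^2 > r^2\}$ only involves $|\n|$ large.

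Next I would pass to polar coordinates in $\R^d$: writing $\rho = |\x|$ and using $\int_{\R^d} g(|\x|)\dt\x = \frac{2\pi^{d/2}}{\Gamma(d/2)}\int_0^\infty g(\rho)\rho^{d-1}\dt\rho$, the integral becomes $\frac{2\pi^{d/2}}{\Gamma(d/2)} \int_{r-\sqrt d}^\infty e^{-\sigma^2(\rho-\sqrt d)^2}\rho^{d-1}\,\dt\rho$. I would then substitute $u = \rho - \sqrt d$ (equivalently bound $\rho^{d-1}$), and the key remaining one-dimensional estimate is of the form $\int_{R}^\infty e^{-\sigma^2 u^2} (u+\sqrt d)^{d-1}\,\dt u$ with $R = r - 2\sqrt d$ or similar. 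To control this I would use a Gaussian-tail bound together with the observation that the polynomial factor $(u+\sqrt d)^{d-1}$ is dominated by a shifted exponential; more cleanly, I expect to invoke the Hoeffding-type bound (Proposition~\ref{prop:Hoeffding}) or a direct completion-of-the-square argument to get $\int_R^\infty e^{-\sigma^2 u^2}\dt u \le \frac{1}{2\sigma}\sqrt{\tfrac\pi{}} e^{-\sigma^2 R^2}$ type control, and absorb the polynomial prefactor into the constant $\frac{2\sqrt{2\pi}}{\sigma^2}$ appearing in the statement, using the hypothesis $r \ge \sqrt d/\sigma$ to trade polynomial growth for a fraction of the Gaussian decay.

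The main obstacle I anticipate is bookkeeping the constants and the shift carefully enough to land exactly on the claimed bound $\frac{\pi^{d/2}}{\Gamma(d/2)}\frac{2\sqrt{2\pi}}{\sigma^2} e^{-\frac12(r-\sqrt d)^2\sigma^2}$ — in particular getting the $e^{-\frac12(\cdot)^2\sigma^2}$ exponent (with the factor $\tfrac12$, not $1$) out of the $e^{-\sigma^2\n^2}$ we start from. The resolution is to split $\sigma^2\n^2 = \tfrac12\sigma^2\n^2 + \tfrac12\sigma^2\n^2$: use one half, $e^{-\frac12\sigma^2\n^2}$, to produce the stated decay $e^{-\frac12(r-\sqrt d)^2\sigma^2}$ via the cube-to-integral comparison (noting $|\n| > r$ and a shift by $\sqrt d$), and use the other half, $e^{-\frac12\sigma^2\n^2}$, summed over all $\n\in\Z^d$, to produce the convergent factor which is bounded (as in Lemma~\ref{lem:2norm}'s computation) by $\left(\sum_{l} e^{-\frac12 l^2\sigma^2}\right)^d \le (1 + \sqrt{2\pi}/\sigma)^d \le (2\sqrt{2\pi}/\sigma)^d$ for small $\sigma$; combined with the volume factor $\pi^{d/2}/\Gamma(d/2)$ from polar coordinates and the constraint $\sigma \le 1/2$, $r \ge \sqrt d/\sigma$, this yields the asserted inequality after elementary estimates. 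I would close by verifying that all the auxiliary hypotheses invoked ($\sigma \le 1/2$, $r \ge \sqrt d/\sigma$) are exactly those in the statement, so no stronger constraint is silently imposed.
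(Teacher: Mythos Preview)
Your opening plan---cube-to-integral comparison followed by polar coordinates---is a legitimate alternative to the paper's route. The paper instead decomposes the lattice into spherical shells $D_l=\{\n:\,(r+l\sqrt d)^2\le \n^2\le(r+(l+1)\sqrt d)^2\}$, bounds $|D_l|$ by the volume of a containing ball, and then dominates the resulting series by an integral after checking monotonicity; both methods land on a one-dimensional integral of the shape $\int_{r-\sqrt d}^\infty (u+\sqrt d)^{d}\,e^{-\sigma^2 u^2}\,du$. The paper then splits $e^{-\sigma^2 u^2}=e^{-\frac12\sigma^2 u^2}\cdot e^{-\frac12\sigma^2 u^2}$, bounds the polynomial times one half by its maximum (which, thanks to $r\ge\sqrt d/\sigma$, is attained before the integration range and is $\le 2\sqrt d/\sigma$), and applies the Hoeffding-type tail bound to the other half. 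That is precisely how the factor $\tfrac12$ in the exponent and the $1/\sigma^2$ in the prefactor arise.

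Your ``resolution'' paragraph, however, does not work as written. You propose to split $e^{-\sigma^2\n^2}=e^{-\frac12\sigma^2\n^2}\cdot e^{-\frac12\sigma^2\n^2}$ at the level of the lattice sum, using one half to extract the decay and bounding the other half by the unrestricted sum $\sum_{\n\in\Z^d}e^{-\frac12\sigma^2\n^2}\le(2\sqrt{2\pi}/\sigma)^d$, while \emph{also} claiming the volume factor $\pi^{d/2}/\Gamma(d/2)$ from polar coordinates. These two moves are incompatible: once the lattice sum is replaced by $(2\sqrt{2\pi}/\sigma)^d$ there is no integral left on which to do polar coordinates, and the resulting prefactor $(2\sqrt{2\pi}/\sigma)^d$ is strictly larger than the claimed $\frac{\pi^{d/2}}{\Gamma(d/2)}\cdot\frac{2\sqrt{2\pi}}{\sigma^2}$ already for $d=2$ (namely $8\pi/\sigma^2\approx 25.1/\sigma^2$ versus $2\pi\sqrt{2\pi}/\sigma^2\approx 15.7/\sigma^2$), and the discrepancy grows exponentially in $d$ since $\pi^{d/2}/\Gamma(d/2)\to 0$. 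So this path does not prove the stated bound.

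The fix is to keep your polar-coordinates integral and perform the split $e^{-\sigma^2 u^2}=e^{-\frac12\sigma^2 u^2}\cdot e^{-\frac12\sigma^2 u^2}$ \emph{there}, exactly as the paper does: one half absorbs $\rho^{d-1}$ via a max estimate (this is where the hypothesis $r\ge\sqrt d/\sigma$ is used), the other half gives the Gaussian tail. Your earlier remark about ``absorbing the polynomial prefactor into the constant $\frac{2\sqrt{2\pi}}{\sigma^2}$'' was the right instinct---carry it out rather than replacing it with the unrestricted-sum argument.
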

\begin{proof}
We write
\begin{equation}
\sum_{\n^2 \geq r^2}
e^{-\sigma^2 \n^2}= \sum_{l=0}^\infty
\sum_{\n \in D_l}
e^{-\sigma^2 \n^2} 
\leq \sum_{l=0}^\infty
|D_l|
e^{-\sigma^2 (r+l\sqrt{d})^2}\ , 
\end{equation}
with $|D_l|$ being the number of elements of the set $D_l$ defined by   $D_l=\{\n\in\Z^d\ |\ (r+l\sqrt{d})^2\leq \n^2\leq (r+ l\sqrt{d}+\sqrt{d})^2\}$.
To evaluate  $|D_l|$  we note that
\begin{equation}
    |D_l|\leq |B_l|\leq 
    \mathrm{vol}(B(r+l\sqrt{d}+\sqrt{d}))\ ,
\end{equation}
where $B_l=\{\n\in\Z^d\| \ \n^2\leq (r+l \sqrt{d})^2\}$, and $B(r)$ denotes the Euclidean ball in$\R^d$ of radius $r$. Indeed, consider all the points $\n$ contained in $B_l$. These are all points $\n$ contained in the Euclidean ball of radius $r+l \sqrt{d}$. We now note, that each such point is in the middle of the unit $d$-dimensional cube
containing only this ball. The diameter of such cube is $\sqrt{d}$, hence if we enlarge the radius of the ball by $\sqrt{d}$, the number of the points in the $(r+l \sqrt{d})$-ball will be 
no smaller than the volume of the 
enlarged ball. 
Since  
\begin{equation}
\mathrm{vol}(B(r))=c_d r^d\ \ \text{for}\ \ c_d=\frac{\pi^\frac{d}{2}}{\Gamma(\frac{d}{2})}
\end{equation}
we  have 
\begin{equation}
|D_l|\leq 
    c_d (r+ l \sqrt{d} +\sqrt{d})^d\ .
\end{equation}
Therefore we obtain the following upper bound
\begin{equation}
\label{eq:sum-n-bound}
\sum_{\n:\ \n^2>r^2}
e^{-\sigma^2 \n^2}\leq c_d 
\sum_{l=0}^\infty
(r+l\sqrt{d}+\sqrt{d})^d e^{-\sigma^2 (r+l\sqrt{d})^2}.
\end{equation}
In what follows we estimate this series by an integral. To this end we need to choose such $r$ that 
the function will be nonincreasing for $x\geq-1$. We find that 
the function $g_r(x)=(r+x\sqrt{d}+\sqrt{d})^d e^{-\frac12 (r+x\sqrt{d})^2\sigma^2}$ 
has three critical points:
\begin{equation}
    x_0=-1-\frac{r}{\sqrt{d}},\quad x_{\pm}=\frac12\left(-1 - \frac{2 r }{\sqrt{d}}\pm\sqrt{\frac{2}{\sigma^2}+1}\right).
\end{equation}
It follows that the function is nonincreasing for $x\geq x_+$, so that we need $r$ such that $x_+\leq -1$. We rewrite the inequality $x_+\leq -1$  as follows:
 \begin{equation}
     r \geq \frac{\sqrt{d}}{2}\left(1+ \sqrt{\frac{2}{\sigma^2}+1}\right)
 \end{equation}
 Note that assuming  $\sigma\leq 1/2$ we have 
 \begin{equation}
     1+ \sqrt{\frac{2}{\sigma^2}+1}\leq \frac{2}{\sigma}
 \end{equation}
 so that it is enough 
 to take 
 \begin{equation}
     r\geq
    \frac{ \sqrt{d}}{\sigma}
    \ .
 \end{equation}
Since for $g_r(-1)=r^d e^{-(r-\sqrt{d})^2\sigma^2}> 0$ and for $x\to\infty$ it goes to zero, 
 we obtain that $g_r(x)>0$ for $x\geq-1$ and we can bound the sum by an integral
\begin{equation}
    \sum_{l=0}^\infty
(r+l\sqrt{d}+\sqrt{d})^d e^{-\sigma^2 (r+l\sqrt{d})^2}\leq \int_{-1}^\infty 
\dt x(r+x\sqrt{d}+\sqrt{d})^d e^{-\sigma^2 (r+x\sqrt{d})^2} 
=\frac{1}{\sqrt{d}}\int_{r-\sqrt{d}}^\infty\dt y(y+\sqrt{d})^d e^{-y^2\sigma^2}. 
\end{equation}
Using positivity of the integrand within the integration limits, we now bound this integral as follows
\begin{eqnarray}
\label{eq:boundh}
     &&\int_{r-\sqrt{d}}^\infty\dt y(y+\sqrt{d})^d e^{-y^2\sigma^2} =
    \int_{r-\sqrt{d}}^\infty\dt y\left((y+\sqrt{d}) e^{-\frac{y^2\sigma^2}{d}}
    \right)^d
    \leq \nonumber \\
    &&
    \label{eq:h}
    \leq\max_{y\geq r-\sqrt{d}}
    \left((y+\sqrt{d})^d e^{-\frac{y^2\sigma^2}{2}}
    \right)
    \int_{r-\sqrt{d}}^\infty\dt y(y+\sqrt{d}) e^{-\frac{y^2\sigma^2}{2}}.
\end{eqnarray}
Let $u(y)=(y+\sqrt{d})^de^{-\frac{y^2\sigma^2}{2}}$. The function $u$ has three critical points: minimum $y_-$ and maximum $y_+$ given by
\begin{equation}
    y_{\pm}=\frac{\sqrt{d}}{2}\left(-1 \pm\sqrt{\frac{4}{\sigma^2}+1}\right), \quad y_0=-\sqrt{d}.
\end{equation}
The maximum $y_+$ is a global maximum, and $y_+\geq0$.  Therefore 
\begin{equation}
    \max_{y\geq r-\sqrt{d}}
    u(y)\leq u(y_+)\leq y_++\sqrt{d}=
    \frac{\sqrt{d}}{2}\left(1 +\sqrt{\frac{4}{\sigma^2}+1}\right)\leq 
    \frac{2\sqrt{d}}{\sigma},
\end{equation}
where the last inequality holds for $\sigma\leq \frac32$.
The second term of the right-hand-side of inequality \eqref{eq:h} we bound  
by Hoeffding-type bound 
from Proposition \ref{prop:Hoeffding}:
\begin{equation}
    \int_{r-\sqrt{d}}^\infty(y+\sqrt{d}) e^{-\frac{y^2\sigma^2}{2}}
    \dt y\leq 
    \frac{\sqrt{2 \pi}}{\sigma} 
    e^{-\frac12 (r-\sqrt{d})^2\sigma^2}
\end{equation}
valid for positive lower limit,
i.e. for $r\geq \sqrt{d}$. Inserting this into  \eqref{eq:boundh}
we get 
\begin{equation}
    \int_{r-\sqrt{d}}^\infty(y+\sqrt{d})^d e^{-y^2\sigma^2} \dt y\leq
    \frac{2\sqrt{2 \pi}\sqrt{d}}{\sigma^2}
    \, e^{-\frac12 (r-\sqrt{d})^2\sigma^2}.
\end{equation}
which gives
\begin{equation}
    \sum_{l=0}^\infty
(r+l\sqrt{d}+\sqrt{d})^d e^{-\sigma^2 (r+l\sqrt{d})^2}
\leq \frac{2\sqrt{2 \pi}}{\sigma^2}
    \, e^{-\frac12 (r-\sqrt{d})^2\sigma^2}.
\end{equation}
Finally, inserting this into 
\eqref{eq:sum-n-bound}
we get 
\begin{equation}
\sum_{\n:\ \n^2>r^2} e^{-\sigma^2 \n^2}\leq
    c_d  \frac{2\sqrt{2 \pi}}{\sigma^2}
    \, e^{-\frac12 (r-\sqrt{d})^2\sigma^2}.
\end{equation}
This ends the proof.
    \end{proof}



We will now state a technical lemma which will allow  to prove Proposition \ref{prop:settingK} from Section \ref{sec:constr}. 
The lemma sets $k$  for which $\| \gauss^{\sigma}_{p,k} -  \gauss^{\sigma}_{p}\|_2$ 
(actually its upper bound given in \eqref{eq:polyKbound})
is smaller than both lower bound on $\N^\sigma$ (given in Lemma \ref{lem:denominator}) and the upper bound on tails of the periodized Gaussian function (given in Lemma \ref{lem:numerator}). 
Since $\|\tilde{\F}^\sigma-\tilde{\F}^\sigma_k\|_2\leq \|\gauss^{\sigma}_{p,k} -  \gauss^{\sigma}_{p}\|_2$, this is enough  to prove Proposition \ref{prop:settingK}.

\begin{Lemma}
\label{lem:settingK-technical}
For $\sigma\leq 1/8$ and 
\begin{equation}
\label{eq:k}
    k\geq 5 \frac{d^{\frac{3}{2}}}{\sigma}  \sqrt{\frac{1}{8}\frac{r^2}{d^2\sigma^2} + \ln\frac{1}{\sigma}}
\end{equation}
we have 
\begin{equation}
    10\,C_d \frac{e^{-\frac14 (\frac{k}{\sqrt{d}}-\sqrt{d})^2\sigma^2}}{\sigma}
    \leq 
    \frac12 
    \min\left\{
    \frac{3}{2} C_d\,\sigma^{d(d-1)}\, 
      e^{-\frac{1}{4} \frac{r^2}{\sigma^2}},
      \frac12 C_d\,\sigma^{d(d-1)}
\left(\frac{2}{\pi}\right)^{d(d-1)}
      \right\}
\end{equation}
where $C_d= \frac{\prod_{k=1}^d k!}{(2\pi)^d d!}$.
\end{Lemma}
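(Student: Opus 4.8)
The first move is to cancel the common positive constant $C_d$ and unfold the $\min$: the assertion of Lemma \ref{lem:settingK-technical} is equivalent to the pair of inequalities
\[
\frac{10}{\sigma}\,e^{-L}\ \le\ \frac34\,\sigma^{d(d-1)}e^{-\frac14 r^2/\sigma^2}
\qquad\text{and}\qquad
\frac{10}{\sigma}\,e^{-L}\ \le\ \frac14\,\sigma^{d(d-1)}\Bigl(\tfrac2\pi\Bigr)^{d(d-1)},
\]
where I abbreviate $L:=\tfrac14\bigl(\tfrac{k}{\sqrt d}-\sqrt d\bigr)^2\sigma^2$. Taking $-\ln$ of each and using $\ln\tfrac1\sigma>0$, I would observe that \emph{both} are implied by the single estimate
\[
L\ \ge\ \ln 40+(d^2-d+1)\ln\tfrac1\sigma+\tfrac14\,\tfrac{r^2}{\sigma^2}+d(d-1)\ln\tfrac\pi2 ,
\]
since $\ln 40>\ln\tfrac{40}{3}$ and the two extra nonnegative terms only make the bound stronger. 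So the whole lemma reduces to producing a good lower bound for $L$ from the hypothesis on $k$.

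\textbf{Lower bound for $L$.} First I would use $\sigma\le 1/8$, hence $\ln\tfrac1\sigma\ge\ln 8$, to get from the hypothesis that $k\ \ge\ 5\tfrac{d^{3/2}}{\sigma}\sqrt{\ln\tfrac1\sigma}\ \ge\ 40\sqrt{\ln 8}\;d^{3/2}\ \ge\ 10\,d$. Therefore $\tfrac{k}{\sqrt d}\ge 10\sqrt d$, so $\tfrac{k}{\sqrt d}-\sqrt d\ge\tfrac{9}{10}\tfrac{k}{\sqrt d}>0$, and squaring gives $L\ge\tfrac14\bigl(\tfrac{9}{10}\bigr)^2\tfrac{k^2}{d}\sigma^2$. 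Now I would insert the explicit bound $k^2\ge 25\tfrac{d^3}{\sigma^2}\bigl(\tfrac18\tfrac{r^2}{d^2\sigma^2}+\ln\tfrac1\sigma\bigr)$, which yields $\tfrac{k^2}{d}\sigma^2\ge\tfrac{25}{8}\tfrac{r^2}{\sigma^2}+25\,d^2\ln\tfrac1\sigma$, and hence
\[
L\ \ge\ \tfrac{81}{400}\Bigl(\tfrac{25}{8}\,\tfrac{r^2}{\sigma^2}+25\,d^2\ln\tfrac1\sigma\Bigr)
=\tfrac{81}{128}\,\tfrac{r^2}{\sigma^2}+\tfrac{81}{16}\,d^2\ln\tfrac1\sigma .
\]

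\textbf{Comparison of coefficients.} Since $\tfrac{81}{128}>\tfrac14$, the $\tfrac{r^2}{\sigma^2}$ term in the target estimate for $L$ is already absorbed, with slack. For the rest it remains to check $\tfrac{81}{16}d^2\ln\tfrac1\sigma\ge\ln 40+(d^2-d+1)\ln\tfrac1\sigma+d(d-1)\ln\tfrac\pi2$; bounding $\ln 40\le\tfrac{\ln 40}{\ln 8}\ln\tfrac1\sigma$ and $d(d-1)\ln\tfrac\pi2\le\tfrac{\ln(\pi/2)}{\ln 8}d^2\ln\tfrac1\sigma$ reduces this to an elementary polynomial inequality in $d$ that holds for every $d\ge 1$ (at $d=1$ there is already a comfortable margin, and the left side grows faster). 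This gives the displayed estimate for $L$ and therefore both inequalities above. The only mildly delicate point — which is why the constant $5$ (rather than something smaller) appears in the hypothesis on $k$ — is that after the lossy replacement of $\bigl(\tfrac{k}{\sqrt d}-\sqrt d\bigr)^2$ by $\bigl(\tfrac{9}{10}\bigr)^2\tfrac{k^2}{d}$ the coefficient of $\tfrac{r^2}{\sigma^2}$ must still come out strictly above $\tfrac14$; everything else is routine constant-chasing.
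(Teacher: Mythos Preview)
Your argument is correct and follows essentially the same route as the paper: cancel $C_d$, replace the two inequalities coming from the $\min$ by a single stronger one, take logarithms, and verify by direct constant comparison (using $\sigma\le 1/8$, hence $\ln\tfrac1\sigma\ge\ln 8$) that the hypothesis on $k$ suffices. The only cosmetic difference is that you dispose of the $-\sqrt d$ shift via $k\ge 10d\Rightarrow(\tfrac{k}{\sqrt d}-\sqrt d)^2\ge(\tfrac{9}{10})^2\tfrac{k^2}{d}$, whereas the paper keeps the additive $+d$ and absorbs it at the end; both work without difficulty.
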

\begin{proof}
It is enough to prove a bit stronger estimate:
    \begin{equation}
    \label{eq:stronger}
         10\,C_d \frac{e^{-\frac14 (\frac{k}{\sqrt{d}}-\sqrt{d})^2\sigma^2}}{\sigma}
    \leq
    \frac14 C_d
      \,\sigma^{d(d-1)}\, 
      e^{-\frac{1}{4} \frac{r^2}{\sigma^2}} \left( \frac{2}{\pi}\right)^{d(d-1)}.
    \end{equation}
   We thus need to find how large should be $k$ to ensure the above inequality.
    
    Using  $\pi\leq 4$ and $40 \times 2^{d(d-1)}\leq 2^{3d^2}$ (valid for $d\geq 2$)     we get that the inequality 
    \eqref{eq:stronger}
    is implied by the following one 
    \begin{equation}
     2^{3d^2}\,
    e^{-\frac14 \frac{(k-d)^2}{d}\sigma^2}
    \leq
    \sigma^{d^2}\, 
      e^{-\frac{1}{4} \frac{r^2}{\sigma^2}}. 
    \end{equation}
    Assuming now $\sigma \leq 1/8$ (so that $2^{-3d^2}\geq \sigma^{d^2}$), we get that 
    \eqref{eq:stronger}
    is implied by 
    \begin{equation}
    e^{-\frac14 \frac{(k-d)^2}{d}\sigma^2}
    \leq
    \sigma^{2d^2}\, 
      e^{-\frac{1}{4} \frac{r^2}{\sigma^2}}. 
    \end{equation}
    Taking logarithm of both sides, we can rewrite this 
    as follows
    \begin{equation}
    \label{eq:want-have2}
      k \geq 2 \frac{d^{\frac{3}{2}}}{\sigma}  \sqrt{2 \ln\frac{1}{\sigma}+\frac{1}{4}\frac{r^2}{d^2\sigma^2}} +d
    \end{equation}
    For $\sigma\leq 1/8$
    we have 
    \begin{equation}
        2 \frac{d^{\frac{3}{2}}}{\sigma}  \sqrt{2 \ln\frac{1}{\sigma}+\frac{1}{4}\frac{r^2}{d^2\sigma^2}} \geq d,
    \end{equation}
    so to fulfill the inequality
    \eqref{eq:want-have2}
    (and hence 
    \eqref{eq:stronger}
    )
    it is enough that 
    \begin{equation}
        k \geq 3 \frac{d^{\frac{3}{2}}}{\sigma}  \sqrt{2 \ln\frac{1}{\sigma}+\frac{1}{4}\frac{r^2}{d^2\sigma^2}}
    \end{equation}
    Of course we can take a bit larger but better looking $k$ as in \eqref{eq:k}. 
    In the course of the proof we have assumed that $k\geq d/\sigma$,  $\sigma\leq 1/2$ and $\sigma\leq 1/8$. These constraints are fulfilled if
    $\sigma\leq 1/8 $
    and $k$ satisfies 
    \eqref{eq:k}.
    This ends the proof.
    \end{proof}

\subsection{Auxiliary technical results and facts} \label{app:auxTechnical}

\subsubsection{Estimates of integrals of Gaussian-Vandermonde on $\R^d$}

In this section we shall use the following notation 
\begin{equation}
    \vanx{\x}=\prod_{1\leq i<j\leq d} (x_i-x_j)^2\ .
\end{equation}
\begin{Lemma}[Upper bounds on Gaussian integrals]
\label{lem:gsigmabound} Let $f^\sigma$ be a standard Gaussian function given by Eq.\eqref{eq:gaussianDistribution} and let $([-r,r]^{\times d})^c$ be the complement of $[-r,r]^{\times d}$ in $\R^d$, and $C_d=\frac{\prod_{k=1}^d k!}{(2\pi)^d d!}$. Then we have the following upper bounds for the integrals 
    \begin{eqnarray}\label{eq:lem-gauss1}
      \int_{([-r,r]^{\times d})^c}\dt\mu_G(\x) \gauss^\sigma(\x)   &\leq &
      \frac12 C_d
      \,\sigma^{d(d-1)}\,
      e^{-\frac12 d \left(\frac{r}{\sigma \sqrt{d}}-2\right)^2}\ ,\ \mathrm{for}\ \sigma\leq\frac{r}{2 \sqrt{d}}\ , 
      \end{eqnarray}
        \begin{eqnarray}\label{eq:lem-gauss2}
      \int_{([-r,r]^{\times d})^c}\dt\mu_G(\x) \gauss^\sigma(\x)   &\leq &
      \frac12 C_d
      \,\sigma^{d(d-1)}\,
      e^{-\frac14 \frac{r^2}{\sigma^2}}\ ,\ \mathrm{for}\  \sigma\leq \frac{r}{4 \sqrt{d}}\ .
      \end{eqnarray}
   Moreover,
      \begin{eqnarray}\label{eq:lem-gauss3}
        \int_{\mathbb{R}^d}\dt \mu_G(\x) \gauss^\sigma(\x)  &= &
      C_d\,\sigma^{d(d-1)}\,.
 \end{eqnarray}
\end{Lemma}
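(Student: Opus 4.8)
The plan is to reduce all three statements to known facts about the Gaussian unitary ensemble (GUE). First I would recall the Selberg/Mehta integral, which states that
\begin{equation}
\int_{\R^d} \dt\x \, \vanx{\x} \prod_{i=1}^d \frac{1}{\sqrt{2\pi}\sigma} e^{-x_i^2/(2\sigma^2)} = \frac{\prod_{k=1}^d k!}{1}\,\sigma^{d(d-1)}\,,
\end{equation}
up to the combinatorial prefactor; after dividing by $(2\pi)^d d!$ (absorbed in the definition of $\dt\mu_G$ and of $C_d$) this is exactly \eqref{eq:lem-gauss3}. Concretely, rescaling $x_i = \sigma y_i$ pulls out $\sigma^{d(d-1)}$ from the Vandermonde and the Jacobian of the substitution cancels the Gaussian normalisation, leaving the dimensionless Mehta integral $\int_{\R^d}\dt\y\,\vanx{\y}\prod_i \frac{1}{\sqrt{2\pi}}e^{-y_i^2/2} = \prod_{k=1}^d k!$. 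This is the cleanest of the three and I would do it first so that the normalisation constant $C_d$ is fixed.

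Next, for the tail bounds \eqref{eq:lem-gauss1} and \eqref{eq:lem-gauss2}, I would interpret $\dt\mu_G(\x)$ (after normalisation by $C_d\sigma^{d(d-1)}$, i.e. after dividing by \eqref{eq:lem-gauss3}) as the joint eigenvalue density of a GUE matrix with variance scaled by $\sigma$. Then $\int_{([-r,r]^{\times d})^c}\dt\mu_G(\x)\gauss^\sigma(\x)$, once divided by the normalisation, is exactly $C_d\sigma^{d(d-1)}\cdot\Prob\big(\|M\|_\infty > r\big)$ where $M$ is a $d\times d$ GUE matrix with entries of standard deviation $\sigma$ (on the diagonal, $\sigma/\sqrt{2}$ off-diagonal, with the usual conventions). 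The operator norm of such $M$ concentrates around $2\sigma\sqrt{d}$, and the standard sub-Gaussian tail bound for the largest eigenvalue of GUE (see e.g. \cite{SzarekBook}) gives $\Prob(\|M\|_\infty > r) \le C \exp\!\big(-\tfrac12 d(\tfrac{r}{\sigma\sqrt d}-2)^2\big)$ in the regime $r \ge 2\sigma\sqrt d$. Plugging in the factor $C_d\sigma^{d(d-1)}$ from \eqref{eq:lem-gauss3} and absorbing the universal constant $C$ into the $\tfrac12$ (or checking that $C \le \tfrac12$ in the stated range, possibly after using the union bound over $d$ eigenvalues and a single-eigenvalue Gaussian tail) yields \eqref{eq:lem-gauss1}. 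For \eqref{eq:lem-gauss2} I would observe that in the stronger regime $\sigma \le r/(4\sqrt d)$ we have $\tfrac{r}{\sigma\sqrt d}-2 \ge \tfrac{r}{2\sigma\sqrt d}$, hence $d(\tfrac{r}{\sigma\sqrt d}-2)^2 \ge \tfrac{r^2}{4\sigma^2}$, so \eqref{eq:lem-gauss2} follows from \eqref{eq:lem-gauss1} by monotonicity of $e^{-x}$.

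The main obstacle I anticipate is keeping the constants honest in the tail estimate: the literature states the GUE edge tail with various unspecified universal constants and with slightly different normalisation conventions (standard deviation $1$ vs. $1/\sqrt{d}$ vs. semicircle on $[-2,2]$), so the delicate point is to verify that, after fixing conventions so that \eqref{eq:lem-gauss3} holds with exactly $C_d$, the constant in front really is at most $\tfrac12$ for all $\sigma \le r/(2\sqrt d)$. A clean way to sidestep a sharp random-matrix input is to bound crudely: write $\vanx{\x} \le \prod_{i<j}(|x_i|+|x_j|)^2 \le \prod_{i<j} 2^2 \max(x_i^2,x_j^2) \le 4^{\binom d2}\big(\max_i|x_i|\big)^{d(d-1)}$ is too lossy, so instead I would keep the exact Mehta value and use the union bound $\Prob(\|M\|_\infty>r)\le \sum_i \Prob(|x_i|>r)$ together with a one-point marginal tail (Gaussian with variance $\sim\sigma^2$), which after the $d$-fold union gives a bound of the form $d\cdot e^{-r^2/(2\sigma^2)}\cdot(\text{polynomial in }d)$; then for $\sigma$ small enough relative to $r/\sqrt d$ the polynomial and the $d$ are swallowed by the exponent, producing the claimed $\tfrac12$ prefactor. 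I would present the argument via whichever of these routes gives the cleanest constant tracking, and relegate the detailed constant bookkeeping to a short computation.
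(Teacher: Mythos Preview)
Your approach is essentially the same as the paper's: rescale $\x = \sigma \y$ to pull out $\sigma^{d(d-1)}$ from the Vandermonde, recognise the resulting integral as $C_d\,\sigma^{d(d-1)}$ times the GUE tail probability $\mathrm{Pr}_{A\sim\mathrm{GUE}}(\|A\|_\infty \geq r/\sigma)$, invoke the sub-Gaussian edge bound from \cite{SzarekBook}, and then deduce \eqref{eq:lem-gauss2} from \eqref{eq:lem-gauss1} via the elementary inequality $\tfrac{r}{\sigma\sqrt d}-2\geq \tfrac{r}{2\sigma\sqrt d}$ in the regime $\sigma\leq r/(4\sqrt d)$. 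The paper also obtains \eqref{eq:lem-gauss3} by the same rescaling (equivalently, by letting $r\to\infty$ so the GUE probability tends to $1$), which is your Mehta-integral step.

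The only substantive difference is your hedging about the constant. The paper does not absorb or re-derive the prefactor: it simply quotes the GUE tail bound in the precise form
\[
\mathrm{Pr}_{A\sim\mathrm{GUE}}\!\left(d^{-1/2}\|A\|_\infty \geq 2+a\right)\leq \tfrac12\, e^{-\frac12 d a^2},
\]
stated as Proposition~\ref{lem:gue} (attributed to \cite{SzarekBook}), which already carries the factor $\tfrac12$. With that input the constant tracking is trivial, and your proposed fallback routes (union bound over eigenvalues, crude Vandermonde estimates) are unnecessary and would only degrade the bound.
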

\begin{proof} 
    By the definition of $\mu_G(\x)$ (see Eq.\eqref{muG-def}) we have
    \begin{gather}
    \label{eq:x-GUE}
       \int_{([-r,r]^{\times d})^c} \dt \mu_G(\x) \gauss^\sigma(\x) =
        \frac{1}{(2 \pi)^d d!}
        \frac{1}{(\sqrt{2 \pi}\sigma)^d}\int_{([-r,r]^{\times d})^c} \dt \x
\vanx{\x}\,
e^{-\frac12 \frac{\x^2}{\sigma^2}} 
\end{gather}
Introducing the variable $\y=\x/\sigma$ and making use of the probability measure for Gaussian Unitary Ensemble  reduces \eqref{eq:x-GUE} to
\begin{gather}
\frac{1}{(2 \pi)^d d!}
        \frac{\sigma^{d(d-1)}}{(2 \pi )^\frac{d}{2}}\int_{([-r/\sigma,r/\sigma]^{\times d})^c}
\dt \y \vanx{\y}\,
e^{-\frac12 \y^2} 
=  C_d\,  \sigma^{d(d-1)}\left( \frac{1}{\prod_{k=1}^d k!}
\frac{1}{(2 \pi )^\frac{d}{2}}\int_{([-r/\sigma,r/\sigma]^{\times d})^c} \dt \y
\vanx{\y}\,
e^{-\frac12 \y^2} \right)=
\nonumber \\
= 
C_d\,\sigma^{d(d-1)} \underset{ A\sim \mathrm{GUE}}{\mathrm{Pr}}\left(\|A\|_\infty \geq \frac{r}{\sigma}\right).\label{eq:GUEx}
\end{gather}
    Next we make use of Proposition \ref{lem:gue}
    \begin{equation}\label{GUE212}
        \underset{ A\sim \mathrm{GUE}}{\mathrm{Pr}}\left(\|A\|_\infty \geq \frac{r}{\sigma}\right) 
        =
        \underset{ A\sim \mathrm{GUE}}{\mathrm{Pr}}\left(d^{-1/2}\|A\|_\infty \geq \frac{r}{\sigma \sqrt{d}}\right)
        \leq 
        \frac12 e^{-\frac12 d \left(\frac{r}{\sigma \sqrt{d}}-2\right)^2}.
    \end{equation}
    Combining \eqref{GUE212} with \eqref{eq:GUEx} we obtain \eqref{eq:lem-gauss1}. We next note that for $
        \sigma\leq \frac{r}{4 \sqrt{d}}$
    we have 
    \begin{equation}
        \frac{r}{\sigma \sqrt{d}}-2\geq 
        \frac{r}{2\sigma \sqrt{d}},
    \end{equation}
   therefore 
    \begin{equation}
   \underset{ A\sim \mathrm{GUE}}{\mathrm{Pr}}\left(\|A\|_\infty \geq \frac{r}{\sigma}\right) \leq     \frac12 e^{- \frac{r^2}{4 \sigma^2}}.
    \end{equation}
    Inserting this estimate into 
    (\ref{eq:GUEx}) we obtain estimate \eqref{eq:lem-gauss2}. Proceeding similarly we have 
    \begin{equation}
     \int_{[-r,r]^{\times d}} \dt \mu_G(\x)\gauss^\sigma(\x) =   
     C_d  \sigma^{d(d-1)}\underset{ A\sim \mathrm{GUE}}{\mathrm{Pr}}\left(\|A\|_\infty \leq \frac{r}{\sigma}\right),
    \end{equation}
     which, taking $r\rightarrow\infty$, proves  \eqref{eq:lem-gauss3}.
\end{proof}

\subsubsection{Auxiliary facts}
In this part we provide a number of auxiliary facts that we used in previous sections.

\begin{proposition}[Equivalence of $\dproj$ and $\dproj_\diamond$]\label{prop:equivOFnorms}
Let $\h{U},\h{V}\in\U(d)$. Let $\dproj_\diamond(\h{U},\h{V}) =\|\h{U}-\h{V}\|_\diamond$ be the diamond-norm distance between $\h{U}$ and $\h{V}$. Let $\dproj(\h{U},\h{V})$ be the distance used throughput this work and defined by
\begin{equation}\label{eq:distancesRELetions}
    \dproj\left(\h{U},\h{V}\right)= \min_{\varphi\in[0,2\pi)}\left\|U - \exp(\ii \varphi) V \right\|_\infty\ .
\end{equation}
We then have the following inequalities
\begin{equation}\label{eq:equivDISTANCE2}
 \dproj\left(\h{U},\h{V}\right) \leq  \dproj_\diamond \left(\h{U},\h{V}\right) \leq 2 \dproj\left(\h{U},\h{V}\right)\ .
\end{equation}  
\end{proposition}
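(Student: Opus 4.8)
The plan is to reduce both quantities to the single unitary $W=U^\dagger V$ and then prove the two inequalities by independent, elementary arguments. By unitary invariance of the operator norm, $\dproj\left(\h{U},\h{V}\right)=\min_\varphi\|U-e^{\ii\varphi}V\|_\infty=\min_\varphi\|I-e^{\ii\varphi}W\|_\infty$; and since a global phase does not change a channel, for every $\varphi$ we may represent $\h{V}$ by $V'=e^{\ii\varphi}V$, i.e. $\h{V}[\rho]=V'\rho V'^\dagger$, so that $\dproj_\diamond\left(\h{U},\h{V}\right)=\|\h{U}-\h{V}\|_\diamond$ does not see the phase either. Throughout I will write $e^{\ii\theta_1},\dots,e^{\ii\theta_d}$ for the eigenvalues of $W$ and $\mathcal{C}=\mathrm{conv}\{e^{\ii\theta_1},\dots,e^{\ii\theta_d}\}$ for their convex hull (an inscribed polygon).

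\emph{Upper bound $\dproj_\diamond\le 2\dproj$.} Fix $\varphi$, set $V'=e^{\ii\varphi}V$, and split
\begin{equation*}
U\rho U^\dagger-V'\rho V'^\dagger=U\rho\,(U-V')^\dagger+(U-V')\,\rho\,V'^\dagger .
\end{equation*}
Thus $\h{U}-\h{V}$ is a sum of two maps of the form $\rho\mapsto X\rho Y^\dagger$. For any such map, applying the Hölder inequality $\|ABC\|_1\le\|A\|_\infty\|B\|_1\|C\|_\infty$ on $\C^d\otimes\C^n$ to $(X\ot I)\,Z\,(Y\ot I)^\dagger$ gives $\|\rho\mapsto X\rho Y^\dagger\|_\diamond\le\|X\|_\infty\|Y\|_\infty$. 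Hence $\dproj_\diamond\left(\h{U},\h{V}\right)\le\|U\|_\infty\|U-V'\|_\infty+\|U-V'\|_\infty\|V'\|_\infty=2\|U-e^{\ii\varphi}V\|_\infty$; minimising over $\varphi$ yields the claim.

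\emph{Lower bound $\dproj\le\dproj_\diamond$.} Evaluating $\h{U}-\h{V}$ on a pure state $|\psi\rangle\langle\psi|$ produces $|U\psi\rangle\langle U\psi|-|V\psi\rangle\langle V\psi|$, a difference of two pure states, whose trace norm is $2\sqrt{1-|\langle\psi|W|\psi\rangle|^2}$. Therefore $\dproj_\diamond\left(\h{U},\h{V}\right)\ge 2\sqrt{1-\mu^2}$ with $\mu=\min_{\psi}|\langle\psi|W|\psi\rangle|$. Diagonalising $W=\sum_j e^{\ii\theta_j}|j\rangle\langle j|$ shows $\langle\psi|W|\psi\rangle=\sum_j p_j e^{\ii\theta_j}$ with $(p_j)$ an arbitrary probability vector, so $\langle\psi|W|\psi\rangle$ ranges exactly over $\mathcal C$ and $\mu=\mathrm{dist}(0,\mathcal C)$. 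It remains to prove $\dproj\left(\h{U},\h{V}\right)\le 2\sqrt{1-\mu^2}$. If $0\in\mathcal C$ then $\mu=0$ and this is just $\dproj\le\|I\|_\infty+\|W\|_\infty=2$. If $0\notin\mathrm{int}\,\mathcal C$, there is a closed half-plane containing $\mathcal C$ whose boundary passes through the origin, which forces all $e^{\ii\theta_j}$ into a closed arc of angular length $L\le\pi$; choosing $\varphi$ so that this arc is centred at $1$ puts every rotated eigenvalue within angular distance $L/2$ of $1$, hence $\dproj\le\max_j|1-e^{\ii(\varphi+\theta_j)}|\le 2\sin(L/4)$. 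On the other hand the chord joining the two extreme eigenvalues lies in $\mathcal C$ and has distance $\cos(L/2)$ from the origin, so $\mu\le\cos(L/2)\le\cos(L/4)$, i.e. $1-\mu^2\ge\sin^2(L/4)$. Combining, $\dproj\left(\h{U},\h{V}\right)\le 2\sin(L/4)\le 2\sqrt{1-\mu^2}\le\dproj_\diamond\left(\h{U},\h{V}\right)$.

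\emph{Main difficulty.} Everything except the last step (the pure-state bound, the Hölder estimate, the computation of the numerical range of a normal operator) is routine. The one delicate point is identifying $\dproj$ precisely enough — equivalently, making the optimal choice of global phase — so that it can be compared with the geometric quantity $\mathrm{dist}(0,\mathcal C)$ controlling $\dproj_\diamond$; the arc/chord estimate is exactly what bridges these. One could instead invoke the known exact formula $\|\h{U}-\h{V}\|_\diamond=2\sqrt{1-\mu^2}$ (when $0\notin\mathcal C$) of Aharonov–Kitaev–Nisan / Watrous, but the self-contained argument above is short enough to carry out directly.
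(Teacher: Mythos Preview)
Your proof is correct and takes a more self-contained route than the paper's. The paper invokes the known exact formula $\dproj_\diamond(\h{U},\h{I})=2\sin(L/2)$ (the chord of the minimal arc containing the eigenvalues of $U$, cited as Theorem~26 of \cite{DiamondNormCharact}), then identifies the optimal phase so that $\dproj=2\sin(L/4)$, and reads off both inequalities from $\sin(L/4)\le\sin(L/2)=2\sin(L/4)\cos(L/4)\le 2\sin(L/4)$, referring to a figure for the geometry. Your argument avoids citing this characterisation: the upper bound via telescoping and H\"older is completely independent of the spectral picture, and the lower bound needs only the pure-state estimate $\dproj_\diamond\ge 2\sqrt{1-\mu^2}$ together with the same arc-centering idea. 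Both approaches rest on the same geometric core (arc length versus chord), but yours trades a short citation for an elementary, stand-alone derivation; the paper's is terser once the external result is granted. Your remark at the end about the Aharonov--Kitaev--Nisan/Watrous formula is exactly the alternative the paper takes.
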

\begin{proof}
Without loss of generality we can assume that $\h{V}=\h{I}$. Theorem 26 in \cite{DiamondNormCharact}) states that $\dproj_\diamond\left(\h{U},\h{I}\right)$ equals the chord of the smallest arc on a circle that contains all eigenvalues $\lbrace\exp(\ii\alpha_1),\ldots,\exp(\ii \alpha_d)\rbrace$ of $U$. Note that due to unitary invariance we can assume that $\alpha_1=0$, $\alpha_{i+1} \geq \alpha_i$. It is now easy to see that, due to symmetry of the problem, in this case the optimal angle from  Eq. \eqref{eq:equivDISTANCE2} is given by $\varphi_{opt}=\alpha_d /2$. Inequalities \eqref{eq:distancesRELetions} follow from geometrical relations presented at Figure \ref{fig:Distances}.

\begin{figure}[t]
    \centering
    \includegraphics[width=0.5\textwidth]{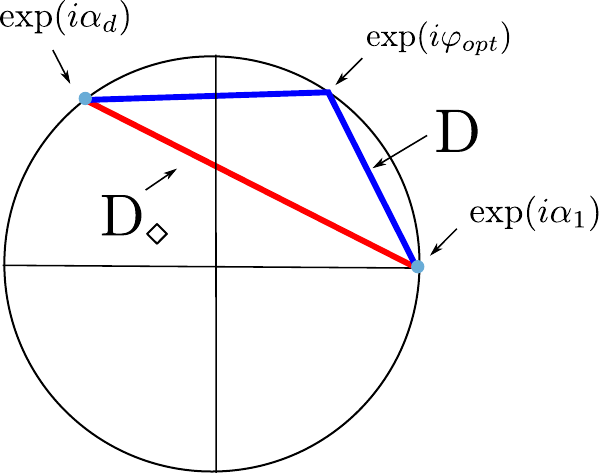}
    \caption{Geometrical presentation of the relation between $\dproj$ and $\dproj_\diamond$. }
    \label{fig:Distances}
\end{figure}

\end{proof}

\begin{proposition}[Tail bounds for spectral norm of GUE matrices \cite{SzarekBook}]
\label{lem:gue}
Let $\underset{ A\sim \mathrm{GUE}}{\mathrm{Pr}}$ be the probability measure on Hermitian matrices given by the Gaussian Unitary Ensemble. For $a>0$ we have 
\begin{equation}
\underset{ A\sim \mathrm{GUE}}{\mathrm{Pr}}\left( 
\|d^{-1/2} A\|_\infty \geq 2+a\right)
\leq \frac12 e^{-\frac12 d a^2}\ .
\end{equation}
\end{proposition}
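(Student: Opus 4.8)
The plan is to deduce Proposition~\ref{lem:gue} from Gaussian concentration of measure, the standard route for edge estimates of this kind. First I would realise the GUE as the standard Gaussian measure on the $d^{2}$-dimensional real inner-product space of Hermitian $d\times d$ matrices equipped with the Hilbert--Schmidt inner product, i.e.\ with density proportional to $\exp(-\tfrac12\tr A^{2})=\exp(-\tfrac12\|A\|_{\mathrm{HS}}^{2})$; this is exactly the normalisation whose joint eigenvalue law is the density $\propto\prod_{i<j}(x_{i}-x_{j})^{2}e^{-\frac12\sum_{i}x_{i}^{2}}$ used in \eqref{muG-def} and Lemma~\ref{lem:gsigmabound}. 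The key structural fact is that $A\mapsto\|A\|_{\infty}$ is convex and $1$-Lipschitz with respect to $\|\cdot\|_{\mathrm{HS}}$, since $\bigl|\,\|A\|_{\infty}-\|B\|_{\infty}\,\bigr|\le\|A-B\|_{\infty}\le\|A-B\|_{\mathrm{HS}}$, so that $F(A):=d^{-1/2}\|A\|_{\infty}$ is $d^{-1/2}$-Lipschitz.

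Next I would invoke the Gaussian isoperimetric inequality (Borell--Sudakov--Tsirelson): a $1$-Lipschitz function $G$ on Gauss space obeys $\Pr(G\ge M_{G}+u)\le\tfrac12 e^{-u^{2}/2}$, where $M_{G}$ is a median of $G$. Applying this to the $1$-Lipschitz function $\sqrt{d}\,F$, and writing $M$ for a median of $F=d^{-1/2}\|A\|_{\infty}$, yields
\begin{equation}
\Pr\!\left(d^{-1/2}\|A\|_{\infty}\ge M+a\right)\ \le\ \tfrac12\,e^{-\frac12 d a^{2}}\qquad(a>0),
\end{equation}
which already produces both the exponent $\tfrac12 d a^{2}$ and the prefactor $\tfrac12$. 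The statement therefore reduces to the single inequality $M=\mathrm{median}\bigl(d^{-1/2}\|A\|_{\infty}\bigr)\le 2$.

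This last bound is the genuinely random-matrix input: it says the spectral edge of GUE sits at $2\sqrt{d}$ with no outliers. Self-containedly it can be obtained by the method of moments, $\mathbb{E}\,\|A\|_{\infty}^{2m}\le\mathbb{E}\,\tr(A^{2m})=\sum_{g\ge0}\epsilon_{g}(m)\,d^{m+1-2g}$ (the genus expansion of the Wigner moments), with $\epsilon_{0}(m)$ the Catalan number $C_{m}\le4^{m}$ and $\sum_{g}\epsilon_{g}(m)=(2m-1)!!$; choosing $m\asymp d^{1/3}$ the planar ($g=0$) term dominates and one obtains $\bigl(\mathbb{E}\,\|A\|_{\infty}^{2m}\bigr)^{\frac{1}{2m}}\le 2\sqrt{d}\,(1+o(1))$, hence $\mathbb{E}\,\|A\|_{\infty}\le 2\sqrt{d}\,(1+o(1))$, which the concentration bound above then transfers to the median. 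Upgrading $2(1+o(1))$ to the exact constant $2$ uniformly in $d$ (equivalently, that $\mathrm{median}(\lambda_{\max})<2\sqrt{d}$ for every $d$, consistent with the negative mean of the Tracy--Widom law) is classical, and for a fully self-contained account one may simply cite \cite{SzarekBook}. An alternative to the moment method is the determinantal one: bound $\Pr(\lambda_{\max}\ge s)$ by the expected number of eigenvalues exceeding $s$, namely $\int_{s}^{\infty}K_{d}(x,x)\,dx$ for the Hermite kernel $K_{d}$, and use the standard decay of Hermite functions in the forbidden region $x>2\sqrt{d}$.

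The main obstacle is precisely this pinning down of the centering constant. The Lipschitz/Gaussian-concentration half of the argument is essentially automatic and delivers the claimed rate and prefactor for free; the delicate point is $\mathrm{median}(d^{-1/2}\|A\|_{\infty})\le 2$, which requires a sharp edge-of-the-spectrum estimate rather than a soft one and must be made to hold for all finite $d$, not merely asymptotically.
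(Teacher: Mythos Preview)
The paper does not actually prove this proposition: it is stated in the ``Auxiliary facts'' subsection and simply imported from \cite{SzarekBook} with no accompanying proof. So there is no ``paper's own proof'' to compare against; you have supplied strictly more than the paper does.

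That said, your sketch is the standard and correct route for this type of bound, and is almost certainly what the cited reference does. Realising GUE as the standard Gaussian on the Euclidean space of Hermitian matrices, observing that $A\mapsto\|A\|_\infty$ is $1$-Lipschitz for the Hilbert--Schmidt norm, and then applying Borell--Sudakov--Tsirelson around the median gives precisely $\Pr(d^{-1/2}\|A\|_\infty\ge M+a)\le\tfrac12 e^{-\frac12 d a^2}$, matching both the exponent and the prefactor $\tfrac12$. You are also right that the only nontrivial remaining input is the uniform-in-$d$ edge estimate $M\le 2$; your two suggested routes (moment/genus expansion with $m\asymp d^{1/3}$, or the determinantal first-moment bound via the Hermite kernel) are the standard ones, and you correctly flag that upgrading the asymptotic $2(1+o(1))$ to the exact constant $2$ for every finite $d$ is where the real work lies. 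One small circularity to note: citing \cite{SzarekBook} for the median bound is effectively citing the same source the paper already invokes for the whole proposition, so if your aim is a self-contained proof you should carry one of the two explicit edge arguments through rather than defer at that step.
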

The constant $C_d$ that we introduced in Lemma \ref{lem:loverBOUNDnormCONST} follows from 
\begin{proposition}[Mehta integral \cite{Mehta-integral}]
\label{prop:Mehta}
The following integral has the analitical form
\begin{equation}
\frac{1}{(2 \pi)^\frac{d}{2}}\int_{\R^d}\dt \x
\vanx{\x}\,
e^{-\frac12 \x^2} = \prod_{k=1}^d k!
\end{equation}
\end{proposition}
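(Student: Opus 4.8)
The plan is to read the left-hand side as the normalization constant (partition function) of the Gaussian Unitary Ensemble and to evaluate it by the classical orthogonal–polynomial method. First I would write the squared Vandermonde as a product of two determinants, $\prod_{1\le i<j\le d}(x_i-x_j)^2=\bigl(\det[x_i^{\,j-1}]_{i,j=1}^{d}\bigr)^{2}$, and then replace the monomials $x^{\,j-1}$ by the monic probabilist's Hermite polynomials $He_{j-1}$ of degree $j-1$. Since the change of basis from $\{1,x,\dots,x^{d-1}\}$ to $\{He_0,\dots,He_{d-1}\}$ is unit upper–triangular, one has $\det[x_i^{\,j-1}]_{i,j}=\det[He_{j-1}(x_i)]_{i,j}$ identically on $\R^d$. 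Absorbing the prefactor $(2\pi)^{-d/2}$ into the product probability measure $\prod_{i=1}^{d}\dt\mu(x_i)$ with $\dt\mu(x)=\tfrac1{\sqrt{2\pi}}e^{-x^2/2}\dt x$, the integrand becomes $\det[He_{j-1}(x_i)]_{i,j}\det[He_{k-1}(x_i)]_{i,k}$, which is manifestly integrable (a polynomial times a Gaussian).

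The second, and main, step is the Andréief (Heine–Gram) identity: for families $\{f_j\}_{j=1}^d$, $\{g_k\}_{k=1}^d$ with all the integrals below finite,
\[
\int_{\R^d}\det\bigl[f_j(x_i)\bigr]_{i,j}\det\bigl[g_k(x_i)\bigr]_{i,k}\prod_{i=1}^{d}\dt\mu(x_i)=d!\,\det\!\Bigl[\int_{\R}f_j(x)g_k(x)\,\dt\mu(x)\Bigr]_{j,k=1}^{d},
\]
which I would prove in a single line by expanding each determinant as a signed sum over permutations, relabelling the integration variables, and collecting the $d!$ coinciding contributions. Applying it with $f_j=g_j=He_{j-1}$ and using the Hermite orthogonality relation $\int_{\R}He_m(x)He_n(x)\,\dt\mu(x)=n!\,\delta_{mn}$, the right-hand determinant collapses to $\det\mathrm{diag}\bigl(0!,1!,\dots,(d-1)!\bigr)$, so the integral equals $d!\prod_{j=1}^{d}(j-1)!$.

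It then only remains to match this with the claimed value by the elementary rearrangement $d!\prod_{j=1}^{d}(j-1)!=d!\cdot 0!\cdot 1!\cdots(d-1)!=1!\cdot 2!\cdots d!=\prod_{k=1}^{d}k!$, with no stray constants left over. The only non-bookkeeping ingredients are the Andréief identity and the Hermite orthogonality relation, and I expect Andréief to be the main obstacle, since it is the step that converts a $d$-dimensional integral into a $d\times d$ determinant of one-dimensional integrals; the orthogonality relation can be taken as standard (or derived quickly from the generating function of the $He_n$). As an alternative I would merely note that the same value arises as the $\gamma=1$, suitably rescaled $\beta\to\infty$ limiting case of the Selberg integral, or as a Hankel determinant of Gaussian moments, but the Hermite route above is the most economical.
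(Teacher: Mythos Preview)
Your proof is correct and complete: the Vandermonde-to-Hermite substitution, the Andr\'eief identity, the orthogonality relation $\int He_m He_n\,\dt\mu=n!\,\delta_{mn}$, and the final rearrangement $d!\prod_{j=0}^{d-1}j!=\prod_{k=1}^{d}k!$ are all sound, and no constants are dropped.

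The paper, however, does not prove this proposition at all. It is listed in the ``Auxiliary facts'' subsection and is simply quoted with a citation to the literature (Mehta's integral), so there is no argument in the paper to compare against. Your orthogonal-polynomial derivation is the standard textbook route to this identity and would serve perfectly well as a self-contained replacement for the bare citation.
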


Bellow we provide upper bound on the constant $A_d$ introduced in the proof of Lemma \ref{lem:normfkf}.
\begin{proposition}\label{prop:AdUPPERbound}
For all positive integers $d$ we have
\begin{equation}
    A_d =  \frac{1}{\prod_{k=1}^d k!}\sqrt{\frac{2^{d(d-1)} \pi^\frac{d}{2} 2\sqrt{2 \pi} d!}{\Gamma(\frac{d}{2})}} \leq 10\ .
\end{equation}
\end{proposition}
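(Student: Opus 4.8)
The plan is to show that the sequence $d\mapsto A_d$ is eventually nonincreasing and then to verify the finitely many remaining values by direct computation; since $A_3\approx 9.17$ already sits just below $10$, the constant is essentially sharp and the argument must be carried out with a little care.

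First I would record
\[
A_d^2=\frac{2^{d(d-1)}\,\pi^{d/2}\,\sqrt{8\pi}\,d!}{\Bigl(\prod_{k=1}^{d}k!\Bigr)^{2}\,\Gamma(d/2)},
\]
and compute the ratio of consecutive terms. Using $\prod_{k=1}^{d+1}k!=(d+1)!\,\prod_{k=1}^{d}k!$ together with $2^{(d+1)d-d(d-1)}=2^{2d}=4^{d}$ gives
\[
\frac{A_{d+1}^2}{A_d^2}=\frac{4^{d}\,\sqrt{\pi}\,(d+1)}{\bigl((d+1)!\bigr)^{2}}\cdot\frac{\Gamma(d/2)}{\Gamma\bigl((d+1)/2\bigr)}.
\]
It therefore suffices to prove that this ratio is at most $1$ for all $d\ge 3$, and separately that $A_1,A_2,A_3\le 10$.

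For $d\ge 3$ both $d/2$ and $(d+1)/2$ exceed $3/2>1.462$, the point where $\Gamma$ attains its minimum on $(0,\infty)$; hence $\Gamma$ is increasing there and $\Gamma(d/2)/\Gamma((d+1)/2)\le 1$. It then remains to check $4^{d}\sqrt{\pi}\,(d+1)\le\bigl((d+1)!\bigr)^{2}$ for $d\ge 3$, which I would do by induction: at $d=3$ the left side is $256\sqrt{\pi}<454$ and the right side is $576$; in the step $d\mapsto d+1$ the right side is multiplied by $(d+2)^{2}\ge 25$ while the left side is multiplied by $4(d+2)/(d+1)\le 5$. Consequently $A_{d+1}^2\le A_d^2$ for every $d\ge 3$, so $A_d^2\le A_3^2$ whenever $d\ge 3$.

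Finally I would evaluate the low-dimensional cases directly, obtaining $A_1^2=2\sqrt{2\pi}$, $A_2^2=4\pi\sqrt{2\pi}$ and $A_3^2=\frac{32\sqrt{2}}{3}\,\pi^{3/2}$; each of these is below $100$ (the largest, $A_3^2$, is smaller than $84.1$). Combining with the monotonicity step yields $A_d^2\le\max\{A_1^2,A_2^2,A_3^2\}=A_3^2<100$ for all $d\ge 1$, i.e.\ $A_d\le 10$. The only genuinely delicate point is that $\Gamma(d/2)/\Gamma((d+1)/2)>1$ exactly when $d\in\{1,2\}$, so the clean monotonicity argument can only start at $d=3$; everything else — the induction for the factorial inequality and the numerics for $A_1,A_2,A_3$ — is routine, but because $A_3$ is so close to the target one cannot afford to be wasteful in bounding it.
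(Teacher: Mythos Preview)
Your proof is correct and follows the same overall strategy as the paper: compute the ratio $A_{d+1}^2/A_d^2$, show it is at most $1$ for all sufficiently large $d$, and check the remaining small cases numerically.

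The execution differs in two minor respects. First, your ratio
\[
\frac{A_{d+1}^2}{A_d^2}=\frac{4^{d}\sqrt{\pi}\,(d+1)}{\bigl((d+1)!\bigr)^{2}}\cdot\frac{\Gamma(d/2)}{\Gamma((d+1)/2)}
\]
is the correct simplification; the paper's displayed formula contains a slip (a spurious constant in place of the factor $\sqrt{\pi}(d+1)$), though its subsequent crude Stirling bound still leads to a valid conclusion. Second, by invoking the actual location of the minimum of $\Gamma$ (near $1.4616$) you get monotonicity already from $d=3$, whereas the paper only establishes it from $d=6$ and must therefore check $d=2,3,4,5,6$ by hand. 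Your induction for $4^d\sqrt{\pi}(d+1)\le((d+1)!)^2$ replaces the paper's Stirling estimate and is both elementary and sharper. In short: same method, cleaner bookkeeping.
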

\begin{proof}
We will show that for $d\geq 6$ 
$A_d$ is decreasing sequence. 
Then the proof follows by directly verifying that numerical value of $A_d$ for $d=2,3,4,5,6$ is 
smaller than $10$. 
To prove monotonicity we write:
\begin{equation}
    \frac{A_{d+1}^2}{A_d^2} = 
    \frac{1}{(d!)^2 (d+1)^2}\, 2^{2d}\, 2 \sqrt{2} \pi \frac{\Gamma(d/2)}{\Gamma(d/2+1/2)}
\end{equation}
Now, using that $\Gamma(x)$ is increasing for $x\geq 2$, 
and dropping $(d+1)^2$ in denominator, as well as bounding $d!$ from below by Stirling-type inequality  \cite{HandbookOFMAthematics}: $k!\geq \sqrt{2\pi} \sqrt{k} \left(\frac{k}{e}\right)^k$, we get 
\begin{equation}
    \frac{A_{d+1}^2}{A_d^2}  \leq    \sqrt{2}
    \left(\frac{2e}{d}\right)^{2d}
\end{equation}
The right hand side is decreasing function of $d$ and is less than 1 for $d\geq 6$,
 which proves that $A_d$ is monotonically decreasing for $d\geq 6$.
\end{proof}

Next, we recall a well known bound for a tail of Gaussian intefral.
\begin{proposition}
[Hoeffding-type bound \cite{HandbookOFMAthematics}]
\label{prop:Hoeffding}
For positive $r$ we have 
\begin{equation}
 \int_r^\infty \dt x e^{-b^2 x^2} 
 \leq \frac{\sqrt{\pi}}{b} 
 e^{-b^2 r^2}
\end{equation}
\end{proposition}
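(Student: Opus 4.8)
The plan is to reduce this tail bound to the elementary Gaussian identity $\int_{0}^{\infty} e^{-u^2}\,\dt u = \tfrac{\sqrt{\pi}}{2}$ by a completing-the-square estimate at the lower integration limit (we assume, as is implicit in the statement, $b>0$).

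First I would record the pointwise estimate: for every $x\geq r\geq 0$ write $b^2 x^2 = b^2 (x-r)^2 + 2 b^2 r (x-r) + b^2 r^2$, and note that the cross term $2b^2 r(x-r)$ is nonnegative because $r\geq 0$ and $x-r\geq 0$; hence $b^2 x^2 \geq b^2(x-r)^2 + b^2 r^2$, so that
\begin{equation}
 e^{-b^2 x^2}\ \leq\ e^{-b^2 r^2}\, e^{-b^2 (x-r)^2}\qquad (x\geq r)\ .
\end{equation}
Then I would integrate this over $x\in[r,\infty)$, substitute $y=x-r$, and rescale $u=by$:
\begin{equation}
 \int_{r}^{\infty}\dt x\, e^{-b^2 x^2}\ \leq\ e^{-b^2 r^2}\int_{0}^{\infty}\dt y\, e^{-b^2 y^2}
 \ =\ \frac{e^{-b^2 r^2}}{b}\int_{0}^{\infty}\dt u\, e^{-u^2}
 \ =\ \frac{\sqrt{\pi}}{2b}\, e^{-b^2 r^2}\ ,
\end{equation}
which is bounded by $\tfrac{\sqrt{\pi}}{b}e^{-b^2 r^2}$ since $\tfrac12\leq 1$; this is the claimed inequality (and in fact the argument yields the sharper constant $\tfrac12$).

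I do not expect any genuine obstacle here — the statement is an elementary one-line estimate. The only subtlety worth flagging is that discarding the cross term relies precisely on the hypothesis $r\geq 0$; for $r<0$ the stated bound fails in general. One could instead simply invoke the standard complementary-error-function tail asymptotics from the cited handbook, but the completing-the-square route above is self-contained and delivers the inequality in exactly the stated form.
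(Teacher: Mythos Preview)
Your proof is correct and complete; the completing-the-square estimate is the standard way to obtain this tail bound, and you even get the sharper constant $\tfrac{\sqrt{\pi}}{2b}$. The paper itself does not prove this proposition --- it is stated as a known fact with a citation to a handbook --- so there is no in-paper argument to compare against.
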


We also give a useful fact regarding maximum of function $x^k e^{-p x^2}$
 \begin{proposition}
 \label{lem:max-gauss-poly}
 For a positive $p$ we have 
 \begin{equation}
 \max_{x\geq 0} x^k e^{-p x^2}
     =
     \biggl(\frac{k}{2pe}\biggr)^\frac{k}{2}.    
 \end{equation}
 \end{proposition}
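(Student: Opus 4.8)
The plan is to treat this as a one-variable calculus exercise on $g(x) := x^k e^{-px^2}$ for $x \geq 0$, where $p>0$ (and $k>0$; the case $k=0$ being trivial since then $g\equiv 1$). First I would record the boundary behaviour: $g(0)=0$ and $g(x)\to 0$ as $x\to\infty$, while $g(x)>0$ for all $x>0$. Hence $g$ attains its maximum over $[0,\infty)$ at some interior point $x_*>0$, which must be a critical point.

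Next I would locate that critical point. It is cleanest to work with $\log g(x) = k\log x - p x^2$, whose derivative is $\tfrac{k}{x} - 2px$. Setting this to zero gives $x_*^2 = \tfrac{k}{2p}$, i.e. $x_* = \sqrt{k/(2p)}$, and this is the unique positive critical point. To confirm it is the global maximum I would note that $(\log g)'(x) = \tfrac{k}{x}-2px$ is strictly positive for $0<x<x_*$ and strictly negative for $x>x_*$ (equivalently, $(\log g)''(x) = -\tfrac{k}{x^2}-2p<0$, so $\log g$ is strictly concave on $(0,\infty)$), so $g$ increases then decreases and $x_*$ is the unique maximizer.

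Finally I would substitute back:
\begin{equation}
g(x_*) = \left(x_*^2\right)^{k/2} e^{-p x_*^2} = \left(\frac{k}{2p}\right)^{k/2} e^{-k/2} = \left(\frac{k}{2pe}\right)^{k/2}\ ,
\end{equation}
which is the claimed formula. There is no genuine obstacle here; the only point requiring a line of care is justifying that the single interior critical point is the global maximum rather than merely a local one, and this is handled by the monotonicity/concavity observation above.
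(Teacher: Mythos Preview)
Your proof is correct. The paper does not actually supply a proof of this proposition; it merely states it as an auxiliary fact in the appendix, so your standard one-variable calculus argument (logarithmic differentiation, unique critical point via concavity of $\log g$, and direct substitution) is entirely adequate and there is nothing to compare against.
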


\begin{Lemma}
\label{lem:pos_function_norm_1}
Let $f,g$ be integrable  functions on a measurable space. Let $g$ be nonnegative, and $\int f=1$. The we have
\begin{equation}
    \|f\|_1\leq 1 + 2\|f-g\|_1
\end{equation}
\end{Lemma}
\begin{proof}
Since $\int f=1 $ we have 
\begin{align}
    \int g=1 - \int(f-g)\leq 
    1 + \left|\int f-g\right|
    \leq 1+\|f-g\|_1
\end{align}
Thus since $g\geq 0$ we have \begin{align}
    \|g\|_1\leq 1+\|f-g\|_1.
\end{align}
Now we write
\begin{align}
    \|f\|_1\leq \|f-g\|_1 + \|g\|_1
    \leq 1 + 2 \|f-g\|_1.
\end{align}
\end{proof}

\subsection{Bounds on the volume of balls in $\U(d)$}\label{app:volume}

In this section we present the derivation of constants $c$ and $C$ used in the formulation of Fact \ref{fact:Szarek}. The reasoning follows the argument given in \cite{Szarek98}.  

Let $(V, \|\cdot\|$ be a $d$-dimensional normed vector space and let $\rho(\cdot,\cdot)$ denote the induced metric. Let $K_R$ be a ball of radius $R$ in $V$ and $N(K_R,\rho,\epsilon)$ be a covering number of $K_R$, i.e. the smallest number of balls of radius $\epsilon$ that cover $K_R$. It is well known that \cite{Szarek98}
\begin{gather}\label{cover1}
    \left(\frac{R}{\epsilon}\right)^d\leq N(K_R,\rho,\epsilon) \leq \left(1+\frac{2R}{\epsilon}\right)^d\,.
\end{gather}
For the technical reasons we also introduce $N^\prime(K_R,\rho,\epsilon)$ as the covering number of $K_R$ with additional assumption that $\epsilon$-balls are centered at $K_R$. It is easy to see that 
\begin{gather}\label{Nprime}
    N(K_R,\rho,\epsilon)\leq N^\prime(K_R,\rho,\epsilon)\leq N(K_R,\rho,\frac{\epsilon}{2})\,.
\end{gather}
In fact, this inequality is true for any metric space $(M,\rho)$. We will also need the following lemma:
\begin{Lemma}[\cite{Szarek98}]\label{szarek1}
Let $(M_1,\rho_1)$ and $(M_2,\rho_2)$ be two metric spaces, $K\subset M_1$ and $\Phi: K\rightarrow M_2$ be a map that satisfies
\begin{gather}
    \rho_2(\Phi(x),\Phi(y))\leq L\rho_1(x,y)\,
\end{gather}
where $x,y\in K$ and $L>0$. Then for any $\epsilon>0$ we have 
\begin{gather}
    N^\prime(\Phi(K),\rho_2,L\epsilon)\leq N^\prime(K,\rho_1,\epsilon)
\end{gather}
\end{Lemma}
The vector spaces that we will explore in order to obtain bounds on the volume of balls in $\U(d)$ are Lie algebras $\mathfrak{u}(d)$ and $\mathfrak{su}(d)$ equipped with the operator norm $\|\cdot\|_\infty$. Naturally, $\mathfrak{su}(d)$ is a codimesion one subspace of $\mathfrak{u}(d)$ and the restriction of the exponential map of  $\exp:\mathfrak{u}(d)\rightarrow \UU(d)$ to $\mathfrak{su}(d)$ gives the exponential map $\exp:\mathfrak{su}(d)\rightarrow S\UU(d)$. Following \cite{Szarek98} we also note that exponential map for $\UU(d)$ is a contraction and we have 
\begin{gather}\label{contraction1}
    \|\exp(X)-\exp(Y)\|_\infty\leq \|X-Y\|_\infty\,\,\,\,X,\,Y\in\mathfrak{u}(d).
\end{gather}
\noindent On the other hand we have the following easy to prove Lemma:
\begin{Lemma}\label{expsu}
Let $K=\{X\in \mathfrak{su}(d)|\,\|X\|_\infty\leq 2\pi\}$.  Then $\exp(K)=S\UU(d)$. 
\end{Lemma}
Next, we note that groups $S\UU(d)$ and $\U(d)$ are connected be a natural projection map $\pi:S\UU(d)\rightarrow \U(d)$ that is $d$ to 1. Thus $\pi(\exp(K))=\U(d)$. Moreover, using \eqref{contraction1} and the definition of distance $\dproj(\cdot,\cdot)$ on $\U(d)$ one can easily see that $\pi\circ \exp$ is a contraction, i.e. 
\begin{gather}
\dproj\left(\pi(\exp(X)),\pi(\exp(Y))\right) \leq \|X-Y\|_\infty.
\end{gather}
Hence combining \eqref{cover1} and \eqref{Nprime} with Lemma \ref{szarek1} with $\Phi=\pi\circ \exp$, $L=1$, $K_R=K=\{X\in \mathfrak{su}(d)|\,\|X\|_\infty\leq 2\pi\}$, $\rho_1(X,Y)=\|X-Y\|_\infty$ and $\rho_2=\dproj$ we obtain 
\begin{gather}
    N^\prime\left(\U(d),\dproj,\epsilon\right)\leq N^\prime\left(K,\|\cdot\|_\infty,\epsilon\right)\leq N\left(K,\|\cdot\|_\infty,\frac{\epsilon}{2}\right)\leq \left(1+\frac{8\pi}{\epsilon}\right)^{d^2-1}
\end{gather}
Thus $\vol(\ball{\h{V}}{\ep})\left(1+\frac{8\pi}{\epsilon}\right)^{d^2-1}\geq 1$ and we conclude that $\vol(\ball{\h{V}}{\ep})\geq \left(\frac{\epsilon}{9\pi}\right)^{d^2-1}$ which is a lower bound on the $\ep$-ball volume. In order to find an upper bound we use Lemma 10 of \cite{Szarek98} that ensures that there are $\lambda>0$ and $r<\frac{\pi}{4}$ such that for $X,Y\in \mathfrak{su}(d)$ with $\|X\|_\infty<r$, $\|Y\|_\infty<r$ one has
\begin{gather}\label{separation}
    \dproj(\pi(\exp(X)),\pi(\exp(Y)))\geq \lambda \|X-Y\|_\infty\,.
\end{gather}
A careful study of the proof of Lemma 10 of \cite{Szarek98} leads to relation:
\begin{gather}
   \lambda(r)=\frac{\phi(3r)}{2}-4r\, \\
   \phi(r)=\prod_{k=1}^\infty \left(1-|1-\exp(ir/2^k)|\right),
\end{gather}
which is valid whenever $\lambda(r)>0$. In order to find optimal parameters we let now 
\begin{gather}
    K_r=\{\pi(\exp(X))\in \U(d)|\,\|X\|_\infty<r\}\,.
\end{gather}
By inequality \eqref{separation} we know that $\pi\circ\exp$ has a well defined inverse on $K_r$, denote it by $\Phi$. Using \eqref{cover1}, \eqref{Nprime} and Lemma \ref{szarek1} we get:
\begin{gather}
     \left(\frac{r\lambda}{ \epsilon}\right)^{d^2-1}\leq
    N^\prime\left(\Phi(K_r),\|\cdot\|_\infty,\frac{\epsilon}{\lambda}\right) \leq  N^\prime\left(K_r,\dproj,\epsilon\right)
\end{gather}
Thus $\vol(\ball{\h{V}}{\ep}) \left(\frac{r\lambda}{ \epsilon}\right)^{d^2-1}\leq 1$ and we conclude that $\vol(\ball{\h{V}}{\ep})\leq\left(\frac{\epsilon}{ r\lambda}\right)^{d^2-1}$  which is an upper bound on the $\ep$-ball volume. Our choice of $\lambda$ and $r$ should maximize $r\lambda(r)$. This can be done numerically and the maximal value of $r\lambda=0.011506$. Thus the upper bound on volume is $\vol(\ball{\h{V}}{\ep})\leq\left(87\epsilon\right)^{d^2-1}$.
\end{document}